\renewenvironment{abstract}
	{\quotation}
	{\endquotation}
\date{}
\renewcommand{\fnum@figure}{\textbf{Figure \thefigure}}
\renewcommand{\fnum@table}{\textbf{Table \thetable}}
\newcounter{theorem}
\renewcommand{\thetheorem}{\arabic{theorem}}
\newcounter{lemma}
\renewcommand{\thelemma}{\arabic{lemma}}
\newcommand{\thm@optname}[1]{%
  \def\thm@tmp{#1}%
  \ifx\thm@tmp\@empty
  \else
    \ (#1)%
  \fi
}
\newenvironment{theorem}[1][]%
{%
  \refstepcounter{theorem}%
  \par\medskip\noindent
  \textbf{Theorem~\thetheorem\thm@optname{#1}.}\ \itshape\ignorespaces
}%
{%
  \par\medskip\ignorespacesafterend
}
\newenvironment{corollary}[1][]%
{%
  \refstepcounter{theorem}%
  \par\medskip\noindent
  \textbf{Corollary~\thetheorem\thm@optname{#1}.}\ \itshape\ignorespaces
}%
{%
  \par\medskip\ignorespacesafterend
}
\newenvironment{lemma}[1][]%
{%
  \refstepcounter{lemma}%
  \par\medskip\noindent
  \textbf{Lemma~\thelemma\thm@optname{#1}.}\ \itshape\ignorespaces
}%
{%
  \par\medskip\ignorespacesafterend
}
\newenvironment{definition}[1][]%
{%
  \refstepcounter{theorem}%
  \par\medskip\noindent
  \textbf{Definition~\thetheorem\thm@optname{#1}.}\ \normalfont\ignorespaces
}%
{%
  \par\medskip\ignorespacesafterend
}
\newenvironment{example}[1][]%
{%
  \refstepcounter{theorem}%
  \par\medskip\noindent
  \textbf{Example~\thetheorem\thm@optname{#1}.}\ \normalfont\ignorespaces
}%
{%
\par\medskip\ignorespacesafterend
}
\newenvironment{remark}[1][]%
{%
  \refstepcounter{theorem}%
  \par\medskip\noindent
  \textit{Remark~\thetheorem\thm@optname{#1}.}\ \normalfont\ignorespaces
}%
{%
  \par\medskip\ignorespacesafterend
}
\newcommand{\qedsymbol}{\rule{0.6em}{0.6em}}
\newcommand{\qed}{\hfill\qedsymbol}
\newenvironment{proof}[1][Proof]%
{%
  \par\medskip\noindent
  \textit{#1.}\ \normalfont\ignorespaces
}%
{%
  \qed\par\medskip\ignorespacesafterend
}
\providecommand{\bm}[1]{\boldsymbol{#1}}
\DeclareMathAlphabet{\mathcal}{OMS}{cmsy}{m}{n}
\DeclareMathAlphabet{\mathbb}{U}{msb}{m}{n}
\DeclareMathAlphabet{\mathfrak}{U}{euf}{m}{n}
\DeclareMathOperator{\Imag}{Im}
\newcommand{\spx}[1]{%
  \if\relax\detokenize{#1}\relax
    \expandafter\@gobble
  \else
    \expandafter\@firstofone
  \fi
  {^{#1}}%
}
\newcommand{\dif}{\mathop{}\!\mathrm{d}}
\newcommand{\pd}[3][]{\frac{\partial\spx{#1}#2}{\partial#3\spx{#1}}}
\newcommand{\od}[3][]{\frac{\dif\spx{#1}#2}{\dif#3\spx{#1}}}
\def\scititle{Kuramoto meets Koopman: Constants of motion, symmetries, and network motifs}
\title{\bfseries \boldmath \scititle
}
\author{
	Vincent~Thibeault$^{1,2\ast}$,
	Benjamin~Claveau$^{1,2}$,
	Antoine~Allard$^{1,2}$,
    Patrick~Desrosiers$^{1,2,3}$\and
    \small$^{1}$ D\'epartement de physique, de g\'enie physique et d'optique, Universit\'e Laval, Qu\'ebec (Qc), Canada.\and\
    \small$^{2}$Centre interdisciplinaire en mod\'elisation math\'ematique de l'Universit\'e Laval, Qu\'ebec (Qc), Canada.\and\
    \small$^{3}$Centre de recherche CERVO, Qu\'ebec (Qc), Canada.\and
	%
	\small$^\ast$Corresponding author. Email: vincent.thibeault.1@ulaval.ca\and
\small\textbf{Short title} : Kuramoto meets Koopman, \small\textbf{Teaser} : Tools inspired by quantum theory reveal \and\small hidden wiring patterns that preserve phase relationships in complex oscillator networks.
}
\begin{document} 

\let\oldaddcontentsline\addcontentsline
\renewcommand{\addcontentsline}[3]{}

\maketitle

\begin{abstract} \bfseries \boldmath
\vspace{-0.75cm}
Conserved quantities in phase-oscillator dynamics are well established for identically coupled oscillators, or groups thereof, but the explicit connectivity conditions under which more complex networks admit constants of motion remain difficult to identify.
Using Koopman theory, we derive general conditions for the existence of distinct conserved quantities in the Kuramoto model with heterogeneous phase lags on any weighted, directed, and signed graph. To this end, we find Koopman eigenfunctions and continuous Lie symmetries that generate different families of constants of motion.
The derived conditions reveal a broad class of network motifs that support conserved quantities and we detect these motifs in hundreds of complex empirical networks. The results thus point to connectivity patterns that can preserve phase relationships over time and motivate further investigations of Koopman spectral properties for dynamics on complex networks.
\end{abstract}

\centerline{\textit{For the 50th anniversary of the Kuramoto model (1975-2025).}}


\clearpage 

\section*{INTRODUCTION}

\noindent The Kuramoto model is a paradigmatic model of oscillators exhibiting synchronization~\cite{Kuramoto1975, Acebron2005, Rodrigues2016}. In its general form~\cite{Sakaguchi1986}, the model describes the evolution of each oscillator's phase by the system of differential equations
\begin{equation}\label{eq:kuramoto}
    \frac{\mathrm{d}\theta_j}{\mathrm{d}t} = \omega_j + \sum_{k = 1}^{N} W_{jk} \sin(\theta_k - \theta_j - \alpha_{jk})\,,  
\end{equation}
where $j\in\mathcal{V}:=\{1,...,N\}$, $\theta_j(t) \in \mathbb{R}$ is the phase of oscillator $j$ at time $t \in \mathbb{R}$, $\omega_j \in \mathbb{R}$ is the natural frequency of oscillator $j$, $W_{jk} \in \mathbb{R}$ is the $(j,k)$ element of the weight matrix $W$, encoding the strength of the interaction from oscillator $k$ to oscillator $j$, and $-\pi/2 < \alpha_{jk} \leq \pi/2$ is the $(j,k)$ element of the phase-lag matrix $\alpha$. Without loss of generality, we set $W_{jj} = 0$, $\alpha_{jj} = 0$ for all~$j$.

The model has a rich dynamics, giving rise to chaos~\cite{Engelbrecht2014, Bick2018, Karatetskaia2025}, chimeras~\cite{Kuramoto2002, Abrams2004, Abrams2008, Kotwal2017, Forrow2018}, explosive synchronization~\cite{Pazo2005, Gomez-Gardenes2011, Kundu2019, Kuehn2021}, and it has been used, for example, to describe Josephson junctions~\cite{Watanabe1993, Watanabe1994, Wiesenfeld1996}, nanoelectromechanical oscillators~\cite{Matheny2019}, associative memory~\cite{Arenas1994a, Nishikawa2004} in artificial intelligence, and even BOLD signal dynamics from the human cerebral cortex~\cite{Pope2021}. Over the years, it has become a central model for studying complex systems, whose high-dimensional nonlinear dynamics and intricate interactions give rise to emergent collective phenomena.~\cite{Mitchell2009, Ladyman2020, Krakauer2025}.

In this paper, we show that the Kuramoto model finds a natural and insightful formulation within Koopman theory~\cite{Koopman1931, Koopman1932, Carleman1932, Budisic2012, Brunton2022}. Introduced in 1931 by Bernard Koopman~\cite{Koopman1931} and further developed with John von Neumann~\cite{Koopman1932, VonNeumann1932}, Koopman theory was originally motivated by the formal analogy between classical and quantum mechanics: it sought to recast classical nonlinear dynamics in terms of linear operators by focusing on the evolution of observables rather than states~\cite{Joseph2020}.

In recent decades, Koopman theory has been primarily advanced through foundational mathematical works~\cite{Mezic2004, Mezic2005, Mauroy2013, Mauroy2016, Mezic2019} and through data-driven or algorithmic studies—such as dynamic mode decomposition and its various extensions~\cite{Schmid2008, Rowley2009, Schmid2010, Salova2019, Colbrook2023, Klus2024, Nicolaou2026}. It now stands as a unifying language at the intersection of dynamical systems, spectral theory, integrable systems, differential geometry, ergodic theory, and functional analysis. Yet, the extent to which this framework can yield analytical and conceptual insights into high-dimensional dynamics on complex networks remains an open question.

Motivated by this question, we leverage the operator-theoretic advantages of Koopman theory and argue for its conceptual relevance in understanding complex systems. Indeed, the Koopman operator is the time-evolution operator for functions of the system's state---the observables---including those describing emergent collective phenomena. Therefore, the goal of finding informative observables and their time evolution in complex systems is inherently aligned with the Koopmanian way of describing dynamical systems [Fig.~\ref{fig:fig1}]. 

Under Koopman's perspective, the finite-dimensional nonlinear system describing the model is traded for a \textit{linear} differential operator, the generator of the Koopman operator, or simply the Koopman generator. This linearity enables the spectral analysis of the operator, notably through the search for Koopman eigenfunctions and eigenvalues. While the representation of the generator under some basis of observables is typically infinite-dimensional, this will not be a problem in our approach. For the Kuramoto model in Eq.~\eqref{eq:kuramoto} under the change of coordinates $z_j = e^{i\theta_j}$ for all $j$ and $\theta_j\in\mathbb{R}$, it is straightforward to show that the Koopman generator is [Sec.~\ref{sec:intro_kooku}]
\begin{align}\label{eq:kooku}
    \mathcal{K} = \sum_{j,k \in\mathcal{V}}\left(A_{jk}z_k - \bar{A}_{jk}\bar{z}_kz_j^2\right)\partial_j\,,
\end{align}
where $\partial_j$ is the partial derivative with respect to $z_j$ and
\begin{align}\label{eq:complex_matrix}
     A = \frac{1}{2}\left(W \circ e^{-i\alpha} + i\,\mathrm{diag}(\bm{\omega})\right)\,,
\end{align}
with $e^{-i\alpha} =  (e^{-i\alpha_{jk}})_{j,k\in\mathcal{V}}$, $\bm{\omega} = (\omega_1\,\,\,\cdots\,\,\,\omega_N)^\top$, and the Hadamard product $\circ$. The complex weight matrix $A$ encapsulates every parameter of the dynamics and describes a directed, signed and complex-weighted graph~\cite{Bottcher2024}, where the off-diagonal weights are complex due to the non-zero phase lags [Fig.~\ref{fig:fig1}A]. The Koopman generator~\eqref{eq:kooku} will serve as our starting point for extracting constants of motion, whose connection to network motifs will be emphasized throughout the paper. 

Nearly 20 years after the publication of Kuramoto's paper, constants of motion for identical phase oscillators were brought to light in the seminal works of Watanabe and Strogatz~\cite{Watanabe1993, Watanabe1994}. Since then, by shifting the focus away from identical oscillators, there has been a surge of studies on complex networks of phase oscillators and their synchronization, as heterogeneous connections are a key feature of complex systems and substantially influence synchronization patterns~\cite{Arenas2008, Gfeller2008, Golubitsky1999, Dorfler2014, Rodrigues2016, Delabays2019, Fruchart2021}. 

This raises the question: under what network conditions is it still possible to find constants of motion for phase oscillators? Pikovsky and Rosenblum~\cite{Pikovsky2008} recognized that Watanabe-Strogatz (WS) theory is applicable to networks with $m$ all-to-all coupled communities, leading to $N - 3m$ constants of motion~\cite{Pikovsky2008, Hong2011_pre}. Another step towards heterogeneity was to analyze the Kuramoto dynamics on star graphs \cite{Kazanovich2003, Gomez-Gardenes2011, Chen2017, Kundu2019, Thibeault2020}---prevalent motifs of complex networks. WS theory can be applied to these systems, because periphery vertices are identically connected to the core~\cite{Vlasov2015, Vlasov2015a, Vlasov2017, Xu2018, Xu2019, Chen2019a}.

Yet, complex networks feature diverse motifs with potentially important stability and synchronizability properties~\cite{Milo2002, Moreno2004, Lodato2007, DHuys2008, Angulo2015, Schaub2016, Aguiar2018}. Such motifs have been extensively studied for diverse dynamics on networks together with their link to network symmetries, equivariance, pseudosymmetries, and fibrations~\cite{Ashwin1992, Golubitsky1999, Golubitsky2002, Golubitsky2006, Field2007, Nijholt2016, Morone2019a, Morone2020, Golubitsky2023, Makse2025}. Despite this progress, how network motifs support Koopman eigenfunctions and conserved quantities in complex oscillator networks remains, to our knowledge, an open question. We address this problem in the general Kuramoto model on networks~\eqref{eq:kuramoto}, using it as a concrete and analytically tractable setting for Koopman theory.

In Sec.~\ref{sec:constants}, we find general conditions for the existence of monomial eigenfunctions and what we call Vandermonde-ratio eigenfunctions. Combining these eigenfunctions yields families of constants of motion associated with different network motifs. We then reframe WS theory through Koopman theory and derive necessary and sufficient conditions that highlight the network-theoretic mechanisms behind the conserved quantities. In Sec.~\ref{sec:symmetries}, we generate families of conserved quantities from continuous Lie symmetries applied to Vandermonde-ratio eigenfunctions, cross-ratios, and WS integrals. The results of the first two sections are summarized in Table~\ref{tab:eigenfunctions} and their associated network motifs are illustrated in Fig.~\ref{fig:fig2}. Finally, Sec.~\ref{sec:examples} provides concrete examples and applications of these results. Notably, we detect motifs admitting conserved quantities in a dataset of 652 empirical networks including social networks, power grids, and connectomes.

\section*{RESULTS}

\section{Constants of motion and network motifs}
\label{sec:constants}

To begin with, a nonconstant scalar function $C$ of time and $\bm{z} := (z_1,...,z_N)$ is a constant of motion (first integral~\cite{Goriely1996}) of the dynamics with Koopman generator $\mathcal{K}$ if and only if $(\partial_t + \mathcal{K})[C(t, \bm{z})] = 0$, where $\partial_t$ is the partial derivative with respect to time. If $\partial_t C = 0$ and $\mathcal{K}[C]= 0$, the constant of motion $C$ is an eigenfunction of the Koopman generator with null eigenvalue [Fig.~\ref{fig:fig1}B]. One way to obtain constants of motion is to look for an eigenfunction $\psi(\bm{z})$ of the Koopman generator with eigenvalue $\lambda \in \mathbb{C}$, as it implies that $C(t,\bm{z}) = \psi(\bm{z})e^{-\lambda t}$ is conserved. A Koopman eigenfunction is also known as a time-dependent first integral or a second integral~\cite{Goriely2001}. 

If $\psi_1, ... , \psi_m$ are eigenfunctions of $\mathcal{K}$ with eigenvalues $\lambda_1, ... ,\lambda_m$ and $a_1, ..., a_m \in\mathbb{C}$, then the product $\psi_1^{a_1}\,...\,\psi_m^{a_m}$ is also an eigenfunction with eigenvalue $\sum_{j=1}^m a_j\lambda_j$. Therefore, if $\bm{a} = (a_1,...,a_m)$ is chosen to be orthogonal to $\bm{\lambda} =(\lambda_1,...,\lambda_m)$, $\sum_{j=1}^m a_j\lambda_j = 0$ and $\psi_1^{a_1}\,...\,\psi_m^{a_m}$ is conserved~\cite{Koopman1931, Budisic2012}. While this way of finding constants of motion long predates Koopman theory, going back at least to Darboux~\cite{Darboux1878, Goriely2001, Zhang2017}, the Koopman formalism offers a concise and insightful way of framing the problem in spectral terms. Moreover, the eigenvalues $\lambda_1,\ldots,\lambda_m$ also provide insights into the synchronization properties of the dynamics, since global frequency synchronization requires the imaginary parts of the corresponding eigenvalues to be rationally related~\cite{Mezic2019}. Koopman eigenfunctions are thus valuable for uncovering both conserved quantities and synchronization constraints in oscillator networks, and we now turn to identifying explicit eigenfunctions of~$\mathcal{K}$.

\subsection{Monomial eigenfunctions}
\label{subsec:monomials}
Since the vector field is polynomial for the Kuramoto model described in $\bm{z}$, we begin by searching for monomial eigenfunctions $z^{\bm{\mu}} := z_1^{\mu_1}...z_N^{\mu_N}$. In terms of the phases, a monomial eigenfunction corresponds to a complex-valued eigenfunction $\exp(i\bm\mu^\top\bm\theta)$, where $\bm\mu^\top\bm\theta$ is a real-valued linear observable with linear time evolution. The next theorem establishes necessary and sufficient conditions for the existence of these eigenfunctions, which are determined by the presence of specific network motifs [proof outline in Materials and Methods]. 
\begin{theorem}[Monomial eigenfunction]\label{thm:monomials}
    Let $\mathcal{W} \subset \mathcal{V}$ be a non-empty subset of vertices such that $|\alpha_{jk}| < \pi/2$ for all $j,k\in\mathcal{W}$. Let $\bm \mu = (\mu_1\,\,\,\cdots\,\,\,\mu_N)^\top \in \mathbb{R}^N$ satisfy $\mu_j \neq 0$ if and only if $j \in \mathcal{W}$. There exists a $\bm{\mu}$ such that $z^{\bm \mu}$ is an eigenfunction of $\mathcal{K}$ in Eq.~\eqref{eq:kooku} if and only if~: 
    \begingroup
    \renewcommand{\theenumi}{1.\arabic{enumi}}
    \renewcommand{\labelenumi}{\theenumi.}
        
    \begin{enumerate}
        \item \label{itm:1.1} $W_{jk} = 0$ for all $j\in \mathcal{W}$ and $k\in \mathcal{V}\setminus\mathcal{W}$~;
        \item \label{itm:1.2} $W_{jk}\neq 0$ whenever $W_{kj}\neq 0$ for all $j,k\in \mathcal{W}$~;
        \item \label{itm:1.3} $W_{i_1 i_2} ... W_{i_{\eta-1}i_{\eta}}W_{{i_{\eta} i_1}} = W_{{i_1 i_{\eta}}}W_{i_{\eta}i_{\eta-1}} ... W_{i_2 i_1}$ for all sequences $i_1, i_2, ..., i_\eta$ of elements of~$\mathcal{W}$~;
        \item \label{itm:1.4} 
        $\alpha_{jk} = -\alpha_{kj}$ whenever  $j,k\in \mathcal{W}$,  $j\neq k$, $W_{jk}\neq 0$~.
    \end{enumerate} 
    \endgroup
    \noindent If $z^{\bm \mu}$ is an eigenfunction, then its eigenvalue is $i\bm{\mu}^\top\bm{\omega}$.
\end{theorem}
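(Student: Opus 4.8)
The plan is to compute $\mathcal{K}[z^{\bm\mu}]$ directly and then restrict to the torus $|z_j|=1$, where $\bar z_j = z_j^{-1}$, so that the action of the generator becomes a finite Fourier-type sum whose modes are linearly independent. Since $\partial_j z^{\bm\mu} = \mu_j z_j^{-1} z^{\bm\mu}$, dividing by $z^{\bm\mu}$ gives
\[
\frac{\mathcal{K}[z^{\bm\mu}]}{z^{\bm\mu}} = \sum_{j,k}\mu_j\bigl(A_{jk}\,z_k z_j^{-1} - \bar A_{jk}\,\bar z_k z_j\bigr).
\]
On the torus the diagonal $j=k$ collapses to $\sum_j \mu_j(A_{jj}-\bar A_{jj}) = i\bm\mu^\top\bm\omega$ (using $A_{jj}=i\omega_j/2$), which is constant and yields the claimed eigenvalue. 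Each off-diagonal summand is a distinct mode $e^{i(\theta_k-\theta_j)}=z_kz_j^{-1}$ on the $N$-torus, so $z^{\bm\mu}$ is an eigenfunction if and only if every nonconstant mode cancels. Collecting the coefficient of $z_kz_j^{-1}$ (contributed by the first term at $(j,k)$ and the second at $(k,j)$) I expect to obtain the single family of conditions
\[
\mu_j A_{jk} = \mu_k \bar A_{kj}\qquad (j\neq k),
\]
which, after inserting $A_{jk} = \tfrac{\sigma}{2}W_{jk}e^{-i\alpha_{jk}}$, reads $\mu_j W_{jk}e^{-i\alpha_{jk}} = \mu_k W_{kj}e^{i\alpha_{kj}}$.

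The next step is a case analysis on membership in $\mathcal{W}$, using $\mu_j\neq0 \iff j\in\mathcal W$. When $j\in\mathcal W$ and $k\notin\mathcal W$ the right side vanishes while $\mu_j,e^{-i\alpha_{jk}}\neq0$, forcing $W_{jk}=0$; the symmetric case $k\in\mathcal W$, $j\notin\mathcal W$ supplies the other half of condition~\ref{itm:1.1}. For $j,k\in\mathcal W$ with $j\neq k$, vanishing of one side forces vanishing of the other, which is exactly the reciprocity condition~\ref{itm:1.2}. When instead $W_{jk}\neq0$, I would split the complex equation into real and imaginary parts; because $|\alpha_{jk}|,|\alpha_{kj}|<\pi/2$ the cosines are strictly positive and $\tan$ is injective on $(-\pi/2,\pi/2)$, so the ratio of imaginary to real part yields $\alpha_{kj}=-\alpha_{jk}$ (condition~\ref{itm:1.4}), after which the real part collapses to the purely real relation $\mu_j W_{jk} = \mu_k W_{kj}$. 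This is the only place the hypothesis $|\alpha|<\pi/2$ enters, and it is essential: it rules out the spurious sign-flip solution $e^{-i\alpha_{jk}}=-e^{i\alpha_{kj}}$ that would otherwise satisfy the modulus balance.

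It remains to reconcile the edge relations $\mu_j W_{jk}=\mu_k W_{kj}$ with condition~\ref{itm:1.3}. Reading them as $\mu_j/\mu_k = W_{kj}/W_{jk}$ on each edge of the subgraph induced on $\mathcal W$ (well defined by~\ref{itm:1.2}), I would observe that a vector $\bm\mu$ with support exactly $\mathcal W$ solving all of them exists if and only if the product of these ratios is trivial around every closed walk; the telescoping product around a cycle $i_1,\dots,i_\eta$ then gives precisely the balance $W_{i_1i_2}\cdots W_{i_\eta i_1} = W_{i_1 i_\eta}\cdots W_{i_2 i_1}$ of condition~\ref{itm:1.3}, while walks containing a missing edge make both sides vanish by~\ref{itm:1.2} and so impose nothing. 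For the forward direction this consistency is inherited from the eigenfunction's own $\bm\mu$; for the converse, \ref{itm:1.3} lets me build $\bm\mu$ component by component along each connected piece of $\mathcal W$. Thus the statement is most naturally read as the \emph{existence} of a $\bm\mu$ supported on $\mathcal W$ making $z^{\bm\mu}$ an eigenfunction. I expect the main obstacle to be exactly this last equivalence—translating solvability of the linear reciprocity system into the cycle-product identity—together with the bookkeeping confirming that degenerate zero-weight cycles add no constraint.
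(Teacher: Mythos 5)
Your proposal is correct and follows essentially the same route as the paper's proof: expand $\mathcal{K}[z^{\bm \mu}]$, invoke linear independence of the modes $z_k z_j^{-1}$ on $\mathbb{T}^N$ to reduce the eigenvalue equation to the single family $\mu_j A_{jk} = \mu_k \bar{A}_{kj}$ for $j \neq k$ (with the constant mode giving the eigenvalue $i\bm{\mu}^\top\bm{\omega}$), and then separate modulus and phase---your tangent-injectivity step being equivalent to the paper's principal-argument matching---to obtain conditions 1.1, 1.2, and 1.4. Your closing cycle-telescoping and component-by-component construction of $\bm{\mu}$ for condition 1.3, including the correct \emph{existential} reading of $\bm{\mu}$ and the remark that walks through zero-weight edges impose nothing, is exactly the content of the paper's symmetrizability lemma (Lem.~\ref{lem:symmetrizable_equivalence}), which it proves by the same propagation $\mu_j := (W_{\ell j}/W_{j\ell})\,\mu_\ell$ along a connected component.
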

\noindent 
The previous theorem establishes the existence of a monomial eigenfunction, but its exponents are in fact easily constructed. Indeed, in the connected case, start from some $\ell\in\mathcal{W}$, choose an initial $\mu_\ell$ and walk through the subgraph induced by $\mathcal{W}$, setting iteratively \mbox{$\mu_j = (W_{k j}/W_{jk}) \mu_k$} [Sec.~\ref{subsec:mu_procedure}]. The regularity of monomial eigenfunctions is discussed in Materials and Methods. Also, since the eigenvalue of $\mathcal{K}$ associated with $z^{\bm\mu}$ is purely imaginary, the corresponding Koopman-operator eigenvalue $\exp(i\bm{\mu}^\top\bm{\omega}t)$ lies on the unit circle. This is reminiscent of earlier extensions of the Koopman-von Neumann framework, in which eigenfunctions beyond invariants were considered~\cite{Mezic2004}.

As illustrated in Fig.~\ref{fig:fig2}A, condition~\ref{itm:1.1} ensures that the subgraph induced by the vertex set~$\mathcal{W}$ is a source within the whole network. Then, condition~\ref{itm:1.2} constrains reciprocity~: there cannot be a unidirectional edge within the subgraph [Fig.~\ref{fig:fig2}A]. Condition~\ref{itm:1.3} restricts cycles~: the product of the weights in the subgraph when circling clockwise in any cycle must be the same as the product of the weights when circling counterclockwise. In matrix terms, the second and third conditions mean that $(W_{jk})_{j,k\in\mathcal{W}}$ is symmetrizable~[Lem.~\ref{lem:symmetrizable_equivalence}]. To construct the associated (source) motif, one can thus define a symmetric matrix and multiply it by any real, nonzero diagonal matrix. Finally, condition~\ref{itm:1.4} implies that the submatrix $(\alpha_{jk})_{j,k\in\mathcal{W}}$ (with $\alpha_{jk}= 0$ whenever $W_{jk} = 0$) is antisymmetric. Thus, if the interaction from $k$ to $j$ depends on the shifted phase difference $\theta_k-\theta_j-\alpha_{jk}$, then the interaction from $j$ to $k$ involves the opposite shifted phase difference $-(\theta_k-\theta_j-\alpha_{jk})$.

Following Thm.~\ref{thm:monomials}, it is straightforward to show that for a symmetrizable weight matrix $W = (W_{jk})_{j,k\in\mathcal{V}}$, a monomial eigenfunction exists for any phase dynamics of the form 
\begin{align}\label{eq:phase_class_monomials}
    \frac{\mathrm{d}\theta_j}{\mathrm{d}t} = \omega_j +\sum_{k\in\mathcal{V}} W_{jk} H_{jk}(\theta_j, \theta_k)\,,
\end{align}
where $(H_{jk})_{j,k\in\mathcal{V}}$ are periodic functions satisfying $H_{jk}(\theta_j, \theta_k) = - H_{kj}(\theta_k, \theta_j)$ for all \mbox{$j,k \in \mathcal{V}$}. If, instead, $H_{jk}(\theta_j, \theta_k) = H_{kj}(\theta_k, \theta_j)$, then $(W_{jk})_{j,k\in\mathcal{V}}$ must rather be skew-symmetrizable. The generalization to higher-order interactions $\sum_{k_1,....,k_m}W_{jk_1...k_m}H_{jk_1...k_m}(\theta_j, \theta_{k_1},...,\theta_{k_m})$ is also straightforward.

If there are $q$ functionally independent monomial eigenfunctions $z^{\bm\mu_1},\ldots,z^{\bm\mu_q}$ for $\mathcal{K}$ with \mbox{$\bm{\mu}_{\rho}\in\mathbb{R}^N$} and eigenvalues $i\bm\mu_\rho^\top\bm\omega$, then there are $q$ constants of motion having the form $z^{\bm \mu_\rho}\exp(-i\bm\mu_\rho^\top \bm{\omega}\,t)$. If the natural frequencies are such that $\bm\mu_\rho^\top \bm{\omega} = 0$ for some $\rho$, the associated constant of motion is time-independent.  
Otherwise, given the $q$ monomial eigenfunctions, one can always construct $q-1$ functionally independent, time-independent monomial constants of motion $z^{\bm{\nu}_1},\,,\ldots,\, z^{\bm{\nu}_{q-1}}$ whose exponents satisfy the matrix equation $(\bm{\nu}_{1} \; \cdots \; \bm{\nu}_{q-1})=(\bm{\mu}_{1} \; \cdots \; \bm{\mu}_{q})\,O$, where $O\in \mathbb{R}^{q \times (q-1)}$ has linearly independent columns orthogonal to the vector $(i\bm\mu_1^\top \bm{\omega}\; \cdots\; i\bm\mu_q^\top \bm{\omega})^\top$ [Lem.~\ref{lem:monom_constants}].

In other words, the existence of a (time-independent) monomial conserved quantity is guaranteed by the presence of two or more source subgraphs with monomial eigenfunctions. Thm.~\ref{thm:monomials} thus provides a first class of Koopman eigenfunctions to build constants of motion. In the next section, we introduce another type of eigenfunction that can be combined with monomials to generate constants of motion, again related to a specific network structure~: bipartite graphs.


\subsection{Vandermonde-ratio eigenfunctions}
\label{subsec:vandermonde}

While studying symmetry-breaking mechanisms leading to chimera states in two pairs of Kuramoto oscillators, Burylko et al.~\cite{Burylko2022} identified a constant of motion that appears when oscillators are uncoupled within each pair and coupled across pairs, with zero phase lags and identical natural frequencies. If we denote the vertex sets of these pairs by $I =\{1, 2\}$ and $J = \{3, 4\}$, the constant of motion is $\mathrm{cot}\left((\theta_1 - \theta_2 + \theta_3 - \theta_4)/4\right)\tan\left((\theta_1 - \theta_2 - \theta_3 + \theta_4)/4\right)$. 

Motivated by this result, we identify a mechanism underlying this conservation, generalize the conserved quantity, and derive conditions for its existence in networks of nonidentical oscillators. This mechanism relies on the coexistence of a monomial eigenfunction [Thm.~\ref{thm:monomials}] and another Koopman eigenfunction, which exists under the following general conditions [proof outline in Materials and Methods]. 
\begin{theorem}[Vandermonde-ratio eigenfunction]\label{thm:vandermonde}
Let $I$ and $J$ be disjoint subsets of $\mathcal{V}$ of respective sizes $m\geq 2$ and $n\geq 2$ with $m + n \leq N$. Define $S_p:=\sum_{q\in I\setminus\{p\}}\sigma_{pq}$, $T_r:=\sum_{s\in J\setminus\{r\}}\tau_{rs}$, and
\begin{align}\label{eq:vandermonde}
V_{IJ}(\bm z)
&:=
\frac{\textstyle \prod_{p<q\in I}(\bar z_p-\bar z_q)^{\sigma_{pq}}}
     {\textstyle \prod_{r<s\in J}(z_r-z_s)^{\tau_{rs}}}\,,
\end{align}
where $\sigma_{pq} = \sigma_{qp}\in\mathbb{R}\setminus\{0\}$ for all $p,q\in I$ and $\tau_{rs} = \tau_{sr}\in\mathbb{R}\setminus\{0\}$ for all $r,s\in J$. If
\begingroup
    \renewcommand{\theenumi}{2.\arabic{enumi}}
    \renewcommand{\labelenumi}{\theenumi.}
\begin{enumerate}
\item \label{itm:2.1} $A_{jk} =0$ for all $j\in I\cup J$ and $k\in \mathcal V\setminus (I\cup J)$\,,
\item \label{itm:2.2} $A_{pq}=0$ for all distinct $p,q\in I$ and $A_{rs}=0$ for all distinct $r,s\in J$\,,
\item \label{itm:2.3}$A_{pr}=\mathcal A\,T_r$ and $A_{rp}=\bar{\mathcal{A}}\,S_p$ for all $p\in I$, $r\in J$ and $\mathcal{A}\in\mathbb{C}$\,,
\item \label{itm:2.4} $\omega_p = \Omega_I\in \mathbb{R}$ for all $p \in I$ and $\omega_r = \Omega_J\in \mathbb{R}$ for all $r \in J$\,,
\end{enumerate}
\endgroup
\noindent then $V_{IJ}(\bm z)$ is an eigenfunction of $\mathcal{K}$ with eigenvalue
\begin{align}\label{eq:eigenvalue_vandermonde}
    \lambda = -i(\Sigma_I\Omega_I + \Sigma_J\Omega_J)\,,
\end{align}
where $\Sigma_I = \frac{1}{2}\sum_{p\in I}S_p$ and $\Sigma_J = \frac{1}{2}\sum_{r\in J}T_{r}$.
\end{theorem}
We refer to this eigenfunction as a ``Vandermonde-ratio eigenfunction'', since it is the ratio of two Vandermonde determinants with different exponents for each factor. The regularity of these eigenfunctions is discussed in Sec.~\ref{SIsubsec:proof_thm2}.

The first condition states that the subgraph induced by $I\cup J$ is a source subgraph, while the second imposes a bipartite structure between $I$ and $J$. This bipartite graph can be weighted, directed, and signed according to the third condition~: a vertex $k_1 \in J$ (resp. $I$) can send equally-weighted edges to vertices in $I$ (resp. $J$) and another vertex $k_2\in J$ (resp. $I$) can do the same with different equally-weighted edges. 
At the same time, similarly to monomials, the phase lag $\alpha_{pr}$ associated with $A_{pr}$ must take a common value $\mathrm{Arg}(\mathcal{A})$ for all $p\in I$, $r\in J$, and the opposite value for all $p\in J$, $r\in I$.
Finally, the fourth condition allows the two parts $I$ and $J$ to have distinct natural frequencies, while requiring these frequencies to be identical within each part. 

The quantity $\Sigma_I$ (resp. $\Sigma_J$) has a clear network interpretation~: it is the in-strength $s^{\text{in}}_J$ (resp. $s^{\text{in}}_I$) of each vertex in $J$ (resp. $I$) up to a scale factor $2|\mathcal{A}|$. If $\Sigma_I \Omega_I = - \Sigma_J \Omega_J$, or equivalently, $s^{\text{in}}_I/s^{\text{in}}_J = -\Omega_I/\Omega_J$, the eigenvalue vanishes and the Vandermonde-ratio eigenfunction is itself a conserved quantity. 

More generally, for any $\Sigma_I, \Sigma_J, \Omega_I, \Omega_J$, there will always be a (time-independent) constant of motion, because the conditions of Thm.~\ref{thm:vandermonde} also directly imply the existence of a monomial eigenfunction. Indeed, choosing $\mu_j = S_j/2$ for $j\in I$, $\mu_j = T_j/2$ for $j \in J$, and $\mu_j = 0$ otherwise makes $z_1^{\mu_1}...z_N^{\mu_N}$ an eigenfunction with eigenvalue $+i(\Sigma_I\Omega_I + \Sigma_J\Omega_J)$. Hence, the product of eigenfunctions $z^{\bm\mu}V_{IJ}(\bm z)$ gives a constant of motion, whose real form is
\begin{align}\label{eq:conserved_vandermonde}
    \frac{\prod_{p<q\in I}\sin\left(\frac{\theta_p - \theta_q}{2}\right)^{\sigma_{pq}}}{\prod_{r<s\in J}\sin\left(\frac{\theta_r - \theta_s}{2}\right)^{\tau_{rs}}}\,.
\end{align}
For $m = n = 2$ and $\sigma_{12}=\tau_{34} = 1$, one recovers the result of Ref.~\cite{Burylko2022}, which can be expressed as the ratio $\sin((\theta_1 - \theta_2)/2)/\sin((\theta_3 - \theta_4)/2)$.
  
Bipartite graphs of phase oscillators enable various synchrony and stability patterns~\cite{Samani2008, Punetha2015, Montbrio2018, Nicolaou2019a, Peron2020} and interestingly, the conditions of Thm.~\ref{thm:vandermonde} are general enough to guarantee the existence of the conserved quantity for particular Janus oscillators~\cite{Nicolaou2019a, Peron2020} and excitation-inhibition Kuramoto oscillators~\cite{Montbrio2018} [Sec.~\ref{SIsubsec:janusEIKM}]. Figure~\ref{fig:fig2}B shows three examples of bipartite graphs admitting this conserved quantity. Since there are only two vertices in each layer of the motif~(b1), the edge weights from $I$ to $J$ (resp. $J$ to $I$) need to be identical (resp. with a different weight). By contrast, when one layer, say $I$, contains 3 or more vertices, vertices in $J$ can receive edges whose weights differ from one vertex of $I$ to another by condition~\ref{itm:2.3} (e.g., motifs~(b2) and (b3)).

The conserved quantity in Eq.~\eqref{eq:conserved_vandermonde} differs from those encountered in WS theory, but it is similarly built from products and ratios of sine factors involving half-angles. In the next subsection, we reframe WS theory within the Koopman framework and revisit the conservation of cross-ratios with an emphasis on their network-theoretic interpretation.

\subsection{Conserved cross-ratios}
\label{subsec:crossratios}
As a starting point, it is instructive to look more carefully at the Koopman generator and address the case of identical oscillators ($\omega_j = \omega\in \mathbb{R}$, $\alpha_{jk} = 0$ and $W_{jk} = 1$ for all $j,k\in\mathcal{V}$). Under these conditions, the generator becomes
\begin{align}\label{eq:kooku_identical}
    \mathcal{K}_{\mathrm{id}} = p(\bm{z})L_{-1} + i\omega L_0 - \overline{p(\bm{z})}L_1\,,
\end{align}
which is associated with an element of the Lie algebra for the projective special unitary group $\mathrm{PSU}(1, 1)$, where $L_n = \sum_{j=1}^N z_j^{n+1}\partial_j\,$ for $n\in\{-1,0,1\}$ are the generators of the algebra and $p(\bm{z}) = (1/2)\sum_{j=1}^Nz_j$. 

More generally, consider any dynamics with Koopman generator of the form 
\begin{align}\label{eq:generalLm101}
    \mathcal{K}_{\alpha\beta\gamma} = \alpha(t,\bm{z})L_{-1} + \beta(t,\bm{z})L_0 + \gamma(t, \bm{z})L_1\,,
\end{align}
including the Winfree model~\cite{Winfree1967}, the theta model~\cite{Ermentrout1986} or any Riccati-type dynamics. If one finds a joint invariant~\cite{Olver1993, Olver1995} for $L_{-1}, L_0, L_1$, i.e., such that $L_n[C(\bm{z})] = 0$ for all $n\in\{-1, 0, 1\}$, then $C$ is a constant of motion. The method of characteristics makes it possible \mbox{to deduce} such a joint invariant~: the cross-ratio
\begin{equation}\label{eq:cross_ratio}
c_{abcd}(\bm{z}) = \frac{(z_{c}-z_{a})(z_{d}-z_{b})}{(z_{c}-z_{b})(z_{d}-z_{a})}
\end{equation}
for distinct indices $a,b,c,d\in\mathcal{V}$ [Sec.~\ref{subsec:joint_invariants}]. This represents a simple systematic method for deriving the conservation of cross-ratios for phase oscillators, which is also already understood in the literature through projective geometry and complex analysis~\cite{Marvel2009, Stewart2011}.

Knowing that cross-ratios are constants of motion for any dynamics with a vector field of the form~\eqref{eq:generalLm101}, what are the conditions on $A$ so that $\mathcal{K}[c_{abcd}(\bm{z})] = 0$~? Answering this question brings the network interpretation of cross-ratio conservation into focus. The following theorem formalizes this interpretation by giving conditions that are both necessary and sufficient [proof outline in Materials and Methods]. 

\begin{theorem}[Cross-ratio conservation]\label{thm:crossratios}
The cross-ratio $c_{abcd}$ is a constant of motion of the Kuramoto model~\eqref{eq:kuramoto} if and only if the vertices $a$, $b$, $c$, and $d$ of the graph described by the complex matrix in Eq.~\eqref{eq:complex_matrix} have the same:
\begingroup
    \renewcommand{\theenumi}{3.\arabic{enumi}}
    \renewcommand{\labelenumi}{\theenumi.}
\begin{enumerate}
\item \label{itm:3.1} outgoing interactions within $\{a,b,c,d\}$, i.e.,
\begin{align*}
\begin{aligned}
    A_{ba} &= A_{ca} = A_{da} =: \mathcal{A}_a\,,\\ 
    A_{ab} &= A_{cb} = A_{db} =: \mathcal{A}_b\,, 
\end{aligned}
\qquad
\begin{aligned}
    A_{ac} &= A_{bc} = A_{dc} =: \mathcal{A}_c\,, \\
    A_{ad} &= A_{bd} = A_{cd} =: \mathcal{A}_d\,,
\end{aligned}
\end{align*}

\item  \label{itm:3.2} incoming interactions from $\mathcal{V}\setminus\{a, b, c, d\}$, i.e.,
\begin{equation*}
    A_{ak} = A_{bk} = A_{ck} = A_{dk}\,,\quad \forall k\in\mathcal{V}\setminus\{a, b, c, d\}
\end{equation*}

\item \label{itm:3.3} shifted natural frequencies
\begin{align*}
\omega_j - 2\,\mathrm{Im}(\mathcal{A}_j) = \omega_k - 2\,\mathrm{Im}(\mathcal{A}_k),\quad \forall j,k \in \{a,b,c,d\}\,.
\end{align*}

\end{enumerate}
\endgroup
\end{theorem}
As illustrated in Fig.~\ref{fig:fig2}C, condition~\ref{itm:3.1} constrains the possible directed network motifs that allow a cross-ratio to be a constant of motion. Condition~\ref{itm:3.2} restricts how these motifs can receive incoming edges from other vertices, e.g., a vertex $k_1$ can send equally-weighted edges to $\{a,b,c,d\}$ and another vertex $k_2$ can do the same with different equally-weighted edges. Finally, condition~\ref{itm:3.3} requires the oscillators to have the same effective natural frequency. Basic examples of motifs admitting conserved cross-ratios are given in Fig.~\ref{fig:fig2}C and in Sec.~\ref{subsec:basix_examples_thm3}. 

Thm.~\ref{thm:crossratios} also provides the necessary and sufficient conditions to have $N - 3$ conserved cross-ratios. Indeed, the model maximally has $N - 3$ functionally independent conserved cross-ratios if and only if
\begingroup
\renewcommand{\theenumi}{3\Alph{enumi}}
\renewcommand{\labelenumi}{\theenumi.}
\begin{enumerate}
    \item \label{itm:3A} $A_{j\ell} = A_{k\ell} =: \mathcal{A}_\ell$ for all $\ell\in\mathcal{V}$\,,
    \item \label{itm:3B} $\omega_j - 2\,\mathrm{Im}(\mathcal{A}_j) = \omega_k - 2\,\mathrm{Im}(\mathcal{A}_k)$\,,
\end{enumerate}
\endgroup
\noindent for all pairs $(j,k)$ with $j,k\in\mathcal{V}$ and $k,\ell\neq j$ [Cor.~\ref{cor:max_integrals_of_motion}]. These conditions were previously recognized to be sufficient~\cite{Lohe2017} and we add that they are also necessary. Thus, there are $N$ different directed graphs (non-isomorphic, weakly connected, binary) leading to $N - 3$ integrals, and complex weights satisfying \ref{itm:3A} can be included.

When $\omega_j = \omega$, $\alpha_{jk} = \delta - \pi/2$, $\sin\delta = 0$, and \mbox{$W_{jk} = 1/N$} for all $j,k$ in Eq.~\eqref{eq:kuramoto}, the model satisfies conditions~\ref{itm:3A}-\ref{itm:3B}, reduces to a Hamiltonian system~\cite{Watanabe1994}, and admits an elegant interpretation in hyperbolic geometry~\cite{Chen2017b, Chen2019, Lipton2021}. In this regime, there are $N-2$ WS integrals~: 
\begin{align}\label{eq:ws_integral}
    C^{\mathrm{ws}}_{j_1...j_N} = S_{j_1j_2}S_{j_2j_3}...S_{j_{N-1}j_N}S_{j_Nj_1}\,,
\end{align}
where \mbox{$S_{jk}(\bm\theta) = \sin((\theta_j-\theta_k)/2)$}~\cite{Watanabe1993, Watanabe1994, Goebel1995}. An example of a motif admitting six WS integrals is shown in Fig.~\ref{fig:fig2}~(c1), with the corresponding conditions stated in Lem.~\ref{lem:ws}. As shown in Ref.~\cite{Marvel2009}, cross-ratios are ratios of WS integrals, i.e., $c_{abcd} = -C^{\mathrm{ws}}_{acdb\,j_5...j_N}/C^{\mathrm{ws}}_{adcb\,j_5...j_N} = S_{ca}S_{db}/S_{cb}S_{da}$. Away from this special parameter regime, however, the WS quantities in Eq.~\eqref{eq:ws_integral} are generally not constants of motion. For the Koopman generator in Eq.~\eqref{eq:generalLm101}, they are instead Darboux functions, $\mathcal{K}_{\alpha\beta\gamma}[C^{\mathrm{ws}}_{j_1...j_N}] = \Lambda(\bm z)C^{\mathrm{ws}}_{j_1...j_N}$, and they all share the same cofactor $\Lambda(\bm z) = 2(\gamma(t, \bm{z})p(\bm z) - \alpha(t, \bm{z})\overline{p(\bm z)})$, which is another way to explain the conservation of their ratios~[Sec.~\ref{SIsubsec:WS_darboux}]. 

Thus far, Theorems \ref{thm:monomials}, \ref{thm:vandermonde}, and \ref{thm:crossratios} have provided general conditions that closely link the existence of constants of motion to specific motifs in the graph connecting the oscillators. On the one hand, these oscillator motifs, taken in isolation, can admit different Koopman eigenfunctions simultaneously, because the conditions of Theorems~\ref{thm:monomials}, \ref{thm:vandermonde}, and \ref{thm:crossratios} can overlap. This is illustrated in several examples in Fig.~\ref{fig:fig2}. On the other hand, since there is no restriction on the outgoing edges from these motifs to the rest of the network (white vertices in Fig.~\ref{fig:fig2}) and the motifs admitting conserved cross-ratios can have incoming edges in general, it is possible to connect the motifs in multiple ways. Each source subgraph then imposes a potentially complex forcing pattern on the rest of the network, and if such a motif admits a Koopman eigenfunction with imaginary eigenvalue $i\lambda$, the possible synchronization frequencies are constrained to harmonics, i.e., integer multiples of $\lambda$. Altogether, this general setup yields a broad family of weighted, directed, signed, and modular networks with nontrivial synchronization that admit Koopman eigenfunctions and conserved quantities.

From there, it is natural to ask whether the conserved quantities are associated with any symmetry in the oscillators' connections or dynamics. In fact, the conditions established in this section are not conditions on the existence of graph automorphisms (a.k.a., network symmetries~\cite{MacArthur2008}), known to have crucial implications for cluster synchronization~\cite{Nicosia2013, Sorrentino2016, Pecora2014, Nishikawa2016a, Schaub2016, Cho2017, Aguiar2018, Makse2025}. However, another type of symmetry will be useful, that is, Lie symmetries (a.k.a., point symmetries, continuous symmetries): transformations that map a solution of a system of differential equations to another solution  [Fig.~\ref{fig:fig1}B]~\cite{Olver1993}. Indeed, in the next section, we use continuous symmetries to generate additional constants of motion and extend the families of conserved quantities presented in Sec.~\ref{sec:constants}, thereby completing the list of quantities presented in Table~\ref{tab:eigenfunctions}.

\section{Symmetry-generated constants of motion}
\label{sec:symmetries}

For variational problems, Noether's theorem guarantees that continuous symmetries are associated with constants of motion, thus allowing a reduction of the system's order~\cite{Noether1918, Olver1993}. Although there is no guarantee that such symmetries underlie the presence of constants of motion in non-Lagrangian and non-Hamiltonian systems~\cite{Anco1997, Anco1998}, finding a Lie symmetry may still allow one to build new functionally independent constants of motion from known ones~\cite{Gonzalez-Gascon1978}. 


In this section, we introduce a lemma that formulates a criterion for Lie symmetries in terms of the Koopman generator. The lemma then leads us to find continuous Lie symmetries in the Kuramoto model that generate functionally independent constants of motion. 

\subsection{Koopmanian condition for continuous symmetries in dynamical systems}
The general method to derive Lie symmetries for differential equations involves computing the prolongation of the symmetry group action or its generators $\mathcal{S}$~\cite[Theorem 2.71]{Olver1993}, often requiring lengthy calculations. For Euler-Lagrange problems and Hamiltonian systems, the prolongation condition amounts to identifying commuting operators. As shown below, such a simplification is also possible for first-order ODEs. Indeed, for $\dot{\bm y} = \bm F (t, \bm y)$ with $\mathcal{U} = \partial_t + \mathcal{K}$ and $\mathcal{K} = \sum_{j = 1}^N F_j(t, \bm{y})\partial_j$, the next lemma gives the symmetry criterion in terms of operator commutation [Sec.~\ref{SIsubsec:inf_crit}].\\
\noindent\textbf{Lemma (Infinitesimal condition for Lie symmetries)}
    \textit{A connected local group of transformations $G$ acting on an open subset of $\mathbb{R} \times \mathbb{R}^N$ is a symmetry group of the first-order ODEs $\dot{\bm y} = \bm F (t, \bm y)$ if and only if
    \begin{align}\label{eq:symkoo}
    [\mathcal{U}, \mathcal{S}] - \mathcal{U}[\xi(t, \bm{y})]\,\mathcal{U} = 0
    \end{align}
    for every generator $\mathcal{S} = \xi(t, \bm{y})\partial_t + \sum_{j=1}^N\phi_j(t, \bm{y})\partial_{j}$ of $G$.}
\\
For symmetry generators where $\xi$ is constant and $\phi_1,...,\phi_N$ are time-independent, the infinitesimal criterion~\eqref{eq:symkoo} takes the more familiar and elegant form
\begin{align}\label{eq:symkook}
    [\mathcal{K}, \mathcal{S}] = 0\,.
\end{align}
If $\gamma_{\varepsilon} = \exp(\varepsilon \mathcal{S}) \in G$ is the symmetry generated by $\mathcal{S}$, the associated (non infinitesimal) criterion to Eq.~\eqref{eq:symkook} is $(D\gamma_\varepsilon)\bm{F} = \bm{F}(\gamma_\varepsilon)$, which is the usual $G$-equivariance condition~\cite{Golubitsky2002, Field2007, Burylko2022, Golubitsky2023}.

To find a symmetry generator, one uses a general symmetry generator~$\mathcal{S}$ (or a specific ansatz on the form of $\mathcal{S}$), which leads to partial differential equations called the determining equations. Solving these equations hopefully lead to specific forms of $\xi, \phi_1,...,\phi_N$. Although Eqs.~(\ref{eq:symkoo}-\ref{eq:symkook}) enable using commutation relations to search for symmetries, there is no general procedure to obtain particular solutions of the determining equations~\cite{Bluman2002}: this is the \emph{art} of Lie's method~\cite{Olver1993}.

When successful, however, this search is especially valuable~: Lie symmetries can generate constants of motion and Koopman eigenfunctions from known ones. This can be easily shown using the previous lemma. Indeed, for a constant of motion $C$ and a symmetry generator $\mathcal{S}$, Eq.~\eqref{eq:symkoo} and $\mathcal{U}[C] = 0$ directly lead to 
\begin{align}\label{eq:symmetry_generated_constants}
    \mathcal{U}[\mathcal{S}[C]] = \mathcal{S}[\mathcal{U}[C]] + \mathcal{U}[\xi]\,\mathcal{U}[C] = 0\,,
\end{align}
meaning that $\mathcal{S}[C]$ is conserved. For $\xi = 0$, the same calculation shows that if \mbox{$[\mathcal{U}, \mathcal{S}] = 0$} and $\psi$ is a Koopman eigenfunction, then $\mathcal{S}[\psi]$ is a Koopman eigenfunction with the same eigenvalue and $\mathcal{S}[\psi]/\psi$ is a constant of motion. Yet, these calculations do not guarantee that $\mathcal{S}[C]$ or $\mathcal{S}[\psi]$ are functionally independent of the known constants of motion and eigenfunctions, as the next subsection will illustrate.

\subsection{Continuous symmetries of the Kuramoto model}
\label{subsec:symkur}
For the Kuramoto model, the generator $\mathcal{U} = \partial_t + \mathcal{K}$ with $\mathcal{K}$ given in Eq.~\eqref{eq:kooku} acts on observables depending on $(t, \bm{z})$ and we want the symmetries to be automorphisms of the $N$-torus~[Fig.~\ref{fig:fig1}C] potentially acting on time. Using the criterion~\eqref{eq:symkoo}, it is straightforward to show that the global dilatation generator $i\,L_0$ (rotation of all the oscillators), the Koopman generator $\mathcal{K}$, and the trivial generator $f(t)\,\mathcal{U}$ for some smooth function $f$ are symmetry generators. Yet, the symmetry generators $iL_0$, $\mathcal{K}$, $f(t)\mathcal{U}$ do not generate independent constants of motion from the monomials, the Vandermonde-ratios or the cross-ratios.

To go further, a simple intuition comes in handy~: the symmetries must map periodic solutions to periodic solutions. This restricts $\phi_1,...,\phi_N$ in $\mathcal{S}$ to periodic functions, enabling their expansion in Fourier series $\phi_\ell(\bm z) = \sum_{\bm{p}\in\mathbb{Z}^N} \varphi_{\ell \bm{p}}z^{\bm{p}}$, under the assumptions $\xi = 0$ and $\partial_t\phi_{\ell}(t, \bm{z}) = 0$. After using commutation relations and simplifying, the general determining equations take the form of an infinite differential-algebraic system of equations [Sec.~\ref{SIsubsec:general_det_eq}]. To narrow our search, we limit $\bm{p}$ to a finite subset of $\mathbb{Z}^N$ with fixed total degree $\sum_{j=1}^Np_j$, leading to a finite overdetermined linear system [Sec.~\ref{SIsubsec:general_det_eq}-\ref{SIsubsec:determining_matrix}]
\begin{align}
    D(A)\bm{\varphi}=\bm 0\,,
\end{align}
where the ``determining matrix'' $D(A)$ depends only on the interaction matrix $A$.

Finding a symmetry then amounts to identifying $A$ for which the determining matrix has a zero singular value, with the corresponding right singular vector $\bm{\varphi}$ giving the coefficients of a symmetry generator.
This simplification highlights the link between continuous Lie symmetries and the underlying network structure in the Kuramoto model.
Reversing Yang's statement~\cite{Yang1980}, one can say that interaction dictates symmetry, a viewpoint in the spirit of the classic work of Golubitsky, Stewart and collaborators~\cite{Golubitsky1999, Golubitsky2002, Golubitsky2023, Makse2025}.

This approach, combined with symbolic calculations in basic examples~[Sec.~\ref{SIsubsec:proof_thm4}], leads us to infer a family of network motifs admitting continuous symmetries in the Kuramoto model. Whenever there is a source oscillator $s$ with natural frequency $\omega_{s}$ connecting disjoint subgraphs with vertex sets $\mathcal{W}_1,...,\mathcal{W}_r$ in the network, the Koopman generators of the subgraphs in the rotating frame of $s$,
\begin{align}\label{eq:peripheral_symmetry}
    \mathcal{S}_{\eta} = \mathcal{K}_{\eta} - i\omega_{s} L_0^{\eta}\,,\quad\eta\in\{1,...,r\}\,,\,r>1\,,
\end{align}
are Lie symmetry generators, where $L_0^{\eta} = \sum_{j\in\mathcal{W}_{\eta}}z_j\partial_j$ and $\mathcal{K}_{\eta} = \sum_{j\in\mathcal{W}_{\eta}}\sum_{k\in\mathcal{W}_{\eta}\cup\{s\}}(A_{jk}z_k - \bar{A}_{jk}\bar{z}_kz_j^2)\partial_j$ [Lem.~\ref{lem:sym_KL}]. Given a solution $(z_j(t))_{j\in\mathcal{V}}$, such symmetries make the oscillators in the subgraphs evolve in time in the frame of $s$ while leaving the dynamics of the other oscillators unchanged, which gives another solution $(\tilde{z}_j(t))_{j\in\mathcal{V}}$.

In some cases, it is possible to derive the explicit form of the symmetries. Let $\omega_s = 0$ and $\theta_s(0) = 0$ without loss of generality and 
\begin{align}\label{eq:peripheral_symmetry_identical}
    \mathcal{S}_\eta = \rho_\eta(\bm z) L_{-1}^\eta + i\Omega_\eta L_{0}^\eta - \overline{\rho_\eta(\bm z)} L_{1}^\eta
\end{align}
with $\rho_\eta(\bm z) = \mathcal{A}_s + \sum_{k\in\mathcal{W}_\eta} \mathcal{A}_{\eta k}z_k$, $\Omega_\eta = \omega_{\ell_\eta} - 2\Imag(\mathcal{A}_{\eta \ell_\eta})$, $\ell_\eta\in\mathcal{W}_\eta$, and $\mathcal{A}_s, \mathcal{A}_{\eta k} \in \mathbb{C}$ (e.g., when the vertices in $\mathcal{W}_\eta$ with $\# \mathcal{W}_\eta \geq 4$ satisfy the conditions from Thm.~\ref{thm:crossratios}). The generator $\mathcal{S}_\eta$ is thus associated with $\mathrm{PSU}(1,1)$ and given a solution $z_j(t)\in\mathbb{T}$ for all $t$ and $j\in\mathcal{V}$, the symmetry acts as a Möbius transformation to generate the transformed solution
\begin{align}\label{eq:mobius_symmetry}
    \tilde{z}_j(t) = \exp(\varepsilon \mathcal{S}_\eta)z_j(t) =     \frac{e^{i\phi_\eta(\varepsilon)}z_j(t) + Z_\eta(\varepsilon)}{1 + e^{i\phi_\eta(\varepsilon)}\overline{Z_\eta(\varepsilon)}z_j(t)}
\end{align}
where $j\in \mathcal{W}_\eta$, $(Z_\eta(0), \phi_\eta(0)) = (0, 0)$, and
\begin{align}
    \od[]{Z_\eta}{\varepsilon} &= F_\eta + i\Omega_\eta Z_\eta - \bar{F}_\eta Z_\eta^2~\label{eq:Z_variable}\\
    \od[]{\phi_\eta}{\varepsilon} &= \Omega_\eta + i(\bar{F}_\eta Z_\eta - F_\eta\bar{Z}_\eta)~\label{eq:phi_variable}\\
    F_\eta(Z_\eta, \phi_\eta, t) &= \mathcal{A}_s + \sum_{k\in\mathcal{W}_\eta} \mathcal{A}_{\eta k}\frac{e^{i\phi_\eta}z_k(t) + Z_\eta}{1 + e^{i\phi_\eta}\overline{Z_\eta}z_k(t)}~\label{eq:F_function}\,.
\end{align}
In this case, the corresponding symmetry subgroup is therefore a Möbius group. We describe how to apply the above methodology numerically in Sec.~\ref{subsec:explicit_sym} and we provide a concrete example in Sec.~\ref{subsec:viz}.

We now turn to the generation of new constants of motion from Lie symmetries in the Kuramoto model.

\subsection{Peripheral constants of motion}
\label{subsec:peripheralevolution}

The symmetry generators in Eq.~\eqref{eq:peripheral_symmetry} enable us to extract new forms of constants of motion in the Kuramoto model~[Fig.~\ref{fig:fig2}D, example in Eq.~\eqref{eq:symgen_example}], as stated in the next theorem [proof outline in Materials and Methods]. 
\begin{theorem}[Peripheral constants of motion]\label{thm:thm4}
   Consider that the Kuramoto model in Eq.~\eqref{eq:kuramoto} has a symmetry generator $\mathcal{S}_\eta$ as defined in Eq.~\eqref{eq:peripheral_symmetry} related to the subgraph induced by $\mathcal{W}_\eta$ and the source oscillator $s$.

   \vspace{0.1cm}
   
   \noindent {\normalfont 4A.} If four vertices $a,b,c,d \in\mathcal{V}\setminus\{s\}$ have 
   \begingroup
   \setlength{\leftmargini}{4em}
    \renewcommand{\theenumi}{4.\arabic{enumi}A}
    \renewcommand{\labelenumi}{\theenumi.}
   \begin{enumerate}
   \item \label{itm:4.1A} a unique incoming edge with weight $\mathcal{A}_s$ from $s$;
   \item \label{itm:4.2A} identical natural frequencies $\omega$;
   \item \label{itm:4.3A} and one, two or three of them belong to $\mathcal{W}_\eta$,
   \end{enumerate}
   \endgroup
   \noindent then the cross-ratio $c_{abcd}$ and $\mathcal{S}_\eta[c_{abcd}]$ are functionally independent constants of motion.
\vspace{0.1cm}

   \noindent{\normalfont 4B.} If three vertices $u, v, w\in\mathcal{V}\setminus\{s\}$ have
   \begingroup
   \setlength{\leftmargini}{4em}
    \renewcommand{\theenumi}{4.\arabic{enumi}B}
    \renewcommand{\labelenumi}{\theenumi.}
   \begin{enumerate}
   \item \label{itm:4.1B} a unique incoming edge with weight $\mathcal{A}_s$ from $s$;
   \item \label{itm:4.2B} identical natural frequencies $\omega = \omega_s - 2\,\mathrm{Im}(\mathcal{A}_s)$;
   \item \label{itm:4.3B} and one or two of them belong to $\mathcal{W}_\eta$,
   \end{enumerate}
   \endgroup
\noindent then the cross-ratio $c_{suvw}$ and $\mathcal{S}_\eta[c_{suvw}]$ are functionally independent constants of motion.
\end{theorem}
Because the symmetry acts on peripheral oscillators of a source, we refer to $\mathcal{S}_\eta[c_{abcd}]$ and $\mathcal{S}_\eta[c_{suvw}]$ as ``peripheral constants of motion''.

As a consequence of the preceding theorem, if there is a source star with $n\geq 4$ leaves having identical frequencies $\omega$ (to satisfy \ref{itm:4.2A}), where all edges from the source $s$ to the leaves have identical complex weight $\mathcal{A}_{s}$ (to satisfy \ref{itm:4.1A}), Thm.~\ref{thm:crossratios} (and Lem.~\ref{lem:independence_cross_ratios}) implies that there are $n - 3$ conserved cross-ratios $(c_{\rho})_{\rho\in \{1,...,n-3\}}$ associated with the leaves. There is also one additional conserved and functionally independent cross-ratio $c_s$ depending on the core if $\omega = \omega_s - 2\,\mathrm{Im}(\mathcal{A}_s)$ (condition~\ref{itm:4.2B}). 

Now, recall that there are no restrictions on the outgoing edges from the $n$ leaves to conserve the related cross-ratios and that $B_n$ is the Bell number. Hence, there are $r = B_n - 1$ ways of partitioning the leaves in at least two sets (to satisfy \ref{itm:4.3A} and \ref{itm:4.3B}) while including them as sources in their respective subgraphs' vertex sets $\mathcal{W}_1,...,\mathcal{W}_r$ [Fig.~\ref{fig:fig2}D] (to satisfy \ref{itm:4.1A} and \ref{itm:4.1B}). This setup makes $r$ symmetry generators act on at least one of the $n-3$ cross-ratios in such a way that $\mathcal{S}_{\eta}[c_{\rho}]$ for some $\rho$ and $\eta$ are conserved. If $\omega_s = \omega + 2\,\mathrm{Im}(\mathcal{A}_s)$ is satisfied, $\mathcal{S}_{\eta}[c_{s}(\bm{z})]$ is also conserved for all $\eta$. Yet, not all of these constants are functionally independent; their independence hinges on the specific network structure.

We have thus identified symmetries that generate constants of motion through the time evolution of peripheral parts in networks admitting conserved cross-ratios. We now consider translations of the phases and the phase lags. Except for monomials, these transformations provide parameterized versions of the conserved quantities introduced above.

\subsection{Parametrized families of conserved quantities}
\label{subsec:parametrized_families}
Although global phase translations generated by $iL_0$ are symmetries, non-uniform translations generated by $iL_0^{\bm{\alpha}} = i\sum_{j=1}^N \alpha_j z_j \partial_j = \sum_{j=1}^N \alpha_j\partial_{\theta_j}$ with $\alpha_j \in\mathbb{R}$ are not symmetries since \mbox{$[\mathcal{K}, L_0^{\bm{\alpha}}] \neq 0$}. However, if $g_j(\bm\theta) := \omega_j + \sum_{k=1}^NW_{jk}\sin(\theta_k - \theta_j - \alpha_{jk})$, then  
\begin{align}\label{eq:translation_dynamics}
    e^{iL_0^{\bm\alpha}}[g_j(\bm\theta)] = \omega_j + \sum_{k=1}^NW_{jk}\sin(\theta_k - \theta_j -\alpha_j - \alpha_{jk} + \alpha_k)
\end{align}
and replacing $\alpha_{jk}$ by \mbox{$-\alpha_j + \alpha_{jk} + \alpha_k$} gives the original vector field. If the phase lags are treated as state variables satisfying $\dot\alpha_{jk} = 0$, this equivalently means that
\begin{align}\label{eq:translation_generator}
\mathcal{S}_{\bm{\alpha}} &= \sum_{j=1}^N \alpha_j \partial_{\theta_j}-\sum_{p,q=1}^N (\alpha_p-\alpha_q)\,\partial_{\alpha_{pq}}\,,
\end{align}
generates the translation symmetry $e^{\varepsilon \mathcal{S}_{\bm{\alpha}}}$, where
\begin{align}\label{eq:translation}
    e^{\varepsilon \mathcal{S}_{\bm{\alpha}}}f(\bm\theta, \bm\alpha) = f(\bm \theta + \varepsilon\bm\alpha\,, \,\,\alpha + \varepsilon(\bm{1}\bm{\alpha}^\top-\bm{\alpha}\bm{1}^\top))\,,
\end{align}
for some function $f$ and $\bm{\alpha} = (\alpha_1,...,\alpha_N)$ [Sec.~\ref{SIsubsec:parametrized_families}]. 
The associated translation group $G_{\bm{\alpha}}$ is a symmetry group of the extended system~\eqref{eq:kuramoto} with $\dot\alpha_{jk} = 0$ for all $j,k$. 

These translations do not produce fundamentally different algebraic types of Koopman eigenfunctions or constants of motion, but they generate $G_{\bm{\alpha}}$-orbits of such quantities. In other words, if $\psi$ is a Vandermonde-ratio eigenfunction, a cross-ratio, a WS integral, or a peripheral constant of motion, then $\exp(\varepsilon\mathcal{S}_{\bm{\alpha}})[\psi]$ gives the corresponding members of the family, as presented in Table~\ref{tab:eigenfunctions} for $\varepsilon = -1$. The conditions of the previous theorems are also transformed accordingly under the same action to preserve these members.

For Vandermonde-ratio eigenfunctions, applying the symmetry to condition~\ref{itm:2.3} of Thm.~\ref{thm:vandermonde} yields 
\begin{enumerate}
    \item[2.3.'] $A_{pr} = e^{i\alpha_p}\mathcal A\,T_re^{-i\alpha_r}$ and $ A_{rp} = e^{i\alpha_r}\bar{\mathcal{A}}\,S_pe^{-i\alpha_p}$,
\end{enumerate}
for all $p\in I$, $r\in J$. 
In the case of cross-ratios, choosing $a$ as the label of reference, the transformation gives
    \begin{align}\label{eq:cross_ratio_family}
        e^{-\mathcal{S}_{\bm{\alpha}}}[c_{abcd}(\bm{z})] 
        = \frac{(e^{i\beta_c}z_c - z_a)(e^{i\beta_d}z_d - e^{i\beta_b}z_b)}{(e^{i\beta_c}z_c - e^{i\beta_b}z_b)(e^{i\beta_d}z_d - z_a)}\,,
    \end{align}
for all $\beta_\ell = \alpha_a - \alpha_\ell \in \mathbb{R}$ with $\ell \in \mathcal{W} := \{a,b,c,d\}$, forming a three-parameter family of cross-ratios. The conservation of these cross-ratios in the Kuramoto model~\eqref{eq:kuramoto} follows from applying the symmetry to the conditions of Thm.~\ref{thm:crossratios}, leading to
\begingroup
    \renewcommand{\theenumi}{3.\arabic{enumi}'}
    \renewcommand{\labelenumi}{\theenumi.}
\begin{enumerate}
\item \label{itm:3.1p} $e^{i\beta_\ell}A_{k\ell} =: \mathcal{A}_\ell \in \mathbb{C}$,  $\forall\ell \in \mathcal{W}$ and $\forall k\in \mathcal{W}\setminus \{\ell\}$;
\item \label{itm:3.2p} $A_{ak} = e^{i\beta_b}A_{bk} = e^{i\beta_c}A_{ck} = e^{i\beta_d}A_{dk}$, \mbox{$\forall k\in\mathcal{V}\setminus \mathcal{W}$};
\item \label{itm:3.3p} $\omega_a - 2\Imag(\mathcal{A}_a) = \omega_\ell - 2\Imag(e^{-i\beta_\ell}\mathcal{A}_\ell)$, $\forall \ell \in \{b, c, d\}$.
\end{enumerate}
\endgroup
\noindent Such a family of cross-ratios is not specific to the Kuramoto model: the same construction holds for general complex Riccati-type dynamics, such as those considered by Cestnik and Martens~\cite{Cestnik2024}. The families of peripheral constants of motion and WS integrals are also obtained in the same way [Sec.~\ref{SIsubsec:parametrized_families}].

While the preceding theorems establish the formal groundwork, they also enable practical insights into synchronization and conserved phase relationships in complex empirical networks. The next section provides examples that make these theorems and their implications more explicit.

\section{Examples and applications}
\label{sec:examples}

In this section, we first use a basic example to illustrate the geometric restrictions that constants of motion and Lie symmetries impose on the oscillators. Then, we shortly present an application to the synchronization of conformist-contrarian oscillators. We finally show that various network motifs admitting conserved quantities are present in empirical networks. 

\subsection{Visualizing constants of motion and symmetries}
\label{subsec:viz}
Let us now consider the example illustrated in Fig.~\ref{fig:fig3}. The simplest part of the graph in panel A is the directed star of five oscillators (vertices 1 to 5) with Koopman generator $\partial_t + i\omega_1z_1\partial_1 + \mathcal{K}_2 + \mathcal{K}_3 + \mathcal{K}_4 + \mathcal{K}_5$, where $\mathcal{K}_{\eta} = (i\omega z_\eta + \mathcal{A}_1z_1 - \bar{\mathcal{A}}_1\bar{z}_1z_\eta^2)\,\partial_\eta$ for $\eta \in\{2,3,4,5\}$ and $\mathcal{A}_1 \in\mathbb{C}\setminus\{0\}$. The presence of a source $s = 1$ implies that there is a monomial eigenfunction $z_1$ with eigenvalue $i\omega_1$ [Thm.~\ref{thm:monomials}] and $z_1e^{-i\omega_1 t}$ is conserved. Thm.~\ref{thm:crossratios} then guarantees the conservation of $c_{2345}$. There are also five symmetry generators $\mathcal{S}_{\eta} = \mathcal{K}_{\eta} - i\omega_1 L_0^\eta$ with $\eta\in\{2,...,6\}$ [Lem.~\ref{lem:sym_KL}] and Thm.~\ref{thm:thm4} ensures that $\mathcal{S}_\eta[c_{2345}]$ for all $\eta\in\{2,3,4,5\}$ are conserved. If $\omega_1 = \omega + 2\,\Imag(\mathcal{A}_1)$, then $c_{1234}$ is also conserved along with $\mathcal{S}_\eta[c_{1234}]$ for all $\eta\in\{2,3,4\}$.

Altogether, there are five independent constants of motion, say $z_1e^{-i\omega_1 t}$, $c_{1234}$, $c_{2345}$, $\mathcal{S}_2[c_{2345}]$, $\mathcal{S}_3[c_{2345}]$, for this star and this subsystem is completely integrable [Sec.~\ref{SIsubsec:basic_example_thm4}]. For $\mathcal{A}_1 = \sigma/4\in\mathbb{R}$, the real form of the peripheral constant of motion generated by $\mathcal{S}_2$ and $c_{2345}$ (assumed positive) is
\begin{align}\label{eq:symgen_example}
    \mathcal{S}_2[(2/\sigma)\ln c_{2345}] = \frac{C_{12}S_{12}S_{45}}{S_{42}S_{52}}
\end{align}
with $S_{jk} = \sin\left((\theta_j - \theta_k)/2\right)$ and $C_{jk} = \cos\left((\theta_j - \theta_k)/2\right)$. 

The vertices 6 and 7 admit a second monomial eigenfunction. One can thus form the time-independent constant of motion $z_1^{\nu_1}z_6^{\nu_6}z_7^{\nu_7}$ and illustrate the level set of its real form, i.e., a plane [Fig.~\ref{fig:fig3}B].
Also, the Lie symmetries $\exp(\varepsilon (\mathcal{S}_3 + \mathcal{S}_4 + \mathcal{S}_5))$ or $\exp(\varepsilon\mathcal{S}_6)$ act as Möbius transformations and $\exp(\varepsilon (\mathcal{S}_3 + \mathcal{S}_4 + \mathcal{S}_5))$ maps the solution from one invariant set to another [Fig.~\ref{fig:fig3} C and D]. The system is set in the reference frame of the source ($\omega_1 = 0$), making it possible to illustrate the level sets entirely.

\subsection{Perturbing the synchronization of conformist-contrarian oscillators}
\label{sec:conformist_contrarians}
Consider again the example in Fig.~\ref{fig:fig3}. Lohe~\cite{Lohe2017} showed that for the isolated part $\mathcal{W}_6$ ($\mathcal{A}_\mathrm{pert} = 0$) satisfying conditions~\ref{itm:3A}-\ref{itm:3B}, the coupling parameter $\gamma = \sum_{\ell \in \mathcal{W}_6}\mathrm{Re}(\mathcal{A}_\ell)$ measures whether the conformists or the contrarians dominate the long-term behavior of the system. In short, for $\gamma < 0$, the system becomes frequency synchronized; for $\gamma = 0$, periodic states are possible; and for $\gamma > 0$, it synchronizes perfectly. Connecting only a single leader to $\mathcal{W}_6$ can preserve all the $\#\mathcal{W}_6 - 3$ cross-ratios (condition~\ref{itm:3.2}), while inducing notable changes in the synchrony of~$\mathcal{W}_6$. 

When the contrarians prevail in $\mathcal{W}_6$ ($\gamma < 0$), the independent conformist leader~1 disrupts the frequency synchronization. In its frame, the leader is stationary and drives the oscillators in $\mathcal{W}_6$ to stop moving. But if its influence is not strong enough, the conformists and contrarians continue to push the system toward frequency synchronization. These opposing forces lead to the limit cycle in Fig.~\ref{fig:fig3}F. If the leader exerts a stronger influence, every oscillator in $\mathcal{W}_6$ converges to a fixed position [Fig.~\ref{fig:fig3}G]. This also explains the shift of the synchronization threshold toward negative values of $\gamma$ [Fig.~\ref{fig:fig3}E].

\subsection{Detecting network motifs admitting conserved quantities in empirical networks}

Generalized Kuramoto models arise as limiting cases of distinct models of complex systems. For instance, a phase reduction of the Wilson-Cowan model~\cite{Wilson1973}, which describes the activity of excitatory and inhibitory regions in the brain, leads to a generalized Kuramoto model as in Eq.~\eqref{eq:kuramoto}~\cite{Schuster1990, Pietras2019}. For power grids, approximating power balance equations leads to a second-order Kuramoto model providing insights into grid robustness~\cite{Filatrella2008, Rohden2012, Zhang2019b}. The model is also interpretable as a model of social opinion formation~\cite{Hong2011_pre}, which motivates the terminology used in the previous subsection. 

In all these examples, the phase description of the Kuramoto model should not be seen as a realistic mechanistic model of the underlying system, but rather as a mathematically tractable limit that preserves certain qualitative features of the collective dynamics. In fact, the Kuramoto model has even been used as a tool to reveal the topological scales~\cite{Arenas2006} or the detailed connectivity~\cite{Timme2007} of complex networks through synchronization. In a similar spirit, the previous theorems establish how conserved phase relationships are linked to specific network motifs. But do such motifs actually appear in empirical networks~?

Answering this question requires the detection of network motifs~\cite{Milo2002, Alon2007, Yu2020, Makse2025} that admit constants of motion; a task that quickly becomes computationally intensive.
For this application, we restrict our attention to a subset of these motifs, namely sources, 2-sources (e.g., (a2) in Fig.~\ref{fig:fig2}), and network motifs admitting conserved cross-ratios.
In Materials and Methods, we introduce a method to detect these motifs using the binary versions of empirical networks' weight matrices.

By analyzing 652 empirical networks---including social networks, power grids, and connectomes---we found 498 networks admitting at least one constant of motion [Fig.~\ref{fig:fig4}C] and we establish lower bounds on the number of conserved quantities~[Fig.~\ref{fig:fig4}D]. The average lower bound on the number of constants of motion, divided by $N$, reaches 16\% for undirected networks and 29\% for directed networks. These large averages are mostly explained by the presence of sources~[Fig.~\ref{fig:fig4}E], stars (e.g., in power grids), and maximal cliques with at least four vertices (e.g., in social networks). Yet, we also found less typical motifs such as 2-sources~(e.g., Fig.~\ref{fig:fig4}A) and those admitting conserved cross-ratios that are neither stars nor maximal cliques (e.g., Fig.~\ref{fig:fig4}B).

On the one hand, the presence of sources and 2-sources indicates that it is possible for some oscillators to have their phase constrained to (hyper)planes, as illustrated in Fig.~\ref{fig:fig3}B. On the other hand, having a motif admitting a conserved cross-ratio with a given non-zero value means that the phases of the involved oscillators cannot cross. While the concrete consequences of these observations are application-dependent, our results shed light on specific structural patterns that can sustain stable quantities over time in complex oscillator systems. 


\section*{DISCUSSION}

Spectral properties of linear operators, such as adjacency, Laplacian, and Jacobian matrices, have been central to understanding nonlinear dynamics on complex networks, from stability~\cite{May2001,Neri2020,Meena2023} and control~\cite{Liu2016a,Forrow2018} to critical transitions~\cite{Restrepo2005,Restrepo2008,Karrer2014,VanMieghem2012,Castellano2017a} and dimension reduction~\cite{Gfeller2007,Jiang2018,Laurence2019,Thibeault2024}. Yet, the spectral properties of the Koopman generator for such dynamics on networks remain largely unexplored. The Koopman generator encodes both the structure and the dynamics, making it a particularly relevant linear operator to characterize the structure-function relationship in complex systems. Uncovering its spectral properties is challenging, however.

In this paper, we identified various network motifs admitting Koopman eigenfunctions in the celebrated Kuramoto model. We combined these eigenfunctions and found Lie symmetries that generate conserved quantities determined by heterogeneous connection patterns between the oscillators. This provides different possibilities for obtaining $1 \leq n \leq N$ constants of motion, which, by the preimage theorem, constrains trajectories to a manifold of dimension $N-n$. The results thus offer a clear path toward partial integrability of general coupled oscillator dynamics~\cite{Lohe2018, Jacimovic2018, Lohe2020, Lipton2021, Cestnik2024} on more complex networks. 

Another direction concerns a deeper investigation of continuous symmetries, but also various other types of symmetries, to generate other Koopman eigenfunctions and constants of motion. For instance, we have not considered discrete symmetries~\cite{Burylko2022} and since constants of motion are not necessarily linked to continuous symmetries in general, adjoint symmetries may also provide complementary insights~\cite{Anco1997, Anco1998}. Network-based notions of symmetry, including graph automorphisms, groupoid formalisms, and graph fibrations~\cite{Nijholt2016, Morone2019a, Morone2020, Golubitsky2023, Makse2025}, also offer promising avenues for relating conserved quantities to network patterns.

Detecting the motifs in empirical networks from the four theorems in their full generality (for directed, weighted, and signed networks) remains an open problem. The conditions of the theorems are also expected to be satisfied only approximately for some motifs in empirical networks. The prevalence of these motifs, as well as their ability to support observables that are nearly conserved over finite time intervals, is yet to be characterized. In fact, beyond partial integrability, the framework provides a natural basis for approximate dimension reductions of oscillator networks~\cite{Ott2008, Gfeller2008, Gottwald2015} using nonlinear observables rather than linear ones~\cite{Gao2016,Laurence2019,Thibeault2020, Vegue2023}. In this regard, spectral methods from Koopman theory~\cite{Rowley2009, Budisic2012, Brunton2022, Colbrook2023} and perturbative approaches~\cite{Vlasov2016a} are promising avenues for future work.

More generally, the present framework suggests an inverse spectral perspective for dynamical systems on networks, where one seeks classes of dynamics admitting prescribed Koopman eigenfunctions or conserved quantities. In this respect, the framework is not restricted to Kuramoto-type dynamics and naturally extends to much broader classes of nonlinear systems, as will be further developed in a subsequent paper.



\subsection*{MATERIALS AND METHODS}






\subsubsection*{Proof outline for Thm.~1}
Applying $\mathcal{K}$ directly to $z^{\bm\mu}$ yields
\begin{equation}\label{eq:eigen_equation_monomial}
\sum_{\substack{j,k\in\mathcal{W}\\ k\neq j}} \left(A_{jk}\mu_j - \bar{A}_{kj}\mu_k\right)z^{\bm \mu - \bm e_j + \bm e_k} + \sum_{j\in\mathcal{W}}\sum_{k\in\mathcal{V}\setminus\mathcal{W}} \left(A_{jk}\mu_j z^{\bm \mu - \bm e_j + \bm e_k} - \bar{A}_{jk}\mu_jz^{\bm \mu + \bm e_j - \bm e_k }\right)+ iz^{\bm \mu}\sum_{j\in\mathcal{W}}\mu_j\omega_j\,,
\end{equation}
where $(\bm e_j)_\ell = \delta_{j\ell}$ and all monomials are linearly independent. Clearly, if $z^{\bm{\mu}}$ is an eigenfunction, its eigenvalue is $i\sum_{j\in\mathcal{W}}\mu_j\omega_j$. The necessary and sufficient conditions on $\bm \mu$ and $A$ for the eigenvalue equation to be satisfied become $A_{jk} = 0$ for all $j\in\mathcal{W}$ and $k\in\mathcal{V}\setminus\mathcal{W}$, and $A_{jk}\mu_j = \bar{A}_{kj}\mu_k$ for all $j,k\in\mathcal{W}$ with $j\neq k$. In terms of $W$ and $\alpha$, these conditions are equivalent to $W_{jk} = 0$ for all $j\in\mathcal{W}$ and $k\in\mathcal{V}\setminus\mathcal{W}$ (condition~\ref{itm:1.1}), and
\begin{equation}\label{eq:nec_suf_cond_Walpha_mm}
    \mu_jW_{jk}e^{i\alpha_{jk}} = \mu_kW_{kj}e^{-i\alpha_{kj}}
\end{equation}
for all $j,k\in\mathcal{W}$ with $j\neq k$. Since $|\alpha_{jk}| < \pi/2$ and $\mu_jW_{jk} \in \mathbb{R}$, it is straightforward to show that this is equivalent to $\alpha_{jk} = - \alpha_{kj}$ (condition~\ref{itm:1.4}) and $\mu_jW_{jk} = \mu_kW_{kj}$. The latter means that $(W_{jk})_{j,k\in\mathcal{W}}$ is symmetrizable and is satisfied if and only if the second and third conditions of the theorem are satisfied, by Lem.~\ref{lem:symmetrizable_equivalence}. The detailed proof is in Sec.~\ref{subsec:proof_thm1}, along with a discussion on the extension of the theorem for phase lags in the interval $(-\pi/2, \pi/2]$. 

When the exponents $\mu_j$ are not integers, the monomial eigenfunctions $z^{\bm\mu}=e^{i\bm\mu^\top\bm\theta}$ are naturally understood as smooth functions on $\mathbb R^N$, the universal cover of the torus $\mathbb T^N$, where the phase variables $\theta_j$ are not identified modulo $2\pi$. Ultimately, the corresponding real observable related to the eigenfunction is the linear function $\sum_{j=1}\mu_j\theta_j$ with $\theta_j\in\mathbb{R}$. However, the monomial observables themselves do not define globally continuous functions on $\mathbb T^N$, since $e^{i\mu_j(\theta_j+2\pi)} = e^{i2\pi\mu_j}e^{i\mu_j\theta_j}$, which differs from $e^{i\mu_j\theta_j}$ when $\mu_j\notin\mathbb Z$. Nevertheless, they remain smooth after restriction to any simply connected coordinate patch of $\mathbb T^N$, where continuous phase coordinates can be chosen locally.

\subsubsection*{Proof outline for Thm.~2}
Computing $\mathcal K[\ln V_{IJ}(\bm z)]$ and applying the first two conditions give $\mathcal K[\ln V_{IJ}(\bm z)] = F_{\mathrm{diagonal}} + F_{\mathrm{bipartite}}$ with
\begin{align}
    F_{\mathrm{diagonal}} &= \sum_{\substack{p,q\in I\\ p\neq q}}
\sigma_{pq}z_q\frac{A_{pp} -\bar A_{pp}}{z_p-z_q}-\sum_{\substack{r,s\in J\\ r\neq s}}\tau_{rs}z_s\frac{A_{ss}-\bar A_{ss}}{z_s-z_r}\\
    F_{\mathrm{bipartite}} &= \sum_{\substack{p,q\in I\\ p\neq q}}\sum_{k\in J}\sigma_{pq}\,z_q\,\frac{A_{pk}z_k\bar z_p-\bar A_{pk}\bar z_k z_p}{z_p-z_q}-\sum_{\substack{r,s\in J\\ r\neq s}}\sum_{k\in I}\tau_{rs}z_s\,\frac{A_{sk}z_k\bar z_s-\bar A_{sk}\bar z_k z_s}{z_s-z_r}\,.
\end{align}
Recalling that $A_{jj} = i\omega_j/2$, the fourth condition implies that $F_{\mathrm{diagonal}} = -i(\Sigma_I\Omega_I + \Sigma_J\Omega_J)$, i.e., the eigenvalue. It remains to show that $F_{\mathrm{bipartite}} = 0$. Applying the third condition yields
\begin{align}\label{eq:lastrelation_mm}
    F_{\mathrm{bipartite}}=\,\mathcal{A}\,\Big(\sum_{r\in J}T_rz_r\sum_{\substack{p,q\in I\\ p\neq q}}\sigma_{pq}\,\,\frac{ \bar z_p z_q}{z_p-z_q}
    + \sum_{p\in I}S_p\bar z_p\sum_{\substack{r,s\in J\\ r\neq s}}\tau_{rs}\,\frac{z_r^2}{z_r-z_s}\Big)\,.
\end{align}
Yet, note that 
\begin{align}\label{eq:vandermonde_relations_mm}
    \sum_{\substack{p,q\in I\\ p\neq q}}\frac{\sigma_{pq}\bar z_p z_q}{z_p-z_q} = -\sum_{p\in I}S_p\bar z_p\,,\qquad
    \sum_{\substack{r,s\in J\\ r\neq s}}\,\frac{\tau_{rs}z_r^2}{z_r-z_s} = \sum_{r\in J} T_rz_r\,.
\end{align}
Substituting these relations in Eq.~\eqref{eq:lastrelation_mm} makes $F_{\mathrm{bipartite}}$ vanish, which demonstrates the result. The detailed proof is given in Sec.~\ref{SIsubsec:proof_thm2}.

\subsubsection*{Proof outline for Thm.~3}
The Koopman generator can be expressed as $\mathcal{K} = \bm p^\top \bm L_{-1} - \bar{\bm p}^\top \bm L_1$ with $\bm p = A\bm z = (p_1,\; \ldots,\;p_N)^\top$, $\bm L_n =\left(\ell_1^{n},\; \ldots,\; \ell_N^{n}\right)^\top$ and $\ell_{j}^n=z_j^{n+1}\partial_j$ for all $n\in\mathbb{Z}$ and $j\in\{1,\ldots,N\}$. Simple calculations lead to
\begin{align*}
    \bm p^\top \bm L_{-1} [\ln c_{abcd}(\bm z)]&= \frac{p_{c} - p_{a}}{z_c - z_a} + \frac{p_{d} - p_{b}}{z_d - z_b} - \frac{p_{c} - p_{b}}{z_c - z_b} - \frac{p_{d} - p_{a}}{z_d - z_a}\\
    \bar{\bm p}^\top \bm L_1 [\ln c_{abcd}(\bm z)] &= \frac{\bar{p}_{c}z_c^2 - \bar{p}_{a}z_a^2}{z_c - z_a} + \frac{\bar{p}_{d}z_d^2 - \bar{p}_{b}z_b^2}{z_d - z_b} - \frac{\bar{p}_{c}z_c^2 - \bar{p}_{b}z_b^2}{z_c - z_b} - \frac{\bar{p}_{d}z_d^2 - \bar{p}_{a}z_a^2}{z_d - z_a}\,.
\end{align*}
Further algebraic manipulations then give expressions for $\mathcal{K}[\ln c_{abcd}]$ involving only factorized differences between pairs of elements of $A$. To prove sufficiency, this enables the direct use of the first and second conditions. Only a sum of independent monomials remains with coefficients involving the diagonal terms of $A$ (i.e., the natural frequencies) and $\mathcal{A}_a$, $\mathcal{A}_b$, $\mathcal{A}_c$, $\mathcal{A}_d$. Using the third condition cancels each of these coefficients. To prove necessity, it is convenient to separate the sum over $k\in\mathcal{V}$ as $k\in\{a,b,c,d\}$ and $k\notin\{a,b,c,d\}$. Then, one expands the expressions and collects all the terms with the same monomial. This leads to a linear system of equations that can be solved to deduce the conditions and their necessity. The detailed proof is provided in Sec.~\ref{SIsubsec:proof_thm3}.

\subsubsection*{Proof outline for Thm.~4}
Condition~\ref{itm:4.1A} implies that the four vertices are mutually disconnected ($A_{jk} = 0$ for all $j,k\in\{a,b,c,d\}$ with $j\neq k$) as they can only have an incoming edge from the source. Therefore, condition~\ref{itm:3.1} is satisfied. Condition~\ref{itm:4.1A} also states that the weights of these incoming edges are all equal to $\mathcal{A}_s$, meaning that condition~\ref{itm:3.2} holds. Then, condition~\ref{itm:3.3} is also fulfilled from condition~\ref{itm:4.2A} and the fact that $A_{jk} = 0$ for all $j,k\in\{a,b,c,d\}$ with $j\neq k$. Altogether, Thm.~\ref{thm:crossratios} guarantees the conservation of $c_{abcd}$. Lemma~\ref{lem:sym_KL} then shows that $\mathcal{S}_\eta = \mathcal{K}_\eta - i\omega_sL_0^\eta$ is a symmetry generator and condition~\ref{itm:4.3A} ensures that $\mathcal{S}_\eta[c_{abcd}]$ is not zero. Therefore, $\mathcal{K}[\mathcal{S}_{\eta}[c_{abcd}]] = \mathcal{S}_{\eta}[\mathcal{K}[c_{abcd}]] = 0$, i.e., $\mathcal{S}_\eta[c_{abcd}]$ is another constant of motion. Since $\mathcal{K}_\eta$ depends on the source's state $z_s$, $\mathcal{S}_\eta[c_{abcd}]$ also does. Therefore, $\mathcal{S}_\eta[c_{abcd}]$ is functionally independent of $c_{abcd}$, which only depends on $z_a, z_b, z_c, z_d$. The proof of part B is similar, with an additional step to show the functional independence. A detailed proof is given in Sec.~\ref{SIsubsec:proof_thm4}.

\subsubsection*{Settings for Fig.~3 E, F, G}
\label{sec:param_fig3}

The set $\mathcal{W}_6$ in Fig.~3A admits $\#\mathcal{W}_6 - 3$ cross-ratios because its oscillators satisfy conditions~\ref{itm:3A}-\ref{itm:3B}. Assuming $\omega_1 = 0$ and $\theta_1(0) = 0$ without loss of generality, the dynamics becomes $\dot{z}_j = \rho(\bm z) + i\Omega z_j - \overline{\rho(\bm z)}z_j^2$ with $j\in\mathcal{W}_6$, $\rho(\bm z) = \mathcal{A}_{\mathrm{pert}} + \sum_{\ell \in \mathcal{W}_6}\mathcal{A}_\ell z_\ell$, $\mathcal{A}_{\mathrm{pert}}\in\mathbb{C}$, $\mathcal{A}_\ell = r_\ell e^{-i\beta_\ell}\in\mathbb{C}$, and $\Omega = \omega_{x} - 2\Imag(\mathcal{A}_x)$ for any $x\in \mathcal{W}_6$. The parameters $\omega_x$, $r_\ell$, $\beta_\ell$ for all $\ell\in\mathcal{W}_6$, are drawn from arbitrarily chosen Gaussian distributions. Each $r_\ell$ is also set to zero with probability $1-p_\ell$. In this setup, if $r_\ell < 0$, the $\ell$-th oscillator is a contrarian and if $r_\ell > 0$, the $\ell$-th oscillator is a conformist. If $r_\ell = 0$, the oscillator is neutral. Using the WS transformation in Eq.~\eqref{eq:mobius_symmetry} for $j\in\mathcal{W}_6$
\begin{align*}
    z_j(t) = \frac{e^{i\phi(t)}w_j + Z(t)}{1 + e^{i\phi(t)}\overline{Z(t)}w_j}\,, \quad j\in\mathcal{W}_6
\end{align*}
leads to the differential equations
\begin{align}
    \od[]{Z}{t} &= F(Z, \phi) + i\Omega Z - \overline{F(Z, \phi)}Z^2\,\,,\label{eq:ws1_W6}\\
    \od[]{\phi}{t} &= \Omega + i(\overline{F(Z, \phi)}Z - F(Z, \phi)\bar{Z})\,,\label{eq:ws2_W6}\\
    F(Z, \phi) &= \mathcal{A}_{\mathrm{pert}} + \sum_{\ell\in\mathcal{W}_6} \mathcal{A}_{\ell}\frac{e^{i\phi}w_\ell + Z}{1 + e^{i\phi}\overline{Z}w_\ell}\,,
\end{align}
which are integrated using DOPRI45 to generate Fig.~3 E, F, G. Assuming there is no majority cluster~\cite[p.217]{Watanabe1994}, the initial conditions $Z(0), \phi(0)$, and $(w_j)_{j\in \mathcal{W}_6} =(e^{i\psi_j})_{j\in \mathcal{W}_6}$ are obtained from a real optimization problem designed to satisfy the conditions $\sum_{k\in\mathcal{W}_6} \cos\psi_k = 0$, $\sum_{k\in\mathcal{W}_6} \sin\psi_k = 0$, and the arbitrary condition $\psi_x = 0$ for some $x\in\mathcal{W}_6$.


\subsubsection*{Detection of motifs in Fig.~4}
\label{sec:detection}

Sources are vertices with in-degree zero and 2-sources are pairs of mutually connected vertices with in-degree one, so these monomials are easily found numerically. To detect the motifs admitting conserved cross-ratios, we consider the binary version of the weight matrices representing the empirical networks. From Thm~\ref{thm:crossratios}, detecting a motif admitting one or more conserved cross-ratios requires finding a set of at least four rows in the weight matrix $W$ with identical elements except for those associated with the diagonal (imagine a shuffled version of $W$ in Fig.~\ref{fig:complex_weight_matrix}). This exception related to the diagonal elements makes the detection more subtle, as one cannot directly use a standard similarity matrix to identify identical rows.

To detect the motifs admitting conserved cross-ratios, we set the weights in $W$ that are greater than a tolerance (10$^{-8}$) to one, we fix the diagonal elements to 0, and we keep only the weakly connected component, yielding a binary adjacency matrix $B$. Then, we construct a special similarity matrix $\Sigma$ from the binary matrix $B$, for which the off-diagonal zero entries are replaced by the imaginary unit $i$. Let us denote this new complex matrix $Q$ and the similarity matrix as the Hermitian matrix $\Sigma =(QQ^\dagger - I)/(N-2)$, where $\dagger$ is the Hermitian conjugation. In this way, an element $\Sigma_{jk} = 1$ indicates that the row $\bm{b}_j$ and the row $\bm{b}_k$ in $B$ are identical up to the diagonal elements, as desired. Indeed, if $\bm{q}_j$ and $\bm{q}_k$ are the associated (complex) rows in $Q$, the division of $QQ^\dagger$ by $N-2$ normalizes the scalar product $\bm{q}_j^\dagger \bm{q}_k$. The subtraction of $I/(N-2)$ also ensures that the diagonal elements of $\Sigma$ are ones. Once $\Sigma$ is obtained, we define a new graph $\mathcal{G}_\Gamma$ with adjacency matrix $\Gamma$ such that $\Gamma_{jk} = 1$ if $\Sigma_{jk} = 1$ and $\Gamma_{jk} = 0$ otherwise. Finding motifs admitting conserved cross-ratios then amounts to finding the maximal cliques of size greater than or equal to 4 in $\mathcal{G}_\Gamma$, which is achieved using the Bron-Kerbosh algorithm implemented in graph-tool.

\clearpage




\begin{figure}[t]
    \centering
    \includegraphics[width=\linewidth]{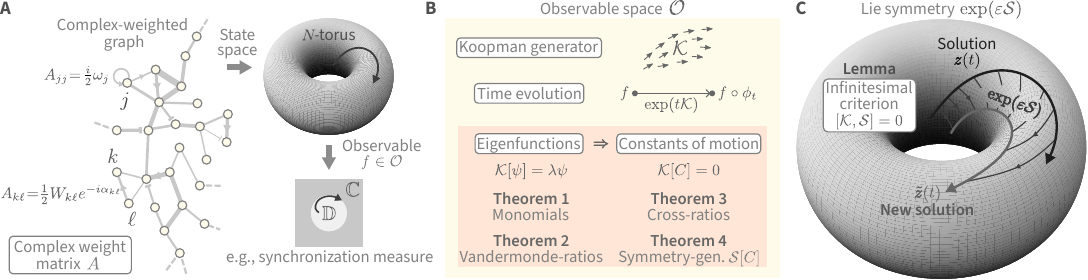}

    \vspace{-0.3cm}
    \caption{\small \textbf{The Kuramoto model, its constants of motion, and its Lie symmetries under Koopman's perspective.} (\textbf{A}) The complex weight matrix $A$ in Eq.~\eqref{eq:complex_matrix} encodes all the parameters of any network of Kuramoto oscillators, whose evolution is described by $\bm{z} = (z_1,...,z_N)$ on the $N$-torus. Observables $f$ for the model are complex-valued functions on the $N$-torus. For instance, the observable $(1/N)\sum_{j=1}^Nz_j$ lies in the closed unit disk $\mathbb{D}$ and its modulus, the Kuramoto order parameter~\cite{Schroder2017}, measures synchronization. (\textbf{B}) The observables belong to a function space $\mathcal{O}$. The Koopman generator $\mathcal{K}$ is the total derivative $\mathrm{d}/\mathrm{d}t$ that generates the time evolution of the observables through the Koopman operator $\exp(t\mathcal{K})$, which composes the observables with the flow $\phi_t$ of the dynamics. An eigenfunction $\psi$ of $\mathcal{K}$ (e.g., monomials in Thm.~\ref{thm:monomials}) gives key information about the dynamics (e.g., isostables~\cite{Mauroy2013} as level sets of $|\psi|$). Notably, an eigenfunction with eigenvalue 0 is a constant of motion $C$ (e.g., cross-ratios in Thm.~\ref{thm:crossratios}). 
    The existence of an eigenfunction $\psi$ with eigenvalue $\lambda$ directly provides a constant of motion $C = \psi e^{-\lambda t}$. (\textbf{C})~A Lie symmetry transforms a solution of the Kuramoto model to a new solution. A transformation is a symmetry provided that an infinitesimal criterion is satisfied: a symmetry generator $\mathcal{S}$ commutes with the Koopman generator $\mathcal{K}$, under specific conditions. Indeed, the general form of the criterion is slightly more subtle and is provided in Eq.~\eqref{eq:symkoo}.}
    \label{fig:fig1}
\end{figure}

\begin{figure}[t]
    \centering
    \includegraphics[width=\linewidth]{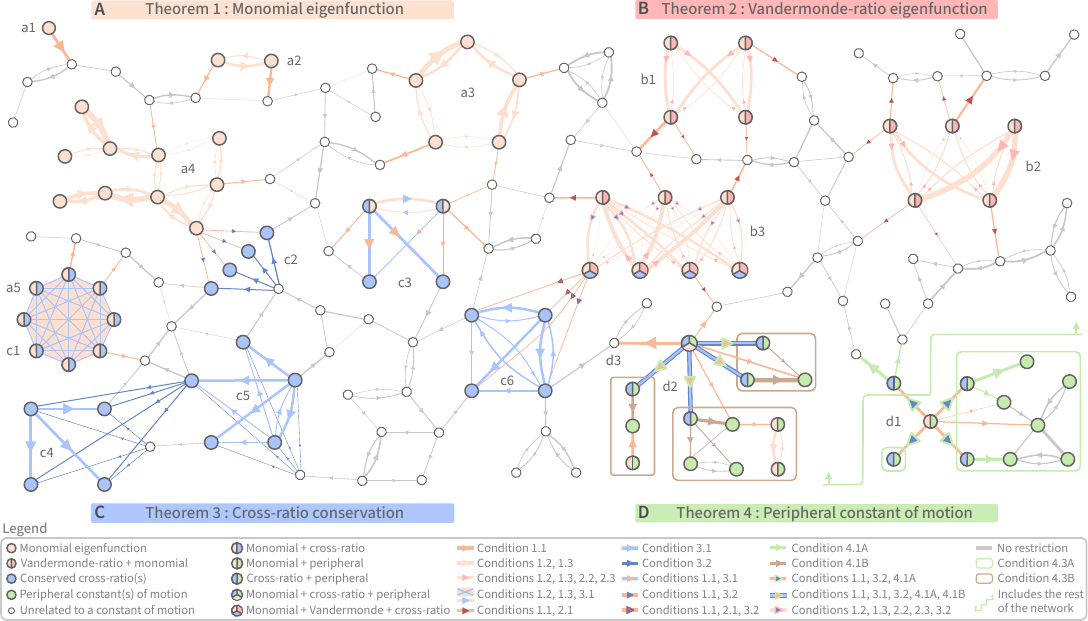} 
    \caption{\small \textbf{Network of Kuramoto oscillators with motifs supporting various constants of motion.} (\textbf{A}) Examples for Thm.~\ref{thm:monomials}. 
    (a1) A single source vertex satisfies Thm.~\ref{thm:monomials}. 
    (a2) Two oscillators satisfying \ref{itm:1.1} and \ref{itm:1.2} always satisfy \ref{itm:1.3}. 
    (a3) A 5-cycle illustrating \ref{itm:1.3}. 
    (a4) A more complex motif yielding a monomial eigenfunction. 
    (a5) A source complete graph of oscillators with antisymmetric phase-lag matrix induces a monomial eigenfunction. 
    (\textbf{B}) Examples for Thm.~\ref{thm:vandermonde}. 
    (b1) Smallest weighted bipartite graph admitting a Vandermonde-ratio eigenfunction. 
    (b2) Bipartite graph with layers of different sizes inducing a Vandermonde-ratio eigenfunction. 
    (b3)~Bipartite motif with a layer of 4 vertices that also admits a conserved cross-ratio. 
    (\textbf{C}) Examples for Thm.~\ref{thm:crossratios}. 
    (c1) Eight globally coupled identical oscillators lead to $5$ conserved cross-ratios.
    (c2) An empty subgraph (\ref{itm:3.1}) of identical-frequency oscillators (\ref{itm:3.3}) yields a conserved cross-ratio.
    (c3) Motif inducing a conserved cross-ratio with two vertices admitting a monomial eigenfunction. 
    (c4) The smallest directed star inducing a conserved cross-ratio. 
    (c5) Five-vertex motif admitting $2$ functionally independent cross-ratios.
    (c6)~Non-complete, non-empty and non-star graph yielding a conserved cross-ratio.
    (\textbf{D}) Examples for Thm.~\ref{thm:thm4}. 
    (d1) Four (blue and green) vertices with identical natural frequencies (\ref{itm:4.2A}) only receiving from a source with identical weights (\ref{itm:4.1A}) and distributed into 3 disjoint parts (\ref{itm:4.3A}) admit 3 distinct symmetry generators and peripheral constants of motion. 
    (d2) Four (blue and green) vertices with frequency $\omega_s - 2\,\mathrm{Im}(\mathcal{A}_s)$ (\ref{itm:4.2B}) receiving one edge from source $s$ with weight $\mathcal{A}_s$ (\ref{itm:4.1B}) are distributed in 3 disjoint parts (\ref{itm:4.3B}). This yields 2 conserved cross-ratios and 3 symmetries acting on them to form peripheral constants of motion. 
    (d3) The subgraph (i.e., the rest of the network) admits a symmetry but it yields no constant of motion. 
    }
    \label{fig:fig2}
\end{figure}


\begin{figure}[t]
    \centering
 \includegraphics[width=\linewidth]{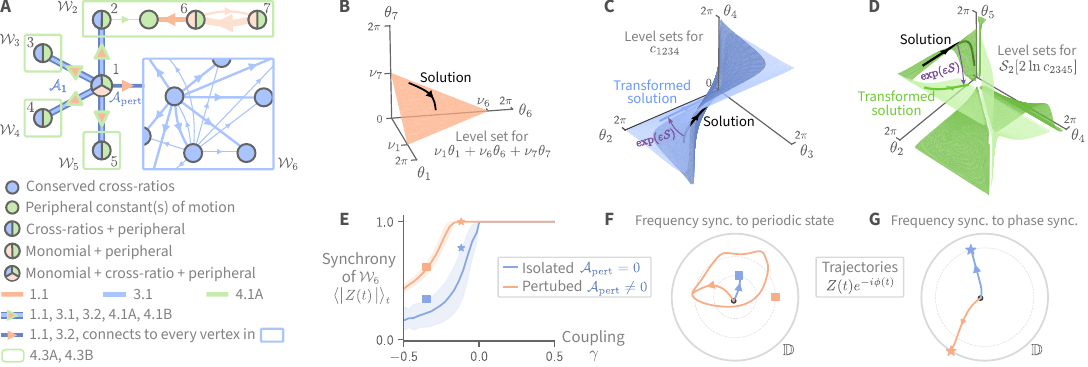}
  
    \caption{\small \textbf{Illustration of constants of motion, Lie symmetries, and the synchronization of 101 conformist-contrarian oscillators.} (\textbf{A}) Complex-weighted graph for the example. Edges between vertices in $\mathcal{W}_6$ exist with Bernoulli probability $p_\ell$ and have random weights $\mathcal{A}_\ell \in \mathbb{C}$ (Gaussian modulus and phase). The operators $\mathcal{S}_2$, ..., $\mathcal{S}_6$ related to $\mathcal{W}_2$, ..., $\mathcal{W}_6$ form an abelian Lie subalgebra of symmetry generators~\eqref{eq:peripheral_symmetry}. 
    (\textbf{B})~The trajectories of $\theta_1, \theta_6, \theta_7$ are constrained by a plane representing the level set of the conserved quantity $\nu_1\theta_1 + \nu_6\theta_6 + \nu_7\theta_7$ (i.e., the argument of the conserved monomial $z_1^{\nu_1}z_6^{\nu_6}z_7^{\nu_7}$). (\textbf{C}, \textbf{D}) The symmetry $\exp(\varepsilon \mathcal{S})$ with $\mathcal{S} = \mathcal{S}_3 + \mathcal{S}_4 + \mathcal{S}_5$ and $\varepsilon = 5$ sends a solution (black curves) constrained by the invariant sets of $c_{1234} \approx 5.0$ and $\mathcal{S}_2[2\ln c_{2345}] \approx 4.2$ [Eq.~\eqref{eq:symgen_example}] to a new solution (blue and green curves) evolving in other invariant sets ($c_{1234} \approx 1.3$, $\mathcal{S}_2[2\ln c_{2345}] \approx -0.6$). (\textbf{E}) Synchronization transitions measured by the time-averaged modulus $\langle|Z(t)|\rangle_t$ vs. the coupling $\gamma$ when the oscillators from $\mathcal{W}_6$ are isolated and when they are perturbed ($\mathcal{A}_\mathrm{pert} = 20e^{-i\alpha_s}$, $\alpha_s \in \mathcal{N}(1, 10^{-4})$). Squares and stars indicate corresponding trajectories in (\textbf{F}) and (\textbf{G}); $(Z(t),\phi(t))$ is obtained by integrating the WS equations of $\mathcal{W}_6$ [Materials and Methods]. 
    (\textbf{F}) Trajectory of 
    $Z(t)e^{-i\phi(t)}$ when $\mathcal{A}_\mathrm{pert} = 0$ reaches an equilibrium point $|Z(t)| < 1$ (blue curve, frequency synchronization). Under the same conditions, except $\mathcal{A}_\mathrm{pert} = 20e^{-0.892i}$, the order parameter instead reaches a limit cycle (orange curve). (\textbf{G}) Trajectory of $Z(t)e^{-i\phi(t)}$ for $\mathcal{A}_\mathrm{pert} = 0$ characterizing frequency synchronization. Under the same conditions, except $\mathcal{A}_\mathrm{pert} = 20e^{-0.823i}$, phase synchronization is achieved (orange curve reaching $|Z(t)| = 1$).
}
    \label{fig:fig3}
\end{figure}


\begin{figure}
    \centering
    \includegraphics[width=\linewidth]{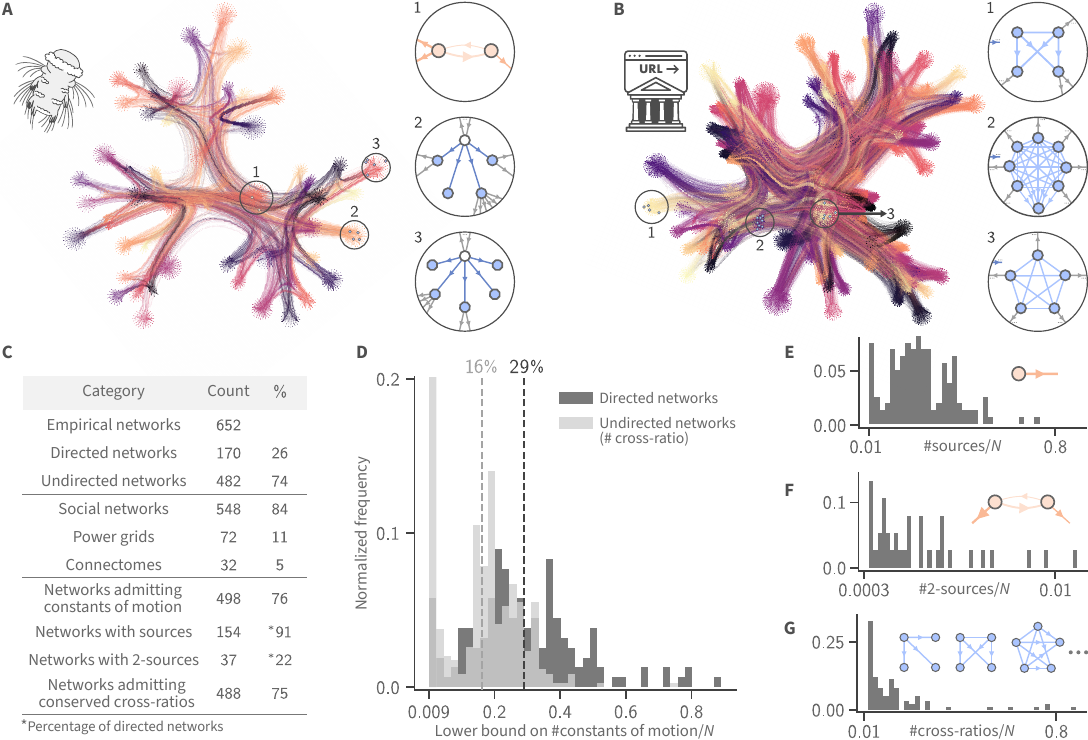}

    \vspace{-0.2cm}
    
    \caption{\small\textbf{Presence of network motifs admitting conserved quantities in empirical networks.} (\textbf{A}, \textbf{B}) Examples of motifs admitting constants of motion found in the connectome of the annelid \textit{P. dumerilii}~\cite{Veraszt2025} and the web-based network of US government agencies (California)~\cite{Kosack2018}. The blue arrows pointing toward the motifs in (\textbf{B}) mean that each vertex in the motifs has many incoming edges (satisfying \ref{itm:3.2}) from vertices outside the motif. The grey arrows with the ellipses indicate that the vertices have many outgoing edges. (\textbf{C}) Composition of the dataset, with networks drawn mainly from \href{https://networks.skewed.de/}{Netzschleuder} and \href{https://github.com/MATPOWER/matpower/tree/master/data}{Matpower}, and number of admitted constants of motion. (\textbf{D}) Histogram of the lower bound on the number of (time-dependent) constants of motion in the 498 networks admitting at least one conserved quantity. We count the number of sources, 2-sources (each associated with a conserved quantity $z^{\bm\mu} e^{-\lambda t}$), and for each motif of size $m$ admitting conserved cross-ratios, we count the $m - 3$ functionally independent ones. We exclude the conserved cross-ratios associated with the presence of at least four single sources (similarly for (\textbf{G})). For undirected networks, there are no monomial eigenfunctions by Thm.~\ref{thm:monomials} (the undirected empirical networks are of course not empty, not complete, considered connected, and they have no sources). Moreover, the only two motifs that admit conserved cross-ratios by Thm.~\ref{thm:crossratios} are mutually disconnected vertices in a motif of at least four vertices (forming the periphery of star graphs with one or multiple cores in connected networks) or completely connected motifs of at least four vertices (a maximal clique). (\textbf{E}, \textbf{F}, \textbf{G}) Histograms of the number of sources, 2-sources, and motifs admitting cross-ratios.}
    \label{fig:fig4}
\end{figure}


	

\begin{table} 
\centering
\caption{\small\textbf{Eigenfunctions and constants of motion of the Koopman generator in Eq.~\eqref{eq:kooku} for the Kuramoto model~\eqref{eq:kuramoto}.} The table lists the phase-shifted families of eigenfunctions with the shifts given by $\bm\alpha = (\alpha_1,...,\alpha_N) \in \mathbb{R}^N$ [Sec.~\ref{subsec:parametrized_families}]. Products of these eigenfunctions with themselves and with other Koopman eigenfunctions yield different forms of eigenfunctions. In particular, monomial and Vandermonde-ratio eigenfunctions allows to generate different constants of motion. The vector $\bm\mu$ is in $\mathbb{R}^N$ and contains the exponents of the monomial eigenfunction. For the Vandermonde-ratio eigenfunction, the vertex subsets $I,J\subset \mathcal{V}$ are disjoint, $\sigma_{pq},\tau_{rs}$ for all $p,q\in I$, $r,s \in J$ are real constants, $\Omega_{I},\Omega_{J}\in\mathbb{R}$ are natural frequencies for the oscillators in $I$ and $J$ respectively, $\Sigma_I, \Sigma_J\in\mathbb{R}$ are rescaled in-strengths in $J$ and $I$ respectively, and $\beta_{jk} = \alpha_j - \alpha_k \in \mathbb{R}$. The constants $\beta:=\alpha_a-\alpha_b, \gamma:=\alpha_a-\alpha_c, \delta:=\alpha_a-\alpha_d$ are also real numbers that parametrize the cross-ratio family. For the family of Watanabe-Strogatz (WS) integrals, the constants are $\beta_u:=\alpha_{j_{u+1}}-\alpha_{j_u}$ for all $u$ with $j_{\ell+1}=j_1$ and they satisfy $\beta_1+...+\beta_\ell=0$. The operator $\mathcal{S}_\eta$ is the continuous symmetry generator in Eq.~\eqref{eq:peripheral_symmetry} and $c_{abcd}$ is the parametrized cross-ratio presented in the table. Note that we have defined $S_{jk}^\phi = \sin\left((\theta_j - \theta_k + \phi)/{2}\right)$ to simplify the expressions.}
\label{tab:eigenfunctions} 
	\scriptsize
    \renewcommand{\arraystretch}{2}
\begin{tabular}{lcccc} 
	\\
	\hline
Eigenfunction family & Form in $z_1,...,z_N$ & Form in $\theta_1,...,\theta_N$ & Eigenvalue & Section \\
		\hline
Monomial $z^{\bm\mu}$ 
&
\(\displaystyle\prod_{j=1}^N z_j^{\mu_j}\)
&
\(\displaystyle\exp\left(i{\textstyle\sum_{j=1}^N} \mu_j\theta_j\right)\)
&
\(\displaystyle i\bm{\mu}^\top\bm{\omega}\)
&
\ref{subsec:monomials}
\\[0.25cm]

\mbox{Vandermonde-ratio $V_{IJ}$} 
&
\(\displaystyle\frac{\prod_{p<q\in I}(\bar{z}_p-e^{-i\beta_{pq}}\bar{z}_q)^{\sigma_{pq}}}{\prod_{r<s\in J}(z_r-e^{i\beta_{rs}}z_s)^{\tau_{rs}}}\)
&
\(\displaystyle\frac{\prod_{p<q\in I}\Big(e^{-i(\theta_p + \theta_q)/2}\,S_{pq}^{\beta_{pq}}\Big)^{\sigma_{pq}}}{\prod_{r<s\in J}\Big(e^{i(\theta_r + \theta_s)/2}\,S_{rs}^{\beta_{rs}}\Big)^{\tau_{rs}}}\)
&
\(\displaystyle -i\left(\Sigma_I\Omega_I + \Sigma_J\Omega_J\right)\)
&
\shortstack[c]{
\ref{subsec:vandermonde}, 
\ref{subsec:parametrized_families}
}
\\[0.28cm]

\mbox{Cross-ratio $c_{abcd}$}
&
\(\displaystyle\frac{(e^{i\gamma}z_c-z_a)(e^{i\delta}z_d-e^{i\beta}z_b)}{(e^{i\gamma}z_c-e^{i\beta}z_b)(e^{i\delta}z_d-z_a)}\)
&
\(\displaystyle\frac{S_{ca}^\gamma S_{db}^{\delta-\beta}}{S_{cb}^{\gamma - \beta}S_{da}^{\delta}}\)
&
\(\displaystyle 0\)
&
\shortstack[c]{
\ref{subsec:crossratios},
\ref{subsec:parametrized_families}
}
\\[0.28cm]

\shortstack[l]{$^*$Watanabe-Strogatz $C^{\mathrm{ws}}_{\bm j}$}
&
\(\displaystyle\prod_{u=1}^{\ell}\left(1-e^{i\beta_u}\frac{z_{j_u}}{z_{j_{u+1}}}\right)\)
&
\(\displaystyle\prod_{u=1}^{\ell}S_{j_u j_{u+1}}^{\beta_u}\)
&
\(\displaystyle \phantom{^*}0^*\)
&
\shortstack[c]{
\ref{subsec:crossratios}, 
\ref{subsec:parametrized_families}
}

\\[0cm]

\shortstack[l]{Peripheral}
&
\(\displaystyle\mathcal S_{\eta}\!\left[c_{abcd}(\bm{z})\right]\)
&
\(\displaystyle\mathcal S_{\eta}\!\left[c_{abcd}(\bm{\theta})\right]\)
&
\(\displaystyle 0\)
&
\ref{subsec:peripheralevolution}, \ref{subsec:parametrized_families}
\\
\hline
\end{tabular}

\vspace{0.1cm}

{\scriptsize $^*$ Under the conditions ensuring cross-ratio conservation, the $C^{\mathrm{ws}}_{\bm j}$ are instead Darboux functions sharing a common cofactor~[Sec.~\ref{subsec:crossratios}, \ref{SIsubsec:WS_darboux}].} 
\end{table}


\clearpage 

%
\bibliography{paper_kuramoto_koopman} 
\bibliographystyle{sciencemag}

%
%
%
%
%
%


\paragraph*{Acknowledgments}
We thank I.~Mezić, Z.~Nicolaou, G.~St-Onge, I.~Joseph, and R. Lambiotte for their constructive and encouraging feedback on this work. 

\paragraph*{Funding:}
This work was supported by the Fonds de recherche du Qu\'ebec -- Nature et technologies (V.~T., B.~C.), the Natural Sciences and Engineering Research Council of Canada (B.~C., A.~A., P.~D.), and the Sentinelle Nord program of Universit\'e Laval, funded by the Canada First Research Excellence Fund (A.~A.). 

\paragraph*{Author contributions:}
\phantom{.}\\
Conceptualization: VT, PD\\ Writing—original draft: VT\\ 
Supplementary material: VT, with support from BC and PD\\
Software: VT, with assistance
from BC and AA\\
Visualization and numerical experiments: VT.\\
Formal analysis: VT, BC, PD.\\ Interpretation of the results: VT, BC, AA, PD\\
Writing—review and editing: VT, BC, AA, PD \\Supervision: PD, AA




\paragraph*{Competing interests:}
All authors declare that they have no competing interests.

\paragraph*{Data and materials availability:}
%
%
%
%
The symbolic calculations (\textit{Mathematica}, \textit{Matlab}), the \textit{Python} code to generate Fig.~\ref{fig:fig3} and \ref{fig:fig4}, and the network dataset are available on Zenodo (\url{https://doi.org/10.5281/zenodo.20599776}). No physical materials were generated in this work.


\subsection*{Supplementary materials}
Supplementary Text : Detailed proofs, context, and examples for the four theorems\\
Section S1. Introduction to the Kuramoto model under Koopman's perspective\\
Section S2. Monomial eigenfunctions and their conservation\\
Section S3. Vandermonde-ratio eigenfunctions\\
Section S4. Conservation of cross-ratios\\
Section S5. Continuous symmetries and the generation of new constants of motion\\
Fig. S1. Cross-ratio as a projective invariant\\
Fig. S2. Simplified history of weight matrices in WS theory\\
Fig. S3. 4-vertices motifs admitting a conserved cross-ratio\\
Fig. S4. 5-vertices motifs admitting two conserved cross-ratios\\
Fig. S5. Illustration of a graph with symmetry generators $\mathcal{S}_1,...,\mathcal{S}_r$\\
Fig. S6. Basic example with coexisting conserved quantities of different types\\
References \textit{(148-\arabic{enumiv})}


\newpage

\let\addcontentsline\oldaddcontentsline

\newpage


\renewcommand{\thefigure}{S\arabic{figure}}
\renewcommand{\thetable}{S\arabic{table}}
\renewcommand{\theequation}{S\arabic{equation}}
\renewcommand{\thepage}{S\arabic{page}}
\setcounter{figure}{0}
\setcounter{table}{0}
\setcounter{equation}{0}
\setcounter{page}{1} 
\setcounter{equation}{0}
\setcounter{figure}{0}
\setcounter{table}{0}
\setcounter{section}{0}
\setcounter{page}{1}
\setcounter{theorem}{0}

\renewcommand{\theequation}{S\arabic{equation}}
\renewcommand{\thefigure}{S\arabic{figure}}
\renewcommand{\thetable}{S\Roman{table}}
\renewcommand{\thetheorem}{S\arabic{theorem}}
\renewcommand{\thelemma}{S\arabic{lemma}}
\renewcommand{\thesection}{S\arabic{section}}

\renewcommand{\theHsection}{S\arabic{section}}
\renewcommand{\theHfigure}{Supplement.\thefigure}  
\thispagestyle{empty}


\begin{center}
\section*{Supplementary Materials for\\ \scititle}

\let\addcontentsline\oldaddcontentsline


Vincent~Thibeault$^{\ast}$,
Benjamin~Claveau,
Antoine~Allard,
Patrick~Desrosiers\\ 
\small$^\ast$Corresponding author. Email: vincent.thibeault.1@ulaval.ca\\
\end{center}

\subsubsection*{This PDF file includes:}
Section S1. Introduction to the Kuramoto model under Koopman's perspective\\
Section S2. Monomial eigenfunctions and their conservation\\
Section S3. Vandermonde-ratio eigenfunctions\\
Section S4. Conservation of cross-ratios\\
Section S5. Continuous symmetries and the generation of new constants of motion\\
Fig. S1. Cross-ratio as a projective invariant\\
Fig. S2. Simplified history of weight matrices in WS theory\\
Fig. S3. 4-vertices motifs admitting a conserved cross-ratio\\
Fig. S4. 5-vertices motifs admitting two conserved cross-ratios\\
Fig. S5. Illustration of a graph with symmetry generators $\mathcal{S}_1,...,\mathcal{S}_r$\\
Fig. S6. Basic example with coexisting conserved quantities of different types\\
References \textit{(148-\arabic{enumiv})}

\newpage

\tableofcontents






\section{Introduction to the Kuramoto model under Koopman's perspective}
\label{sec:intro_kooku}

Koopman theory was initially developed to formulate classical dynamics using linear operators on spaces of observables—mirroring the structure of quantum mechanics~\cite{Koopman1931,Koopman1932, VonNeumann1932, Carleman1932}.
Despite this elegant framework, the theory long lacked concrete theoretical examples for complex systems, where intricate interactions among many constituents give rise to emergent phenomena such as synchronization. Since, moreover, these dynamical systems are transformed into infinite-dimensional systems as a price to pay for having a linear operator, this has cast doubt among many researchers on the usefulness of Koopman theory. For these reasons, one of our goals is to provide concrete analytical results on a widely-used dynamics on networks. For its 50th anniversary and its substantial impact on the study of complex systems, the Kuramoto model is a natural choice to achieve this goal. In this section, we introduce different descriptions of the Kuramoto model and derive the Koopman generator $\mathcal{K}$ of the Kuramoto model presented in the paper.


\subsection{Different descriptions for the Kuramoto model}
\label{subsec:descriptions}

First of all, we provide more precise descriptions of the Kuramoto model. Note that we define a more general form of the model than the original one~\cite{Kuramoto1975} and the Kuramoto-Sakaguchi model (or Sakaguchi-Kuramoto model if preferred)~\cite{Sakaguchi1986}, where $\alpha_{jk} = \alpha$ for all $j,k$. To highlight the original contribution of Kuramoto, we will stick to the name ``Kuramoto model''.
\begin{definition}\label{def:kuramoto}
    The Kuramoto model is an initial value problem of dimension $N$ such that, for $\theta_j:\mathscr{T} \to E$ with $\mathscr{T}, E \subset \mathbb{R}$,
\begin{align}
    \dot{\theta}_j(t) &= \omega_j + \sum_{k=1}^N W_{jk} \sin(\theta_k(t) - \theta_j(t) - \alpha_{jk})\,,\qquad j\in\{1,...,N\} \label{eq:thetakur}\\
    \theta_j(0) &= \vartheta_j \in E\label{eq:theta0kur} 
\end{align}
where $t\in \mathscr{T}$, $\omega_j, W_{jk} \in \mathbb{R}$ and $-\pi/2 < \alpha_{jk} \leq \pi/2$ for all $j,k$ with $W_{jj} = 0$, $\alpha_{jj} = 0$ for all $j$.
\end{definition}
\begin{remark}\label{rem:frequency_shift}
    In the definition, we set $W_{jj} = 0$, $\alpha_{jj} = 0$ for all $j$ without loss of generality. Indeed, consider that $W_{jj} \neq 0$, $\alpha_{jj} \neq 0$ for all $j$. Then, Eq.~\eqref{eq:thetakur} can be expressed as 
    \begin{align*}
        \dot{\theta}_j(t) &= \omega_j - W_{jj}\sin(\alpha_{jj}) + \sum_{k\neq j} W_{jk} \sin(\theta_k(t) - \theta_j(t) - \alpha_{jk})
    \end{align*}
    which means that the self-interaction term $W_{jj}\sin(\alpha_{jj})$ only acts as a shift to the natural frequency and one can always redefine $\omega_j$ as $\omega_j - W_{jj}\sin(\alpha_{jj})$ without loss of generality. Note also that we could absorb the coupling constant in $W$, but it is useful to control the global weight of the interactions, which can be proportional to $1/N$.
\end{remark}
\begin{lemma}\label{lem:unicite_kuramoto}
    There exists a constant $a > 0$ such that the Kuramoto model possesses a unique solution $\theta_1(t),...,\theta_N(t)$ on $t\in[-a, a]$.
\end{lemma}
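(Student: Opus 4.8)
The plan is to recognize the Kuramoto model in Eqs.~\eqref{eq:thetakur}--\eqref{eq:theta0kur} as an autonomous initial value problem $\dot{\bm{\theta}} = F(\bm{\theta})$ with $\bm{\theta}(0) = \bm{\vartheta}$, whose right-hand side $F = (F_1,\ldots,F_N)$, with $F_j(\bm{\theta}) = \omega_j + \sigma\sum_{k} W_{jk}\sin(\theta_k - \theta_j - \alpha_{jk})$, is a smooth (indeed real-analytic) vector field defined on all of $\mathbb{R}^N$ and independent of $t$. The whole argument then reduces to verifying the hypotheses of the Picard--Lindel\"of (Cauchy--Lipschitz) theorem, so the substantive work is to exhibit a Lipschitz bound for $F$.

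First I would establish that $F$ is globally Lipschitz. Since $F$ does not depend explicitly on $t$, it suffices to bound its Jacobian. Differentiating gives
\[
\frac{\partial F_j}{\partial \theta_m} = \sigma\sum_{k=1}^N W_{jk}\cos(\theta_k - \theta_j - \alpha_{jk})\,(\delta_{km} - \delta_{jm})\,,
\]
and because $|\cos|\le 1$ each entry is bounded in absolute value by $2\sigma\sum_{k}|W_{jk}|$. Hence there is a constant $L>0$, depending only on $\sigma$ and the entries of $W$, with $\norm{DF(\bm{\theta})}\le L$ uniformly on $\mathbb{R}^N$. Integrating $DF$ along the segment joining two points and applying the mean value inequality yields $\norm{F(\bm{\theta}) - F(\bm{\theta}')}\le L\,\norm{\bm{\theta} - \bm{\theta}'}$ for all $\bm{\theta},\bm{\theta}'\in\mathbb{R}^N$.

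Second I would recast the problem as a fixed point. A continuous map $\bm{\theta}:[-a,a]\to\mathbb{R}^N$ solves the IVP if and only if it satisfies the integral equation $\theta_j(t) = \vartheta_j + \int_0^t F_j(\bm{\theta}(s))\dif s$. Defining the Picard operator $\Phi$ on the Banach space $C([-a,a],\mathbb{R}^N)$ with the sup norm by $(\Phi\bm{\theta})_j(t) = \vartheta_j + \int_0^t F_j(\bm{\theta}(s))\dif s$, the Lipschitz bound gives $\norm{\Phi\bm{\theta} - \Phi\bm{\theta}'}_\infty \le aL\,\norm{\bm{\theta} - \bm{\theta}'}_\infty$, so $\Phi$ is a contraction whenever $aL<1$. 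Choosing any $a\in(0,1/L)$ and invoking the Banach fixed-point theorem produces a unique fixed point of $\Phi$, which is precisely the unique solution $\theta_1(t),\ldots,\theta_N(t)$ on $[-a,a]$.

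There is no genuinely hard step here: the statement is a textbook local existence--uniqueness result, and the only point requiring care is the uniform Lipschitz bound, which follows from the boundedness of the trigonometric nonlinearity. I would close by remarking that, since the bound is in fact \emph{global}, the same argument (together with a standard continuation argument) upgrades the conclusion to existence and uniqueness on all of $\mathbb{R}$; the local statement with some $a>0$ is all that the lemma claims and follows a fortiori.
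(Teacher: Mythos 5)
Your proof is correct, but it takes a more self-contained (and stronger) route than the paper. The paper's own proof is minimal: it computes the partial derivatives $\partial F_j/\partial\vartheta_\ell$, observes they are continuous on $\mathbb{R}^N$, and then simply cites the fundamental existence--uniqueness theorem (Perko), i.e.\ it verifies a $\mathscr{C}^1$ hypothesis and delegates the Picard--Lindel\"of machinery to a reference. You instead re-derive that machinery in this instance: you extract a \emph{global} Lipschitz constant from the uniform bound $|\cos|\le 1$ on the Jacobian entries, set up the Picard integral operator on $C([-a,a],\mathbb{R}^N)$, and close with the Banach fixed-point theorem for $a<1/L$. What your approach buys is twofold: it is fully self-contained, and the global (not merely local) Lipschitz bound lets you upgrade, as you note, to existence and uniqueness on all of $\mathbb{R}$ --- a genuinely stronger conclusion than the paper's local statement, and one that matters for a model whose trajectories one wants for all time. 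What the paper's approach buys is brevity: continuity of the partials is immediate and no explicit constant is needed. Two cosmetic points: since $\sigma\in\mathbb{R}$ may be negative, your Lipschitz constant should read $L = 2|\sigma|\max_j\sum_k|W_{jk}|$ (or similar) with $|\sigma|$ rather than $\sigma$; and in the degenerate case $W=0$ one should take $L>0$ by convention (e.g.\ $\max(1,\cdot)$) so that $a\in(0,1/L)$ is well defined --- though there $F$ is constant and the claim is trivial.
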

\begin{proof}
    By the fundamental existence-uniqueness Thm.~\cite[p.74]{Perko2001}, it is sufficient to show that the partial derivatives of
    \begin{align*}
        f_j(\vartheta_1, ...,\vartheta_N) = \omega_j + \sum_{k=1}^N W_{jk}\sin(\vartheta_k - \vartheta_j - \alpha_{jk})\,,\qquad j\in\{1,...,N\}\,,
    \end{align*}
     forming the vector field $\bm f = (f_1,...,f_N)$ of the model exist and are continuous. For all $j,\ell$, recalling that $W_{jj}=0$ and $\alpha_{jj} = 0$, the partial derivatives are
     \begin{align*}
         \pd[]{f_j(\vartheta_1, ...,\vartheta_N)}{\vartheta_\ell} = \sum_{k\neq j}W_{jk}\pd[]{\sin(\vartheta_k - \vartheta_j - \alpha_{jk})}{\vartheta_\ell} = \begin{cases} 
    - \sum_{k\neq j}W_{jk}\cos(\vartheta_k - \vartheta_j - \alpha_{jk})\,, & \text{if }  j= \ell\\
     W_{j\ell}\cos(\vartheta_\ell - \vartheta_j - \alpha_{j\ell}), & \text{if } j\neq \ell
\end{cases} \, ,
     \end{align*}
     which are evidently continuous functions on $\mathbb{R}^N$.
\end{proof}
It is often convenient to rather work with the model described in the complex plane.
\begin{lemma}\label{lem:correspondance_complexe}
    Let $z_j(t) = e^{i\theta_j(t)}$ with $\theta_j$ and $t$ as in Def.~\ref{def:kuramoto}. The initial value problem in $z_1,...,z_N$ related to Def.~\ref{def:kuramoto} is 
\begin{align}
    \dot{z}_j(t) &= p_j(\bm z(t)) + i\omega_j z_j(t) - \overline{p_j(\bm z(t))}z_j(t)^2\,,\qquad p_j(\bm z(t))  = \frac{1}{2}\textstyle{\sum_{k=1}^N} W_{jk}e^{-i\alpha_{jk}}z_k(t) \label{eq:zkur}\\
    z_j(0) &= e^{i\vartheta_j}  \in \mathbb{T}\label{eq:z0kur}
\end{align}
for all $j\in\{1,...,N\}$.
\end{lemma}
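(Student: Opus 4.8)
The plan is to verify the claimed system by direct differentiation, since $z_j = e^{i\theta_j}$ is a smooth function of $\theta_j$ and the existence and uniqueness of a solution $\theta_j$ on some interval $[-a,a]$ is already guaranteed by Lemma~\ref{lem:unicite_kuramoto}. First I would apply the chain rule to obtain $\dot{z}_j = i\dot{\theta}_j\,e^{i\theta_j} = i\dot{\theta}_j z_j$, and then substitute the right-hand side of Eq.~\eqref{eq:thetakur}, yielding $\dot{z}_j = i\omega_j z_j + i\sigma z_j \sum_k W_{jk}\sin(\theta_k - \theta_j - \alpha_{jk})$. The term $i\omega_j z_j$ already matches the target; the remaining work is to rewrite the coupling sum in terms of $z_k$ and $\bar{z}_k$.

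The central step is to expand each sine through Euler's identity $\sin x = (e^{ix} - e^{-ix})/(2i)$ with $x = \theta_k - \theta_j - \alpha_{jk}$, so that the prefactor $i\sigma z_j/(2i) = \sigma z_j/2$ clears the imaginary unit. Writing $e^{i\theta_k} = z_k$, $e^{-i\theta_k} = \bar{z}_k$, $e^{\pm i\theta_j} = z_j^{\pm 1}$, and using crucially that $|z_j| = 1$ so that $z_j\bar{z}_j = 1$, the positive-frequency contribution collapses as $z_j\cdot z_k\bar{z}_j\,e^{-i\alpha_{jk}} = z_k e^{-i\alpha_{jk}}$, while the negative-frequency contribution becomes $z_j\cdot \bar{z}_k z_j\,e^{i\alpha_{jk}} = z_j^2\bar{z}_k\,e^{i\alpha_{jk}}$.

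Summing over $k$ and factoring out $\sigma/2$, I would identify the first resulting sum as exactly $p_j(\bm z) = (\sigma/2)\sum_k W_{jk} e^{-i\alpha_{jk}} z_k$. For the second sum, I would invoke that $\sigma$ and every $W_{jk}$, $\alpha_{jk}$ are real, so that $\overline{p_j(\bm z)} = (\sigma/2)\sum_k W_{jk} e^{i\alpha_{jk}}\bar{z}_k$, whence the second sum equals $\overline{p_j(\bm z)}\,z_j^2$. Assembling the pieces gives precisely Eq.~\eqref{eq:zkur}. The initial condition then follows immediately from $z_j(0) = e^{i\theta_j(0)} = e^{i\vartheta_j}$, which lies on the unit circle $\mathbb{T}$ because $\vartheta_j \in E \subset \mathbb{R}$, establishing Eq.~\eqref{eq:z0kur}.

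The computation carries no genuine obstacle; the only points demanding care are bookkeeping ones. Specifically, one must recognize that the reality of $\sigma$ and $W_{jk}$ is what allows the second sum to assemble into $\overline{p_j(\bm z)}$ rather than an unrelated expression, and one must track the factor $z_j^2$ that emerges from the product $z_j\cdot z_j$ in the negative-frequency term once the conjugate $\bar{z}_j$ has been absorbed via $z_j\bar{z}_j=1$ in the positive-frequency term but not in the negative one.
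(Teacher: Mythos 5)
Your proposal is correct and follows exactly the paper's own route: differentiate $z_j = e^{i\theta_j}$ via the chain rule to get $\dot{z}_j = i z_j \dot{\theta}_j$, substitute Eq.~\eqref{eq:thetakur}, and expand the sine with complex exponentials, using $z_j\bar{z}_j = 1$ and the reality of $\sigma$, $W_{jk}$, $\alpha_{jk}$ to assemble $p_j(\bm z)$ and $\overline{p_j(\bm z)}\,z_j^2$. The paper states this in one line; your write-up merely makes the same bookkeeping explicit, and does so accurately.
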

\begin{proof}
    The derivative of $z_j$ is $\dot{z}_j = iz_j\dot{\theta}_j$. Substituting Eq.~\eqref{eq:thetakur} and expressing the sine function with complex exponentials readily yields the result. 
\end{proof}
\begin{remark}
    Note that $z_j(t) = e^{i\theta_j(t)}$ is not bijective when $\theta_j(t)\in\mathbb{R}$ (e.g., $\theta_j(t) = 0$ or $2\pi$ both yield $z_j(t) = 1$). Yet, restricting the initial condition $\vartheta_1, ...,\vartheta_N$ such that $\vartheta_j \in [0, 2\pi)$ and assuming $\theta_j(t)$ is continuous in time for all $j$ is sufficient to guarantee the correspondence of the trajectories for the dynamics in $\bm{\theta}$ and $\bm{z}$.
\end{remark}
Another useful formulation of the model, where all the parameters are regrouped in only one matrix, is the following.
\begin{lemma}\label{lem:correspondance_complexe_A}
    The initial value problem~(\ref{eq:zkur}-\ref{eq:z0kur}) is equivalent to 
\begin{align}
    \dot{z}_j(t) &= \sum_{k=1}^NA_{jk}z_k(t) - \left(\sum_{k=1}^N\bar{A}_{jk}\bar{z}_k(t) \right)z_j(t)^2 \label{eq:zkurA}\\
    z_j(0) &= e^{i\vartheta_j}  \in \mathbb{T}\label{eq:z0kurA}\,,
\end{align}
where $A$ is a complex matrix of interactions satisfying
\begin{align}
    A = \frac{1}{2}\left( W \circ e^{-i\alpha} + i\,\mathrm{diag}(\bm{\omega})\right)\,,
\end{align}
where $e^{-i\alpha} = (e^{-i\alpha_{jk}})_{j,k}$, $\bm{\omega} = (\omega_1,...,\omega_N)$, $\circ$ is the element-wise product and $\mathrm{diag}(W) = \mathrm{diag}(\alpha) = \bm{0}$. There exists a constant $a > 0$ such that the problem~(\ref{eq:zkurA}-\ref{eq:z0kurA}) possesses a unique solution $z_1(t),..., z_N(t)$ on $t\in[-a, a]$.
\end{lemma}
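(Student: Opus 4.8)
The plan is to treat the two assertions in turn, disposing of the existence–uniqueness statement first since it also furnishes the uniqueness needed for the equivalence. For the existence and uniqueness, I would observe that the right-hand side of \eqref{eq:zkurA}, read as a function of the real and imaginary parts of $z_1,\dots,z_N$, is a polynomial (quadratic in the $z_j$ and $\bar z_j$). It is therefore $C^1$ on $\mathbb{C}^N\cong\mathbb{R}^{2N}$, so the fundamental existence–uniqueness theorem applies exactly as in the proof of Lemma~\ref{lem:unicite_kuramoto}, yielding a constant $a>0$ and a unique solution $z_1(t),\dots,z_N(t)$ of \eqref{eq:zkurA}–\eqref{eq:z0kurA} on $[-a,a]$.

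The core of the equivalence is a single computation showing that the vector fields of \eqref{eq:zkur} and \eqref{eq:zkurA} agree on the torus. Inserting $A=\tfrac12(\sigma W\circ e^{-i\alpha}+i\diag(\bm\omega))$ and using $W_{jj}=0$, the off-diagonal part reproduces $A_{jk}=\tfrac{\sigma}{2}W_{jk}e^{-i\alpha_{jk}}$, so that $\sum_k A_{jk}z_k=p_j(\bm z)+\tfrac{i\omega_j}{2}z_j$ and $\sum_k\bar A_{jk}\bar z_k=\overline{p_j(\bm z)}-\tfrac{i\omega_j}{2}\bar z_j$. Substituting and collecting terms gives
\[
\sum_k A_{jk}z_k-\Big(\sum_k\bar A_{jk}\bar z_k\Big)z_j^2
= p_j(\bm z)+i\omega_j z_j-\overline{p_j(\bm z)}\,z_j^2
+\tfrac{i\omega_j}{2}\,z_j\big(|z_j|^2-1\big).
\]
The first three terms are exactly the right-hand side of \eqref{eq:zkur}, while the remainder is proportional to $|z_j|^2-1$ and hence vanishes on $\mathbb{T}^N$. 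To conclude, let $\bm z$ be the solution of \eqref{eq:zkur}–\eqref{eq:z0kur}, which exists and is unique by Lemma~\ref{lem:correspondance_complexe} with Lemma~\ref{lem:unicite_kuramoto} and which lies on $\mathbb{T}^N$ for all $t$ because $z_j=e^{i\theta_j}$ with $\theta_j$ real. On $\mathbb{T}^N$ the two vector fields coincide, so $\bm z$ also satisfies \eqref{eq:zkurA} with the same initial data; by the uniqueness from the first step, $\bm z$ is then \emph{the} solution of \eqref{eq:zkurA}–\eqref{eq:z0kurA}, and the two problems are equivalent.

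The point I would flag as the main subtlety is that this is not an identity of vector fields on all of $\mathbb{C}^N$: the two right-hand sides differ by the term $\tfrac{i\omega_j}{2}z_j(|z_j|^2-1)$, which vanishes \emph{only} on the torus. The argument therefore genuinely rests on the invariance of $\mathbb{T}^N$ under the Kuramoto flow, with uniqueness doing the rest. Should one prefer to avoid invoking the $\bm\theta$-description, the invariance can be checked directly from \eqref{eq:zkurA}: writing $u_j:=|z_j|^2-1$, one finds $\dot u_j=-g_j(\bm z)\,u_j$ with $g_j(\bm z)=2\,\Real\!\big[(\sum_k A_{jk}z_k)\bar z_j\big]$, a linear homogeneous scalar ODE whose zero solution is unique, so $u_j(0)=0$ forces $u_j\equiv 0$ and the torus stays invariant.
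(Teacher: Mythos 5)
Your proof is correct, and its algebraic core is the same as the paper's, just read in the opposite direction: the paper starts from \eqref{eq:zkur} and absorbs the frequency term into $A$ by writing $i\omega_j z_j = \tfrac{i}{2}\omega_j z_j - \overline{\left(\tfrac{i}{2}\omega_j z_j\right)}z_j^2$ via the identity $\bar z_j = \bar z_j^2 z_j$, whereas you start from \eqref{eq:zkurA}, substitute $A$, and identify the discrepancy $\tfrac{i\omega_j}{2}z_j\left(|z_j|^2-1\right)$. These are the same computation, and both hinge on $|z_j|=1$. Where you genuinely add something is in making that hinge explicit and load-bearing: the paper invokes $\bar z_j = \bar z_j^2 z_j$ without comment (it silently presupposes the solution stays on $\mathbb{T}^N$), and its uniqueness step only notes continuity of the Jacobian of $\bm F$ on $\mathbb{T}^N$, leaving the invariance of the torus under the flow of \eqref{eq:zkurA} unaddressed. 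Your closing argument --- that $u_j = |z_j|^2 - 1$ obeys the scalar linear equation $\dot u_j = -2\,\Real\!\left[\left(\sum_k A_{jk}z_k\right)\bar z_j\right]u_j$, so $u_j(0)=0$ forces $u_j \equiv 0$ --- supplies exactly this missing invariance (one can check your coefficient: $\tfrac{\mathrm{d}}{\mathrm{d}t}|z_j|^2 = 2\,\Real(\dot z_j \bar z_j) = 2\,\Real(P_j\bar z_j)\left(1-|z_j|^2\right)$ with $P_j = \sum_k A_{jk}z_k$, since the $\bar P_j z_j |z_j|^2$ term has the same real part as $P_j\bar z_j$ up to the factor $|z_j|^2$). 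Your treatment of existence--uniqueness via real and imaginary parts on $\mathbb{R}^{2N}$ is also slightly more careful than the paper's, since $F_j$ involves $\bar w_k$ and is not holomorphic, so the "partial derivatives" in the paper's proof are best read exactly as you read them. The cost of your route is length; the benefit is that the logical role of the torus, implicit in the paper's one-line identity, is isolated and verified, making the equivalence genuinely two-sided.
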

\begin{proof}
    From Lem.~\ref{lem:correspondance_complexe}, the model is equivalently described by
    \begin{align}\label{eq:kurz}
        \dot{z}_j = \frac{1}{2}\sum_{k=1}^N W_{jk}e^{-i\alpha_{jk}}z_k + i\omega_j z_j - \overline{\left(\frac{1}{2}\sum_{k=1}^N W_{jk}e^{-i\alpha_{jk}}z_k\right)}z_j^2\,.
    \end{align}
    The term related to the natural frequencies can be separated such that
    \begin{align*}
        i\omega_j z_j = \frac{i}{2}\omega_jz_j + \frac{i}{2}\omega_jz_j = \frac{i}{2}\omega_jz_j - \overline{\left(\frac{i}{2}\omega_j\bar{z}_j\right)}\,.
    \end{align*}
    Since $\bar{z}_j = \bar{z}_j^2z_j$, then
    \begin{align*}
        i\omega_j z_j = \frac{i}{2}\omega_jz_j - \overline{\left(\frac{i}{2}\omega_jz_j\right)}z_j^2 = \frac{i}{2}\sum_{k=1}^N \omega_k z_k \delta_{jk} - \overline{\left(\frac{i}{2}\sum_{k=1}^N \omega_k z_k \delta_{jk}\right)}z_j^2\,.
    \end{align*}
    The substitution of the latter equation into Eq.~\eqref{eq:kurz} gives 
    \begin{align*}
         \dot{z}_j = \sum_{k=1}^N \left(\frac{1}{2}W_{jk}e^{-i\alpha_{jk}} + \frac{i}{2} \omega_k \delta_{jk} \right)z_k - \overline{\left(\sum_{k=1}^N \left(\frac{1}{2}W_{jk}e^{-i\alpha_{jk}} + \frac{i}{2} \omega_k \delta_{jk} \right)z_k\right)}z_j^2\,,
    \end{align*}
    which is the desired result by defining $A_{jk} = \frac{1}{2}W_{jk}e^{-i\alpha_{jk}} + \frac{i}{2} \omega_k \delta_{jk}$ for all $j,k$. The proof of uniqueness of the solutions is similar to Lem.~\ref{lem:unicite_kuramoto}: the elements of the Jacobian matrix of $\bm{F} = (F_1,...,F_N)$ with $F_j(\bm{w}) = \sum_{k=1}^NA_{jk}w_k - \left(\sum_{k=1}^N\bar{A}_{jk}\bar{w}_k \right)w_j^2$ and $\bar{w}_k = 1/w_k$ are
    \begin{align*}
        \pd[]{F_j(\bm w)}{w_\ell} = A_{j\ell} + \bar{A}_{j\ell}\bar{w}_\ell^2w_j^2 - 2\sum_{k=1}^N \bar{A}_{jk}\bar{w}_kw_j\delta_{j\ell}
    \end{align*}
    and thus, the partial derivatives exist and are continuous on $\mathbb{T}^N$.
\end{proof}
\begin{remark}
    The first term $\frac{1}{2}W\circ e^{-i\alpha}$ encodes the interaction between the oscillators ($\mathrm{diag}(W) = \mathrm{diag}(\alpha) = \bm{0}$), while the natural frequencies are the self-interaction terms (self-loops with imaginary weights).
\end{remark}

\subsection{Koopman generator for the Kuramoto model}
\label{SIsubsec:koopman_generator}

We refer to standard articles such as Refs.~\cite{Budisic2012, Brunton2022} for a more complete introduction to Koopman theory. One can also see subsection~\ref{SIsubsec:inf_crit} for the definition of the Koopman operator and its generator for general non-autonomous systems.

For the Kuramoto model in terms of the phases in Def.~\ref{def:kuramoto}, the Koopman generator is 
\begin{align}\label{eq:kooku_phase}
    \mathcal{K} = \sum_{j=1}^N \left(\omega_j + \sum_{k=1}^NW_{jk}\sin(\theta_k - \theta_j - \alpha_{jk})\right)\pd[]{}{\theta_j}
\end{align}
and acts on functions of the phases $(\theta_1,...,\theta_N)\in\mathbb{R}^N$ (recall the abuse of notation $\theta_j(t) = \theta_j \in \mathbb{R}$), giving another real function. The Koopman generator is a Lie derivative $\mathcal{L}_{\bm F}$ along the dynamics' vector field $\bm F$ acting on scalar functions. The related Koopman operator encodes all the information on the flow when the state space is compact~\cite[Theorem 4.13]{Eisner2015}. 

Under the change of variables $z_j = e^{i\theta_j}$ for all $j$, the partial derivatives for the phases become $\partial/\partial \theta_j = iz_j\partial/\partial z_j$ and one readily gets the Koopman generator of the Kuramoto model under the form given in Lem.~\ref{lem:correspondance_complexe_A}:
\begin{align}\label{eq:kookuSI}
    \mathcal{K} = \sum_{j,k=1}^N\left(A_{jk}z_k - \bar{A}_{jk}\bar{z}_kz_j^2\right)\pd[]{}{z_j}\,,
\end{align}
In matrix form, one can write
\begin{equation}\label{eq:generator_nonidentical}
    \mathcal{K} = \bm z^\top A^\top \bm L_{-1} - \bar{\bm z}^\top \bar{A}^\top \bm L_1\,,
\end{equation}
where 
\begin{equation}
\bm L_n =\left( z_1^{n+1}\frac{\partial}{\partial z_1},\; \ldots,\; z_N^{n+1}\frac{\partial}{\partial z_N}\right)^\top
\,,\qquad n\in\mathbb{Z}.
\end{equation}
are the vectorial Euler differential operators. The elements of the vectorial operators $\bm L_n$ form a larger algebra. Indeed, defining
\begin{equation}\label{eq:elements_euler}
    \ell_{j}^n=z_j^{n+1}\frac{\partial}{\partial z_j}
,\qquad n\in\mathbb{Z},\quad j\in\{1,\ldots,N\},
\end{equation}
the Koopman generator becomes
\begin{align}\label{eq:euler_calK}
    \mathcal{K} = \sum_{j,k = 1}^N A_{jk}z_k\ell_j^{-1} - \sum_{j,k = 1}^N \bar{A}_{jk}z_k^{-1}\ell_j^{1}\,.
\end{align}
We adopt $\bm{L}_n$ and $\ell_{j}^n$ as notation by analogy with the elements of the Witt (more generally, Virasoro) algebra, to which the special linear algebra naturally forms a subalgebra.

All the forms of the Koopman generator can be useful depending on the context. We will typically use Eq.~\eqref{eq:kookuSI} as in the paper, because of its simple form.

In the next section, we show how one can find monomial eigenfunctions for the generator $\mathcal{K}$ and how they lead to constants of motion.

\clearpage
\section{Monomial eigenfunctions and their conservation}

Among the most fundamental objects in Koopman theory are the eigenfunctions. The goal of this section is to provide the proof of Thm.~1 on monomial eigenfunctions from the main text and the details regarding their conservation, along with some basic examples. 

\subsection{Proof of Theorem 1: Monomial eigenfunctions}
\label{subsec:proof_thm1}
 Before providing the proof of the first theorem of the paper, we need to introduce an important lemma that defines symmetrizable matrices in a more general way than Ref.~\cite{McKee2021}. 
\begin{lemma}\label{lem:symmetrizable_equivalence}
    Consider a real matrix $B$ of size $b\times b$.
    The following statements are equivalent:
    \begin{enumerate}
        \item $B$ is symmetrizable;
        \item $DB$ is symmetric, where $D$ is a real diagonal matrix with nonzero diagonal elements;
        \item $\mu_jB_{jk} = \mu_kB_{kj}$ for some $\mu_j,\mu_k\in\mathbb{R}\setminus\{0\}$ for all $1 \leq j < k \leq b$;
        \item \begin{enumerate}
            \item $B_{jk} \neq 0$ whenever $B_{kj}\neq 0$ for all $1 \leq j < k \leq b$;
            \item $B_{i_1 i_2}B_{i_2 i_3} ... B_{i_{\eta-1}i_{\eta}}B_{{i_{\eta} i_1}} = B_{{i_1 i_{\eta}}}B_{i_{\eta}i_{\eta-1}} ... B_{i_3 i_2}B_{i_2 i_1}$ for all sequences $i_1, i_2, ..., i_\eta$ of elements of $\{1, ..., b\}$.
        \end{enumerate}
    \end{enumerate}
\end{lemma}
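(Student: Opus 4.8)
The plan is to prove the four statements equivalent via the chain $1\Leftrightarrow 2\Leftrightarrow 3\Leftrightarrow 4$, with essentially all the content in the last equivalence. The first two steps are bookkeeping. Statement 2 simply makes explicit whatever form the definition of symmetrizability in statement 1 takes (a nonsingular diagonal factor producing a symmetric matrix): if $B=D^{-1}S$ with $S$ symmetric and $D$ diagonal nonsingular, then $DB=S$ is symmetric, and conversely $DB$ symmetric gives $B=D^{-1}(DB)$; since $D$ and $D^{-1}$ are both diagonal and nonsingular, $1\Leftrightarrow 2$ is immediate. For $2\Leftrightarrow 3$, I would write $D=\diag(\mu_1,\dots,\mu_b)$ with all $\mu_j\neq 0$; then $(DB)_{jk}=\mu_j B_{jk}$, so symmetry of $DB$ is exactly $\mu_j B_{jk}=\mu_k B_{kj}$ for all $j,k$, which is statement 3 (diagonal entries are automatic, and the case $j>k$ follows from $j<k$ by symmetry of the relation).

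The forward direction $3\Rightarrow 4$ I would handle by direct algebra. Reciprocity 4(a) is immediate: if $B_{kj}\neq 0$, then $\mu_j B_{jk}=\mu_k B_{kj}\neq 0$ forces $B_{jk}\neq 0$ because $\mu_j\neq 0$. For the cycle condition 4(b), I multiply the relations $\mu_{i_m}B_{i_m i_{m+1}}=\mu_{i_{m+1}}B_{i_{m+1}i_m}$ taken cyclically around the sequence $i_1,\dots,i_\eta$; the product of the $\mu$-factors is the same on both sides and cancels, leaving precisely the clockwise-equals-counterclockwise identity of 4(b). The only care needed concerns vanishing factors, but reciprocity guarantees that $B_{i_m i_{m+1}}=0$ iff $B_{i_{m+1}i_m}=0$, so both products vanish together and the identity holds trivially in that case.

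The converse $4\Rightarrow 3$ is the crux, and I would argue graph-theoretically. Build the undirected graph $G$ on $\{1,\dots,b\}$ with an edge $\{j,k\}$ whenever $B_{jk}\neq 0$; condition 4(a) makes this well defined. Working in each connected component separately (assigning $\mu_j=1$ to isolated vertices), I would choose a spanning tree, fix $\mu$ equal to $1$ at a root, and propagate along tree edges via $\mu_j=\mu_k\,(B_{kj}/B_{jk})$, which yields nonzero values since the ratios involve only nonzero entries. By construction the relation $\mu_j B_{jk}=\mu_k B_{kj}$ holds on every tree edge, and on non-edges it is trivial (both entries vanish by 4(a)). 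It remains to verify it on non-tree edges, which is exactly where condition 4(b) enters. Adjoining a non-tree edge $\{j,k\}$ to the tree creates a unique fundamental cycle $i_1=j,i_2,\dots,i_\eta=k$; telescoping the tree relations along this path gives $\mu_{i_\eta}/\mu_{i_1}=\prod_{m=1}^{\eta-1}B_{i_m i_{m+1}}/B_{i_{m+1}i_m}$, and the desired relation on the non-tree edge, namely $\mu_{i_\eta}/\mu_{i_1}=B_{i_1 i_\eta}/B_{i_\eta i_1}$, is equivalent after cross-multiplication to the identity $B_{i_1 i_2}\cdots B_{i_{\eta-1}i_\eta}B_{i_\eta i_1}=B_{i_1 i_\eta}B_{i_\eta i_{\eta-1}}\cdots B_{i_2 i_1}$, which is precisely 4(b) for that cycle. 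This step is the main obstacle, and it is resolved exactly by the cycle condition. Since every pair $(j,k)$ is a tree edge, a non-tree edge, or a non-edge, statement 3 holds, which closes the cycle of equivalences.
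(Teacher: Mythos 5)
Your proof is correct and follows essentially the same route as the paper's: $1\Leftrightarrow 2\Leftrightarrow 3$ by unpacking the definition with $D=\diag(\mu_1,\dots,\mu_b)$, $3\Rightarrow 4$ by multiplying the relations $\mu_{i_m}B_{i_m i_{m+1}}=\mu_{i_{m+1}}B_{i_{m+1}i_m}$ cyclically and cancelling the nonzero $\mu$'s, and $4\Rightarrow 3$ by propagating $\mu$ along edges of the connectivity graph (handling zero entries trivially via 4(a)) and using the cycle condition 4(b) to verify the relation on the remaining pairs. Your spanning-tree/fundamental-cycle packaging of that last step is a slightly tidier formalization of the paper's iterative construction—it makes the consistency of the propagation unambiguous—but the underlying argument is the same.
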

\begin{proof}
$(1 \Leftrightarrow 2)$ By definition.   
\par
$(2 \Leftrightarrow 3)$ Since all the elements of $\bm \mu=(\mu_j)_{j=1}^b$ and those of the diagonal of $D$ are nonzero, suppose that $D = \mathrm{diag}(\bm \mu)$.
Element-wise, $DB = (DB)^\top$ is then equivalent to $\mu_jB_{jk} = \mu_kB_{kj}$ for all $1 \leq j < k \leq b$.
\par
$(3 \Rightarrow 4)$ First, all the elements of $\bm \mu$ are nonzero and thus, $\mu_jB_{jk} = \mu_k B_{kj}$ implies that $B_{jk} = \mu_k B_{kj}/\mu_j$ for all $j,k\in \{1, ..., b\}$. Consequently, $B_{kj} \neq 0$ implies that $B_{jk} \neq 0$.
Second, multiplying together $\mu_k B_{kj} = \mu_j B_{jk}$ for any sequence $i_1, i_2, ..., i_\eta$ of elements of $\{1, ..., b\}$ gives
\begin{equation}
    \mu_{i_1}\mu_{i_2}...\mu_{i_{\eta-1}}\mu_{i_{\eta}} B_{i_1 i_2}B_{i_2 i_3} ... B_{i_{\eta-1}i_{\eta}}B_{{i_{\eta} i_1}} = \mu_{i_1}\mu_{i_{\eta}}...\mu_{i_{3}}\mu_{i_{2}} B_{{i_1 i_{\eta}}}B_{i_{\eta}i_{\eta-1}} ... B_{i_3 i_2}B_{i_2 i_1}\,.
\end{equation}
But $\mu_j \neq 0$ for all $j$  and therefore, 
\begin{equation}
    B_{i_1 i_2}B_{i_2 i_3} ... B_{i_{\eta-1}i_{\eta}}B_{{i_{\eta} i_1}} = B_{{i_1 i_{\eta}}}B_{i_{\eta}i_{\eta-1}} ... B_{i_3 i_2}B_{i_2 i_1}\,.
\end{equation}
\par
$(3\Leftarrow 4)$  
    Let us interpret $B$ as the weight matrix of a graph.
    If the graph is not connected, simply repeat the following procedure for each connected component.
    Otherwise, $B$ can be interpreted as the weight matrix of a strongly connected graph (because of condition~4(a)).
    If $b=1$, then the implication is trivially satisfied. Otherwise, for $b>1$, choose $\mu_\ell\in \mathbb{R}\setminus\{0\}$ for some $\ell\in\{1, ..., b\}$. Since the graph is strongly connected, there exists a $j\in\{1,...,b\}$ with $j\neq \ell$ such that $B_{\ell j}/B_{j\ell}$ is a well-defined, nonzero ratio from condition 4(a). Their product allows defining a new nonzero real number $\mu_j$ such that
    \begin{equation}\label{eq:mu_construction}
        \mu_j := \frac{B_{\ell j}}{B_{j\ell}} \mu_\ell\,.
    \end{equation}
    Again, because the graph is strongly connected, one can repeat the procedure iteratively to set $\mu_k$ for all $k\in\{1,...,b\}\setminus\{j,\ell\}$.
    This implies that all the elements of $\mu_1,...,\mu_b$ satisfy $\mu_pB_{pq} = \mu_qB_{qp}$ for at least one given pair $(p, q)$, because $\mu_p$ was built from $\mu_q$ using Eq.~\eqref{eq:mu_construction}.
    \par
    Let us now deal with null matrix elements and ensure that $\mu_jB_{jk} = \mu_kB_{kj}$ \textit{for all} $1 \leq j < k \leq b$ with $j\neq k$.
    Condition 4(a) ensures that $B_{kj}\neq 0 \Rightarrow B_{jk}\neq 0$. The contrapositive of this statement is that $B_{jk} = 0 \Rightarrow B_{kj}=0$. But condition 4(a) applies for all $1 \leq j < k \leq b$, so $B_{kj} = 0$ also implies that $B_{jk} = 0$. When $B_{jk} = B_{kj} = 0$, condition 3 is trivially satisfied for any finite values of $\mu_j$ and $\mu_k$.
    \par
    For nonzero $B_{jk}$ and $B_{kj}$, consider the cycle condition 4(b) with a sequence $i_1, i_2, ..., i_{\eta-1}, i_\eta$ of elements in $\{1, ..., b\}$ where $i_1=j$, $i_{\eta} = k$ and $B_{i_m i_{m+1}} \neq 0$ for all $m\in\{1, ..., \eta-1\}$.
    Since the graph is connected, this type of sequence exists and can be chosen in such a way that, from the building procedure of $\mu_1,...,\mu_b$, 
    \begin{equation}
        \mu_{i_m}B_{i_mi_{m+1}} = \mu_{i_{m+1}}B_{i_{m+1}i_m}\,,
    \end{equation}
    or equivalently, since all elements involved are nonzero,
    \begin{equation}\label{eq:mu_and_B_ratios}
        \mu_{i_{m+1}}/\mu_{i_m} = B_{i_mi_{m+1}} / B_{i_{m+1}i_m}\,.
    \end{equation}
    According to the cycle condition 4(b),
    \begin{equation}
        B_{j\, i_2}B_{i_2 i_3} ... B_{i_{\eta-1}k}B_{kj} = B_{jk}B_{k\,i_{\eta-1}} ... B_{i_3 i_2}B_{i_2 j}\,.
    \end{equation}
    Since $B_{kj}\neq 0$ and $B_{i_m i_{m+1}} \neq 0$ for all $m$, it can be rewritten as
    \begin{equation}
        \frac{B_{jk}}{B_{kj}} = \frac{B_{j\, i_2}B_{i_2 i_3} ... B_{i_{\eta-1}k}}{B_{k\,i_{\eta-1}}B_{i_{\eta-1}i_\eta} ... B_{i_2 j}}\,,
    \end{equation}
    and Eq.~\eqref{eq:mu_and_B_ratios} then implies
    \begin{equation}
        \frac{B_{jk}}{B_{kj}} = \frac{\mu_{i_{2}} ... \mu_{i_{\eta-1}}\mu_{k}}{\mu_{j}\mu_{i_{2}} ... \mu_{i_{\eta-1}}} = \frac{\mu_k}{\mu_j}\,.
    \end{equation}
    Therefore, $\mu_j B_{jk} = \mu_k B_{kj}$ for all $1 \leq j < k \leq b$ as desired.
\end{proof}
The relevance of the latter lemma lies in the fact that it enables stating Thm.~1 solely in terms of the weight matrix and the phase lags.
\begin{theorem}\label{thm:existence_fpmonom}[Thm.~\ref{thm:monomials} of the paper]
    Let $\mathcal{W} \subset \mathcal{V}$ be a non-empty subset of vertices such that $|\alpha_{jk}| < \pi/2$ for all $j,k\in\mathcal{W}$. Let $\bm \mu = (\mu_1\,\,\,\cdots\,\,\,\mu_N)^\top \in \mathbb{R}^N$ satisfy $\mu_j \neq 0$ if and only if $j \in \mathcal{W}$. There exists a $\bm{\mu}$ such that the monomial $z^{\bm \mu} := z_1^{\mu_1}...z_N^{\mu_N}$ is an eigenfunction of the Koopman generator $\mathcal{K}$ in Eq.~\eqref{eq:kookuSI} if and only if 
    \begin{enumerate}
        \item $W_{jk} = 0$ for all $j\in \mathcal{W}$ and $k\in \mathcal{V}\setminus\mathcal{W}$;
        \item $W_{jk}\neq 0$ whenever $W_{kj}\neq 0$ for all $j,k\in \mathcal{W}$ ;
        \item $W_{i_1 i_2} ... W_{i_{\eta-1}i_{\eta}}W_{{i_{\eta} i_1}} = W_{{i_1 i_{\eta}}}W_{i_{\eta}i_{\eta-1}} ... W_{i_2 i_1}$ for all sequences $i_1, i_2, ..., i_\eta$ of elements of~$\mathcal{W}$;
        \item $\alpha_{jk} = -\alpha_{kj}$ whenever  $j,k\in \mathcal{W}$,  $j\neq k$, $W_{jk}\neq 0$.
    \end{enumerate}
    If $z^{\bm \mu}$ is an eigenfunction, then its eigenvalue is $i\bm{\mu}^\top\bm{\omega}$.
\end{theorem}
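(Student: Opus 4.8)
The plan is to compute $\mathcal{K}[z^{\bm\mu}]$ explicitly from the generator \eqref{eq:kookuSI} and reduce the eigenfunction equation to coefficient conditions on $A$. Writing $\partial_j z^{\bm\mu}=\mu_j z^{\bm\mu}/z_j$ and using that on the $N$-torus $\bar z_k=z_k^{-1}$, I would factor out $z^{\bm\mu}$ to obtain
\begin{align*}
\mathcal{K}[z^{\bm\mu}] = z^{\bm\mu}\sum_{j\in\mathcal{W}}\mu_j\sum_{k\in\mathcal{V}}\bigl(A_{jk}z_kz_j^{-1}-\bar A_{jk}z_k^{-1}z_j\bigr),
\end{align*}
where the outer sum runs only over $\mathcal{W}$ since $\mu_j=0$ otherwise. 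Thus $z^{\bm\mu}$ is an eigenfunction if and only if the bracketed Laurent polynomial is a constant $\lambda$. Because the constant function and the characters $z_az_b^{-1}$ with $a\neq b$ are linearly independent on $\mathbb{T}^N$, I would match coefficients monomial by monomial. The diagonal terms $k=j$ contribute $A_{jj}-\bar A_{jj}=i\omega_j$ (using $W_{jj}=0$, so $A_{jj}=i\omega_j/2$), giving $\lambda=i\sum_{j\in\mathcal{W}}\mu_j\omega_j=i\bm\mu^\top\bm\omega$, the claimed eigenvalue.

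Next I would read off the off-diagonal conditions: the coefficient of $z_az_b^{-1}$ (with $a\neq b$) is $\mu_bA_{ba}-\mu_a\bar A_{ab}$, which must vanish. A short case analysis on membership in $\mathcal{W}$ handles the structural part. If $a\in\mathcal{W}$ and $b\notin\mathcal{W}$, then $\mu_b=0$ forces $A_{ab}=0$; assuming $\sigma\neq0$ so that $A_{ab}=\tfrac{\sigma}{2}W_{ab}e^{-i\alpha_{ab}}$ vanishes exactly when $W_{ab}=0$, this (together with the symmetric subcase $a\notin\mathcal{W}$, $b\in\mathcal{W}$) is precisely condition~\ref{itm:1.1}. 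The subcase $a,b\notin\mathcal{W}$ is vacuous. The remaining case $a,b\in\mathcal{W}$ yields the pairwise identity
\begin{align*}
\mu_b W_{ba}e^{-i\alpha_{ba}} = \mu_a W_{ab}e^{i\alpha_{ab}},\qquad a\neq b,\ a,b\in\mathcal{W}.
\end{align*}

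I would then decompose this complex equation. First, vanishing of one side forces vanishing of the other, so $W_{ab}=0\iff W_{ba}=0$ on $\mathcal{W}$, which is condition~\ref{itm:1.2}. When both are nonzero, rewriting as $\mu_bW_{ba}/(\mu_aW_{ab})=e^{i(\alpha_{ab}+\alpha_{ba})}$ shows the right-hand side must be real; since $|\alpha_{jk}|<\pi/2$ on $\mathcal{W}$ gives $\alpha_{ab}+\alpha_{ba}\in(-\pi,\pi)$, the only admissible value is $\alpha_{ab}+\alpha_{ba}=0$, i.e. condition~\ref{itm:1.4}, and comparing moduli then leaves the real symmetrization $\mu_aW_{ab}=\mu_bW_{ba}$. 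At this point I would invoke Lemma~\ref{lem:symmetrizable_equivalence} applied to $B=(W_{jk})_{j,k\in\mathcal{W}}$: existence of a symmetrizing vector $\bm\mu$ (statement~3) is equivalent to reciprocity plus the cycle identity (statement~4), i.e. conditions~\ref{itm:1.2} and~\ref{itm:1.3}. For necessity, the eigenfunction hypothesis produces the symmetrization for the given $\bm\mu$, hence conditions~\ref{itm:1.2}--\ref{itm:1.3} by the Lemma; for sufficiency, those conditions yield a symmetrizing $\bm\mu$, and substituting it back (with conditions~\ref{itm:1.1} and~\ref{itm:1.4}) makes every off-diagonal coefficient vanish, confirming that $z^{\bm\mu}$ is an eigenfunction with eigenvalue $i\bm\mu^\top\bm\omega$.

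I expect the main obstacle to be the bookkeeping that bridges the $\bm\mu$-dependent identity $\mu_aW_{ab}=\mu_bW_{ba}$ and the purely structural, $\bm\mu$-free conditions~\ref{itm:1.2}--\ref{itm:1.3}: the sufficiency direction genuinely requires taking $\bm\mu$ to be a symmetrizing vector of $(W_{jk})_{j,k\in\mathcal{W}}$ rather than an arbitrary vector supported on $\mathcal{W}$, so Lemma~\ref{lem:symmetrizable_equivalence} is doing the real work of turning the cycle identity into a compatible exponent vector. The other delicate point is the analytic step ruling out the $\alpha_{ab}+\alpha_{ba}=\pm\pi$ branch, which relies crucially on the strict bound $|\alpha_{jk}|<\pi/2$ and the openness of that interval.
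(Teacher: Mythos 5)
Your proposal is correct and follows essentially the same route as the paper's proof: both reduce $\mathcal{K}[z^{\bm\mu}]=\lambda z^{\bm\mu}$ by linear independence of torus monomials to the pairwise identity $\mu_a W_{ab}e^{i\alpha_{ab}}=\mu_b W_{ba}e^{-i\alpha_{ba}}$, split it into modulus and argument (using the strict bound $|\alpha_{jk}|<\pi/2$ to exclude the $\pm\pi$ branch), and invoke Lemma~\ref{lem:symmetrizable_equivalence} to convert $\mu_j W_{jk}=\mu_k W_{kj}$ into the $\bm\mu$-free conditions~\ref{itm:1.2}--\ref{itm:1.3}. Your factoring out $z^{\bm\mu}$ to match coefficients of the characters $z_a z_b^{-1}$ is only a cosmetic variant of the paper's bookkeeping with $z^{\bm\mu\pm(\bm e_j-\bm e_k)}$, and your explicit remarks on $\sigma\neq 0$ and on the sufficiency direction requiring a symmetrizing exponent vector are sound.
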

\begin{proof}
The action of the Koopman generator on a monomial $z^{\bm \mu}$ is
\begin{equation}
    \mathcal{K}\,[z^{\bm \mu}] = \sum_{j\in\mathcal{W}}\sum_{k\in\mathcal{V}} \left(A_{jk}\mu_j z^{\bm \mu - \bm e_j + \bm e_k} - \bar{A}_{jk}\mu_jz^{\bm \mu - \bm e_k + \bm e_j}\right)\,,
\end{equation}
where $(\bm e_j)_\ell = \delta_{j\ell}$. Splitting the sum over $k\in\mathcal{V}$ to $\mathcal{W}$ and $\mathcal{V}\setminus\mathcal{W}$ yields
\begin{equation}
    \mathcal{K}\,[z^{\bm \mu}] = \sum_{j\in\mathcal{W}}\sum_{k\in\mathcal{W}} \left(A_{jk}\mu_j - \bar{A}_{kj}\mu_k\right)z^{\bm \mu - \bm e_j + \bm e_k} + \sum_{j\in\mathcal{W}}\sum_{k\in\mathcal{V}\setminus\mathcal{W}} \left(A_{jk}\mu_j z^{\bm \mu - \bm e_j + \bm e_k} - \bar{A}_{jk}\mu_jz^{\bm \mu - \bm e_k + \bm e_j}\right)\,.
\end{equation}
The diagonal terms can also be separated from the off-diagonal ones to obtain
\begin{equation}\label{eq:KOG_action_monomial}
    \mathcal{K}\,[z^{\bm \mu}] = \sum_{j\in\mathcal{W}}\sum_{\substack{k\in\mathcal{W}\\ k\neq j}} \left(A_{jk}\mu_j - \bar{A}_{kj}\mu_k\right)z^{\bm \mu - \bm e_j + \bm e_k} + \sum_{j\in\mathcal{W}}\sum_{k\in\mathcal{V}\setminus\mathcal{W}} \left(A_{jk}\mu_j z^{\bm \mu - \bm e_j + \bm e_k} - \bar{A}_{jk}\mu_jz^{\bm \mu - \bm e_k + \bm e_j}\right) + iz^{\bm \mu}\sum_{j\in\mathcal{W}}\omega_j \mu_j\,.
\end{equation}
The monomial $z^{\bm \mu}$ is an eigenfunction of the Koopman generator if and only if it satisfies the eigenvalue equation
\begin{equation}
    \mathcal{K}[z^{\bm \mu}] = \lambda z^{\bm \mu}\,,
\end{equation}
which is equivalent, by Eq.~\eqref{eq:KOG_action_monomial}, to
\begin{equation}
\sum_{j\in\mathcal{W}}\sum_{\substack{k\in\mathcal{W}\\ k\neq j}} \left(A_{jk}\mu_j - \bar{A}_{kj}\mu_k\right)z^{\bm \mu - \bm e_j + \bm e_k} + \sum_{j\in\mathcal{W}}\sum_{k\in\mathcal{V}\setminus\mathcal{W}} \left(A_{jk}\mu_j z^{\bm \mu - \bm e_j + \bm e_k} - \bar{A}_{jk}\mu_jz^{\bm \mu + \bm e_j - \bm e_k }\right) + iz^{\bm \mu}\sum_{j\in\mathcal{W}}\mu_j\omega_j  = \lambda z^{\bm \mu}\,.
\end{equation}
All monomials on the left-hand side are linearly independent. Clearly, if $z^{\bm{\mu}}$ is an eigenfunction, its eigenvalue is $i\sum_{j\in\mathcal{W}}\mu_j\omega_j $. Also, the necessary and sufficient conditions on $\bm \mu$ and $A$ for the eigenvalue equation to be satisfied with eigenvalue $i\sum_{j\in\mathcal{W}}\mu_j\omega_j $ are $A_{jk} = 0$ for all $j\in\mathcal{W}$ and $k\in\mathcal{V}\setminus\mathcal{W}$, and
\begin{equation}\label{eq:nec_suf_cond_eigenfunc}
    A_{jk}\mu_j = \bar{A}_{kj}\mu_k
\end{equation}
for all $j,k\in\mathcal{W}$ with $j\neq k$.
In terms of the weight matrix $W$ and the phase-lag matrix $\alpha$, these conditions are equivalent to $W_{jk} = 0$ for all $j\in\mathcal{W}$ and $k\in\mathcal{V}\setminus\mathcal{W}$ (condition 1), and
\begin{equation}\label{eq:nec_suf_cond_Walpha}
    \mu_jW_{jk}e^{i\alpha_{jk}} = \mu_kW_{kj}e^{-i\alpha_{kj}}
\end{equation}
for all $j,k\in\mathcal{W}$ with $j\neq k$. 
The two complex numbers in Eq.~\eqref{eq:nec_suf_cond_Walpha} are equal if and only if their modulus coincide and, when their modulus is nonzero, their principal argument also coincide. In other words, Eq.~\eqref{eq:nec_suf_cond_Walpha} is satisfied if and only if either one of the following conditions is satisfied:
\begin{enumerate}
\item 
$|\mu_jW_{jk}| = |\mu_k W_{kj}| = 0$;
\item $|\mu_jW_{jk}| = |\mu_k W_{kj}| \neq 0$ and $\alpha_{jk} + \mathrm{Arg}(\mu_jW_{jk}) = -\alpha_{kj} + \mathrm{Arg}(\mu_kW_{kj})$.
\end{enumerate}
In the last condition, $\mathrm{Arg}(z)\in(-\pi,\pi]$ denotes the principal argument of $z$. The first condition is equivalent to  $W_{jk} = W_{kj} = 0$ because $\mu_j$ is non-zero for each $j\in\mathcal{W}$.
For the second condition, recall that $|\alpha_{jk}| < \pi/2$, while $\mu_jW_{jk} \in \mathbb{R}$ is equivalent to the fact that the arguments $\mathrm{Arg}(\mu_jW_{jk})$ and $\mathrm{Arg}(\mu_kW_{kj})$ are 0 or $\pi$.
Therefore, $\alpha_{jk} + \mathrm{Arg}(\mu_jW_{jk}) = -\alpha_{kj} + \mathrm{Arg}(\mu_kW_{kj})$ if and only if $\alpha_{jk} = -\alpha_{kj}$ and $\mathrm{Arg}(\mu_jW_{jk}) = \mathrm{Arg}(\mu_kW_{kj})$.
The second condition is thus equivalent to $\alpha_{jk} = -\alpha_{kj}$,  $\mathrm{Arg}(\mu_jW_{jk}) = \mathrm{Arg}(\mu_kW_{kj})$, and $|\mu_jW_{jk}| = |\mu_k W_{kj}|\neq0$.  
This first equation provides the fourth condition 4 of the theorem: $\alpha_{jk} = -\alpha_{kj}$ for all $j,k\in\mathcal{W}$ such that $W_{jk} \neq 0$. Together, $\mathrm{Arg}(\mu_jW_{jk}) = \mathrm{Arg}(\mu_kW_{kj})$ and $|\mu_jW_{jk}| = |\mu_k W_{kj}|\neq0$ are equivalent to $\mu_jW_{jk} = \mu_k W_{kj}$. Then, by Lem.~\ref{lem:symmetrizable_equivalence}, there exists at least one $\bm\mu$ such that $\mu_jW_{jk} = \mu_k W_{kj}$ for all $j,k\in\mathcal{W}$ with $j\neq k$ if and only if $W_{jk}\neq 0$ whenever $W_{kj}\neq 0$ for all $j,k\in \mathcal{W}$ (condition 2) and $W_{i_1 i_2}W_{i_2 i_3} ... W_{i_{\eta-1}i_{\eta}}W_{{i_{\eta} i_1}} = W_{{i_1 i_{\eta}}}W_{i_{\eta}i_{\eta-1}} ... W_{i_3 i_2}W_{i_2 i_1}$ for all sequences $i_1, i_2, ..., i_\eta$ of elements of $\mathcal{W}$ (condition 3).
Altogether, conditions 1-4 are necessary and sufficient for the monomial $z^{\bm \mu}$ to be an eigenfunction  of $\mathcal{K}$.
\end{proof}


\begin{remark}\label{rem:phaselags_monomials}
    The theorem can be extended to phase lags in the interval $-\frac{\pi}{2} < \alpha_{jk} \leq \frac{\pi}{2}$ for all $j,k\in\mathcal{W}$, but $\pi/2$ phase lags imply different specific conditions that we wanted to avoid for the sake of simplicity. Indeed, if $\alpha_{jk} = \pi/2$ for some $j,k\in\mathcal{W}$ with $W_{jk} \neq 0$, Eq.~\eqref{eq:nec_suf_cond_Walpha} becomes $i\mu_jW_{jk} = \mu_k W_{kj}e^{-i\alpha_{kj}}$. Since $-\pi/2$ phase lags are excluded from the interval and $\mu_jW_{jk} \in \mathbb{R}$ for all $j,k\in\mathcal{W}$, the matching of the modulus and complex phase results in $\alpha_{jk} = \alpha_{kj} = \pi/2$ and $\mu_jW_{jk} = -\mu_k W_{kj}$. If $\mu_jW_{jk} = -\mu_k W_{kj}$ for all $j,k \in \mathcal{W}$, $(W_{jk})_{j,k\in \mathcal{W}}$ is skew-symmetrizable. If there is a mix of phase lags equal to $\pi/2$ and in the interval $(-\pi/2, \pi/2)$, the symmetrizability conditions (2 and 3) apply not directly to $W$, but to another matrix equal to $W$ up to the sign inversion of some elements.
\end{remark}

\subsection{Procedure for constructing the monomial eigenfunction exponent \texorpdfstring{$\bm{\mu}$}{Lg}}
\label{subsec:mu_procedure}
The procedure for constructing $\bm{\mu}$ from $W$ is identical to the one from the proof of Lemma~\ref{lem:symmetrizable_equivalence}. For the sake of clarity, we reproduce the procedure explicitly in this subsection with very few adjustments.

First, define $B$ as the submatrix related to the subgraph with vertex set $\mathcal{W}$ of size $b$ from Thm.~\ref{thm:existence_fpmonom}.
If the subgraph is not connected, one can repeat the following procedure for each connected component. In the connected case, condition~2 of Thm.~\ref{thm:existence_fpmonom} implies that $B$ is the weight matrix of a strongly connected graph. 
If $b = 1$, then $\bm\mu$ contains a single arbitrary nonzero element. Otherwise, for $b>1$, choose $\mu_\ell\in \mathbb{R}\setminus\{0\}$ for some $\ell\in\mathcal{W}$. Since the graph is strongly connected, there exists a $j\in\mathcal{W}$ with $j\neq \ell$ such that $B_{\ell j}/B_{j\ell}$ is a well-defined, nonzero ratio from condition 2 of Thm.~\ref{thm:existence_fpmonom}. Their product allows defining a new nonzero real number $\mu_j$ such that
\begin{equation}\label{eq:mu_construction_2}
    \mu_j := \frac{B_{\ell j}}{B_{j\ell}} \mu_\ell\,.
\end{equation}
Again, because the graph is strongly connected, one can repeat the procedure iteratively to set $\mu_k$ for all $k\in\mathcal{W}\setminus\{j,\ell\}$, which completes the construction of $\bm{\mu}$.


\subsection{Monomials as constants of motion}
Having a monomial eigenfunction $z^{\bm \mu}$ implies the existence of a constant of motion with the form $z^{\bm{\mu}}e^{-i\bm{\mu}^\top \bm{\omega}t}$. If $\bm{\mu}^\top\bm\omega = 0$, the time dependence disappears, which is convenient when making a change of variables.
This is, however, a rather specific case, because the powers of monomial eigenfunctions are determined by the weight matrix, thus restricting the natural frequencies satisfying the orthogonality condition.
In the more general case, it is possible to combine eigenfunctions with nonzero eigenvalues to obtain constants of motion with no time dependence, as presented in the following lemma. 

\begin{lemma}\label{lem:monom_constants}
    Let the Kuramoto model in Lem.~\ref{lem:correspondance_complexe_A} have natural frequencies $\bm{\omega} = (\omega_1\,\cdots\,\omega_N)$ and Koopman generator $\mathcal{K}$. Suppose that $\mathcal{K}$ admits $1 \leq q\leq N$ functionally independent monomial eigenfunctions $z^{\bm\mu_1},\ldots,z^{\bm\mu_q}$, where $\bm{\mu}_{\rho}\in\mathbb{R}^N$ for each $1\leq \rho\leq q$, whose corresponding eigenvalues are given by the vector $\bm \lambda = (i\bm\mu_1^\top \bm{\omega}\,\,\, \cdots\,\,\, i\bm\mu_q^\top \bm{\omega})^\top$. If all eigenvalues are nonzero, then there are $q-1$ functionally independent monomial constants of motion $z^{\bm{\nu}_1},\ldots, z^{\bm{\nu}_{q-1}}$, defined by the matrix equation
    $
        (\bm{\nu}_{1} \; \cdots \; \bm{\nu}_{q-1}) = (\bm{\mu}_{1} \; \cdots \; \bm{\mu}_{q}) O
    $
    where $O$ is a real ${q \times (q-1)}$ matrix having linearly independent columns orthogonal to $\bm{\lambda}$.
\end{lemma}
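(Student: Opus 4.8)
The plan is to exploit the fact that the Koopman generator $\mathcal{K}$ is a vector field, hence a derivation on smooth observables, so that products and real powers of eigenfunctions are again eigenfunctions whose eigenvalues add. Write $M := (\bm{\mu}_1\,\cdots\,\bm{\mu}_q)$ and let $\bm{o}_\beta$ be the $\beta$-th column of $O$, so that $\bm{\nu}_\beta := M\bm{o}_\beta = \sum_{\rho=1}^q O_{\rho\beta}\,\bm{\mu}_\rho$. I want to show each $z^{\bm{\nu}_\beta}$ is a monomial eigenfunction with eigenvalue $\bm{o}_\beta^\top\bm{\lambda}$. To avoid the multivaluedness of non-integer powers, I would pass to the angular variables of Def.~\ref{def:kuramoto} (functions of $\bm{\theta}\in\mathbb{R}^N$, as in the remark following Thm.~\ref{thm:monomials}), where $z^{\bm{\mu}_\rho}$ corresponds to $e^{i\bm{\mu}_\rho^\top\bm{\theta}}$ and $\mathcal{K}=\sum_j\dot{\theta}_j\partial_{\theta_j}$. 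Then $\mathcal{K}[e^{i\bm{\mu}_\rho^\top\bm{\theta}}] = i\,\bm{\mu}_\rho^\top\dot{\bm{\theta}}\,e^{i\bm{\mu}_\rho^\top\bm{\theta}}$, so the eigenvalue equation is equivalent to $i\,\bm{\mu}_\rho^\top\dot{\bm{\theta}}=\lambda_\rho$ being constant along the flow. Taking the linear combination with weights $O_{\rho\beta}$ gives $i\,\bm{\nu}_\beta^\top\dot{\bm{\theta}} = \bm{o}_\beta^\top\bm{\lambda}$, again constant, so $z^{\bm{\nu}_\beta}=e^{i\bm{\nu}_\beta^\top\bm{\theta}}$ is an eigenfunction with eigenvalue $\bm{o}_\beta^\top\bm{\lambda}$ (consistent with Thm.~\ref{thm:monomials}, since $i\bm{\nu}_\beta^\top\bm{\omega}=\bm{o}_\beta^\top\bm{\lambda}$).

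By construction the columns of $O$ are orthogonal to $\bm{\lambda}$, so each eigenvalue $\bm{o}_\beta^\top\bm{\lambda}$ vanishes and $\mathcal{K}[z^{\bm{\nu}_\beta}]=0$; hence $z^{\bm{\nu}_\beta}$ is a time-independent constant of motion. It remains to verify that such an $O$ exists with the stated dimensions. Since every eigenvalue $\lambda_\rho=i\bm{\mu}_\rho^\top\bm{\omega}$ is nonzero, the real vector $\bm{r}:=(\bm{\mu}_1^\top\bm{\omega}\,\cdots\,\bm{\mu}_q^\top\bm{\omega})^\top$ satisfies $\bm{\lambda}=i\bm{r}$ with $\bm{r}\neq\bm{0}$. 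Because $O$ is real, $\bm{o}_\beta^\top\bm{\lambda}=0 \Leftrightarrow \bm{o}_\beta^\top\bm{r}=0$, and the orthogonal complement of $\bm{r}$ in $\mathbb{R}^q$ has dimension $q-1$; any basis of it supplies the $q-1$ linearly independent real columns of $O$.

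Finally I would establish functional independence through the equivalence: for real exponent vectors, the monomials $z^{\bm{\nu}_1},\ldots,z^{\bm{\nu}_{q-1}}$ (equivalently, the real angular observables $\bm{\nu}_\beta^\top\bm{\theta}$) are functionally independent if and only if $\bm{\nu}_1,\ldots,\bm{\nu}_{q-1}$ are linearly independent. In angular coordinates the differentials are the constant covectors $\bm{\nu}_\beta^\top\dif\bm{\theta}$, whose pointwise linear independence is precisely linear independence of the $\bm{\nu}_\beta$; conversely a relation $\sum_\beta a_\beta\bm{\nu}_\beta=\bm{0}$ yields the functional relation $\prod_\beta (z^{\bm{\nu}_\beta})^{a_\beta}=1$. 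The hypothesis that $z^{\bm{\mu}_1},\ldots,z^{\bm{\mu}_q}$ are functionally independent therefore forces $\bm{\mu}_1,\ldots,\bm{\mu}_q$ to be linearly independent, i.e.\ $M$ has full column rank $q$; since $O$ has rank $q-1$, the product $MO=(\bm{\nu}_1\,\cdots\,\bm{\nu}_{q-1})$ has rank $q-1$, so the $\bm{\nu}_\beta$ are linearly independent and the constructed constants of motion are functionally independent.

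The routine part is the derivation/eigenvalue-addition computation; the parts requiring genuine care are twofold. First, justifying the eigenfunction property for \emph{non-integer} real exponents, which I handle by working on the universal cover $\bm{\theta}\in\mathbb{R}^N$ so that $e^{i\bm{\nu}_\beta^\top\bm{\theta}}$ is single-valued and $\mathcal{K}$ acts cleanly. Second, making the notion of functional independence airtight in \emph{both} directions of the equivalence with linear independence of exponents — using the constant-covector computation for "linearly independent $\Rightarrow$ functionally independent" and the explicit monomial relation $\prod_\beta(z^{\bm{\nu}_\beta})^{a_\beta}=1$ for its contrapositive — so that the given hypothesis on the $\bm{\mu}_\rho$ transfers correctly to the constructed $\bm{\nu}_\beta$.
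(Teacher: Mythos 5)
Your proposal is correct and takes essentially the same route as the paper's proof: you build the new exponents from a basis of the real orthogonal complement of $\bm{\lambda}$ (which is $(q-1)$-dimensional since all eigenvalues are nonzero), and you reduce functional independence of the monomials to linear independence of exponent vectors, concluding via the full column rank of $(\bm{\mu}_1\,\cdots\,\bm{\mu}_q)$ times the rank-$(q-1)$ matrix $O$. The only difference is presentational: where the paper invokes the known fact that products of non-vanishing Koopman eigenfunctions are eigenfunctions with summed eigenvalues, you rederive the same additivity by passing to the angular variables $\bm{\theta}\in\mathbb{R}^N$, where the eigen-identity $i\,\bm{\mu}_\rho^{\top}F(\bm{\theta})=\lambda_\rho$ is linear in the exponent — which also cleanly handles the multivaluedness of non-integer powers, a point the paper relegates to a remark.
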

\begin{proof}
    Denote $\psi_{\rho}(\bm{z}) = z^{\bm\mu_\rho}$ for all $\rho \in \{1,...,q\}$. Since $\bm{z}\in\mathbb{T}^N$, these $q$ monomials are non-vanishing eigenfunctions of $\mathcal{K}$. Moreover, since the eigenfunctions are also assumed to be functionally independent, the vectors $\bm \mu_1,\ldots,\bm\mu_q$ are linearly independent and thus span a $q$-dimensional subspace in $\mathbb{R}^N$.
    
    Now, it is known that the product of non-vanishing eigenfunctions yields an eigenfunction with sums of eigenvalues~\cite[Proposition~5]{Budisic2012}. Explicitly, for any real numbers $a_1,...,a_q$,
    \begin{align*}
        \mathcal{K}\left[\prod_{\rho=1}^q \psi_{\rho}^{a_\rho}\right] = \left(\sum_{\eta=1}^q a_\eta \lambda_\eta\right) \,\prod_{\rho=1}^q \psi_{\rho}^{a_\rho}\,.
    \end{align*}
    Then, $\prod_{\rho=1}^q \psi_{\rho}^{a_\rho}$ is a constant of motion if and only if the new orthogonality condition $\bm{a}^\top \bm \lambda = 0$ is met. This condition is nontrivial since, by assumption, no component of $\bm \lambda$ is zero. Clearly, in this case, the imaginary part of $\bm \lambda$ lies in $\mathbb{R}^q$ and has a $(q-1)$-dimensional orthogonal complement.  
    This implies that we can find $q - 1$ linearly independent vectors $\bm{a}_{\tau} = (a_{\tau 1} \,\cdots\, a_{\tau q})^{\top} \in \mathbb{R}^q$, for $\tau \in \{1, \ldots, q - 1\}$, that are orthogonal to $\bm \lambda$.
    Since monomials of linearly independent powers are functionally independent, there are $q - 1$ constants of motion having the form
    \begin{align*}
        \Psi_{\tau}(\bm{z}) = \prod_{\rho=1}^q \psi_{\rho}^{a_{\tau\rho}}(\bm{z}) = \prod_{\rho=1}^q \left(z_1^{a_{\tau\rho}(\bm\mu_\rho)_1}\cdots\,z_N^{a_{\tau\rho}(\bm\mu_\rho)_N}\right) = \prod_{j=1}^N z_j^{\sum_{\rho=1}^q(\bm \mu_\rho)_ja_{\tau \rho}}=z^{\bm\nu_\tau},\qquad \tau\in\{1,..., q-1\},
    \end{align*}
    where $\bm\nu_\tau = U \bm{a}_{\tau}$ with $U = (\bm{\mu}_{1} \; \cdots \; \bm{\mu}_{q})$. Altogether, the vectors $\bm\nu_1,\ldots, \bm\nu_{q-1}$, which define the monomial constants of motion, satisfy the matrix equation  $(\bm{\nu}_{1} \;\cdots \; \bm{\nu}_{q-1}) = UO$, where $O = (\bm{a}_1\;\cdots\; \bm{a}_{q-1})$. The latter is a real $q \times (q-1)$ matrix with linearly independent columns satisfying $O^\top \bm \lambda = \bm 0$, and the lemma follows. 
\end{proof}

\begin{remark}
The above lemma remains valid if the condition “all eigenvalues are nonzero” is replaced by “at least one eigenvalue is nonzero.” However, in this case, some eigenfunctions $z^{\bm{\mu}_\rho}$ are themselves constants of motion, and thus some of the resulting constants of motion $z^{\bm{\nu}_\tau}$ may factor through them — that is, they include terms like $z^{\bm{\mu}_\rho}$ as multiplicative factors. As a result, the two sets of constants of motion become functionally dependent. The condition that all eigenvalues are nonzero therefore ensures the cleanest setting, where the only functionally independent constants of motion are precisely the monomials $z^{\bm{\nu}_\tau}$.
\end{remark}

\subsection{Basic examples for Theorem 1 and the conservation of monomials}
\label{sec:basic_example_monomial}

\begin{example}
    Consider a (sink) directed star of five Kuramoto oscillators such that 
    \begin{align*}
        \dot{z}_1 &= i\omega_1 z_1 + (A_{12}z_2 - \bar{A}_{12}\bar{z}_2z_1^2) + (A_{13}z_3 - \bar{A}_{13}\bar{z}_3z_1^2) + (A_{14}z_4 - \bar{A}_{14}\bar{z}_4z_1^2) + (A_{15}z_5 - \bar{A}_{15}\bar{z}_5z_1^2)\\
        \dot{z}_k &= i\omega_k z_k\,, \quad k\in\{2,3,4,5\}.
    \end{align*}
    Clearly, the last four equations readily inform that $z_2, z_3, z_4, z_5$ are $q = 4$ monomial eigenfunctions with respective eigenvalues $\bm \lambda = (i\omega_2\,\,\, i\omega_3\,\,\, i\omega_4\,\,\, i\omega_5)$ (let's assume they are not zero). 
    From Lem.~\ref{lem:monom_constants}, those eigenfunctions can be combined to obtain 3 functionally independent constants of motion.
    For example, $z_2^{\omega_3}z_3^{-\omega_2}$, $z_3^{\omega_4}z_4^{-\omega_3}$ and $z_4^{\omega_5}z_5^{-\omega_4}$ is a set of independent constants of motion, along with the time-dependent integral $z_5e^{-i \omega_5 t}$.
\end{example}

\begin{example}
    Consider the system of 10 Kuramoto oscillators associated with the complex matrix
    {\footnotesize\begin{equation}
        A = 
        \begin{bmatrix}
            i\omega_1/2 & \mathcal{B}_{12}e^{i\alpha_{12}} & \mathcal{B}_{13}e^{i\alpha_{13}} & 0 &  0 & 0 & 0 & 0 & 0 & 0\\
            \mathcal{B}_{21}e^{-i\alpha_{12}} & i\omega_2/2 & \mathcal{B}_{23}e^{i\alpha_{23}} & 0  & 0 & 0 & 0 & 0 & 0 & 0\\
            \mathcal{B}_{31}e^{-i\alpha_{13}} & \mathcal{B}_{32}e^{-i\alpha_{23}} & i\omega_3/2 & 0 & 0 & 0 & 0 & 0 & 0 &  0\\
            0 & 0 & 0 & i\omega_4/2 & 0  & 0 & 0 & 0 & 0 & 0\\
            0 & 0 & 0 & 0 & i\omega_5/2 & \mathcal{C}_{12}e^{i\alpha_{56}}  & 0 & 0 & 0 & 0\\
            0 & 0 & 0 & 0 & \mathcal{C}_{21}e^{-i\alpha_{56}} & i\omega_6/2 & 0 & 0 & 0 & 0\\
            0 & 0 & 0 & 0 & 0 & 0 & i\omega_7/2 & \mathcal{D}_{12}e^{i\alpha_{78}} &  0 & 0\\
            0 & 0 & 0 & 0 & 0 & 0 & \mathcal{D}_{21}e^{-i\alpha_{78}} & i\omega_8/2  & 0 & 0\\
            A_{9,1} & A_{9,2} & A_{9,3} & A_{9,4} & A_{9,5} & A_{9,6} & A_{9,7} & A_{9,8}  & i\omega_{9}/2 & A_{9, 10}\\
            A_{10,1} & A_{10, 2} & A_{10, 3} & A_{10, 4} & A_{10, 5} & A_{10, 6} & A_{10, 7} & A_{10, 8} & A_{10, 9}  & i\omega_{10}/2
        \end{bmatrix}\,,
    \end{equation}}
    where $\mathcal{B}$, $\mathcal{C}$, $\mathcal{D}$ are respectively $3\times 3$, $2\times 2$ and $2\times 2$ real matrices with null diagonal and nonzero off-diagonal elements.
    Let $\bm{\mu}_{\mathcal{B}} = (\mu_{\mathcal{B}}^1, \mu_{\mathcal{B}}^2, \mu_{\mathcal{B}}^3, 0, 0, 0, 0, 0, 0, 0)^\top$, $\bm{\mu}_{\mathcal{C}} = (0, 0, 0, 0, \mu_{\mathcal{C}}^5, \mu_{\mathcal{C}}^6, 0, 0, 0, 0)^\top$, and $\bm{\mu}_{\mathcal{D}} = (0, 0, 0, 0, 0, 0, \mu_{\mathcal{D}}^7, \mu_{\mathcal{D}}^8, 0, 0)^\top$.
    Since $2\times 2$ matrices are always symmetrizable, let $\text{diag}(\mu_{\mathcal{C}}^5, \mu_{\mathcal{C}}^6)\mathcal{C}$ and $\text{diag}(\mu_{\mathcal{D}}^7, \mu_{\mathcal{D}}^8)\mathcal{D}$ be symmetric and also consider that $\mathcal{B}$ is symmetrizable, i.e., $\text{diag}(\mu_{\mathcal{B}}^1, \mu_{\mathcal{B}}^2, \mu_{\mathcal{B}}^3)\mathcal{B}$ is symmetric. The second and third conditions of Thm.~\ref{thm:existence_fpmonom} are thus satisfied. Moreover, the first and fourth conditions of Thm.~\ref{thm:existence_fpmonom} are satisfied by construction. Altogether Thm.~\ref{thm:existence_fpmonom} guarantees that the Koopman generator possesses four monomial eigenfunctions
    \begin{equation}
        \psi_\mathcal{B}(\bm z) = z^{\bm \mu_\mathcal{B}}\,, \quad \psi_4(\bm z) = z_4\,, \quad \psi_\mathcal{C}(\bm z) = z^{\bm \mu_\mathcal{C}}\,, \quad \psi_\mathcal{D}(\bm z) = z^{\bm \mu_\mathcal{D}}\,,
    \end{equation}
    where
    \begin{align}
        \lambda_{\mathcal{B}} = i\,\tilde{\omega}_\mathcal{B}\,,\quad \lambda_4 = i\,\omega_4\,,\quad\lambda_{\mathcal{C}} = i\,\tilde{\omega}_\mathcal{C}\,,\quad\lambda_{\mathcal{D}} = i\,\tilde{\omega}_\mathcal{D}\, 
    \end{align}
    and
    \begin{align}
        \tilde{\omega}_\mathcal{B} :=  \mu_{\mathcal{B}}^1 \omega_1 + \mu_{\mathcal{B}}^2 \omega_2 + \mu_{\mathcal{B}}^3 \omega_3\,,\quad
        \tilde{\omega}_{\mathcal{C}} :=  \mu_{\mathcal{C}}^5 \omega_5 +  \mu_{\mathcal{C}}^6 \omega_6\,,\quad
        \tilde{\omega}_{\mathcal{D}} :=  \mu_{\mathcal{D}}^7 \omega_7 +  \mu_{\mathcal{D}}^8 \omega_8 \,.
    \end{align}
    To further specify the example, suppose that $\omega_4,\tilde{\omega}_{\mathcal{B}},\tilde{\omega}_{\mathcal{D}}\in\mathbb{R}\setminus\{0\}$ and $\tilde{\omega}_{\mathcal{C}} = 0$. Since $\psi_\mathcal{C}$ is a monomial eigenfunction of null eigenvalue, it is a constant of motion and there are $q = 3$ monomial eigenfunctions with nonzero eigenvalues. We can construct $q - 1 = 2$ time-independent conserved monomials according to Lem.~\ref{lem:monom_constants}.
    The constants of motion are $z^{\bm \nu_1}$ and $z^{\bm \nu_2}$, where $V = UO$ for the $10 \times 3$ matrix $U = (\bm{\mu}_{\mathcal{B}}\,\,\,\bm{e}_4\,\,\,\bm{\mu}_\mathcal{D})$ for $\bm{e}_4$ the unit vector with the fourth entry being unity, $V = (\bm \nu_1\,\,\, \bm \nu_2)$, and $O$ is a $3\times 2$ matrix where the columns are linearly independent vectors which are orthogonal to $\tilde{\bm \omega} = (\tilde{\omega}_\mathcal{B}\, \, \, \omega_4\,\,\, \tilde{\omega}_\mathcal{D})^\top$.
    For example, the matrix $O$ can be of the form
    \begin{equation}
        O = 
        \begin{bmatrix}
            \omega_4 & 0\\
            -\tilde{\omega}_\mathcal{B} & \tilde{\omega}_\mathcal{D}\\
            0 & -\omega_4
        \end{bmatrix}\,.
    \end{equation}
    Then, the monomial constants of motion are
    \begin{align}
        z^{\bm \nu_1} &= \psi_\mathcal{B}^{\omega_4}\psi_4^{-\tilde{\omega}_\mathcal{B}} = z^{\omega_4\bm \mu_\mathcal{B}}z_4^{-\tilde{\omega}_\mathcal{B}} = z_1^{\mu_\mathcal{B}^1\omega_4}z_2^{\mu_\mathcal{B}^2\omega_4}z_3^{\mu_\mathcal{B}^3\omega_4}z_4^{-(\mu_{\mathcal{B}}^1 \omega_1 + \mu_{\mathcal{B}}^2 \omega_2 + \mu_{\mathcal{B}}^3 \omega_3)}\,,\\
        z^{\bm \nu_2} &= \psi_4^{\tilde{\omega}_\mathcal{D}}\psi_\mathcal{D}^{-\omega_4} = z_4^{\tilde{\omega}_\mathcal{D}}z^{-\omega_4 \bm \mu_\mathcal{D}} =  z_4^{\mu_{\mathcal{D}}^7 \omega_7 +  \mu_{\mathcal{D}}^8 \omega_8}z_7^{-\mu_\mathcal{D}^7\omega_4}z_8^{-\mu_\mathcal{D}^8\omega_4}\,.
    \end{align}
\end{example}

\clearpage 

\section{Vandermonde-ratio eigenfunctions}
\label{sec:vandermonde}

In the last section, we have established the class of networks that admit monomial eigenfunctions and we have shown how to combine these Koopman eigenfunctions to define monomial constants of motion. As explained in the main text, monomial eigenfunctions can also be combined with Vandermonde-ratio eigenfunctions to generate another family of constants of motion. In the following subsection, we provide the proof for the second theorem of the paper.

\subsection{Proof of Theorem 2: Vandermonde-ratio eigenfunction}
\label{SIsubsec:proof_thm2}
\begin{theorem}[Thm.~\ref{thm:vandermonde} of the paper]\label{thm:vandermonde_SI}
Let $I$ and $J$ be disjoint subsets of $\mathcal{V}$ of respective sizes $m\geq 2$ and $n\geq 2$ with $m + n \leq N$. Define
\begin{align*}
V_{IJ}(z)
&:=
\frac{\displaystyle \prod_{\substack{p,q\in I\\ p<q}}(\bar z_p-\bar z_q)^{\sigma_{pq}}}
     {\displaystyle \prod_{\substack{r,s\in J\\ r<s}}(z_r-z_s)^{\tau_{rs}}}\,, 
\qquad
S_p:=\sum_{q\in I\setminus\{p\}}\sigma_{pq}\,,\qquad T_r:=\sum_{s\in J\setminus\{r\}}\tau_{rs}\,,
\end{align*}
where $\sigma_{pq} = \sigma_{qp}\in\mathbb{R}\setminus\{0\}$ for all $p,q\in I$ and $\tau_{rs} = \tau_{sr}\in\mathbb{R}\setminus\{0\}$ for all $r,s\in J$. If
\begin{enumerate}
\item $A_{jk} =0$ for all $j\in I\cup J$ and $k\in \mathcal V\setminus (I\cup J)$\,,
\item $A_{pq}=0$ for all distinct $p,q\in I$ and $A_{rs}=0$ for all distinct $r,s\in J$\,,
\item $A_{pr}=\mathcal A\,T_r$ and $A_{rp}=\bar{\mathcal{A}}\,S_p$ for all $p\in I$, $r\in J$ and $\mathcal{A}\in\mathbb{C}$,
\item $\omega_p = \Omega_I\in \mathbb{R}$ for all $p \in I$ and $\omega_r = \Omega_J\in \mathbb{R}$ for all $r \in J$\,,
\end{enumerate}
then $V_{IJ}(z)$ is an eigenfunction of $\mathcal{K}$ with eigenvalue
\begin{align*}
    \lambda = -i(\Sigma_I\Omega_I + \Sigma_J\Omega_J) \qquad \text{where} \qquad \Sigma_I = \frac{1}{2}\sum_{p\in I}S_p\,,\,\quad \Sigma_J = \frac{1}{2}\sum_{r\in J}T_{r}\,.
\end{align*}
\end{theorem}
\begin{proof}
The logarithm of the eigenfunction is
\begin{align*}
\ln V_{IJ}(z)
&=
\sum_{\substack{p,q\in I\\ p<q}}\sigma_{pq}\ln(\bar z_p-\bar z_q)
-
\sum_{\substack{r,s\in J\\ r<s}}\tau_{rs}\ln(z_r-z_s)
\end{align*}
and its derivative is
\begin{align*}
\partial_j \ln V_{IJ}(z)
=
\sum_{\substack{p,q\in I\\ p<q}}\sigma_{pq}\,
\frac{\bar z_p z_q\,\delta_{jp}-z_p\bar z_q\,\delta_{jq}}{z_p-z_q}\,
\bm{1}_{j \in I}
+
\sum_{\substack{r,s\in J\\ r<s}}\tau_{rs}\,
\frac{\delta_{js}-\delta_{jr}}{z_r-z_s}\,
\bm{1}_{j \in J}\,.
\end{align*}
With $\sigma_{pq} =\sigma_{qp}$ for all $p,q\in I$, $\tau_{rs}=\tau_{sr}$ for all $r,s \in J$, and the cancellation of the terms through the Kronecker deltas, one obtains
\begin{align*}
2\,\partial_j \ln V_{IJ}(z)
&=\Big(\sum_{\substack{q\in I\\ q\neq j}}\sigma_{jq}\frac{\bar z_j z_q}{z_j-z_q}-\sum_{\substack{p\in I\\ p\neq j}}\sigma_{pj}\frac{z_p\bar z_j}{z_p-z_j}\Big)\bm{1}_{j \in I}+\Big(\sum_{\substack{r\in J\\ r\neq j}}\tau_{rj}\frac{1}{z_r-z_j}-\sum_{\substack{s\in J\\ s\neq j}}\tau_{js}\frac{1}{z_j-z_s}\Big)\bm{1}_{j \in J}\,.
\end{align*}
The derivative of the logarithm is thus simplified to
\begin{align*}
\partial_j \ln V_{IJ}(z)
&=
\sum_{\substack{q\in I\\ q\neq j}}
\sigma_{jq}\frac{z_q\bar z_j}{z_j-z_q}\,\bm{1}_{j \in I}
-
\sum_{\substack{r\in J\\ r\neq j}}
\tau_{rj}\frac{1}{z_j-z_r}\,\bm{1}_{j \in J}\,.
\end{align*}
The Koopman generator applied to $\ln V_{IJ}(z)$ is then
\begin{align*}
\mathcal K[\ln V_{IJ}(z)]
&=
\sum_{\substack{j,q,k\in I\\ j\neq q}}\sigma_{jq}z_q\,\frac{A_{jk}z_k\bar z_j-\bar A_{jk}\bar z_k z_j}{z_j-z_q}-\sum_{\substack{j,r, k\in J\\ j\neq r}}\tau_{rj}z_j\,\frac{A_{jk}z_k-\bar A_{jk}\bar z_k z_j^2}{z_j-z_r}
\\
&\quad + \sum_{\substack{j,q\in I\\ j\neq q}}\sum_{k\in J} \sigma_{jq}z_q\, \frac{A_{jk}z_k\bar z_j-\bar A_{jk}\bar z_k z_j}{z_j-z_q} - \sum_{\substack{j,r\in J\\ j\neq r}}\sum_{k\in I} \tau_{rj}z_j\, \frac{A_{jk}z_k-\bar A_{jk}\bar z_k z_j^2}{z_j-z_r}\\
&\quad + \sum_{\substack{j,q\in I\\ j\neq q}}\sum_{k\in \mathcal V\setminus(I\cup J)} \sigma_{jq}z_q\, \frac{A_{jk}z_k\bar z_j-\bar A_{jk}\bar z_k z_j}{z_j-z_q} - \sum_{\substack{j,r\in J\\ j\neq r}}\sum_{k\in \mathcal V\setminus(I\cup J)} \tau_{rj}z_j\, \frac{A_{jk}z_k-\bar A_{jk}\bar z_k z_j^2}{z_j-z_r}\,,
\end{align*}
where the first two terms are the intra-block terms (within $I, J$), the second and the third terms are the bipartite terms (between $I$ and $J$), and the last two terms are the inputs in $I, J$ from the rest of the graph. If the first and second conditions are applied, only the diagonal part of the intra-block terms remain along with the bipartite terms, so the latter expression becomes $\mathcal K[\ln V_{IJ}(z)] = F_{\mathrm{diagonal}} + F_{\mathrm{bipartite}}$ with
\begin{align*}
    F_{\mathrm{diagonal}} &= \sum_{\substack{p,q\in I\\ p\neq q}}
\sigma_{pq}z_q\frac{A_{pp} -\bar A_{pp}}{z_p-z_q}-\sum_{\substack{r,s\in J\\ r\neq s}}\tau_{rs}z_s\frac{A_{ss}-\bar A_{ss}}{z_s-z_r}\,,\\
    F_{\mathrm{bipartite}} &= \sum_{\substack{p,q\in I\\ p\neq q}}\sum_{k\in J}\sigma_{pq}\,z_q\,\frac{A_{pk}z_k\bar z_p-\bar A_{pk}\bar z_k z_p}{z_p-z_q}-\sum_{\substack{r,s\in J\\ r\neq s}}\sum_{k\in I}\tau_{rs}z_s\,\frac{A_{sk}z_k\bar z_s-\bar A_{sk}\bar z_k z_s}{z_s-z_r}\,.
\end{align*}
Using $A_{pp} = i\omega_p/2$, the diagonal part is
\begin{align*}
    F_{\mathrm{diagonal}}=\sum_{\substack{p,q\in I\\ p\neq q}}
\frac{i\omega_p\sigma_{pq}z_q}{z_p-z_q}-\sum_{\substack{r,s\in J\\ r\neq s}}\frac{i\omega_s\tau_{rs}z_s}{z_s-z_r}=i\sum_{\substack{p,q\in I\\ p<q}}
\sigma_{pq}\frac{\omega_pz_q - \omega_qz_p}{z_p-z_q}-i\sum_{\substack{r,s\in J\\ r<s}}\tau_{rs}\frac{\omega_rz_r - \omega_sz_s}{z_r-z_s}\,.
\end{align*}
If $\omega_p = \Omega_I$ for all $p\in I$ and $\omega_r = \Omega_J$ for all $r\in J$ (fourth condition), then 
\begin{align*}
    F_{\mathrm{diagonal}} = -i\,\Omega_I\sum_{\substack{p<q\in I}}\sigma_{pq} - i\,\Omega_J \sum_{\substack{r<s\in J}} \tau_{rs} = -i(\Sigma_I\Omega_I + \Sigma_J\Omega_J)\,.
\end{align*}
It thus remains to show that $F_{\mathrm{bipartite}} = 0$ by the third condition to have $\mathcal K[\ln V_{IJ}(z)] = \lambda$ with $\lambda = -i(\Sigma_I\Omega_I + \Sigma_J\Omega_J)$ as the eigenvalue. Applying the third condition ($A_{pr}=T_r\,\mathcal A$ and $A_{rp}=S_p\,\bar{\mathcal{A}}$ for all $p\in I$, $r\in J$ and $\mathcal{A}\in\mathbb{C}$) gives
\begin{align*}  F_{\mathrm{bipartite}}&=\sum_{\substack{p,q\in I\\ p\neq q}}\sum_{r\in J}T_r\sigma_{pq}\,z_q\,\frac{\mathcal Az_r\bar z_p-\bar{\mathcal{A}}\bar z_r z_p}{z_p-z_q}-\sum_{\substack{r,s\in J\\ r\neq s}}\sum_{p\in I}S_p\tau_{rs}z_r\,\frac{\,\bar{\mathcal{A}}z_p\bar z_r-\mathcal{A}\bar z_p z_r}{z_r-z_s}\,,
\end{align*}
which is equivalent to
\begin{align*}
    F_{\mathrm{bipartite}}=\,\,&\mathcal A\sum_{r\in J}T_rz_r\sum_{\substack{p,q\in I\\ p\neq q}}\sigma_{pq}\,\,\frac{ \bar z_p z_q}{z_p-z_q}
    -\bar{\mathcal{A}}\sum_{r\in J}T_r\bar z_r\sum_{\substack{p,q\in I\\ p\neq q}}\sigma_{pq}\,\,\frac{ z_pz_q}{z_p-z_q}
    \\&-\bar{\mathcal{A}}\sum_{p\in I}S_pz_p\sum_{\substack{r,s\in J\\ r\neq s}}\tau_{rs}\,\frac{1}{z_r-z_s}
    +\mathcal{A}\sum_{p\in I}S_p\bar z_p\sum_{\substack{r,s\in J\\ r\neq s}}\tau_{rs}\,\frac{z_r^2}{z_r-z_s}\,.
\end{align*}
The second and third terms vanish by antisymmetry, which leaves  
\begin{align}\label{eq:lastrelation}
    F_{\mathrm{bipartite}}=\,\mathcal{A}\,\Big(\sum_{r\in J}T_rz_r\sum_{\substack{p,q\in I\\ p\neq q}}\sigma_{pq}\,\,\frac{ \bar z_p z_q}{z_p-z_q}
    + \sum_{p\in I}S_p\bar z_p\sum_{\substack{r,s\in J\\ r\neq s}}\tau_{rs}\,\frac{z_r^2}{z_r-z_s}\Big)\,.
\end{align}
But note that 
\begin{align*}
    \sum_{\substack{p,q\in I\\ p\neq q}}\sigma_{pq}\,\,\frac{ \bar z_p z_q}{z_p-z_q} &= \sum_{\substack{p,q\in I\\ p<q}}\sigma_{pq}\,\,\frac{ \bar z_p z_q - z_p \bar z_q}{z_p-z_q} = 
    -\sum_{\substack{p,q\in I\\ p<q}}\sigma_{pq}(\bar z_p + \bar z_q )= -\sum_{p\in I}\left(\sum_{q>p}\sigma_{pq} +\sum_{q<p}\sigma_{pq}\right)\bar z_p \\&= -\sum_{p\in I}S_p\bar z_p\,, \\
    \sum_{\substack{r,s\in J\\ r\neq s}}\tau_{rs}\,\frac{z_r^2}{z_r-z_s} &= \sum_{\substack{r,s\in J\\ r<s}}\tau_{rs}\,\frac{z_r^2 - z_s^2}{z_r - z_s} = \sum_{\substack{r,s\in J\\ r<s}}\tau_{rs}\,(z_r + z_s) = \sum_{r\in J}\left(\sum_{s>r}\tau_{rs} +\sum_{s<r}\tau_{rs}\right)z_r \\&= \sum_{r\in J} T_rz_r\,.
\end{align*}
Inserting these relations in Eq.~\eqref{eq:lastrelation} completes the proof~:
\begin{align*}
    F_{\mathrm{bipartite}} = \,\mathcal{A}\,\Big(\Big(\sum_{r\in J}T_rz_r\Big)\Big(-\sum_{p\in I}S_p\bar z_p\Big)
    + \Big(\sum_{p\in I}S_p\bar z_p\Big)\Big(\sum_{r\in J} T_rz_r\Big)\Big) = 0\,.
\end{align*}
\end{proof}
\begin{remark}
    When there are four oscillators, $V_{IJ}(\bm{z}) = \frac{(\bar{z}_a - \bar{z}_b)^\sigma}{(z_c - z_d)^\tau}$ with $\sigma, \tau \in \mathbb{R}\setminus\{0\}$, we also proved that the conditions are necessary (not shown here).
\end{remark}

\begin{remark}\label{remark:regularity-vandermonde}
    When the exponents $\sigma_{pq}$ or $\tau_{rs}$ are not integers, the Vandermonde-ratio eigenfunctions are naturally understood on the universal cover $\mathbb R^N$ of the torus $\mathbb T^N$, where the phase variables $\theta_j$ are not identified modulo $2\pi$. Indeed,
\[
e^{i\theta_p}-e^{i\theta_q}
=
2i\,e^{i(\theta_p+\theta_q)/2}
\sin\!\left(\frac{\theta_p-\theta_q}{2}\right)
\]
so that the eigenfunctions can be locally expressed in terms of smooth phase variables on $\mathbb R^N$. However, for non-integer exponents, the complex powers $(z_p-z_q)^{\sigma_{pq}}$ and $(z_r-z_s)^{\tau_{rs}}$ require a choice of branch and therefore do not define globally continuous observables on $\mathbb T^N$. In addition, extra regularity issues arise near collision configurations where $z_p=z_q$ (equivalently $\theta_p-\theta_q\in2\pi\mathbb Z$), since the sine factors vanish there. Nevertheless, away from these collision sets, the eigenfunctions remain smooth after restriction to simply connected coordinate patches where continuous phase coordinates and consistent branches can be chosen locally.
\end{remark}

\subsection{Conserved quantity for Janus and excitation-inhibition oscillators}
\label{SIsubsec:janusEIKM}
In this subsection, we consider two models from the literature, namely Janus oscillators and excitation-inhibition Kuramoto oscillators, that can be adapted to admit Vandermonde-ratio eigenfunctions and their corresponding conserved quantities.
\begin{example}\label{ex:janus}
    In Ref.~\cite{Nicolaou2019a}, Janus oscillators are introduced to show how various synchronization phenomena, such as chimera states, explosive synchronization, and asymmetry-induced synchronization, can co-occur within the same system of oscillators. Following Sec.~IIIC of Ref.~\cite{Nicolaou2019a} and Ref.~\cite{Peron2020}, these oscillators can be described by a particular Kuramoto model where
\begin{align*}
    \dot{\theta}_p &= \Omega + \sum_{r\in J} W_{pr}\sin(\theta_r - \theta_p)\,,\quad p\in I\,,\\
    \dot{\theta}_r &= -\Omega + \sum_{p\in I} W_{rp}\sin(\theta_p - \theta_r)\,,\quad r\in J\,,
\end{align*}
with $\Omega_I = -\Omega_J =: \Omega$ and
\begin{align*}
    W = \begin{pmatrix}
        0_{m\times m} & S\\ T & 0_{n\times n}
    \end{pmatrix}\,, \quad S\in\mathbb{R}^{m\times n}, \quad T\in\mathbb{R}^{n\times m}\,.
\end{align*}
This form readily satisfies the second and fourth conditions of Thm.~\ref{thm:vandermonde_SI}. Condition 1 is trivially satisfied since no other oscillators than those in the bipartite graph are considered. Note also that more generally, one has the freedom to set $\Omega_I$ and $\Omega_J$ as desired (e.g., $\Omega_I = -c \Omega_J$ with $c>0$). In Refs.~\cite{Nicolaou2019a, Peron2020}, $m = n$ and the matrices $S,T$ have particular forms, so that the Janus oscillators form a ring or a symmetric network ($T = S^\top = \beta I + \sigma B$, for constants $\beta, \sigma$ and a matrix $B$). Yet, it is interesting to note that it suffices to choose $S$ and $T$ to satisfy the third condition of Thm.~\ref{thm:vandermonde_SI} (yielding a specific network of Janus oscillators, in which one could add phase lags) to get a Vandermonde-ratio eigenfunction.
\end{example}

\begin{example}\label{ex:EIKM}
In 2018, E.~Montbrió and D.~Pazó~\cite{Montbrio2018} introduced a Kuramoto model with excitatory (E) and inhibitory (I) coupling to reproduce the generation of EI-based neuronal rhythms observed in the brain. In the paper, they also consider the effect of having bipartite coupling in their Fig.~2,\,3,\,4 and it yields non trivial phase and bifurcation diagrams. Below, we start from their EI Kuramoto model and make adjustments so that it satisfies the conditions of Thm.~\ref{thm:vandermonde_SI}. First, we rename the set $I$ as $E$ and the set $J$ as $I$ to avoid confusion, but abandon the subscripts $\sigma \in \{E, I\}$. Then, by setting the noise $\xi_i$ to zero and imposing conditions 2,\,3,\,4 of Thm.~\ref{thm:vandermonde_SI} (condition 1 is automatically satisfied) yields 
\begin{align*}
    \dot{\theta}_p &= \Omega_{E} + K\sum_{r \in I} T_{r}\sin(\theta_r - \theta_p + \pi/2) = \Omega_{E} + K\sum_{r \in I} T_{r}\cos(\theta_r - \theta_p)\,,\quad p\in E\,,\\
    \dot{\theta}_r &= \Omega_{I} + K\sum_{p\in E} S_{p}\sin(\theta_p - \theta_r - \pi/2) = \Omega_{I} - K\sum_{p\in E} S_{p}\cos(\theta_p - \theta_r)\,,\quad r \in I\,,
\end{align*}
where $K\in\mathbb{R}$ and the coupling constants that would shift the natural frequencies in the original model are absorbed in $\Omega_E$, $\Omega_I$. In other terms, with $\mathcal{A} = Ke^{i\pi/2}/2 = iK/2$, the corresponding complex weight matrix $A$ to this EI Kuramoto model is
\begin{align*}
    A = \frac{i}{2}\begin{pmatrix}
        \Omega_E & \cdots & 0  & K T_1 & \cdots & K T_n\\
        \vdots & \cdots & \vdots  & \vdots & \cdots & \vdots\\
        0 & \cdots & \Omega_E  & K T_1 & \cdots & K T_n\\
        -K S_1 & \cdots & -K S_m & \Omega_I & \cdots & 0  & \\
        \vdots & \cdots & \vdots  & \vdots & \cdots & \vdots\\
        -K S_1 & \cdots & -K S_m & 0 & \cdots & \Omega_I  & \\
    \end{pmatrix}\,.
\end{align*}
\end{example}
In both these examples, monomial eigenfunctions, Vandermonde-ratio eigenfunctions, and conserved cross-ratios are possible according to the first three theorems of the paper.

\clearpage

\section{Conservation of cross-ratios}

In 1994, Watanabe and Strogatz found constants of motion for identical phase oscillators, shaping subsequent years of theoretical studies on such oscillators. Fifteen years later~\cite{Marvel2009}, these constants of motion were linked to the cross-ratios. Finding various constants of motion for phase oscillators on general heterogeneous networks, however, remained challenging. In the last two sections, we found general conditions to have monomial and Vandermonde-ratio eigenfunctions and they strongly depend on the network structure; in this section, we provide the necessary and sufficient conditions for the conservation of cross-ratios in the Kuramoto model, with a focus on the network-theoretic interpretation.
The first three subsections introduce the cross-ratios and their properties (functional independence and their joint invariance under the special linear algebra). Section~\ref{SIsubsec:proof_thm3} contains the proof of Theorem 3 from the main text while its corollaries are in Section~\ref{SIsubsec:corollaries_thm3}. Finally, we explain in Sec.~\ref{SIsubsec:WS_darboux} how WS integrals are rather Darboux functions with identical cofactors when the Koopman generator has the form that allows cross-ratios to be conserved.

\subsection{Introduction to cross-ratios}
In this subsection, we present some facts about the cross-ratios (also called the anharmonic ratio), which are central quantities in the paper. The cross-ratio of four different points $z_{a}, z_{b}, z_{c}, z_{d}$ in $\mathbb{C}\cup \{\infty\}$ is
\begin{equation}
c_{abcd}(\bm z) = (z_{a},z_{b}~; z_{c},z_{d}) =\frac{(z_{c}-z_{a})(z_{d}-z_{b})}{(z_{c}-z_{b})(z_{d}-z_{a})}\,\label{eq:cross_ratio_SI} 
\end{equation}
and we will use the notation $c_{abcd}(\bm z)$, $(z_{a},z_{b}~; z_{c},z_{d})$ or even $\gamma_{abcd}$ at our convenience. 

The cross-ratios are the only projective invariant of a quadruple of collinear points (see Fig.~\ref{fig:cross_ratio_projection}), i.e.,
\begin{equation*}
    (z_{a},z_{b}~; z_{c},z_{d}) = (z_{a}',z_{b}'~; z_{c}',z_{d}')\,,
\end{equation*}
which gives them a special place in projective geometry.
\begin{figure}[b]
    \centering
    \includegraphics[width=0.3\linewidth]{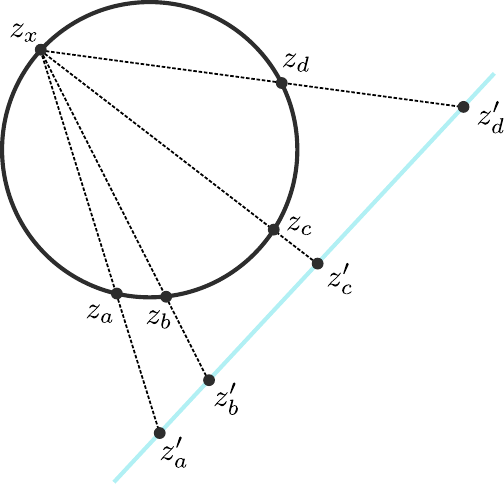}
    \caption[Cross-ratio as a projective invariant]{\textbf{Cross-ratio as a projective invariant}. Projection of the different points $z_a, z_b, z_c, z_d \in \mathbb{T}^1\setminus\{z_x\}$ to a line in $\mathbb{C}$ from a point $z_x \in \mathbb{T}^1$.}
    \label{fig:cross_ratio_projection}
\end{figure}
They are also invariant under M\"{o}bius transformations
\begin{align}
    M_{\alpha ,\beta, \gamma, \delta}(z)=\frac{\alpha z+\beta}{\gamma z+\delta}\,,
\end{align}
where $z,\alpha ,\beta, \gamma, \delta$ are complex numbers and $\alpha\delta-\beta\gamma\neq 0$, i.e.,
\begin{equation}
     (M_{\alpha ,\beta, \gamma, \delta}(z_{a}),M_{\alpha ,\beta, \gamma, \delta}(z_{b})~; M_{\alpha ,\beta, \gamma, \delta}(z_{c}), M_{\alpha ,\beta, \gamma, \delta}(z_{d})) = (z_{a},z_{b}~; z_{c},z_{d})\,.
\end{equation}
A cross-ratio is real if and only if the four points are distributed on a circle (concyclic points) or on a line (collinear points). In the case of interest in the paper, the cross-ratios depend on the state vector $(z_1,...,z_N) = (e^{i\theta_1},..., e^{i\theta_N})$, describing the positions of the oscillators rotating on the unit circle, so the values of the cross-ratios belong to $\mathbb{R}\cup \{\infty\}$ and the cross-ratios can be expressed in terms of the phases $\theta_{a}$, $\theta_{b}$, $\theta_{c}$, $\theta_{d}$:
\begin{align}
c_{abcd}(\bm z) &= \frac{(e^{i\theta_c}-e^{i\theta_a})(e^{i\theta_d}-e^{i\theta_b})}{(e^{i\theta_c}-e^{i\theta_b})(e^{i\theta_d}-e^{i\theta_a})}\nonumber\\ 
&= \frac{e^{i\left(\frac{\theta_c+\theta_a}{2}\right)}\left[e^{i\left(\frac{\theta_c-\theta_a}{2}\right)}-e^{-i\left(\frac{\theta_c-\theta_a}{2}\right)}\right]\,\,e^{i\left(\frac{\theta_d+\theta_b}{2}\right)}\left[e^{i\left(\frac{\theta_d-\theta_b}{2}\right)}-e^{-i\left(\frac{\theta_d-\theta_b}{2}\right)}\right]}{e^{i\left(\frac{\theta_c+\theta_b}{2}\right)}\left[e^{i\left(\frac{\theta_c-\theta_b}{2}\right)}-e^{-i\left(\frac{\theta_c-\theta_b}{2}\right)}\right]\,\,e^{i\left(\frac{\theta_d+\theta_a}{2}\right)}\left[e^{i\left(\frac{\theta_d-\theta_a}{2}\right)}-e^{-i\left(\frac{\theta_d-\theta_a}{2}\right)}\right]}\nonumber\\ &= \frac{\sin\left(\frac{\theta_c - \theta_a}{2}\right)\sin\left(\frac{\theta_d - \theta_b}{2}\right)}{\sin\left(\frac{\theta_c - \theta_b}{2}\right)\sin\left(\frac{\theta_d - \theta_a}{2}\right)}\,.
\end{align}
Different perspectives are given in group theory, hyperbolic geometry~\cite[Chap. 11]{Pressley2010} and other fields for the cross-ratios that we will not put forward here. Yet, we will address two other properties in more detail, that is, the fact that they are the joint invariants of a Lie algebra and their functional dependencies.

\subsection{Cross-ratios as joint invariants of the special linear algebra}
\label{subsec:joint_invariants}
In this subsection, we show that the cross-ratios are joint invariants of 
\begin{align*}
    L_{-1} := \sum_{j=1}^n \pd[]{}{z_j}\,,\qquad L_{0} := \sum_{j=1}^n z_j\pd[]{}{z_j}\,, \qquad L_{1} := \sum_{j=1}^n z_j^2\pd[]{}{z_j}\,,
\end{align*}
where $4 \leq n \leq N$  and $L_{-1}, L_{0}, L_{1}$ are associated with the basis elements of $\mathfrak{sl}_2$. This is an old, known result~\cite{Olver1995} and method~\cite{Olver1993} that we present here for the sake of completeness and because we did not see it explicitly elsewhere.

The idea is to successively (1) apply the method of characteristics to the partial differential equation $L_k[\eta] = 0$ for some $k\in\{-1,0,1\}$, (2) find the characteristic curves (invariants of $L_k$) and (3) use them as new coordinates for the next $L_\ell[\eta] = 0$ for some $\ell\in\{-1,0,1\}\setminus\{k\}$ and (4) repeat steps (1) to (3) until the three partial differential equations are treated. The easiest way to proceed is to address the partial differential equations in this order: $L_{-1}[\eta] = 0$, $L_0[\eta] = 0$ and then, $L_1[\eta] = 0$.

First, we find the general form of the invariants of $L_{-1}$. They obey the first-order partial differential equation
\begin{align}\label{eq:Lm1_PDE}
    L_{-1}[\eta] = \sum_{j=1}^n \pd[]{\eta}{z_j} = 0\,,
\end{align}
where $\eta$ is a complex-valued function of $\mathbb{T}^N$. The characteristic equations are $\dif z_a = \dif z_j$ for some $a \in \{1,...,n\}$ and for all $j\in\{1,...,n\}\setminus\{a\}$. Integrating yields $n-1$ functionally independent characteristic curves having the form $z_j - z_a = C_{ja} \in \mathbb{C}$.

Let $\Delta_j := z_j - z_a$. The change of coordinates from $z_1,...,z_n$ to $z_a$, $(\Delta_j)_{j\neq a}$ gives $\partial/\partial z_j = \partial/\partial \Delta_j$ for $j \neq a$ and 
\begin{align}\label{eq:L0_PDE}
    L_{0}[\eta] = \sum_{j=1}^n z_j\pd[]{\eta}{z_j} = z_a\pd[]{\eta}{z_a} + \sum_{j\neq a}(\Delta_j + z_a)\pd[]{\eta}{\Delta_j} =  z_aL_{-1}[\eta] + \sum_{j\neq a}\Delta_j\pd[]{\eta}{\Delta_j}= \sum_{j\neq a}\Delta_j\pd[]{\eta}{\Delta_j} = 0\,,
\end{align}
because $L_{-1}[\eta] = 0$. Hence, the characteristic equations and curves are respectively $\dif \Delta_b / \Delta_b = \dif \Delta_j / \Delta_j$ and $\Delta_j/\Delta_b = C'_{jb}\in\mathbb{C}$ for some $b \in\{1,...,n\}\setminus\{a\}$ and for all $j \in \{1,...,n\}\setminus\{a,b\}$.

Applying the same change of coordinates to the partial differential equation for $L_1$ gives
\begin{align}\label{eq:L1eq}
    L_1[\eta] = \sum_{j=1}^n z_j^2\pd[]{\eta}{z_j} = z_a^2L_{-1}[\eta] +2z_a\sum_{j\neq a} \Delta_j\pd[]{\eta}{\Delta_j} + \sum_{j\neq a}\Delta_j^2\pd[]{\eta}{\Delta_j} = 0\,.
\end{align}
But again, $L_{-1}[\eta] = 0$ and $\sum_{j\neq a} \Delta_j\pd[]{\eta}{\Delta_j} = L_{0}[\eta] - L_{-1}[\eta] = 0$. Therefore, Eq.~\eqref{eq:L1eq} is simplified to
\begin{align}\label{eq:L1eq2}
    \sum_{j\neq a}\Delta_j^2\pd[]{\eta}{\Delta_j} = 0\,.
\end{align}
With the characteristic curves from Eq.~\eqref{eq:L0_PDE}, define $\rho_j = \Delta_j/\Delta_b$ and the new coordinates $z_a, \Delta_b, (\rho_j)_{j\neq a,b}$. The partial derivatives $\partial/\partial \Delta_j$ for $j\neq a,b$ become $\Delta_b^{-1}\partial/\partial \rho_j$ and making the change of coordinates for Eq.~\eqref{eq:L1eq2}, one obtains
\begin{align*}
    \Delta_b\pd[]{\eta}{\Delta_b} + \sum_{j\neq a,b}\rho_j^2\pd[]{\eta}{\rho_j} = 0\,.
\end{align*}
Adding $ 0 = \sum_{j\neq a,b}\Delta_j\partial\eta/\partial\Delta_j - \sum_{j\neq a,b}\Delta_j\partial\eta/\partial\Delta_j$ yields
\begin{align*}
    \sum_{j\neq a}\Delta_j\pd[]{\eta}{\Delta_j} - \sum_{j\neq a,b}(\Delta_b\rho_j)\left(\frac{1}{\Delta_b}\pd[]{\eta}{\rho_j}\right)  + \sum_{j\neq a,b}\rho_j^2\pd[]{\eta}{\rho_j} = 0\,,
\end{align*}
but Eq.~\eqref{eq:L0_PDE} implies that the first term vanishes, which gives
\begin{align*}
   \sum_{j\neq a,b} \rho_j(\rho_j-1)\pd[]{\eta}{\rho_j}= 0\,.
\end{align*}
For all $c,d \neq a, b$, the method of characteristics leads to
\begin{align*}
    \frac{\dif \rho_c}{\rho_c(\rho_c - 1)} = \frac{\dif \rho_d}{\rho_d(\rho_d - 1)}\,,
\end{align*}
and the characteristic curves
\begin{align}
    \frac{\rho_c(1- \rho_d)}{\rho_d (1-\rho_c)} = C''_{cd}\,.
\end{align}
Altogether, by returning to the original variables, the joint invariants of $L_{-1}$, $L_0$ and $L_1$ are such that
\begin{align}\label{eq:derived_cross_ratios}
    \eta(\bm{z}) = \frac{\rho_c(1- \rho_d)}{\rho_d (1-\rho_c)} &= \frac{\frac{z_c - z_a}{z_b - z_a}\left(1- \frac{z_d - z_a}{z_b - z_a}\right)}{\frac{z_d - z_a}{z_b - z_a} \left(1-\frac{z_c - z_a}{z_b - z_a}\right)} = \frac{(z_c - z_a) (z_d - z_b)}{(z_c - z_b)(z_d - z_a)} = c_{abcd}(\bm{z})\,,
\end{align}
that is, the joint invariants are cross-ratios.

\subsection{Functional independence of cross-ratios}
\label{subsec:functional_independence}
In this subsection, to make the presentation more self-contained, we demonstrate a known result (e.g., Ref.~\cite[Example 2.35]{Olver1995} or Appendix of Ref.~\cite{Marvel2009}) about the functional independence of cross-ratios using the classical criterion on the rank of a Jacobian matrix~\cite[p.85]{Olver1993}. We will use the result not only for globally coupled oscillators ($n=N$), but for any subgraph of $4\leq n < N$ vertices with a certain number of functionally independent cross-ratios. 

Each cross-ratio $c_{abcd}$ depends on four indices $\{a,b,c,d\}$ with $a,b,c,d\in\{1,...,n\}$, where the order of the indices $a$, $b$, $c$, and $d$ may change the cross-ratio. Since there are $n!/(n-4)!$ (falling factorial $(n)_4$) ways to select an ordered list of 4 distinct elements from a set of $n$ items (number of $4$-permutations in a set of size $n$), there are $n!/(n-4)!$ cross-ratios. Yet, only some of them are independent as stated in the next lemma.
\begin{lemma}\label{lem:independence_cross_ratios}
    Among the $n!/(n-4)!$ cross-ratios $c_{abcd}$ such that $a,b,c,d \in\{1, ... , n\}$ with $a\neq b \neq c \neq d$, only $n-3$ cross-ratios form a functionally independent set, such as
    \begin{equation}\label{eq:cross_ratio_fct_ind}
        \{c_{1234}, c_{2345}, ..., c_{(n-3)(n-2)(n-1)n}\}\,.
    \end{equation}
\end{lemma}
\begin{proof}
    A large portion of the proof is taken from Ref.~\cite{Marvel2009}. 
    First, it is straightforward to show that all cross-ratios with permutations of the same 4 indices are functionally dependent.
    For a given cross-ratio $c_{abcd}$,
\begin{align}
    c_{badc}(\bm z) &= c_{abcd}(\bm z)\,;\label{ben:permut_1}\\
    c_{cdab}(\bm z) &= c_{abcd}(\bm z)\,;\label{ben:permut_2}\\
    c_{dcba}(\bm z) &= c_{abcd}(\bm z)\,;\label{ben:permut_3}\\
    c_{abdc}(\bm z) &= \frac{1}{c_{abcd}(\bm z)}\label{ben:1/c}\,;\\
    c_{acbd}(\bm z) &= 1 - c_{abcd}(\bm z)\,;\label{ben:1-c}\\
    c_{acdb}(\bm z) &= \frac{1}{c_{acbd}(\bm z)} = \frac{1}{1 - c_{abcd}(\bm z)}\quad \text{by Eqs.~(\ref{ben:1/c}-\ref{ben:1-c})}\,;\label{ben:1/1-c}\\
    c_{adcb}(\bm z) &= 1 - c_{acdb}(\bm z) = 1 - \frac{1}{1 - c_{abcd}(\bm z)} = \frac{c_{abcd}(\bm z)}{c_{abcd}(\bm z) - 1}\quad \text{by Eqs.~(\ref{ben:1-c}-\ref{ben:1/1-c})}\,;\label{ben:c/c-1}\\
    c_{adbc}(\bm z) &= \frac{1}{c_{adcb}(\bm z)} = \frac{c_{abcd}(\bm z) - 1}{c_{abcd}(\bm z)}\quad \text{by Eqs.~(\ref{ben:1/c}-\ref{ben:c/c-1})}\,.\label{ben:c-1/c}
\end{align}
The 15 other permutations can be obtained by permuting the indices of the cross-ratios in Eqs.~(\ref{ben:1/c}-\ref{ben:c-1/c}) according to Eqs.~(\ref{ben:permut_1}-\ref{ben:permut_3}). 
Thus, cross-ratios with all 24 permutations of the same 4 indices are functionally dependent upon a single cross-ratio. 
For the rest of the proof, permutations can therefore be omitted without loss of generality.
\par
Next, we demonstrate the functional independence of the $n-3$ cross-ratios
\begin{equation*}
    \{c_{1234}, c_{2345}, ..., c_{(n-3)(n-2)(n-1)n}\}\,.
\end{equation*}
Consider the function $\bm \zeta: \mathbb{T}^n \mapsto \mathbb{R}^{n-3}$ defined as
\begin{equation*}
    \bm \zeta(\bm z) = (c_{1234}(\bm z), c_{2345}(\bm z), ..., c_{(n-3)(n-2)(n-1)n}(\bm z))\,.
\end{equation*}
Its $(n-3)\times n$ Jacobian matrix is
\begin{equation*}
    D\bm \zeta(\bm z) = \begin{bmatrix}
        \partial_{1}c_{1234}(\bm z) & \partial_{2}c_{1234}(\bm{z}) & \cdots & \partial_{n}c_{1234}(\bm z)\\
        \partial_{1}c_{2345}(\bm z) & \partial_{2}c_{2345}(\bm{z}) & \cdots & \partial_{n}c_{2345}(\bm z)\\
        \vdots & \vdots & \ddots & \vdots\\
        \partial_{1}c_{(n-3)(n-2)(n-1)n}(\bm z) & \partial_{2}c_{(n-3)(n-2)(n-1)n}(\bm z) & \cdots & \partial_{n}c_{(n-3)(n-2)(n-1)n}(\bm z)\\
    \end{bmatrix}\,,
\end{equation*}
where $\partial_i = \pd{}{z_i}$. On one hand, the cross-ratios only depend on the variables given by their four indices, leading to $\partial_i c_{abcd}(\bm z) = 0$ for all $i \notin \{a, b, c, d\}$ and all $\bm z \in \mathbb{T}^n$. On the other hand, the derivatives of the cross-ratios with respect to each index are
\begin{equation}\label{eq:derivees_cr}
    \begin{aligned}
        \partial_a c_{abcd}(\bm z) &= \frac{(z_d - z_b)(z_c - z_d)}{(z_c - z_b)(z_d - z_a)^2}\,,\\
    \partial_b c_{abcd}(\bm z) &= \frac{(z_c - z_a)(z_d - z_c)}{(z_c - z_b)^2(z_d - z_a)}\,,
    \end{aligned}
    \quad\quad
    \begin{aligned}
        \partial_c c_{abcd}(\bm z) &= \frac{(z_a - z_b)(z_d - z_b)}{(z_c - z_b)^2(z_d - z_a)}\,,\\
    \partial_d c_{abcd}(\bm z) &= \frac{(z_c - z_a)(z_b - z_a)}{(z_c - z_b)(z_d - z_a)^2}\,.
    \end{aligned}
\end{equation}
Hence, the Jacobian matrix becomes 
\begin{equation*}
    D\bm \zeta(\bm z) = \begin{bmatrix}
        \partial_{1}c_{1234}(\bm z) & \partial_{2}c_{1234}(\bm z) & \partial_{3}c_{1234}(\bm z) & \partial_{4}c_{1234}(\bm z) & 0 & 0 & 0 & \cdots\\
        0 & \partial_{2}c_{2345}(\bm z) & \partial_{3}c_{2345}(\bm z) & \partial_{4}c_{2345}(\bm z) & \partial_5c_{2345}(\bm z) & 0 & 0 & \cdots\\
        0 & 0 & \partial_{3}c_{3456}(\bm z) & \partial_{4}c_{3456}(\bm z) & \partial_{5}c_{3456}(\bm z) & \partial_{6}c_{3456}(\bm z) & 0 & \cdots\\
        0 & 0 & 0 & \partial_{4}c_{4567}(\bm z) & \partial_{5}c_{4567}(\bm z) & \partial_{6}c_{4567}(\bm z) & \partial_{7}c_{4567}(\bm z) & \cdots\\
        \vdots & \vdots & \vdots & \vdots & \vdots & \vdots & \vdots & \ddots
    \end{bmatrix}.
\end{equation*}
From there, the cross-ratios are functionally independent if $D\bm \zeta(\bm z)$ has full rank $n - 3$ for all $\bm z \in \mathbb{T}^n$ such that $z_j \neq z_k$ for all $j,k\in \{1,...,n\}$~\cite[p.85]{Olver1993}.
Taking only the $n-3$ first columns yields an upper triangular submatrix $T(\bm z)$. 
The determinant of the matrix is then simply the product of the diagonal elements, i.e.
\begin{equation}\label{eq:det_jacobian}
    \det(T(\bm z)) = \partial_1c_{1234}(\bm z)\,\,\partial_2c_{2345}(\bm z)\,\,\partial_3c_{3456}(\bm z)\,\,...\,\,\partial_{n-3}c_{(n-3)(n-2)(n-1)n}(\bm z)\,.
\end{equation}
For any distinct $\bm z$ (no superimposed oscillators), the derivatives of the cross-ratios in Eq.~\eqref{eq:det_jacobian} cannot vanish by the form of Eqs.~\eqref{eq:derivees_cr}. 
Thus, the determinant of the submatrix $T(\bm z)$ cannot be null, implying that $T(\bm z)$ is invertible and has full rank, that is, $\mathrm{rank}(T(\bm z)) = n - 3$. 
Moreover, $T(\bm z)$ is the largest invertible square submatrix of $D\bm \zeta(\bm z)$. Recalling that the rank of a matrix is the size of the largest invertible square submatrix, one obtains that $\mathrm{rank}(D\bm \zeta(\bm z)) = n-3$ for any distinct $\bm z \in \mathbb{T}^n$. 
We thus conclude by Ref.~\cite[Theorem 2.16]{Olver1993} that the $n-3$ cross-ratios are functionally independent.\\
\par
Finally, the $n-3$ independent cross-ratios of Eq.~\eqref{eq:cross_ratio_fct_ind} can be combined to obtain every possible cross-ratio $c_{pqrs}$ where $p<q<r<s\leq n$. 
To alleviate the notation, define $\gamma_{pqrs} = c_{pqrs}(\bm z)$.
First, two cross-ratios can be multiplied to obtain a third one as
\begin{align}
    \gamma_{abcd}\gamma_{becd} &= \gamma_{aecd}\label{ben:multiplication_cr}
\end{align}
Using this property combined with the permutation relations of Eqs.~(\ref{ben:permut_1}-\ref{ben:c-1/c}), we can define three functions $F$, $G$ and $H$ which take two cross-ratios with the indices ordered from lowest to highest and generate a new cross-ratio which also has indices ordered from lowest to highest. 
Explicitly, these functions are
\begin{align}
    F(\gamma_{abcd}, \gamma_{bcde}) &= \frac{1}{1-\gamma_{abcd}(\gamma_{bcde} - 1)/\gamma_{bcde}} = \gamma_{acde}\,,\\
    G(\gamma_{abcd}, \gamma_{bcde}) &= \frac{1}{1 - (1-\gamma_{abcd})(1 - \gamma_{bcde})} = \gamma_{abde}\,,\\
    H(\gamma_{abcd}, \gamma_{bcde}) &= \frac{1}{1 - (\gamma_{abcd} - 1)\gamma_{bcde}/\gamma_{abcd}} = \gamma_{abce}\,.
\end{align}
Every cross-ratio with growing indices can be generated by applying those three functions on the $n-3$ independent cross-ratios with 4 consecutive indices.
First, cross-ratios where there is a gap between the first and the second indices can be expressed by applying $F$ iteratively on cross-ratios with consecutive indices from the lower bound of the gap to the higher bound.
The same can be said for gaps between the second and the third indices by applying $G$ and for gaps between the third and the fourth indices by applying $H$. 
Any cross-ratio with four indices in growing order $\gamma_{pqrs}$ can thus be expressed by applying the $F$ function $q-p-1$ times, then the $G$ function $r-q-1$ times, then the $H$ function $s-r-1$ times on the $n-3$ functionally independent ones.
Since all cross-ratios are functionally dependent on the $n-3$ independent ones with growing, consecutive indices, we conclude that $n-3$ is the maximum number of functionally independent cross-ratios.
\end{proof}

\subsection{Proof of Theorem 3: Cross-ratios as constants of motion}
\label{SIsubsec:proof_thm3}

We now recall the third theorem and proceed with its proof. Some elementary---but lengthy---steps of the proof relied on symbolic calculations that were performed in Matlab (\textit{symbolic\_calculations\_theorem\_generalized.m}). Remember that we use the convention that $A_{jk}$ is the (complex) weight of the interaction from $k$ to $j$. Note also that conditions 1, 2, 3 respectively correspond to conditions 3.1, 3.2, 3.3 of the main text.
\begin{theorem}[Thm.~\ref{thm:crossratios} of the paper]\label{thm:cte_mvt_kuramoto_graphe} 
Consider the $N$-dimensional Kuramoto model on a graph described by a $N\times N$ real matrix $W$, a $N\times N$ phase-lag matrix $\alpha$, and a natural frequency vector $\bm{\omega} = (\omega_j)_{j=1}^N$ [Definition~\ref{def:kuramoto}]. 
The cross-ratio $c_{abcd}$~\eqref{eq:cross_ratio_SI} is a constant of motion in the model if and only if the vertices $a$, $b$, $c$, and $d$ of the graph described by the complex matrix
\begin{align}
     A = \frac{1}{2}\left( W \circ e^{-i\alpha} + i\,\mathrm{diag}(\bm{\omega})\right)\quad \text{with} \quad e^{-i\alpha} =  (e^{-i\alpha_{jk}})_{j,k\in\{1,...,N\}}
\end{align}
have the same:
\begin{enumerate}
\item outgoing interactions within $\{a,b,c,d\}$, i.e., 
\begin{align}\label{eq:identical_insiders_outcoming_edge_A}
\begin{aligned}
    A_{ba} &= A_{ca} = A_{da} =: \mathcal{A}_a\,,\\ 
    A_{ab} &= A_{cb} = A_{db} =: \mathcal{A}_b\,, 
\end{aligned}
\qquad
\begin{aligned}
    A_{ac} &= A_{bc} = A_{dc} =: \mathcal{A}_c\,, \\
    A_{ad} &= A_{bd} = A_{cd} =: \mathcal{A}_d\,;     
\end{aligned}
\end{align}
\item incoming interactions from the vertices outside $\{a,b,c,d\}$, i.e.,
\begin{equation}\label{eq:identical_outsiders_incoming_edge_A}
    A_{ak} = A_{bk} = A_{ck} = A_{dk},\quad \forall\, k\in\{1,...,N\}\setminus\{a, b, c, d\}\,;
\end{equation}
\item shifted natural frequencies
\begin{align}\label{eq:imag_cond}
\omega_a - 2\Imag(\mathcal{A}_a) = \omega_b - 2\Imag(\mathcal{A}_b) = \omega_c - 2\Imag(\mathcal{A}_c) = \omega_d - 2\Imag(\mathcal{A}_d)\,.
\end{align}
\end{enumerate}
\end{theorem}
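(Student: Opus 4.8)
The plan is to compute $\mathcal{K}[c_{abcd}]$ directly and characterize when it vanishes identically on $\mathbb{T}^N$. Since the cross-ratio depends only on $z_a,z_b,z_c,z_d$, only the four corresponding velocity components act on it, so $\mathcal{K}[c_{abcd}] = \sum_{j\in\{a,b,c,d\}} F_j\,\partial_j c_{abcd}$, where $F_j = \sum_k A_{jk}z_k - \big(\sum_k \bar A_{jk}\bar z_k\big)z_j^2$ and the derivatives $\partial_j c_{abcd}$ are given in Eq.~\eqref{eq:derivees_cr}. The decisive tool is the joint invariance established in Sec.~\ref{subsec:joint_invariants}: restricted to the four indices, $\sum_{j}\partial_j c_{abcd}=\sum_j z_j\partial_j c_{abcd}=\sum_j z_j^2\partial_j c_{abcd}=0$, i.e.\ $L_{-1}[c_{abcd}]=L_0[c_{abcd}]=L_1[c_{abcd}]=0$. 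The whole argument hinges on reducing $F_j$ to the ``M\"obius form'' $f(\bm z)+g\,z_j+h(\bm z)z_j^2$ with $f,g,h$ independent of $j\in\{a,b,c,d\}$.

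For sufficiency, I would first use $\bar z_j z_j^2 = z_j$ to write $F_j = i\omega_j z_j + \sum_{k\neq j}A_{jk}z_k - \big(\sum_{k\neq j}\bar A_{jk}\bar z_k\big)z_j^2$, and split each sum into insiders $k\in\{a,b,c,d\}$ and outsiders $k\notin\{a,b,c,d\}$. Condition~\ref{itm:2.1} makes $A_{jk}=\mathcal A_k$ depend only on the source, so $\sum_{k\neq j,\,\mathrm{in}}A_{jk}z_k = \sum_{\mathrm{in}}\mathcal A_k z_k - \mathcal A_j z_j$; condition~\ref{itm:2.2} makes both outsider sums independent of $j$. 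Collecting terms, the constant piece $f=\sum_{\mathrm{in}}\mathcal A_k z_k + \sum_{\mathrm{out}}A_{jk}z_k$ and the quadratic piece $h=-\bar f$ are $j$-independent, while every genuinely $j$-dependent contribution collapses into the linear coefficient $g_j = i\omega_j-\mathcal A_j+\bar{\mathcal A}_j = i(\omega_j-2\,\mathrm{Im}\,\mathcal A_j)$, using $\bar{\mathcal A}_j\bar z_j z_j^2=\bar{\mathcal A}_j z_j$. Condition~\ref{itm:2.3} makes $g_j$ the common constant $g$, whence $\mathcal{K}[c_{abcd}] = f\,L_{-1}[c_{abcd}] + g\,L_0[c_{abcd}] + h\,L_1[c_{abcd}] = 0$.

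For necessity, I would expand $\mathcal{K}[c_{abcd}]=0$ as an identity on the torus. Each outsider variable enters only linearly and in a Fourier mode distinct from the insider-only terms: $z_k$ through $A_{jk}z_k$ and $\bar z_k=z_k^{-1}$ through $\bar A_{jk}\bar z_k z_j^2$. Hence the coefficient of $z_k$, namely $\sum_{j}A_{jk}\,\partial_j c_{abcd}$, must vanish for each outsider $k$. Clearing denominators in Eq.~\eqref{eq:derivees_cr} and matching monomials, I would show that the only constant-coefficient linear relation among $\partial_a c_{abcd},\dots,\partial_d c_{abcd}$ is the multiple of $(1,1,1,1)$ coming from $L_{-1}[c_{abcd}]=0$; concretely, the evaluation matrix of the four derivatives at three generic points has rank $3$, so its kernel is exactly the line $\mathbb{R}(1,1,1,1)$. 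This forces $A_{ak}=A_{bk}=A_{ck}=A_{dk}$, i.e.\ condition~\ref{itm:2.2} (the $\bar z_k$ mode reconfirms the conjugate statement via $L_1[c_{abcd}]=0$). With \ref{itm:2.2} imposed, the outsider and all $j$-independent contributions drop out through $L_{\pm1}[c_{abcd}]=0$, leaving the pure four-oscillator identity $\sum_{j\in\{a,b,c,d\}}\big(i\omega_j z_j + \sum_{k\neq j,\,\mathrm{in}}(A_{jk}z_k-\bar A_{jk}\bar z_k z_j^2)\big)\partial_j c_{abcd}=0$.

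The main obstacle is this last insider identity, which must yield conditions~\ref{itm:2.1} and \ref{itm:2.3}. Unlike the outsider case, the insider couplings $A_{jk}z_k$ mix variables $z_k$ already present in $\partial_j c_{abcd}$, so one cannot isolate a single Fourier mode; instead I would substitute $\bar z_j = z_j^{-1}$, clear the common denominator $(z_c-z_b)^2(z_d-z_a)^2$, and match the coefficients of the resulting Laurent monomials in $z_a,z_b,z_c,z_d$. This produces a finite homogeneous linear system in the parameters $A_{jk}$ and $\omega_j$; solving it (the symbolic computation referenced in Sec.~\ref{SIsubsec:proof_thm2}) shows its solution set is cut out exactly by $A_{jk}=\mathcal A_k$ (condition~\ref{itm:2.1}) together with $\omega_j-2\,\mathrm{Im}\,\mathcal A_j=\text{const}$ (condition~\ref{itm:2.3}). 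Since sufficiency already exhibits these as solutions, it remains only to check that no larger solution family survives, the bookkeeping of the monomial matching being the sole delicate point.
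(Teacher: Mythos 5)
Your sufficiency argument is correct and takes a genuinely different route from the paper's. You observe that conditions 1--3 collapse the four relevant velocity components to a single M\"obius--Riccati form, $F_j = f(\bm z) + g\,z_j - \overline{f(\bm z)}\,z_j^2$ for $j\in\{a,b,c,d\}$, with $g = i(\omega_j - 2\,\Imag \mathcal{A}_j)$ constant across the quadruple (your bookkeeping, including $\bar z_j z_j^2 = z_j$ absorbing the diagonal and the $\bar{\mathcal{A}}_j$ term into the linear coefficient, checks out), so that
\begin{equation*}
\mathcal{K}[c_{abcd}] \;=\; f\,L_{-1}[c_{abcd}] \;+\; g\,L_0[c_{abcd}] \;-\; \bar f\,L_1[c_{abcd}] \;=\; 0
\end{equation*}
by the joint invariance of the cross-ratio under the $\mathfrak{sl}_2$ operators restricted to the four variables, which is exactly what Sec.~\ref{subsec:joint_invariants} supplies for $4\le n\le N$. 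The paper instead proves sufficiency by substituting the three conditions into the fully expanded expression~\eqref{eq:generator_crossratio_A} and cancelling monomials term by term with symbolic support; your reduction buys a short conceptual proof and makes transparent \emph{why} the conditions are precisely ``the quadruple obeys one common Riccati equation.'' Your outsider-necessity step is also sound and stylistically different: where the paper expands each outsider group into 24 independent monomials, you note that the coefficient of the Fourier mode $z_k$, namely $\sum_j A_{jk}\,\partial_j c_{abcd}$, vanishes identically only if $(A_{ak},A_{bk},A_{ck},A_{dk})$ lies in the kernel of the evaluation map, which is exactly the line $\mathbb{C}(1,1,1,1)$. That kernel claim does require the rank-$3$ verification you assert but do not perform; it is routine (three explicit points suffice, or it can be read off the paper's monomial expansion via Eq.~\eqref{eq:derivees_cr}).

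The one genuine shortfall is the insider step of necessity, which is the crux of the theorem. For conditions 1 and 3 you describe precisely the paper's own method---substitute $\bar z_j = z_j^{-1}$, clear the denominator $(z_c-z_b)^2(z_d-z_a)^2$, match Laurent monomials, and solve the resulting homogeneous linear system---but you do not carry it out, asserting instead that the solution set ``is cut out exactly by'' the claimed conditions and deferring to symbolic computation. The paper executes this in Sec.~\ref{SIsubsec:proof_thm2}: the insider identity produces 42 independent monomials and hence 42 linear equations; half of the two-term equations are conjugate-redundant and the remainder force Eq.~\eqref{eq:identical_insiders_outcoming_edge_A}; substituting these back reduces the multi-term equations to six imaginary-part constraints, of which three are independent and yield Eq.~\eqref{eq:imag_cond}. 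Note in particular that sufficiency exhibiting the conditions as solutions does \emph{not} by itself rule out a larger solution family---ruling it out is exactly what the explicit elimination accomplishes. Without that computation (or an equivalent structural argument), your necessity direction is a correct plan rather than a proof; everything else in the proposal is solid.
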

\begin{proof}
By Lem.~\ref{lem:correspondance_complexe_A}, the Kuramoto model can be described by
\begin{align}
    \dot{z}_j = \sum_{k}A_{jk}z_k - \left(\sum_k\bar{A}_{jk}\bar{z}_k \right)z_j^2
\end{align}
with $z_j = e^{i\theta_j}$ and the complex matrix of interactions
\begin{align}
    A = \frac{1}{2}\left( W \circ e^{-i\alpha} + i\,\mathrm{diag}(\bm{\omega})\right)\,,
\end{align}
where $e^{-i\alpha} = (e^{-i\alpha_{jk}})_{j,k}$, $\bm{\omega} = (\omega_1,...,\omega_N)$, $\circ$ is the element-wise product, and we recall that without loss of generality one can assume that the diagonal elements of $W$ and $\alpha$ are zero. The Koopman generator is thus
\begin{align*}
    \mathcal{K} = \bm p^\top \bm L_{-1} - \bar{\bm p}^\top \bm L_1\,,
\end{align*}
where we have introduced $\bm p = A\bm z$ to simplify the expressions. Saying that the cross-ratio $c_{abcd}$ is a constant of motion in the model is equivalent to the condition
\begin{equation}
    \mathcal{K}[c_{abcd}](\bm{z}) = 0\,,
\end{equation}
i.e., the cross-ratio is an eigenfunction with eigenvalue 0 of the Koopman generator. The property 
\begin{equation*}
    \frac{\partial c_{abcd}(\bm z)}{\partial z_j} = c_{abcd}(\bm z)\,\frac{\partial \ln(c_{abcd}(\bm z))}{\partial z_j}
\end{equation*}
together with the properties of the logarithm imply that 
\begin{equation*}
    \mathcal{K}[c_{abcd}(\bm z)] = c_{abcd}(\bm z)\, \mathcal{K}[\ln(c_{abcd}(\bm z))] = c_{abcd}(\bm z)\, \mathcal{K}[\ln(z_c - z_a) + \ln(z_d - z_b) - \ln(z_c - z_b) - \ln(z_d - z_a)].
\end{equation*}
Using the relations
\begin{equation*}
    \ell_j^n[\ln(z_x - z_y)] = z_j^{n+1} \, \frac{\delta_{jx} - \delta_{jy}}{z_x - z_y}\,,\qquad \forall\, n\in\mathbb{Z}\,,
\end{equation*}
where $\ell_j^n$ is the $j$-th element of the $n$-th Euler's operator defined in Eq.~\eqref{eq:elements_euler} for all $j \in \{1,...,N\}$, leads to 
\begin{equation*}
    \sum_{j=1}^N \beta_j \ell_j^n[\ln(z_x - z_y)] =  \frac{\beta_{x}z_x^{n+1} - \beta_{y}z_y^{n+1}}{z_x - z_y}
\end{equation*}
for all $n\in\mathbb{Z}$ and some arbitrary constants $\beta_1,...,\beta_N$. The above identity applied to each term of the generator yields
\begin{align*}
    \bm p^\top \bm L_{-1} [c_{abcd}(\bm z)]&= c_{abcd}(\bm z)\,\left(\frac{p_{c} - p_{a}}{z_c - z_a} + \frac{p_{d} - p_{b}}{z_d - z_b} - \frac{p_{c} - p_{b}}{z_c - z_b} - \frac{p_{d} - p_{a}}{z_d - z_a}\right)\\
    \bar{\bm p}^\top \bm L_1 [c_{abcd}(\bm z)] &= c_{abcd}(\bm z)\,\left(\frac{\bar{p}_{c}z_c^2 - \bar{p}_{a}z_a^2}{z_c - z_a} + \frac{\bar{p}_{d}z_d^2 - \bar{p}_{b}z_b^2}{z_d - z_b} - \frac{\bar{p}_{c}z_c^2 - \bar{p}_{b}z_b^2}{z_c - z_b} - \frac{\bar{p}_{d}z_d^2 - \bar{p}_{a}z_a^2}{z_d - z_a}\right)\,.
\end{align*}
The factorization of $1/[(z_c - z_a)(z_d - z_b)(z_c - z_b)(z_d - z_a)]$ in the last three equations, the simplification
\begin{equation*}
    \gamma_{abcd}(\bm{z}) := \frac{c_{abcd}(\bm z)}{(z_c - z_a)(z_d - z_b)(z_c - z_b)(z_d - z_a)} = \frac{1}{(z_c - z_b)^2(z_d - z_a)^2},
\end{equation*}
and elementary algebraic manipulations give
{\footnotesize
\begin{align}
    \bm p^\top \bm L_{-1} [c_{abcd}(\bm z)]& = \gamma_{abcd}(\bm{z}) \sum_{k=1}^N \Big[(A_{ck} - A_{ak})(z_d - z_b)(z_c - z_b)(z_d - z_a)+ (A_{dk} - A_{bk})(z_c - z_a)(z_c - z_b)(z_d - z_a) \nonumber\\& \phantom{\quad\gamma_{abcd}(\bm{z}) \sum}- (A_{ck} - A_{bk})(z_c - z_a)(z_d - z_b)(z_d - z_a) - (A_{dk} - A_{ak})(z_c - z_a)(z_d - z_b)(z_c - z_b)\Big]\,z_k\label{eq:Lm1c}\\
    \bar{\bm p}^\top \bm L_1 [c_{abcd}(\bm z)] &= \gamma_{abcd}(\bm{z}) \sum_{k=1}^N \Big[(\bar{A}_{ak} - \bar{A}_{bk})z_a^2z_b^2(z_c - z_d) + (\bar{A}_{ck} - \bar{A}_{ak})z_a^2z_c^2(z_b- z_d) + (\bar{A}_{ak} - \bar{A}_{dk})z_a^2z_d^2(z_b-z_c)  \nonumber\\& \phantom{\quad\gamma_{abcd}(\bm{z}) \sum}+ (\bar{A}_{bk} - \bar{A}_{ck})z_b^2z_c^2(z_a  -  z_d) + (\bar{A}_{dk} - \bar{A}_{bk})z_b^2z_d^2(z_a-z_c) + (\bar{A}_{ck} - \bar{A}_{dk})z_c^2z_d^2(z_a - z_b)  \Big]\bar{z}_k\,.\label{eq:L1c}
\end{align}}
Regrouping Eqs.~(\ref{eq:Lm1c}-\ref{eq:L1c}) gives the expression for $\mathcal{K}[c_{abcd}](\bm{z})$, i.e,
{\footnotesize\begin{align}\label{eq:generator_crossratio_A}
    \mathcal{K}[c_{abcd}](\bm{z}) &= \gamma_{abcd}(\bm{z})\Big[\textstyle{\sum_{k\in\{a,b,c,d\}}} \Big((A_{ck} - A_{ak})(z_d - z_b)(z_c - z_b)(z_d - z_a)z_k + (A_{dk} - A_{bk})(z_c - z_a)(z_c - z_b)(z_d - z_a)z_k \nonumber\\& + (A_{bk} - A_{ck})(z_c - z_a)(z_d - z_b)(z_d - z_a)z_k + (A_{ak} - A_{dk})(z_c - z_a)(z_d - z_b)(z_c - z_b)z_k \nonumber\\& - (\bar{A}_{ak} - \bar{A}_{bk})z_a^2z_b^2(z_c - z_d)\bar{z}_k - (\bar{A}_{ck} - \bar{A}_{ak})z_a^2z_c^2(z_b- z_d)\bar{z}_k - (\bar{A}_{ak} - \bar{A}_{dk})z_a^2z_d^2(z_b-z_c)\bar{z}_k  \nonumber\\& - (\bar{A}_{bk} - \bar{A}_{ck})z_b^2z_c^2(z_a  -  z_d)\bar{z}_k - (\bar{A}_{dk} - \bar{A}_{bk})z_b^2z_d^2(z_a-z_c)\bar{z}_k - (\bar{A}_{ck} - \bar{A}_{dk})z_c^2z_d^2(z_a - z_b)\bar{z}_k\Big)\nonumber\\ + &\textstyle{\sum_{k\notin \{a,b,c,d\}}} \Big((A_{ck} - A_{ak})(z_d - z_b)(z_c - z_b)(z_d - z_a)z_k + (A_{dk} - A_{bk})(z_c - z_a)(z_c - z_b)(z_d - z_a)z_k \nonumber\\& + (A_{bk} - A_{ck})(z_c - z_a)(z_d - z_b)(z_d - z_a)z_k + (A_{ak} - A_{dk})(z_c - z_a)(z_d - z_b)(z_c - z_b)z_k \nonumber\\& - (\bar{A}_{ak} - \bar{A}_{bk})z_a^2z_b^2(z_c - z_d)\bar{z}_k - (\bar{A}_{ck} - \bar{A}_{ak})z_a^2z_c^2(z_b- z_d)\bar{z}_k - (\bar{A}_{ak} - \bar{A}_{dk})z_a^2z_d^2(z_b-z_c)\bar{z}_k  \nonumber\\& - (\bar{A}_{bk} - \bar{A}_{ck})z_b^2z_c^2(z_a  -  z_d)\bar{z}_k - (\bar{A}_{dk} - \bar{A}_{bk})z_b^2z_d^2(z_a-z_c)\bar{z}_k - (\bar{A}_{ck} - \bar{A}_{dk})z_c^2z_d^2(z_a - z_b)\bar{z}_k\Big)\Big]\,,
\end{align}}
where we have separated the sum over $k\in\{1,...,N\}$ into $k\in\{a,b,c,d\}$ and $k\notin \{a,b,c,d\}$. 

($\Leftarrow$) In Eq.~\eqref{eq:generator_crossratio_A}, only differences between the complex matrix elements $A_{ak}$, $A_{bk}$, $A_{ck}$, $A_{dk}$ and their conjugate appear. One can readily apply equation~\eqref{eq:identical_outsiders_incoming_edge_A} (condition 2) to cancel each term in the summation on $k\notin\{a,b,c,d\}$~:
{\footnotesize\begin{align*}
    \mathcal{K}[c_{abcd}](\bm{z}) &= \gamma_{abcd}(\bm{z})\Big[\textstyle{\sum_{k\in\{a,b,c,d\}}} \Big((A_{ck} - A_{ak})(z_d - z_b)(z_c - z_b)(z_d - z_a)z_k + (A_{dk} - A_{bk})(z_c - z_a)(z_c - z_b)(z_d - z_a)z_k \nonumber\\& + (A_{bk} - A_{ck})(z_c - z_a)(z_d - z_b)(z_d - z_a)z_k + (A_{ak} - A_{dk})(z_c - z_a)(z_d - z_b)(z_c - z_b)z_k \nonumber\\& - (\bar{A}_{ak} - \bar{A}_{bk})z_a^2z_b^2(z_c - z_d)\bar{z}_k - (\bar{A}_{ck} - \bar{A}_{ak})z_a^2z_c^2(z_b- z_d)\bar{z}_k - (\bar{A}_{ak} - \bar{A}_{dk})z_a^2z_d^2(z_b-z_c)\bar{z}_k  \nonumber\\& - (\bar{A}_{bk} - \bar{A}_{ck})z_b^2z_c^2(z_a  -  z_d)\bar{z}_k - (\bar{A}_{dk} - \bar{A}_{bk})z_b^2z_d^2(z_a-z_c)\bar{z}_k - (\bar{A}_{ck} - \bar{A}_{dk})z_c^2z_d^2(z_a - z_b)\bar{z}_k\Big)\Big]\,.
\end{align*}}
Performing the summation explicitly and using Eq.~\eqref{eq:identical_insiders_outcoming_edge_A} (condition 1) give
{\footnotesize\begin{align*}
    \mathcal{K}[c_{abcd}](\bm{z}) &= \gamma_{abcd}(\bm{z})\Big[(\mathcal{A}_a - A_{aa})(z_d - z_b)(z_c - z_b)(z_d - z_a)z_a + (A_{aa} - \mathcal{A}_a)(z_c - z_a)(z_d - z_b)(z_c - z_b)z_a \\&- (\bar{A}_{aa} - \bar{\mathcal{A}_a})z_a^2z_b^2(z_c - z_d)\bar{z}_a - (\bar{\mathcal{A}_a} - \bar{A}_{aa})z_a^2z_c^2(z_b- z_d)\bar{z}_a - (\bar{A}_{aa} - \bar{\mathcal{A}_a})z_a^2z_d^2(z_b-z_c)\bar{z}_a \\& + (\mathcal{A}_b - A_{bb})(z_c - z_a)(z_c - z_b)(z_d - z_a)z_b + (A_{bb} - \mathcal{A}_b)(z_c - z_a)(z_d - z_b)(z_d - z_a)z_b \\& - (\bar{\mathcal{A}}_b - \bar{A}_{bb})z_a^2z_b^2(z_c - z_d)\bar{z}_b - (\bar{A}_{bb} - \bar{\mathcal{A}}_b)z_b^2z_c^2(z_a  -  z_d)\bar{z}_b - (\bar{\mathcal{A}}_b - \bar{A}_{bb})z_b^2z_d^2(z_a-z_c)\bar{z}_b \\&
    + (A_{cc} - \mathcal{A}_c)(z_d - z_b)(z_c - z_b)(z_d - z_a)z_c + (\mathcal{A}_c - A_{cc})(z_c - z_a)(z_d - z_b)(z_d - z_a)z_c\\& - (\bar{A}_{cc} - \bar{\mathcal{A}}_c)z_a^2z_c^2(z_b- z_d)\bar{z}_c - (\bar{\mathcal{A}}_c - \bar{A}_{cc})z_b^2z_c^2(z_a  -  z_d)\bar{z}_c - (\bar{A}_{cc} - \bar{\mathcal{A}}_c)z_c^2z_d^2(z_a - z_b)\bar{z}_c \\& + (A_{dd} - \mathcal{A}_d)(z_c - z_a)(z_c - z_b)(z_d - z_a)z_d + (\mathcal{A}_d - A_{dd})(z_c - z_a)(z_d - z_b)(z_c - z_b)z_d \\&- (\bar{\mathcal{A}}_d - \bar{A}_{dd})z_a^2z_d^2(z_b-z_c)\bar{z}_d - (\bar{A}_{dd} - \bar{\mathcal{A}}_d)z_b^2z_d^2(z_a-z_c)\bar{z}_d - (\bar{\mathcal{A}}_d - \bar{A}_{dd})z_c^2z_d^2(z_a - z_b)\bar{z}_d\Big]\,.
\end{align*}}
The expansion and the simplification of the latter equation enable regrouping the monomials and writing
{\footnotesize\begin{align*}
     &\mathcal{K}[c_{abcd}](\bm{z}) = \\
      &\gamma_{abcd}(\bm{z})
  \Big[(A_{cc} - \bar{A}_{cc} - A_{bb} + \bar{A}_{bb} - \mathcal{A}_c + \bar{\mathcal{A}}_c + \mathcal{A}_b - \bar{\mathcal{A}}_b)z_a^2z_bz_c
     + (A_{bb} - \bar{A}_{bb} - A_{dd} + \bar{A}_{dd} - \mathcal{A}_b + \bar{\mathcal{A}}_b + \mathcal{A}_d - \bar{\mathcal{A}}_d)z_a^2z_bz_d\\& 
     + (A_{dd} - \bar{A}_{dd} - A_{cc} + \bar{A}_{cc} - \mathcal{A}_d + \bar{\mathcal{A}}_d + \mathcal{A}_c - \bar{\mathcal{A}}_c)z_a^2z_cz_d
     + (A_{aa} - \bar{A}_{aa} - A_{cc} + \bar{A}_{cc} - \mathcal{A}_a + \bar{\mathcal{A}}_a + \mathcal{A}_c - \bar{\mathcal{A}}_c)z_az_b^2z_c\\& 
     + (A_{dd} - \bar{A}_{dd} - A_{aa} + \bar{A}_{aa} - \mathcal{A}_d + \bar{\mathcal{A}}_d + \mathcal{A}_a - \bar{\mathcal{A}}_a)z_az_b^2z_d
     + (A_{bb} - \bar{A}_{bb} - A_{aa} + \bar{A}_{aa} - \mathcal{A}_b + \bar{\mathcal{A}}_b + \mathcal{A}_a - \bar{\mathcal{A}}_a)z_az_bz_c^2\\& 
     + (A_{aa} - \bar{A}_{aa} - A_{bb} + \bar{A}_{bb} - \mathcal{A}_a + \bar{\mathcal{A}}_a + \mathcal{A}_b - \bar{\mathcal{A}}_b)z_az_bz_d^2
     + (A_{aa} - \bar{A}_{aa} - A_{dd} + \bar{A}_{dd} - \mathcal{A}_a + \bar{\mathcal{A}}_a + \mathcal{A}_d - \bar{\mathcal{A}}_d)z_az_c^2z_d\\& 
     + (A_{cc} - \bar{A}_{cc} - A_{aa} + \bar{A}_{aa} - \mathcal{A}_c + \bar{\mathcal{A}}_c + \mathcal{A}_a - \bar{\mathcal{A}}_a)z_az_cz_d^2 
     + (A_{cc} - \bar{A}_{cc} - A_{dd} + \bar{A}_{dd} - \mathcal{A}_c + \bar{\mathcal{A}}_c + \mathcal{A}_d - \bar{\mathcal{A}}_d)z_b^2z_cz_d\\& 
     + (A_{dd} - \bar{A}_{dd} - A_{bb} + \bar{A}_{bb} - \mathcal{A}_d + \bar{\mathcal{A}}_d + \mathcal{A}_b - \bar{\mathcal{A}}_b)z_bz_c^2z_d 
     + (A_{bb} - \bar{A}_{bb} - A_{cc} + \bar{A}_{cc} - \mathcal{A}_b + \bar{\mathcal{A}}_b + \mathcal{A}_c - \bar{\mathcal{A}}_c)z_bz_cz_d^2\Big]\,,
\end{align*}}
which is equivalent to
{\footnotesize\begin{align*}
     \mathcal{K}[c_{abcd}](\bm{z}) = 2i\gamma_{abcd}(\bm{z}) \Big[&(\Imag(A_{cc} - A_{bb}) - \Imag(\mathcal{A}_c - \mathcal{A}_b))z_a^2z_bz_c
     + (\Imag(A_{bb} - A_{dd}) - \Imag(\mathcal{A}_b - \mathcal{A}_d))z_a^2z_bz_d\\& 
     + (\Imag(A_{dd} - A_{cc}) - \Imag(\mathcal{A}_d - \mathcal{A}_c))z_a^2z_cz_d
     + (\Imag(A_{aa} - A_{cc}) - \Imag(\mathcal{A}_a - \mathcal{A}_c))z_az_b^2z_c\\& 
     + (\Imag(A_{dd} - A_{aa}) - \Imag(\mathcal{A}_d - \mathcal{A}_a))z_az_b^2z_d
     + (\Imag(A_{bb} - A_{aa}) - \Imag(\mathcal{A}_b - \mathcal{A}_a))z_az_bz_c^2\\& 
     + (\Imag(A_{aa} - A_{bb}) - \Imag(\mathcal{A}_a - \mathcal{A}_b))z_az_bz_d^2
     + (\Imag(A_{aa} - A_{dd}) - \Imag(\mathcal{A}_a - \mathcal{A}_d))z_az_c^2z_d\\& 
     + (\Imag(A_{cc} - A_{aa}) - \Imag(\mathcal{A}_c - \mathcal{A}_a))z_az_cz_d^2 
     + (\Imag(A_{cc} - A_{dd}) - \Imag(\mathcal{A}_c - \mathcal{A}_d))z_b^2z_cz_d\\& 
     + (\Imag(A_{dd} - A_{bb}) - \Imag(\mathcal{A}_d - \mathcal{A}_b))z_bz_c^2z_d 
     + (\Imag(A_{bb} - A_{cc}) - \Imag(\mathcal{A}_b - \mathcal{A}_c))z_bz_cz_d^2\Big]\,.
\end{align*}}
Since $\Imag(A_{jj}) = \omega_j/2$, one gets
{\footnotesize\begin{align*}
     \mathcal{K}[c_{abcd}](\bm{z}) = i\gamma_{abcd}(\bm{z}) \Big[&((\omega_{c} - \omega_{b}) - 2\Imag(\mathcal{A}_c - \mathcal{A}_b))z_a^2z_bz_c
     + ((\omega_{b} - \omega_{d}) - 2\Imag(\mathcal{A}_b - \mathcal{A}_d))z_a^2z_bz_d\\& 
     + ((\omega_{d} - \omega_{c}) - 2\Imag(\mathcal{A}_d - \mathcal{A}_c))z_a^2z_cz_d
     + ((\omega_{a} - \omega_{c}) - 2\Imag(\mathcal{A}_a - \mathcal{A}_c))z_az_b^2z_c\\& 
     + ((\omega_{d} - \omega_{a}) - 2\Imag(\mathcal{A}_d - \mathcal{A}_a))z_az_b^2z_d
     + ((\omega_{b} - \omega_{a}) - 2\Imag(\mathcal{A}_b - \mathcal{A}_a))z_az_bz_c^2\\& 
     + ((\omega_{a} - \omega_{b}) - 2\Imag(\mathcal{A}_a - \mathcal{A}_b))z_az_bz_d^2
     + ((\omega_{a} - \omega_{d}) - 2\Imag(\mathcal{A}_a - \mathcal{A}_d))z_az_c^2z_d\\& 
     + ((\omega_{c} - \omega_{a}) - 2\Imag(\mathcal{A}_c - \mathcal{A}_a))z_az_cz_d^2 
     + ((\omega_{c} - \omega_{d}) - 2\Imag(\mathcal{A}_c - \mathcal{A}_d))z_b^2z_cz_d\\& 
     + ((\omega_{d} - \omega_{b}) - 2\Imag(\mathcal{A}_d - \mathcal{A}_b))z_bz_c^2z_d 
     + ((\omega_{b} - \omega_{c}) - 2\Imag(\mathcal{A}_b - \mathcal{A}_c))z_bz_cz_d^2\Big]\,.
\end{align*}}
Using Eq.~\eqref{eq:imag_cond} (condition 3) makes each term fall, yielding $\mathcal{K}[c_{abcd}](\bm{z}) = 0$ and the sufficiency of the three conditions of the theorem.

($\Rightarrow$) For the necessary conditions, we have to solve $\mathcal{K}[c_{abcd}](\bm{z}) = 0$ in terms of $A_{ak}, A_{bk}, A_{ck}, A_{dk}$ for all $k \in \{1,..., N\}$. The monomials resulting from the last summation on $k\notin \{a,b,c,d\}$ in Eq.~\eqref{eq:generator_crossratio_A} are all independent from the other terms of the expression, which only depend on $z_a, z_b, z_c, z_d$. They can thus be treated separately, i.e., such that
\begin{align}
    0 &= \textstyle{\sum_{k\in\{a,b,c,d\}}} \Big[(A_{ck} - A_{ak})(z_d - z_b)(z_c - z_b)(z_d - z_a)z_k + (A_{dk} - A_{bk})(z_c - z_a)(z_c - z_b)(z_d - z_a)z_k \nonumber\\& + (A_{bk} - A_{ck})(z_c - z_a)(z_d - z_b)(z_d - z_a)z_k + (A_{ak} - A_{dk})(z_c - z_a)(z_d - z_b)(z_c - z_b)z_k \nonumber\\& - (\bar{A}_{ak} - \bar{A}_{bk})z_a^2z_b^2(z_c - z_d)\bar{z}_k - (\bar{A}_{ck} - \bar{A}_{ak})z_a^2z_c^2(z_b- z_d)\bar{z}_k - (\bar{A}_{ak} - \bar{A}_{dk})z_a^2z_d^2(z_b-z_c)\bar{z}_k  \nonumber\\& - (\bar{A}_{bk} - \bar{A}_{ck})z_b^2z_c^2(z_a  -  z_d)\bar{z}_k - (\bar{A}_{dk} - \bar{A}_{bk})z_b^2z_d^2(z_a-z_c)\bar{z}_k - (\bar{A}_{ck} - \bar{A}_{dk})z_c^2z_d^2(z_a - z_b)\bar{z}_k\Big]\label{eq:generator_crossratio_abcd_A}\\ 
    0 &= \textstyle{\sum_{k\notin \{a,b,c,d\}}} \Big[(A_{ck} - A_{ak})(z_d - z_b)(z_c - z_b)(z_d - z_a)z_k + (A_{dk} - A_{bk})(z_c - z_a)(z_c - z_b)(z_d - z_a)z_k \nonumber\\& + (A_{bk} - A_{ck})(z_c - z_a)(z_d - z_b)(z_d - z_a)z_k + (A_{ak} - A_{dk})(z_c - z_a)(z_d - z_b)(z_c - z_b)z_k \nonumber\\& - (\bar{A}_{ak} - \bar{A}_{bk})z_a^2z_b^2(z_c - z_d)\bar{z}_k - (\bar{A}_{ck} - \bar{A}_{ak})z_a^2z_c^2(z_b- z_d)\bar{z}_k - (\bar{A}_{ak} - \bar{A}_{dk})z_a^2z_d^2(z_b-z_c)\bar{z}_k  \nonumber\\& - (\bar{A}_{bk} - \bar{A}_{ck})z_b^2z_c^2(z_a  -  z_d)\bar{z}_k - (\bar{A}_{dk} - \bar{A}_{bk})z_b^2z_d^2(z_a-z_c)\bar{z}_k - (\bar{A}_{ck} - \bar{A}_{dk})z_c^2z_d^2(z_a - z_b)\bar{z}_k\Big]\label{eq:generator_crossratio_notin_abcd_A}\,,
\end{align}
where we have multiplied each equation by $\gamma_{abcd}^{-1}(\bm{z}) = (z_c - z_b)^2(z_d - z_a)^2$.

On the one hand, the expanded form of the summand in Eq.~\eqref{eq:generator_crossratio_notin_abcd_A} for a given $k$ contains 24 independent monomials:
\begin{align*}
&\phantom{+}((A_{dk} - A_{ak})z_bz_c^2z_k^2 + (A_{ak} - A_{ck})z_bz_d^2z_k^2 + (A_{ak} - A_{bk})z_c^2z_dz_k^2 + (A_{bk} - A_{ak})z_cz_d^2z_k^2\\&
+ (A_{ck} - A_{dk})z_a^2z_bz_k^2  + (A_{dk} - A_{bk})z_a^2z_cz_k^2 + (A_{bk} - A_{ck})z_a^2z_dz_k^2 + (A_{dk} - A_{ck})z_az_b^2z_k^2\\&
+ (A_{bk} - A_{dk})z_az_c^2z_k^2 + (A_{ak} - A_{dk})z_b^2z_cz_k^2 + (A_{ck} - A_{ak})z_b^2z_dz_k^2  + (A_{ck} - A_{bk})z_az_d^2z_k^2 \\&
+ (\bar{A}_{dk} - \bar{A}_{bk})z_b^2z_cz_d^2  + (\bar{A}_{dk} - \bar{A}_{ck})z_az_c^2z_d^2  + (\bar{A}_{ck} - \bar{A}_{dk})z_bz_c^2z_d^2 + (\bar{A}_{bk} - \bar{A}_{ck})z_b^2z_c^2z_d \\&
+ (\bar{A}_{ck} - \bar{A}_{ak})z_a^2z_c^2z_d + (\bar{A}_{ak} - \bar{A}_{dk})z_a^2z_cz_d^2+ (\bar{A}_{ck} - \bar{A}_{bk})z_az_b^2z_c^2 + (\bar{A}_{bk} - \bar{A}_{dk})z_az_b^2z_d^2  \\&
+ (\bar{A}_{bk} - \bar{A}_{ak})z_a^2z_b^2z_c + (\bar{A}_{ak} - \bar{A}_{bk})z_a^2z_b^2z_d + (\bar{A}_{ak} - \bar{A}_{ck})z_a^2z_bz_c^2 + (\bar{A}_{dk} - \bar{A}_{ak})z_a^2z_bz_d^2)/z_k\,.
\end{align*}
Since $k\notin\{a,b,c,d\}$, there are thus $N-4$ groups of 24 monomials. The groups are all independent from one to another, because every group has a unique monomial depending on the group index $k\notin\{a,b,c,d\}$. The $24(N-4)$ coefficients in front of the monomials must thus be zero to satisfy Eq.~\eqref{eq:generator_crossratio_notin_abcd_A} since $z_x^u z_y^v z_z^w$ with $u,v,w\in\{1, 2\}$ and $1/z_k$ are not zero. There are 12 coefficients having the form $A_{xk} - A_{yk}$ and they must vanish (i.e., $A_{xk} = A_{yk}$) for all pairs of indices $(x,y)$ with $x,y \in \{a,b,c,d\}$ and $k\notin \{a,b,c,d\}$. This readily implies $\bar{A}_{xk} = \bar{A}_{yk}$, meaning that the 12 other conditions with the form $\bar{A}_{xk} - \bar{A}_{yk} = 0$ are redundant. Therefore, the condition
\begin{equation*}
    A_{ak} = A_{bk} = A_{ck} = A_{dk},\quad \forall k\in\{1,...,N\}\setminus\{a,b,c,d\}\,,
\end{equation*}
i.e., Eq.~\eqref{eq:identical_outsiders_incoming_edge_A} of the second condition is necessary.

\noindent\mbox{On the other hand, expanding Eq.~\eqref{eq:generator_crossratio_abcd_A} yields an equation with 42 independent monomials:}
\begin{align*}
    0 &=  ((A_{ca} - A_{da})z_a^4z_b^2z_cz_d + (A_{da} - A_{ba})z_a^4z_bz_c^2z_d 
+ (A_{ba} - A_{ca})z_a^4z_bz_cz_d^2 + (\bar{A}_{bd} - \bar{A}_{ad})z_a^3z_b^3z_c^2 
\\&+ (A_{cb} - A_{ca} + A_{da} - A_{db} - \bar{A}_{ac} + \bar{A}_{ad} + \bar{A}_{bc} - \bar{A}_{bd})z_a^3z_b^3z_cz_d 
+ (\bar{A}_{ac} - \bar{A}_{bc})z_a^3z_b^3z_d^2 \\&+ (\bar{A}_{ad} - \bar{A}_{cd})z_a^3z_b^2z_c^3 
+ (A_{cc} - A_{bb} + A_{db} - A_{dc} - \bar{A}_{ab} + \bar{A}_{ac} + \bar{A}_{bb} - \bar{A}_{cc})z_a^3z_b^2z_c^2z_d \\&
+ (A_{bb} - A_{cb} + A_{cd} - A_{dd} + \bar{A}_{ab} - \bar{A}_{bb} - \bar{A}_{ad} + \bar{A}_{dd})z_a^3z_b^2z_cz_d^2
+ (\bar{A}_{dc} - \bar{A}_{ac})z_a^3z_b^2z_d^3 \\&+ (A_{ba} - A_{bc} - A_{da} + A_{dc} + \bar{A}_{ab} - \bar{A}_{ad} - \bar{A}_{cb} + \bar{A}_{cd})z_a^3z_bz_c^3z_d  + (A_{db} - A_{cb})z_a^2z_b^4z_cz_d
\\&+ (A_{bc} - A_{bd} - A_{cc} + A_{dd} - \bar{A}_{ac} + \bar{A}_{ad} + \bar{A}_{cc} - \bar{A}_{dd})z_a^3z_bz_c^2z_d^2 + (\bar{A}_{cb} - \bar{A}_{ab})z_a^3z_c^3z_d^2
\\&+ (A_{bd} - A_{ba} + A_{ca} - A_{cd} - \bar{A}_{ab} + \bar{A}_{ac} + \bar{A}_{db} - \bar{A}_{dc})z_a^3z_bz_cz_d^3 + (\bar{A}_{ab} - \bar{A}_{db})z_a^3z_c^2z_d^3
\\& + (\bar{A}_{cd} - \bar{A}_{bd})z_a^2z_b^3z_c^3
+ (A_{aa} - A_{cc} - A_{da} + A_{dc} - \bar{A}_{aa} + \bar{A}_{ba} - \bar{A}_{bc} + \bar{A}_{cc})z_a^2z_b^3z_c^2z_d
\\&+ (A_{ca} - A_{aa} - A_{cd} + A_{dd} + \bar{A}_{aa} - \bar{A}_{ba} + \bar{A}_{bd} - \bar{A}_{dd})z_a^2z_b^3z_cz_d^2
+ (\bar{A}_{bc} - \bar{A}_{dc})z_a^2z_b^3z_d^3 \\&+ (A_{bb} - A_{aa} + A_{da} - A_{db} + \bar{A}_{aa} - \bar{A}_{bb} - \bar{A}_{ca} + \bar{A}_{cb})z_a^2z_b^2z_c^3z_d + (A_{ab} - A_{db})z_az_b^4z_c^2z_d\\&
+ (A_{aa} - A_{bb} - A_{ca} + A_{cb} - \bar{A}_{aa} + \bar{A}_{bb} + \bar{A}_{da} - \bar{A}_{db})z_a^2z_b^2z_cz_d^3    + (A_{bc} - A_{dc})z_a^2z_bz_c^4z_d\\& + (A_{aa} - A_{ba} + A_{bd} - A_{dd} - \bar{A}_{aa} + \bar{A}_{ca} - \bar{A}_{cd} + \bar{A}_{dd} + i\omega_a - i\omega_d)z_a^2z_bz_c^3z_d^2+ (\bar{A}_{db} - \bar{A}_{cb})z_a^2z_c^3z_d^3\\&
+ (A_{ba} - A_{aa} - A_{bc} + A_{cc} + \bar{A}_{aa} - \bar{A}_{da} - \bar{A}_{cc} + \bar{A}_{dc})z_a^2z_bz_c^2z_d^3 + (A_{cd} - A_{bd})z_a^2z_bz_cz_d^4\\&
+ (A_{cb} - A_{ab})z_az_b^4z_cz_d^2 + (A_{ac} - A_{ab} + A_{db} - A_{dc} - \bar{A}_{ba} + \bar{A}_{ca} + \bar{A}_{bd} - \bar{A}_{cd})z_az_b^3z_c^3z_d
\\&+ (A_{ad} - A_{ac} + A_{cc} - A_{dd} + \bar{A}_{bc} - \bar{A}_{bd} - \bar{A}_{cc} + \bar{A}_{dd})z_az_b^3z_c^2z_d^2 + (A_{ac} - A_{bc})z_az_bz_c^4z_d^2\\&
+ (A_{ab} - A_{ad} - A_{cb} + A_{cd} + \bar{A}_{ba} - \bar{A}_{bc} - \bar{A}_{da} + \bar{A}_{dc})z_az_b^3z_cz_d^3  
+ (A_{dc} - A_{ac})z_az_b^2z_c^4z_d\\& + (A_{ab} - A_{ad} - A_{bb} + A_{dd} + \bar{A}_{bb} - \bar{A}_{cb} + \bar{A}_{cd} - \bar{A}_{dd})z_az_b^2z_c^3z_d^2 + (A_{ad} - A_{cd})z_az_b^2z_cz_d^4\\&
+ (A_{ac} - A_{ab} + A_{bb} - A_{cc} - \bar{A}_{bb} + \bar{A}_{cc} + \bar{A}_{db} - \bar{A}_{dc})z_az_b^2z_c^2z_d^3 + (A_{bd} - A_{ad})z_az_bz_c^2z_d^4 \\&
 + (A_{ad} - A_{ac} + A_{bc} - A_{bd} - \bar{A}_{ca} + \bar{A}_{cb} + \bar{A}_{da} - \bar{A}_{db})z_az_bz_c^3z_d^3
+ (\bar{A}_{ba} - \bar{A}_{ca})z_b^3z_c^3z_d^2 \\&+ (\bar{A}_{da} - \bar{A}_{ba})z_b^3z_c^2z_d^3
+ (\bar{A}_{ca} - \bar{A}_{da})z_b^2z_c^3z_d^3)/(z_az_bz_cz_d)\,,
\end{align*}
where $1/z_az_bz_cz_d$ is not zero. The monomials $z_a^{u_1} z_b^{u_2} z_c^{u_3} z_d^{u_4}$ (with $u_1, u_2, u_3, u_4 \in\{0,1,2,3,4\}$ such that $u_1+u_2+u_3+u_4 = 8$) are independent and therefore, all the coefficients must be zero, yielding a linear system of 42 complex equations:
\begin{align*}
\begin{aligned}
A_{ca} - A_{da} &= 0\phantom{\bar{A}}\\ 
A_{da} - A_{ba} &= 0\phantom{\bar{A}}\\ 
A_{ba} - A_{ca} &= 0\phantom{\bar{A}}\\ 
A_{db} - A_{cb} &= 0\phantom{\bar{A}}\\
A_{ab} - A_{db} &= 0\phantom{\bar{A}}\\
A_{cb} - A_{ab} &= 0\phantom{\bar{A}}\\
A_{bc} - A_{dc} &= 0\phantom{\bar{A}}\\
A_{dc} - A_{ac} &= 0\phantom{\bar{A}}\\
A_{ac} - A_{bc} &= 0\phantom{\bar{A}}\\
A_{cd} - A_{bd} &= 0\phantom{\bar{A}}\\
A_{ad} - A_{cd} &= 0\phantom{\bar{A}}\\
A_{bd} - A_{ad} &= 0\phantom{\bar{A}}
\end{aligned}
\quad\quad
\begin{aligned}
\bar{A}_{ba} - \bar{A}_{ca} &= 0\\
\bar{A}_{da} - \bar{A}_{ba} &= 0\\
\bar{A}_{ca} - \bar{A}_{da} &= 0\\
\bar{A}_{ab} - \bar{A}_{db} &= 0\\
\bar{A}_{cb} - \bar{A}_{ab} &= 0\\
\bar{A}_{db} - \bar{A}_{cb} &= 0\\
\bar{A}_{dc} - \bar{A}_{ac} &= 0\\
\bar{A}_{bc} - \bar{A}_{dc} &= 0\\
\bar{A}_{ac} - \bar{A}_{bc} &= 0\\
\bar{A}_{ad} - \bar{A}_{cd} &= 0\\ 
\bar{A}_{bd} - \bar{A}_{ad} &= 0\\ 
\bar{A}_{cd} - \bar{A}_{bd} &= 0
\end{aligned}
\quad\quad
\begin{aligned}
A_{cb} - A_{ca} + A_{da} - A_{db} - \bar{A}_{ac} + \bar{A}_{ad} + \bar{A}_{bc} - \bar{A}_{bd} &= 0\\ 
A_{cc} - A_{bb} + A_{db} - A_{dc} - \bar{A}_{ab} + \bar{A}_{ac} + \bar{A}_{bb} - \bar{A}_{cc} &= 0\\
A_{bb} - A_{cb} + A_{cd} - A_{dd} + \bar{A}_{ab} - \bar{A}_{bb} - \bar{A}_{ad} + \bar{A}_{dd} &= 0\\
A_{ba} - A_{bc} - A_{da} + A_{dc} + \bar{A}_{ab} - \bar{A}_{ad} - \bar{A}_{cb} + \bar{A}_{cd} &= 0\\
A_{bc} - A_{bd} - A_{cc} + A_{dd} - \bar{A}_{ac} + \bar{A}_{ad} + \bar{A}_{cc} - \bar{A}_{dd} &= 0\\
A_{bd} - A_{ba} + A_{ca} - A_{cd} - \bar{A}_{ab} + \bar{A}_{ac} + \bar{A}_{db} - \bar{A}_{dc} &= 0\\
A_{aa} - A_{cc} - A_{da} + A_{dc} - \bar{A}_{aa} + \bar{A}_{ba} - \bar{A}_{bc} + \bar{A}_{cc} &= 0\\
A_{ca} - A_{aa} - A_{cd} + A_{dd} + \bar{A}_{aa} - \bar{A}_{ba} + \bar{A}_{bd} - \bar{A}_{dd} &= 0\\
A_{bb} - A_{aa} + A_{da} - A_{db} + \bar{A}_{aa} - \bar{A}_{bb} - \bar{A}_{ca} + \bar{A}_{cb} &= 0\\
A_{aa} - A_{bb} - A_{ca} + A_{cb} - \bar{A}_{aa} + \bar{A}_{bb} + \bar{A}_{da} - \bar{A}_{db} &= 0\\
A_{aa} - A_{ba} + A_{bd} - A_{dd} - \bar{A}_{aa} + \bar{A}_{ca} - \bar{A}_{cd} + \bar{A}_{dd} &= 0\\
A_{ba} - A_{aa} - A_{bc} + A_{cc} + \bar{A}_{aa} - \bar{A}_{da} - \bar{A}_{cc} + \bar{A}_{dc} &= 0\\
A_{ac} - A_{ab} + A_{db} - A_{dc} - \bar{A}_{ba} + \bar{A}_{ca} + \bar{A}_{bd} - \bar{A}_{cd} &= 0\\
A_{ad} - A_{ac} + A_{cc} - A_{dd} + \bar{A}_{bc} - \bar{A}_{bd} - \bar{A}_{cc} + \bar{A}_{dd} &= 0\\
A_{ab} - A_{ad} - A_{cb} + A_{cd} + \bar{A}_{ba} - \bar{A}_{bc} - \bar{A}_{da} + \bar{A}_{dc} &= 0\\
A_{ab} - A_{ad} - A_{bb} + A_{dd} + \bar{A}_{bb} - \bar{A}_{cb} + \bar{A}_{cd} - \bar{A}_{dd} &= 0\\
A_{ac} - A_{ab} + A_{bb} - A_{cc} - \bar{A}_{bb} + \bar{A}_{cc} + \bar{A}_{db} - \bar{A}_{dc} &= 0\\
A_{ad} - A_{ac} + A_{bc} - A_{bd} - \bar{A}_{ca} + \bar{A}_{cb} + \bar{A}_{da} - \bar{A}_{db} &= 0
\end{aligned}
\end{align*}
Half of the equations with two terms are complex conjugate of the other half and are thus redundant. This readily leads to the necessity of the first condition:
\begin{equation}\label{eq:outgoing_abcd}
\begin{aligned}
    A_{ba} = A_{ca} &= A_{da} =: \mathcal{A}_a\\
    A_{ab} = A_{cb} &= A_{db} =: \mathcal{A}_b\\
    A_{ac} = A_{bc} &= A_{dc} =: \mathcal{A}_c\\
    A_{ad} = A_{bd} &= A_{cd} =: \mathcal{A}_d.
\end{aligned}
\end{equation}
 The equations with more than two terms can be rearranged as
{\footnotesize\begin{align*}
\begin{aligned}
A_{cb} - A_{db} + A_{da} - A_{ca} + \bar{A}_{bc} - \bar{A}_{ac} + \bar{A}_{ad} - \bar{A}_{bd} &= 0\\ 
A_{cc} - A_{bb} + A_{db} - A_{dc} + \bar{A}_{bb} - \bar{A}_{ab} + \bar{A}_{ac} - \bar{A}_{cc} &= 0\\
A_{bb} - A_{cb} + A_{cd} - A_{dd} + \bar{A}_{ab} - \bar{A}_{bb} + \bar{A}_{dd} - \bar{A}_{ad} &= 0\\
A_{ba} - A_{da} + A_{dc} - A_{bc} + \bar{A}_{ab} - \bar{A}_{cb} + \bar{A}_{cd} - \bar{A}_{ad} &= 0\\
A_{bc} - A_{cc} + A_{dd} - A_{bd} + \bar{A}_{cc} - \bar{A}_{ac} + \bar{A}_{ad} - \bar{A}_{dd} &= 0\\
A_{bd} - A_{cd} + A_{ca} - A_{ba} + \bar{A}_{ac} - \bar{A}_{dc} + \bar{A}_{db} - \bar{A}_{ab} &= 0\\
A_{aa} - A_{da} + A_{dc} - A_{cc} + \bar{A}_{ba} - \bar{A}_{aa} + \bar{A}_{cc} - \bar{A}_{bc} &= 0\\
A_{ca} - A_{aa} + A_{dd} - A_{cd} + \bar{A}_{aa} - \bar{A}_{ba} + \bar{A}_{bd} - \bar{A}_{dd} &= 0\\
A_{bb} - A_{db} + A_{da} - A_{aa} + \bar{A}_{aa} - \bar{A}_{ca} + \bar{A}_{cb} - \bar{A}_{bb} &= 0\\
\end{aligned}
\qquad
\begin{aligned}
A_{aa} - A_{ca} + A_{cb} - A_{bb} + \bar{A}_{bb} - \bar{A}_{db} + \bar{A}_{da} - \bar{A}_{aa} &= 0\\
A_{aa} - A_{ba} + A_{bd} - A_{dd} + \bar{A}_{ca} - \bar{A}_{aa} + \bar{A}_{dd} - \bar{A}_{cd} &= 0\\
A_{ba} - A_{aa} + A_{cc} - A_{bc} + \bar{A}_{aa} - \bar{A}_{da} + \bar{A}_{dc} - \bar{A}_{cc} &= 0\\
A_{ac} - A_{dc} + A_{db} - A_{ab} + \bar{A}_{ca} - \bar{A}_{ba} + \bar{A}_{bd} - \bar{A}_{cd} &= 0\\
A_{ad} - A_{dd} + A_{cc} - A_{ac} + \bar{A}_{bc} - \bar{A}_{cc} + \bar{A}_{dd} - \bar{A}_{bd} &= 0\\
A_{ab} - A_{cb} + A_{cd} - A_{ad} + \bar{A}_{ba} - \bar{A}_{da} + \bar{A}_{dc} - \bar{A}_{bc} &= 0\\
A_{ab} - A_{bb} + A_{dd} - A_{ad} + \bar{A}_{bb} - \bar{A}_{cb} + \bar{A}_{cd} - \bar{A}_{dd} &= 0\\
A_{ac} - A_{cc} + A_{bb} - A_{ab} + \bar{A}_{cc} - \bar{A}_{dc} + \bar{A}_{db} - \bar{A}_{bb} &= 0\\
A_{ad} - A_{bd} + A_{bc} - A_{ac} + \bar{A}_{da} - \bar{A}_{ca} + \bar{A}_{cb} - \bar{A}_{db} &= 0\,.
\end{aligned}
\end{align*}}
Using Eqs.~\eqref{eq:outgoing_abcd} cancels every equation from the latter system that does not involve a self-interaction term $A_{jj}$ for some $j\in\{a,b,c,d\}$. This leads to
{\footnotesize\begin{align*}
\begin{aligned}
A_{cc} - A_{dc} + A_{db} - A_{bb} + \bar{A}_{bb} - \bar{A}_{ab} + \bar{A}_{ac} - \bar{A}_{cc} &= 0\\
A_{bb} - A_{cb} + A_{cd} - A_{dd} + \bar{A}_{ab} - \bar{A}_{bb} + \bar{A}_{dd} - \bar{A}_{ad} &= 0\\
A_{bc} - A_{cc} + A_{dd} - A_{bd} + \bar{A}_{cc} - \bar{A}_{ac} + \bar{A}_{ad} - \bar{A}_{dd} &= 0\\
A_{aa} - A_{da} + A_{dc} - A_{cc} + \bar{A}_{ba} - \bar{A}_{aa} + \bar{A}_{cc} - \bar{A}_{bc} &= 0\\
A_{ca} - A_{aa} + A_{dd} - A_{cd} + \bar{A}_{aa} - \bar{A}_{ba} + \bar{A}_{bd} - \bar{A}_{dd} &= 0\\
A_{bb} - A_{db} + A_{da} - A_{aa} + \bar{A}_{aa} - \bar{A}_{ca} + \bar{A}_{cb} - \bar{A}_{bb} &= 0
\end{aligned}
\qquad
\begin{aligned}
A_{aa} - A_{ca} + A_{cb} - A_{bb} + \bar{A}_{bb} - \bar{A}_{db} + \bar{A}_{da} - \bar{A}_{aa} &= 0\\
A_{aa} - A_{ba} + A_{bd} - A_{dd} + \bar{A}_{ca} - \bar{A}_{aa} + \bar{A}_{dd} - \bar{A}_{cd} &= 0\\
A_{ba} - A_{aa} + A_{cc} - A_{bc} + \bar{A}_{aa} - \bar{A}_{da} + \bar{A}_{dc} - \bar{A}_{cc} &= 0\\
A_{ad} - A_{dd} + A_{cc} - A_{ac} + \bar{A}_{bc} - \bar{A}_{cc} + \bar{A}_{dd} - \bar{A}_{bd} &= 0\\
A_{ab} - A_{bb} + A_{dd} - A_{ad} + \bar{A}_{bb} - \bar{A}_{cb} + \bar{A}_{cd} - \bar{A}_{dd} &= 0\\
A_{ac} - A_{cc} + A_{bb} - A_{ab} + \bar{A}_{cc} - \bar{A}_{dc} + \bar{A}_{db} - \bar{A}_{bb} &= 0\,.
\end{aligned}
\end{align*}}
Half of these equations are the complex conjugate of the other half and are thus redundant:
\begin{align*}
A_{cc} - A_{dc} + A_{db} - A_{bb} + \bar{A}_{bb} - \bar{A}_{ab} + \bar{A}_{ac} - \bar{A}_{cc} &= 0\\
A_{bb} - A_{cb} + A_{cd} - A_{dd} + \bar{A}_{ab} - \bar{A}_{bb} + \bar{A}_{dd} - \bar{A}_{ad} &= 0\\
A_{bc} - A_{cc} + A_{dd} - A_{bd} + \bar{A}_{cc} - \bar{A}_{ac} + \bar{A}_{ad} - \bar{A}_{dd} &= 0\\
A_{aa} - A_{da} + A_{dc} - A_{cc} + \bar{A}_{ba} - \bar{A}_{aa} + \bar{A}_{cc} - \bar{A}_{bc} &= 0\\
A_{ca} - A_{aa} + A_{dd} - A_{cd} + \bar{A}_{aa} - \bar{A}_{ba} + \bar{A}_{bd} - \bar{A}_{dd} &= 0\\
A_{bb} - A_{db} + A_{da} - A_{aa} + \bar{A}_{aa} - \bar{A}_{ca} + \bar{A}_{cb} - \bar{A}_{bb} &= 0\,.
\end{align*}
Using Eqs.~\eqref{eq:outgoing_abcd} once again to express the latter equations in terms of $\mathcal{A}_a,\mathcal{A}_b,\mathcal{A}_c,\mathcal{A}_d$ makes it evident that they are constraints for the self-interaction parameters:
\begin{align*}
(A_{cc} - \bar{A}_{cc}) - (A_{bb} - \bar{A}_{bb}) &= (\mathcal{A}_c - \bar{\mathcal{A}}_c) - (\mathcal{A}_b - \bar{\mathcal{A}}_b)\\
(A_{bb} - \bar{A}_{bb}) - (A_{dd} - \bar{A}_{dd}) &= (\mathcal{A}_b - \bar{\mathcal{A}}_b) - (\mathcal{A}_d - \bar{\mathcal{A}}_d)\\
(A_{dd} - \bar{A}_{dd}) - (A_{cc} - \bar{A}_{cc}) &= (\mathcal{A}_d - \bar{\mathcal{A}}_d) - (\mathcal{A}_c - \bar{\mathcal{A}}_c)\\
(A_{aa} - \bar{A}_{aa}) - (A_{cc} - \bar{A}_{cc}) &= (\mathcal{A}_a - \bar{\mathcal{A}}_a) - (\mathcal{A}_c - \bar{\mathcal{A}}_c)\\
(A_{dd} - \bar{A}_{dd}) - (A_{aa} - \bar{A}_{aa}) &= (\mathcal{A}_d - \bar{\mathcal{A}}_d) - (\mathcal{A}_a - \bar{\mathcal{A}}_a)\\
(A_{bb} - \bar{A}_{bb}) - (A_{aa} - \bar{A}_{aa}) &= (\mathcal{A}_b - \bar{\mathcal{A}}_b) - (\mathcal{A}_a - \bar{\mathcal{A}}_a)\,.
\end{align*}
The equations can be written in real form as
\begin{align*}
\Imag(A_{cc} - A_{bb}) &= \Imag(\mathcal{A}_c - \mathcal{A}_b)\\
\Imag(A_{bb} - A_{dd}) &= \Imag(\mathcal{A}_b - \mathcal{A}_d)\\
\Imag(A_{dd} - A_{cc}) &= \Imag(\mathcal{A}_d - \mathcal{A}_c)\\
\Imag(A_{aa} - A_{cc}) &= \Imag(\mathcal{A}_a - \mathcal{A}_c)\\
\Imag(A_{dd} - A_{aa}) &= \Imag(\mathcal{A}_d - \mathcal{A}_a)\\
\Imag(A_{bb} - A_{aa}) &= \Imag(\mathcal{A}_b - \mathcal{A}_a)\,.
\end{align*}
One observes that only three of them are independent, which leads to
\begin{align*}
\Imag(A_{bb} - A_{aa}) = \Imag(\mathcal{A}_b - \mathcal{A}_a)\,,\quad
\Imag(A_{cc} - A_{aa}) = \Imag(\mathcal{A}_c - \mathcal{A}_a)\,,\quad
\Imag(A_{dd} - A_{aa}) = \Imag(\mathcal{A}_d - \mathcal{A}_a)\,,
\end{align*}
and
\begin{align*}
\omega_b - \omega_a = 2\Imag(\mathcal{A}_b - \mathcal{A}_a)\,,\qquad
\omega_c - \omega_a = 2\Imag(\mathcal{A}_c - \mathcal{A}_a)\,,\qquad
\omega_d - \omega_a = 2\Imag(\mathcal{A}_d - \mathcal{A}_a)\,.
\end{align*}
Combining these finally provides the third condition
\begin{align*}
    \omega_a - 2\Imag(\mathcal{A}_a) = \omega_b - 2\Imag(\mathcal{A}_b) = \omega_c - 2\Imag(\mathcal{A}_c) = \omega_d - 2\Imag(\mathcal{A}_d)\,.
\end{align*}
Altogether, the three conditions are necessary and sufficient to have $\mathcal{K}[c_{abcd}](\bm{z}) = 0$ and the proof of the theorem is complete.
\end{proof}
\begin{remark}
    It may be surprising to observe that the natural frequencies do not have to be identical when nontrivial phase lags are present. Yet, as shown in subsection~\ref{subsec:basix_examples_thm3}, the third condition in fact guarantees that the oscillators whose positions on the unit circle participate in a conserved cross-ratio have the same effective frequency. Also, in terms of the original parameters, the third condition is equivalent to 
\begin{align*}
    \omega_a - W_{k_a a}\sin\alpha_{k_a a} = \omega_b -  W_{k_b b}\sin\alpha_{k_b b} = \omega_c -  W_{k_c c}\sin\alpha_{k_c c} = \omega_d -  W_{k_d d}\sin\alpha_{k_d d}\,,
\end{align*}
where $k_j$ takes any value within $\{a,b,c,d\}\setminus\{j\}$ for $j\in\{a,b,c,d\}$.
\end{remark}
\begin{remark}
    Note that there is no restriction on the outgoing edges from the vertices involved in conserved cross-ratios to the vertices not involved in a conserved cross-ratio. Indeed, although the conditions of the theorem make the equations for the oscillators admitting a conserved cross-ratio identical, their initial conditions typically differ and their contributions within the whole network are generally heterogeneous.
\end{remark}
\begin{remark}
    If the phases are not equal at time $t = 0$ and are involved in a conserved cross-ratio, the cross-ratio is not zero. Since it is conserved, the cross-ratio remains non-zero throughout time and the phases cannot cross (become equal) for all finite values of $t$~\cite{Lohe2017}.
\end{remark}

\subsection{Basic examples for Theorem 3}
\label{subsec:basix_examples_thm3}
One of the simplest, but instructive, examples with a conserved cross-ratio is the following.
\begin{example}
    Consider a graph of $N= 5$ vertices with complex weight matrix 
    \begin{align*}
    A = \begin{pmatrix}
        i\omega_1/2 & \mathcal{A}_2 & \mathcal{A}_3 & \mathcal{A}_4 & \mathcal{A}_5\\
        \mathcal{A}_1 & i\omega_2 /2 & \mathcal{A}_3 & \mathcal{A}_4 & \mathcal{A}_5\\
        \mathcal{A}_1 & \mathcal{A}_2 & i\omega_3/2 & \mathcal{A}_4 & \mathcal{A}_5\\
        \mathcal{A}_1 & \mathcal{A}_2 & \mathcal{A}_3 & i\omega_4/2 & \mathcal{A}_5\\
        A_{51} & A_{52} & A_{53} & A_{54} & i\omega/2
    \end{pmatrix}\,,
\end{align*}
where $\omega$ and $\omega_1$ are fixed to arbitrary real values while 
\begin{align*}
    \omega_j = \omega_1 + 2\Imag(\mathcal{A}_j - \mathcal{A}_1)\,,\quad j\in \{2,3,4\}\,.
\end{align*}
By construction, the effective natural frequency of oscillators 2, 3, 4 is $\Omega = \omega_1 - 2\Imag(\mathcal{A}_1)$. In fact, one can readily verify that this yields identical equations for oscillators 1,2,3,4 (although they have different contributions to oscillator~5): $\dot{z}_j = \rho(\bm{z})  + i \Omega z_j - \overline{\rho(\bm{z})} z_j^2$ where $j\in\{1,2,3,4\}$ and $\rho(\bm{z}) = \sum_{k=1}^N \mathcal{A}_{ k}z_k$. In such a case, from Thm.~\ref{thm:cte_mvt_kuramoto_graphe} and Lem.~\ref{lem:independence_cross_ratios}, there is only one functionally independent cross-ratio, say
\begin{align*}
    c_{1234}(z) = \frac{(z_{3}-z_{1})(z_{4}-z_{2})}{(z_{3}-z_{2})(z_{4}-z_{1})}\,,
\end{align*}
that is conserved.
\end{example}
Let's now present a more general example in terms of network structure.
\begin{example}
    Consider a graph with $N = 13$ vertices with complex weight matrix
    {\footnotesize\begin{align*}
        A = \begin{pmatrix}
        i\omega_1/2 & \mathcal{A}_{1,2} & \mathcal{A}_{1,3} & \mathcal{A}_{1,4} & \mathcal{A}_{1,5} & \mathcal{A}_{1,6}& \mathcal{A}_{1,7} & \mathcal{A}_{1,8}& \mathcal{A}_{1,9} & \mathcal{A}_{1,10} & \mathcal{A}_{1,11} & \mathcal{A}_{1,12} & \mathcal{A}_{1,13}\\
        \mathcal{A}_{1,1} & i\omega_2/2 & \mathcal{A}_{1,3} & \mathcal{A}_{1,4} & \mathcal{A}_{1,5} & \mathcal{A}_{1,6}& \mathcal{A}_{1,7} & \mathcal{A}_{1,8}& \mathcal{A}_{1,9} & \mathcal{A}_{1,10} & \mathcal{A}_{1,11} & \mathcal{A}_{1,12} & \mathcal{A}_{1,13}\\
        \mathcal{A}_{1,1} & \mathcal{A}_{1,2} & i\omega_3/2 & \mathcal{A}_{1,4} & \mathcal{A}_{1,5} & \mathcal{A}_{1,6}& \mathcal{A}_{1,7} & \mathcal{A}_{1,8}& \mathcal{A}_{1,9} & \mathcal{A}_{1,10} & \mathcal{A}_{1,11} & \mathcal{A}_{1,12} & \mathcal{A}_{1,13}\\
        \mathcal{A}_{1,1} & \mathcal{A}_{1,2} & \mathcal{A}_{1,3} & i\omega_4/2 & \mathcal{A}_{1,5} & \mathcal{A}_{1,6}& \mathcal{A}_{1,7} & \mathcal{A}_{1,8}& \mathcal{A}_{1,9} & \mathcal{A}_{1,10} & \mathcal{A}_{1,11} & \mathcal{A}_{1,12} & \mathcal{A}_{1,13}\\
        \hline 
        \mathcal{A}_{2,1} & \mathcal{A}_{2,2} & \mathcal{A}_{2,3} & \mathcal{A}_{2,4} & i\omega_5/2 & \mathcal{A}_{2,6}& \mathcal{A}_{2,7} & \mathcal{A}_{2,8}& \mathcal{A}_{2,9} & \mathcal{A}_{2,10} & \mathcal{A}_{2,11} & \mathcal{A}_{2,12} & \mathcal{A}_{2,13}\\
        \mathcal{A}_{2,1} & \mathcal{A}_{2,2} & \mathcal{A}_{2,3} & \mathcal{A}_{2,4} & \mathcal{A}_{2,5} & i\omega_6/2 & \mathcal{A}_{2,7} & \mathcal{A}_{2,8}& \mathcal{A}_{2,9} & \mathcal{A}_{2,10} & \mathcal{A}_{2,11} & \mathcal{A}_{2,12} & \mathcal{A}_{2,13}\\
        \mathcal{A}_{2,1} & \mathcal{A}_{2,2} & \mathcal{A}_{2,3} & \mathcal{A}_{2,4} & \mathcal{A}_{2,5} & \mathcal{A}_{2,6}& i\omega_7/2 & \mathcal{A}_{2,8}& \mathcal{A}_{2,9} & \mathcal{A}_{2,10} & \mathcal{A}_{2,11} & \mathcal{A}_{2,12} & \mathcal{A}_{2,13}\\
        \mathcal{A}_{2,1} & \mathcal{A}_{2,2} & \mathcal{A}_{2,3} & \mathcal{A}_{2,4} & \mathcal{A}_{2,5} & \mathcal{A}_{2,6}& \mathcal{A}_{2,7} & i\omega_8/2& \mathcal{A}_{2,9} & \mathcal{A}_{2,10} & \mathcal{A}_{2,11} & \mathcal{A}_{2,12} & \mathcal{A}_{2,13}\\
        \mathcal{A}_{2,1} & \mathcal{A}_{2,2} & \mathcal{A}_{2,3} & \mathcal{A}_{2,4} & \mathcal{A}_{2,5} & \mathcal{A}_{2,6}& \mathcal{A}_{2,7} & \mathcal{A}_{2,8}& i\omega_9/2 & \mathcal{A}_{2,10} & \mathcal{A}_{2,11} & \mathcal{A}_{2,12} & \mathcal{A}_{2,13}\\
        \mathcal{A}_{2,1} & \mathcal{A}_{2,2} & \mathcal{A}_{2,3} & \mathcal{A}_{2,4} & \mathcal{A}_{2,5} & \mathcal{A}_{2,6}& \mathcal{A}_{2,7} & \mathcal{A}_{2,8}& \mathcal{A}_{2,9} & i\omega_{10}/2 & \mathcal{A}_{2,11} & \mathcal{A}_{2,12} & \mathcal{A}_{2,13}\\
        \hline
        A_{11,1} & A_{11,2} & A_{11,3} & A_{11,4} & A_{11,5}& A_{11,6}& A_{11,7}& A_{11,8}& A_{11,9}& A_{11,10}& i\omega_{11}/2& A_{11,12}& A_{11,13}\\
        A_{12,1} & A_{12,2} & A_{12,3} & A_{12,4} & A_{12,5}& A_{12,6}& A_{12,7}& A_{12,8}& A_{12,9}& A_{12,10}& A_{12,11}& i\omega_{12}/2& A_{12,13}\\
        A_{13,1} & A_{13,2} & A_{13,3} & A_{13,4} & A_{13,5}& A_{13,6}& A_{13,7}& A_{13,8}& A_{13,9}& A_{13,10}& A_{13,11}& A_{13,12}& i\omega_{13}/2
    \end{pmatrix}\,,
    \end{align*}}
    where
    \begin{align*}
    \omega_j = \begin{cases} 
    \text{arbitrary real number} & \text{if } j\in\{1, 5, 11, 12, 13\},\\
    \omega_1 + 2\Imag(\mathcal{A}_{1,j} - \mathcal{A}_{1,1})& \text{if } j\in \{2,3,4\}, \\
    \omega_5 + 2\Imag(\mathcal{A}_{2,j} - \mathcal{A}_{2,5}) & \text{if } j\in \{6,7,8,9,10\}.
\end{cases}
\end{align*}
The effective natural frequencies within each partially integrable part are $\Omega_1 = \omega_1 - 2\Imag(\mathcal{A}_{1,1})$ and $\Omega_2 = \omega_5 - 2\Imag(\mathcal{A}_{2,5})$. Following Thm.~\ref{thm:cte_mvt_kuramoto_graphe} and Lem.~\ref{lem:independence_cross_ratios}, the following cross-ratios are functionally independent constants of motion:
    \begin{align*}
        c_{1,2,3,4}(z) &= \frac{(z_{3}-z_{1})(z_{4}-z_{2})}{(z_{3}-z_{2})(z_{4}-z_{1})}
    \end{align*}
related to the first four oscillators and
    \begin{align*}
        c_{5,6,7,8}(z) = \frac{(z_{7}-z_{5})(z_{8}-z_{6})}{(z_{7}-z_{6})(z_{8}-z_{5})}\,,\quad
        c_{6,7,8,9}(z) = \frac{(z_{8}-z_{6})(z_{9}-z_{7})}{(z_{8}-z_{7})(z_{9}-z_{6})}\,,\quad
        c_{7,8,9,10}(z) = \frac{(z_{9}-z_{7})(z_{10}-z_{8})}{(z_{9}-z_{8})(z_{10}-z_{7})}
    \end{align*}
for oscillators 5 to 10.
\end{example}

\begin{figure}[t]
    \centering
    \includegraphics[width=0.9\linewidth]{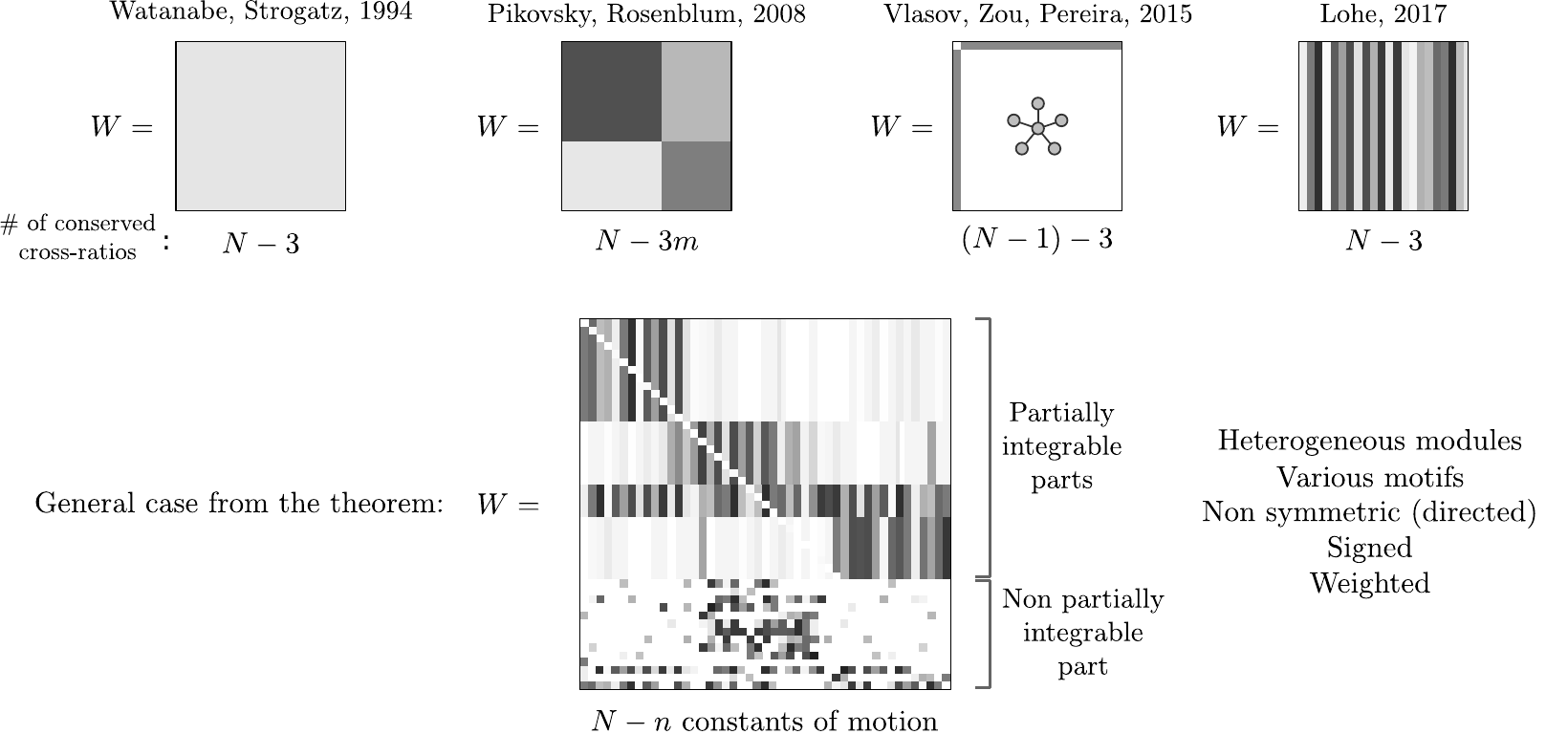}
    \caption[Simplified history of weight matrices in WS theory]{\textbf{Simplified history of weight matrices in WS theory}. Previous results on the possible forms of weight matrices allowing conserved cross-ratios and a generic example of weight matrix satisfying the conditions of Thm.~\ref{thm:cte_mvt_kuramoto_graphe}, without considering natural frequencies and phase lags for simplicity.}
    \label{fig:complex_weight_matrix}
\end{figure}

In Fig.~\ref{fig:complex_weight_matrix}, we illustrate the weight matrix of a more general network of Kuramoto oscillators with conserved cross-ratios.

\subsection{Corollaries of Theorem 3}
\label{SIsubsec:corollaries_thm3}
In this subsection, we provide some consequences of Thm.~\ref{thm:cte_mvt_kuramoto_graphe}. First, the theorem readily gives the necessary and sufficient conditions to have $N - 3$ constants of motion having the form of cross-ratios. The sufficiency is known from the excellent work of Lohe~\cite{Lohe2017, Lohe2019}. The following corollary formalizes this result while adding the necessity of the conditions.
\begin{corollary}\label{cor:max_integrals_of_motion}
    The $N$-dimensional Kuramoto model on a graph with complex weight matrix $A$ admits the maximum number of functionally independent cross-ratios as constants of motion, namely $N-3$, if and only if the following two conditions are satisfied:
    \begin{enumerate}
        \item $A_{j\ell} = A_{k\ell} =: \mathcal{A}_\ell$ for all $\ell\in\{1,...,N\}$ and for all pairs $(j,k)$ with $j,k\in\{1,...,N\}$ and $k,\ell\neq j\,;$
        \item $\omega_j - 2\Imag(\mathcal{A}_j) = \omega_k - 2\Imag(\mathcal{A}_k)$ for all pairs $(j,k)$ with $j,k\in\{1,...,N\}$ and $k\neq j\,.$
    \end{enumerate}
\end{corollary}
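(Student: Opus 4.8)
The plan is to combine Theorem~\ref{thm:cte_mvt_kuramoto_graphe}, which characterizes conservation of a \emph{single} cross-ratio, with Lemma~\ref{lem:independence_cross_ratios}, which caps the number of functionally independent cross-ratios at $N-3$. The conceptual bridge I would establish is the equivalence that the model conserves $N-3$ functionally independent cross-ratios if and only if it conserves \emph{every} cross-ratio $c_{abcd}$; the corollary then reduces to applying the per-quadruple conditions of Theorem~\ref{thm:cte_mvt_kuramoto_graphe} simultaneously to all index choices $\{a,b,c,d\}$.

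For sufficiency, assuming the corollary's conditions~1 and~2, I would check the three hypotheses of Theorem~\ref{thm:cte_mvt_kuramoto_graphe} for an arbitrary quadruple. Condition~1 says each column of $A$ is constant off its diagonal, so the outgoing equalities $A_{ba}=A_{ca}=A_{da}=\mathcal{A}_a$ of Eq.~\eqref{eq:identical_insiders_outcoming_edge_A} and the incoming equalities $A_{ak}=A_{bk}=A_{ck}=A_{dk}$ of Eq.~\eqref{eq:identical_outsiders_incoming_edge_A} hold at once, while condition~2 is precisely Eq.~\eqref{eq:imag_cond}. Thus every cross-ratio is conserved; in particular the explicit set $\{c_{1234},\dots,c_{(N-3)(N-2)(N-1)N}\}$ proven independent in Lemma~\ref{lem:independence_cross_ratios} furnishes $N-3$ functionally independent constants of motion, the maximum allowed by that lemma.

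For necessity, suppose $N-3$ functionally independent cross-ratios $c_1,\dots,c_{N-3}$ are conserved. Since $N-3$ is the maximal number of functionally independent cross-ratios, any other cross-ratio $c_{abcd}$ is functionally dependent on them, so locally $c_{abcd}=g(c_1,\dots,c_{N-3})$ on the open dense subset of $\mathbb{T}^N$ of distinct configurations. As $\mathcal{K}$ is a derivation, $\mathcal{K}[c_{abcd}]=\sum_i(\partial_i g)\,\mathcal{K}[c_i]=0$ there, and by continuity $\mathcal{K}[c_{abcd}]\equiv0$, so every cross-ratio is conserved. Invoking the ``only if'' direction of Theorem~\ref{thm:cte_mvt_kuramoto_graphe} on all quadruples, I fix a column $\ell$ and two rows $j,k\neq\ell$, pick a third index $m\notin\{\ell,j,k\}$, and read Eq.~\eqref{eq:identical_insiders_outcoming_edge_A} for the quadruple $\{\ell,j,k,m\}$ to obtain $A_{j\ell}=A_{k\ell}$; this yields condition~1 with $\mathcal{A}_\ell$ the common off-diagonal entry of column $\ell$. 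Reading Eq.~\eqref{eq:imag_cond} on a quadruple containing any given pair $(j,k)$ then produces $\omega_j-2\Imag(\mathcal{A}_j)=\omega_k-2\Imag(\mathcal{A}_k)$, which is condition~2.

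The main obstacle is the necessity bridge: deducing that a \emph{maximal} functionally independent family of conserved cross-ratios forces \emph{all} cross-ratios to be conserved. This leans on the Jacobian-rank characterization of functional dependence already used in Lemma~\ref{lem:independence_cross_ratios}---a maximal independent set of size $N-3$ locally parametrizes every remaining cross-ratio---so that annihilation by the derivation $\mathcal{K}$ propagates. Care is needed because these functional relations, and the cross-ratios themselves, are singular when oscillators coincide; the identity $\mathcal{K}[c_{abcd}]=0$ is therefore first secured on the open dense set of distinct configurations and only then extended by continuity, and the resulting per-quadruple conditions must be patched consistently into the global statements over all of $\{1,\dots,N\}$.
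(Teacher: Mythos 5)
Your proof is correct, and while your sufficiency argument coincides with the paper's (conditions~1 and~2 of the corollary verify the three hypotheses of Theorem~\ref{thm:cte_mvt_kuramoto_graphe} for every quadruple, and Lemma~\ref{lem:independence_cross_ratios} supplies the independent family and the cap), your necessity argument takes a genuinely different route. The paper proves necessity by contrapositive and by explicit combinatorics: assuming the conditions fail for a pair relabeled $(N-1,N)$, it observes that no cross-ratio involving both $N-1$ and $N$ can be conserved, notes that any putative maximal family must contain a cross-ratio involving $N$, and then uses the multiplication identity $c_{(N-1)abc}\,c_{aNbc} = c_{(N-1)Nbc}$ to manufacture a forbidden conserved cross-ratio, reaching a contradiction without ever invoking the rank theorem. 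You instead prove the direct implication via an abstract bridge: a maximal functionally independent conserved family forces \emph{every} cross-ratio to be conserved, because the augmented Jacobian has rank at most $N-3$ everywhere (all cross-ratios factor through the consecutive-index family by Lemma~\ref{lem:independence_cross_ratios}), so the constant-rank theorem gives a local representation $c_{abcd} = g(c_1,\ldots,c_{N-3})$ near generic points, and the derivation property of $\mathcal{K}$ annihilates $c_{abcd}$ there. Your approach is more general---it would work for any family of observables with a known maximal functional rank and closure under functional generation---but it carries the analytic overhead you correctly flag: full rank of the chosen family must be propagated from one point to a dense open set (the minors are rational, so this works), and the final extension of $\mathcal{K}[c_{abcd}] = 0$ succeeds by continuity precisely because the vanishing locus is dense, not merely open (the distinct-configuration set of $\mathbb{T}^N$ is disconnected into cyclic-order components, so openness alone would not suffice). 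The paper's route buys elementarity and an explicit certificate of failure localized at the offending pair $(j,k)$; yours buys a cleaner conceptual statement (maximal conserved family $\Rightarrow$ all cross-ratios conserved) from which the per-quadruple application of Theorem~\ref{thm:cte_mvt_kuramoto_graphe}, exactly as you carry it out with the quadruple $\{\ell,j,k,m\}$, immediately yields both conditions.
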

\begin{proof}
    ($\Leftarrow$) Assume that the first condition of the corollary holds. Then, conditions 1 and 2 of Thm.~\ref{thm:cte_mvt_kuramoto_graphe} are automatically satisfied. Moreover, if the second condition of the corollary also holds, then condition 3 of Thm.~\ref{thm:cte_mvt_kuramoto_graphe} is satisfied as well. 
    Therefore, a Kuramoto system that satisfies the two conditions of the corollary admits each cross-ratio $c_{abcd}$ as a constant of motion.
    Now, according to Lem.~\ref{lem:independence_cross_ratios}, only $N-3$ of them can be functionally independent, meaning that $N-3$ is the maximal number of functionally independent cross-ratios being constants of motion. 

    ($\Rightarrow$) We will prove the contrapositive. Let $(j, k)$, where $j,k\in\{1, \cdots, N\}$ and $k\neq j$, be a pair for which condition 1 or condition 2 of the corollary is not satisfied.
    Without loss of generality, relabel the oscillators in such a way that $j=N-1$ and $k=N$.
    Consider the cross-ratio $c_{(N-3)(N-2)(N-1)N}$.
    This cross-ratio cannot be conserved, because condition 1 of the corollary not being satisfied implies that condition 1 of Thm.~\ref{thm:cte_mvt_kuramoto_graphe} is not satisfied, and 2 of the corollary not being satisfied implies that condition 3 of Thm.~\ref{thm:cte_mvt_kuramoto_graphe} is not satisfied.
    Moreover, by the same reasoning, any cross-ratio involving $N-1$ and $N$ cannot be conserved.
    Consider the $N-4$ cross-ratios in 
    \begin{equation}
        \{c_{1234}, c_{2345}, ..., c_{(N-4)(N-3)(N-2)(N-1)}\}.
    \end{equation}
    According to Lem.~\ref{lem:independence_cross_ratios}, any $c_{abcd}$ with $a,b,c,d\in\{1, \cdots, N-1\}$ is functionally dependent on those $N-4$ cross-ratios, so any additional independent cross-ratio must involve oscillator $N$.
    Since all permutations of the indices of a cross-ratio are functionally dependent, consider without loss of generality that this new independent cross-ratio is $c_{aNbc}$, where $a,b,c\in\{1, \cdots, N-1\}$.
    However, the $N-3$ cross-ratios in
    \begin{equation}
        \{c_{1234}, c_{2345}, ..., c_{(N-4)(N-3)(N-2)(N-1)}, c_{aNbc}\}
    \end{equation}
    cannot all be conserved. Indeed, consider the cross-ratio $c_{(N-1)abc}$, which is dependent on the $N-4$ first cross-ratios. Then, by Eq.~\eqref{ben:multiplication_cr}, 
    \begin{equation}
        c_{(N-1)abc}\,c_{aNbc} = c_{(N-1)Nbc}\,,
    \end{equation}
    but recall that any cross-ratio involving oscillators $N-1$ and $N$ cannot be conserved. Therefore, if either condition 1 or condition 2 is not satisfied, then the model cannot admit $N-3$ conserved cross-ratios.
\end{proof}
From a graph-theoretical perspective, the latter corollary implies that graphs other than the complete graph or the star graph also admit $N-3$ conserved cross-ratios. Consider the following simple example.
\begin{example}
    Consider binary matrices $A$ satisfying the first condition of Corollary~\ref{cor:max_integrals_of_motion}, disregarding the diagonal. There are $2^N$ such matrices, corresponding to all possible binary choices for each of the $N$ columns. Each of these matrices defines a graph. Between them, there are many graph isomorphisms. Starting from the complete graph, for which all matrix elements are equal to 1: if one changes no column, there is 1 possible graph; if one changes a column of ones into a column of zeros, there are $N$ isomorphic graphs; if one changes two such columns, there are $\binom{N}{2}$ isomorphic graphs, etc. Generally, if one changes $k$ columns of ones into columns of zeros, there are $\binom{N}{k}$ isomorphic graphs.  Summing over all the isomorphic graphs yields the total number of possibilities from the binomial theorem $\sum_{k=0}^N \binom{N}{k} = 2^N$. There are thus $N$ non-isomorphic, weakly connected, binary graphs leading to $N - 3$ conserved cross-ratios in the Kuramoto dynamics. The 16 four-vertex graphs and the 32 five-vertex graphs including isomorphisms, all supporting the maximal number of conserved cross-ratios, are presented in Fig.~\ref{fig:motifs_cte_mvt} and Fig.~\ref{fig:max_reduc_5nodes}.
\end{example}
\begin{figure}[h]
    \centering
    \includegraphics[width=0.9\linewidth]{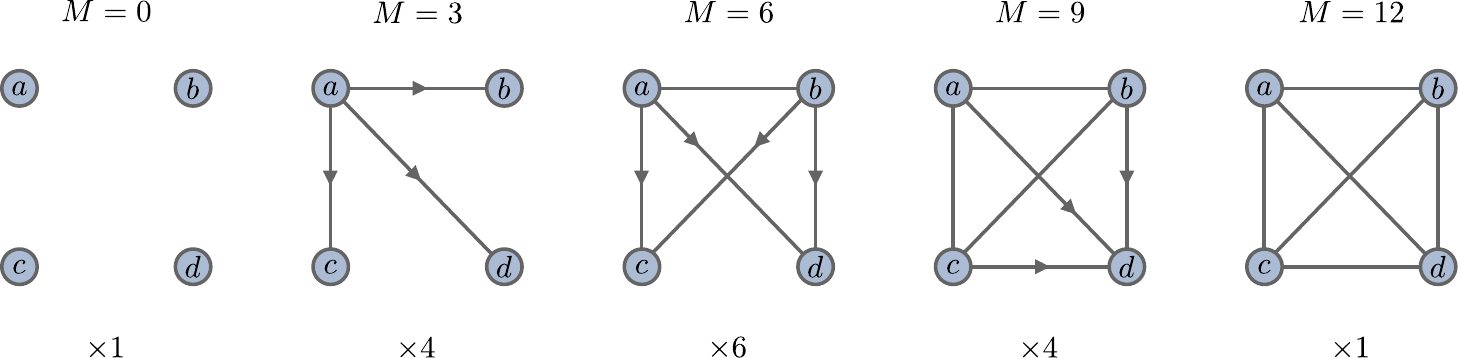}
    \caption[4-vertices motifs admitting a conserved cross-ratio]{\textbf{4-vertices motifs admitting a conserved cross-ratio} (considering that the conditions on the frequencies and the phase lags are satisfied). The number of arcs (oriented edges) $M$ is specified above the graph while the number of isomorphic graphs is specified below the graphs. Note that there are 0, 1, 2, 3, 0 vertices in a ``source subgraph'' respectively for $M = 0, 3, 6, 9, 12$, allowing the possibility of having monomial eigenfunctions if all the conditions of Thm.~\ref{thm:existence_fpmonom} are satisfied.}
    \label{fig:motifs_cte_mvt}
\end{figure}
\begin{figure}[h]
    \centering    
    \includegraphics[width=\linewidth]{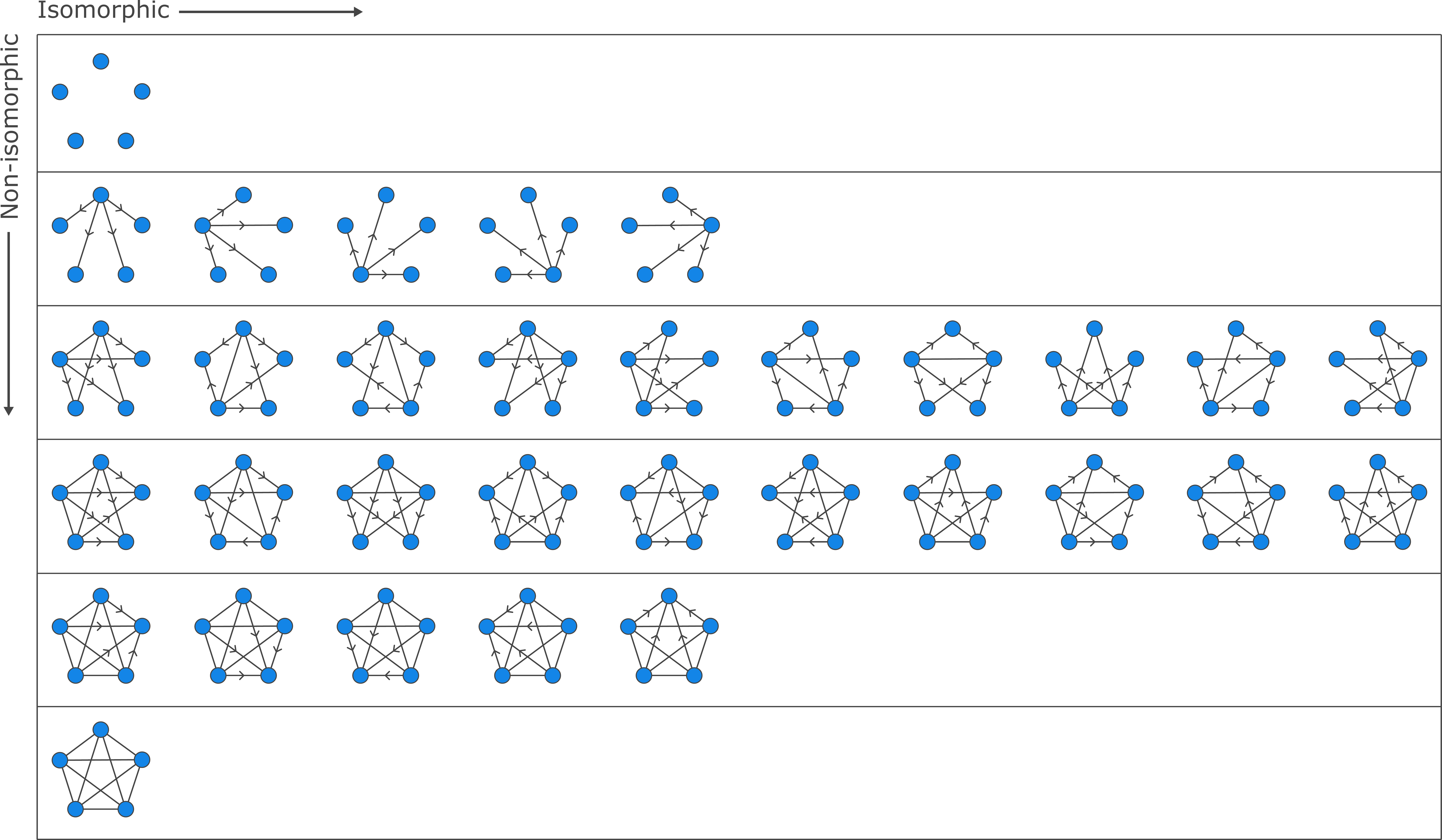}
    \caption[5-vertices motifs admitting two conserved cross-ratios]{\textbf{5-vertices motifs admitting two conserved cross-ratios}. All 32 possible binary graphs with 5 vertices admitting the maximal number of 2 functionally independent conserved cross-ratios in the corresponding Kuramoto model (considering that the conditions on the frequencies and the phase lags are satisfied). The 6 non-isomorphic graphs are displayed vertically and the corresponding isomorphisms are displayed horizontally.}
    \label{fig:max_reduc_5nodes}
\end{figure}
The next corollary is the equivalent of the theorem when there is no phase lag between the oscillators. 
\begin{corollary}\label{cor:cte_mvt_kuramoto_graphe}
Consider the $N$-dimensional Kuramoto model~\eqref{eq:thetakur} on a graph described by some $N\times N$ real matrix $W$, with natural frequencies $(\omega_j)_{j=1}^N$ and zero phase lags. The cross-ratio $c_{abcd}$~\eqref{eq:cross_ratio_SI} is a constant of motion in the model if and only if the vertices $a$, $b$, $c$, and $d$ have the same:
\begin{enumerate}
\item outgoing edges within $\{a,b,c,d\}$, i.e.,
\begin{align}\label{eq:identical_insiders_outcoming_edge}
    W_{ba} = W_{ca} = W_{da}\,,\quad
    W_{ab} = W_{cb} = W_{db}\,,\quad
    W_{ac} = W_{bc} = W_{dc}\,,\quad
    W_{ad} = W_{bd} = W_{cd}\,;
\end{align}    

\item incoming edges from the vertices outside $\{a,b,c,d\}$ in the graph, i.e.,
\begin{equation}\label{eq:identical_outsiders_incoming_edge}
    W_{ak} = W_{bk} = W_{ck} = W_{dk},\quad \forall\, k\in\{1,...,N\}\setminus\{a, b, c, d\}\,;
\end{equation}

\item natural frequencies, i.e.,
\begin{equation}\label{eq:identical_frequencies}
    \omega_{a} = \omega_{b} = \omega_{c} = \omega_{d}\,;
\end{equation}
\end{enumerate}
\end{corollary}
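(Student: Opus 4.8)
The plan is to obtain this corollary as a direct specialization of Theorem~\ref{thm:cte_mvt_kuramoto_graphe} to the zero-phase-lag case. First I would note that when $\alpha_{jk}=0$ for all $j,k$, the complex interaction matrix reduces to $A = \frac{1}{2}\bigl(W + i\diag(\bm\omega)\bigr)$, whose off-diagonal entries $A_{jk}=\frac{1}{2}W_{jk}$ (for $j\neq k$) are \emph{real} and whose diagonal entries $A_{jj}=\frac{i}{2}\omega_j$ are purely imaginary. Because the three conditions of Theorem~\ref{thm:cte_mvt_kuramoto_graphe} are expressed entirely through the entries of $A$, each can be rewritten in terms of $W$ and $\bm\omega$.

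Next I would translate the conditions one by one. Condition 1 of the theorem, Eq.~\eqref{eq:identical_insiders_outcoming_edge_A}, equates off-diagonal entries of $A$ inside $\{a,b,c,d\}$; since each such entry is $\frac{1}{2}W_{jk}$ and the shared prefactor $\frac{1}{2}$ is nonzero, these equalities are equivalent to the corresponding equalities among the real weights, namely Eq.~\eqref{eq:identical_insiders_outcoming_edge}. The identical argument converts condition 2 of the theorem, Eq.~\eqref{eq:identical_outsiders_incoming_edge_A}, into Eq.~\eqref{eq:identical_outsiders_incoming_edge}.

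The only step needing a brief extra observation is condition 3, Eq.~\eqref{eq:imag_cond}. Once condition 1 holds, each common outgoing weight $\mathcal{A}_j$ (for $j\in\{a,b,c,d\}$) is a single off-diagonal entry of $A$, hence real in the zero-phase-lag case, so $\Imag(\mathcal{A}_j)=0$. Consequently the shifted-frequency condition $\omega_j - 2\Imag(\mathcal{A}_j) = \omega_k - 2\Imag(\mathcal{A}_k)$ collapses to the plain equality $\omega_a=\omega_b=\omega_c=\omega_d$, i.e.\ Eq.~\eqref{eq:identical_frequencies}. Since Theorem~\ref{thm:cte_mvt_kuramoto_graphe} is an equivalence and each of its conditions is, under $\alpha=0$, equivalent to the matching condition of the corollary, both implications follow simultaneously. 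I do not expect a genuine obstacle here; the only points demanding care are checking that the shared nonzero factor $\frac{1}{2}$ lets equalities pass freely between $A$ and $W$, and that $\Imag(\mathcal{A}_j)$ truly vanishes because $\mathcal{A}_j$ is an off-diagonal (not a diagonal) entry of $A$.
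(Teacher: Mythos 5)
Your proposal is correct and follows essentially the same route as the paper: both specialize Theorem~\ref{thm:cte_mvt_kuramoto_graphe} to $\alpha = 0$, noting that conditions 1 and 2 involve only the (now real) off-diagonal entries of $A$, and that the shifted-frequency condition reduces to equality of the $\omega_j$. The only cosmetic difference is that the paper cancels the $\sigma W_{k_j j}\sin\alpha_{k_j j}$ terms in the explicit real form of condition 3, while you observe directly that $\Imag(\mathcal{A}_j)=0$ because $\mathcal{A}_j$ is a real off-diagonal entry --- the same fact stated two ways.
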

\begin{proof}
When the phase lags are zero, $A = \frac{1}{2}(W  + i\,\mathrm{diag}(\bm{\omega}))$. In this case, the first two conditions of Thm.~\ref{thm:cte_mvt_kuramoto_graphe}, which only involve non-diagonal terms of $A$, coincide with the first two conditions stated in the corollary. Using the explicit form of the third condition in Thm.~\ref{thm:cte_mvt_kuramoto_graphe}, that is,
\begin{align*}
    \omega_a -  W_{k_a a}\sin\alpha_{k_a a} = \omega_b -  W_{k_b b}\sin\alpha_{k_b b} = \omega_c -  W_{k_c c}\sin\alpha_{k_c c} = \omega_d -  W_{k_d d}\sin\alpha_{k_d d}\,,
\end{align*}
where $k_j$ takes any value within $\{a,b,c,d\}\setminus\{j\}$ for $j\in\{a,b,c,d\}$, the sine terms vanish due to the zero phase lags. This directly yields $\omega_a = \omega_b = \omega_c = \omega_d$, as stated in the third condition of the corollary.
\end{proof}
\begin{remark}
    The latter corollary was also verified with symbolic calculations in Matlab (\textit{symbolic\_calculations\_theorem.m}) and in Mathematica (\textit{KMK\_constants\_of\_motion.nb}) in Ref.~\cite{Thibeault2026_koopman_kuramoto}.
\end{remark}
In the following corollary, we only focus on the weight matrix conditions to admit a conserved cross-ratio when there is a star $S_N$ with $N$ vertices in an undirected graph. 
\begin{corollary}\label{cor:star}
    Consider the Kuramoto model~\eqref{eq:kuramoto} on an undirected graph where the phase lags are zero and the graph contains a star $S_n$ of $n$ vertices with peripheral oscillators having identical frequencies and only a connection to the core. Then, $n$ must be at least 5 for the star to admit a conserved cross-ratio. 
\end{corollary}
\begin{proof}
    The cross-ratio involves 4 vertices and the stars $S_1$ (trivial graph), $S_2$ (path), or $S_3$ (path) are readily excluded. For $n=4$, denote the core by $a$ and the periphery by $\{b, c, d\}$  without loss of generality. The core is connected to all vertices in the periphery, so in particular, $W_{ab} = 1$. However, $W_{cb} = 0 \neq W_{ab}$ and thus the first condition in Corollary~\ref{cor:cte_mvt_kuramoto_graphe} is not satisfied. 
    
    For $n = 5$, let the core be labeled $e$ and the periphery $\{a,b,c,d\}$. The first condition is readily satisfied since there is no edge between the peripheral vertices in Corollary~\ref{cor:cte_mvt_kuramoto_graphe}. The second condition in Corollary~\ref{cor:cte_mvt_kuramoto_graphe} is also satisfied since $W_{ae} = W_{be} = W_{ce} = W_{de} = 1$. Setting the natural frequencies of the vertices $a,b,c,d$ to be identical, the Kuramoto model on the star $S_5$ admits the cross-ratio $c_{abcd}$ as a constant of motion by Corollary~\ref{cor:cte_mvt_kuramoto_graphe}.
\end{proof}
\begin{remark}
    In the directed case, the smallest star that admits a conserved cross-ratio is composed of 4 vertices, as shown in Fig.~\ref{fig:motifs_cte_mvt}.
\end{remark}

\subsection{WS integrals to Darboux functions}
\label{SIsubsec:WS_darboux}

For $\omega_j = 0$, $\alpha_{jk} = \delta - \pi/2$ with $\sin\delta = 0$, and $W_{jk} = 1/N$ for all $j,k\in\mathcal{V}$, the model admits $N - 2$  functionally independent constants of motion of the form
\begin{align}\label{eq:ws_integral_SI}
    C^{\mathrm{ws}}_{j_1...j_N} = S_{j_1j_2}S_{j_2j_3}...S_{j_{N-1}j_N}S_{j_Nj_1}\,,
\end{align}
where $S_{jk} = \sin\left((\theta_j - \theta_k)/2\right)$. J.W.~Swift found these integrals for $N =3,\,4$ and conjectured the general case, which was then proved by Watanabe and Strogatz~\cite{Watanabe1993, Watanabe1994, Goebel1995}. They are now often called ``WS integrals". As noted in Ref.~\cite{Watanabe1994}, these constants of motion are Hénon-type integrals: they are indexed by cyclic labels and are built as products of sine factors, reminiscent of the product-type integrals in Toda lattices~\cite{Henon1974, Flaschka1974, Toda1981}. 

Cross-ratios are directly linked to WS integrals~\cite{Marvel2009}~:
\begin{align}\label{eq:crossWS}
    c_{abcd} = -\frac{C^{\mathrm{ws}}_{acdb\,j_5...j_N}}{C^{\mathrm{ws}}_{adcb\,j_5...j_N}} = \frac{S_{ca}S_{db}}{S_{cb}S_{da}}\,.
\end{align}
Although Eq.~\ref{eq:crossWS} was established in Ref.~\cite{Marvel2009} in the context where $\omega_j = 0$, $\alpha_{jk} = \delta - \pi/2$, $\sin\delta = 0$, and $W_{jk} = 1/N$ for all $j,k$, the relation is always valid. Therefore, one might ask what happens when the conditions of Thm.~\ref{thm:crossratios} are satisfied or, more generally, for any dynamics whose Koopman generator has the form 
\begin{align}\label{eq:generalLm101_SI}
    \mathcal{K}_{\alpha\beta\gamma} = \alpha(t, \bm z)L_{-1} + \beta(t, \bm z)L_{0} + \gamma(t, \bm z)L_{1}\,.
\end{align}
Are the WS quantities in Eq.~\eqref{eq:ws_integral_SI} first integrals under these conditions ?

Theorem~\ref{thm:crossratios} and  Eq.~\eqref{eq:generalLm101_SI} do not imply that the WS integrals are conserved even if the relation $c_{abcd} = -C^{\mathrm{ws}}_{acdb\,j_5...j_\ell}/C^{\mathrm{ws}}_{adcb\,j_5...j_\ell}$ for $4\leq\ell\leq N$ still holds. They do suggest that $C^{\mathrm{ws}}_{acdb\,j_5...j_\ell}$ and $C^{\mathrm{ws}}_{adcb\,j_5...j_\ell}$ may be eigenfunctions of the Koopman generator $\mathcal{K}$ such that the ratio of these eigenfunctions yields another eigenfunction of null eigenvalue---i.e., a constant of motion---similarly to the cancellation of the eigenvalues using monomial and Vandermonde-ratio eigenfunctions. 
In fact, WS quantities in Eq.~\eqref{eq:ws_integral_SI} are, however, more general Darboux functions~\cite{Goriely2001, Zhang2017} of the form
\begin{align*}
    \mathcal{K}_{\alpha\beta\gamma}[C^{\mathrm{ws}}_{\bm j}(\bm z)] = \Lambda_{\bm j}(\bm z)\,C^{\mathrm{ws}}_{\bm j}(\bm z)\,,
\end{align*}
where $\Lambda_{\bm j}(\bm z)$ is called a cofactor, $C^{\mathrm{ws}}_{\bm j}(\bm z) = \prod_{u=1}^\ell(1 - z_{j_u}\bar{z}_{j_{u+1}})$ is the complex form of the WS function, and $\bm j = (j_1,...,j_\ell)$ with $j_{\ell +1} = j_1$. Unlike Koopman eigenfunctions, having two Darboux functions $\psi_1, \psi_2$ with (non constant) cofactors $\Lambda_1, \Lambda_2$ does not imply a conserved quantity, since $\mathcal{K}[\psi_1^{a_1}(\bm z)\psi_2^{a_2}(\bm z)] = (a_1\Lambda_1(\bm z) + a_2\Lambda_2(\bm z))\psi_1^{a_1}(\bm z)\psi_2^{a_2}(\bm z)$ and the real constants $a_1, a_2$ cannot necessarily be chosen to cancel the contributions of $\Lambda_1(\bm z)$ and $\Lambda_2(\bm z)$. 

The particularity of WS functions is that they share the same cofactor whenever the Koopman generator has the form in Eq.~\eqref{eq:generalLm101_SI}. Indeed, 
\begin{align*}
    L_{-1}[\ln C^{\mathrm{ws}}_{\bm j}(\bm z)] = -\textstyle{\sum_{u=1}^\ell} \bar{z}_{j_u}\,,\qquad\quad
    L_0[\ln C^{\mathrm{ws}}_{\bm j}(\bm z)] = 0\,,\qquad\quad
    L_1[\ln C^{\mathrm{ws}}_{\bm j}(\bm z)] = \textstyle{\sum_{u=1}^\ell} z_{j_u}\,.
\end{align*}
Thus, for all $t\in \mathbb{R}$, $\bm z \in \mathbb{T}^N$, and for a fixed set of labels ${j_1,\ldots,j_\ell}$, the cofactor is
\begin{align}\label{eq:darbouxWSgen}
    \Lambda(\bm z) := \Lambda_{\bm j}(\bm z) = \gamma(t, \bm{z})\sum_{u=1}^\ell z_{j_u} - \alpha(t, \bm{z})\sum_{u=1}^\ell \bar z_{j_{u}}
\end{align}
for any given ordering of the labels ${j_1,\ldots,j_\ell}$, which is a consequence of the fact that $\Lambda_{\bm j}(\bm z)$ depends on symmetric polynomials and functions $\alpha, \gamma$ do not depend on the indices $j_1,...,j_\ell$. Therefore, it suffices to take the ratio of two WS (Darboux) functions to get the conserved cross-ratio. 

In the specific case of the Kuramoto model, the conditions of Thm.~\ref{thm:crossratios} are satisfied and the cofactor becomes
\begin{align}\label{eq:darbouxWS}
    \Lambda(\bm z) 
    = -2\,\mathrm{Re}\,\left[\left(\sum_{u=1}^\ell\bar{z}_{j_u}\right)\left(\sum_{k=1}^N\mathcal{A}_kz_k\right)\right]\,, 
\end{align}
where $\mathcal{A}_1,...,\mathcal{A}_N \in \mathbb{C}$. If $\mathcal{A}_k = \frac{w_k}{2}e^{-i\alpha_k}$ with $w_k,\alpha_k \in \mathbb{R}$, the real form of the cofactor is 
\begin{align*}
    \Tilde{\Lambda}(\bm \theta) = -\sum_{u=1}^\ell\sum_{k=1}^N w_k \cos(\theta_k - \theta_{j_u} - \alpha_k)\,.
\end{align*}
From the cofactor in Eq.~\eqref{eq:darbouxWSgen}, the general condition to have a WS eigenfunction or integral ($\lambda = 0$ [Lem.~\ref{lem:ws}]) is therefore
\begin{align}\label{eq:genconditionWS}
     \gamma(t, \bm{z})\sum_{u=1}^\ell z_{j_u} - \alpha(t, \bm{z})\sum_{u=1}^\ell \bar z_{j_{u}} = \lambda \in \mathbb{C}\,.
\end{align}
Since $L_0[\ln C^{\mathrm{ws}}_{\bm j}(\bm z)] = 0$ (i.e., the WS function is an invariant of $L_0$), there is no restriction on $\beta(t,\bm z)$ to conserve the WS function. Hence, any dynamics $\dot{z}_j = \alpha(t,\bm{z}) + \beta(t,\bm{z})z_j + \gamma(t, \bm{z})z_j^2$ satisfying Eq.~\eqref{eq:genconditionWS} with $\lambda = 0$ will admit WS integrals. More explicitly, with $\alpha'(t, \bm z) := \alpha(t, \bm z)/(\sum_{u=1}^\ell z_{j_u})$ and $L' := (\sum_{u=1}^\ell z_{j_u})L_{-1} + (\sum_{u=1}^\ell \bar z_{j_u})L_1$, one gets $\mathcal{K}_{\alpha',\beta} = \alpha'(t, \bm z)L' + \beta(t,\bm z)L_0$, the WS quantity is a joint invariant of $L'$ and $L_0$, and it is thus conserved.

\clearpage

\section{Continuous symmetries and the generation of new constants of motion}
The concept of symmetry for differential equations has a long history that has flourished from the work of Sophus Lie to the work of Emmy Noether. Below, we only briefly present the theory for ordinary differential equations in order to present the symmetry criterion under Koopman's perspective and dive quickly into its application to the Kuramoto model. For more details, the reader is invited to visit Refs.~\cite{Sattinger1986, Stephani1989, Arnold1992, Olver1993, Gaeta1994, Ibragimov1999, Hydon2000, Bluman2002} and in particular, the great book by Peter Olver~\cite{Olver1993} that includes pertinent historical remarks, reproducible examples, and crucial theorems for general differential equations. The theorem of interest for us is based on the concept of prolongation of a vector field and gives us the necessary and sufficient conditions to have a symmetry group. Without giving the details, it is stated as follows.
\begin{theorem}[Theorem 2.71~\cite{Olver1993}]\label{thm:2.71}
    Let $\Delta(x, u^{(n)}) = 0$ be a nondegenerate system of $\ell$ differential equations. A connected local group of transformations $G$ acting on an open subset $M\subset X \times U$ is a symmetry group of the system if and only if
    \begin{align}\label{eq:inf_crit_n}
        \mathrm{pr}^{(n)}v[\Delta_\nu(x, u^{(n)})] = 0\,,\quad \nu \in\{1,...,\ell\}\,,\quad \text{whenever} \quad \Delta(x, u^{(n)}) = 0\,,
    \end{align}
    for every infinitesimal generator $v$ of $G$.
\end{theorem}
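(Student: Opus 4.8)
The plan is to transfer the entire problem to the $n$-th order jet space and there reduce group invariance of the solution set to the tangency of prolonged generators. First I would introduce the subvariety
\begin{equation}
\mathcal{S}_\Delta = \{(x,u^{(n)}) : \Delta_\nu(x,u^{(n)}) = 0,\ \nu\in\{1,\dots,\ell\}\}
\end{equation}
of the jet space $X\times U^{(n)}$, and recall that a smooth function $u=f(x)$ solves the system precisely when the graph of its $n$-jet, $\{(x,\pr^{(n)}f(x))\}$, lies in $\mathcal{S}_\Delta$. Since the prolonged transformation $\pr^{(n)}g$ sends the $n$-jet graph of $f$ to the $n$-jet graph of the transformed function $g\cdot f$, the statement ``$G$ is a symmetry group'' is equivalent to ``$\pr^{(n)}G$ maps solution jets to solution jets.'' The goal is then to show that this is in turn equivalent to $\pr^{(n)}G$ leaving the whole variety $\mathcal{S}_\Delta$ invariant, and finally to the infinitesimal condition.

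The second ingredient is the standard infinitesimal invariance principle for connected groups: an object---a function, or a regular submanifold---is preserved by a connected local group $G$ if and only if it is annihilated by, respectively tangent to, every infinitesimal generator. I would prove this by writing group elements near the identity as flows $\exp(\varepsilon v)$ and differentiating the invariance condition at $\varepsilon=0$; connectedness then propagates the infinitesimal statement to all of $G$. Applied to the submanifold $\mathcal{S}_\Delta$, tangency of the prolonged field $\pr^{(n)}v$ means exactly that $\pr^{(n)}v[\Delta_\nu]$ vanishes along $\mathcal{S}_\Delta$, i.e.\ whenever $\Delta=0$, which is the criterion in Eq.~\eqref{eq:inf_crit_n}.

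The two nondegeneracy hypotheses are what glue these pieces together, and this is where the real work lies. Maximal rank of $\Delta$ guarantees that $\mathcal{S}_\Delta$ is a genuine embedded submanifold of codimension $\ell$ near each relevant point, so that ``tangency of $\pr^{(n)}v$'' is well defined and coincides with the derivative condition $\pr^{(n)}v[\Delta_\nu]=0$. Local solvability guarantees that every point of $\mathcal{S}_\Delta$ is the $n$-jet of some actual solution; this is precisely what upgrades ``$\pr^{(n)}G$ preserves solution jets'' to ``$\pr^{(n)}G$ preserves the full variety $\mathcal{S}_\Delta$,'' and conversely lets one descend from a jet lying in $\mathcal{S}_\Delta$ back to a bona fide solution.

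I expect the main obstacle to be this jet-space correspondence rather than any explicit computation. The forward implication relies on local solvability to ensure the solution jets exhaust $\mathcal{S}_\Delta$, so that preservation of solutions forces invariance of the entire variety; the converse relies on it to recover a genuine solution from an invariant jet. Supporting both, one must invoke the compatibility lemma that the flow of $\pr^{(n)}v$ is the prolongation of the flow of $v$, so that integrating the infinitesimal tangency condition really reproduces elements of $\pr^{(n)}G$ and not merely an abstract jet-space diffeomorphism. Once this solution-jet/variety dictionary is secured, assembling the chain ``$G$ is a symmetry group $\Leftrightarrow$ $\pr^{(n)}G$ preserves $\mathcal{S}_\Delta$ $\Leftrightarrow$ each $\pr^{(n)}v$ is tangent to $\mathcal{S}_\Delta$ $\Leftrightarrow$ Eq.~\eqref{eq:inf_crit_n}'' is routine.
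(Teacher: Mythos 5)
The paper offers no proof of this statement at all: it is imported verbatim from Olver's book (Theorem 2.71), with the proof explicitly deferred to that reference, so the only benchmark is Olver's own argument. Your sketch reconstructs exactly that argument---the solution-jet/variety dictionary on $X\times U^{(n)}$, infinitesimal invariance of the regular submanifold $\mathcal{S}_\Delta$ for connected groups, the compatibility $\pr^{(n)}\exp(\varepsilon v)=\exp\bigl(\varepsilon\,\pr^{(n)}v\bigr)$, maximal rank turning tangency into Eq.~\eqref{eq:inf_crit_n}, and local solvability making solution jets exhaust $\mathcal{S}_\Delta$---so it is correct and essentially the standard approach (the one minor, inconsequential slip is attributing the descent from an invariant $\mathcal{S}_\Delta$ back to solutions to local solvability, whereas that direction needs only the graph-transform property of prolonged maps; local solvability is required solely for the forward implication).
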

\begin{remark}\label{rem:smoothness}
    The function $\Delta$ from the $n$-jet space $X\times U^{(n)}$ to $\mathbb{R}^\ell$ is considered to be smooth in its arguments~\cite[p.96]{Olver1993}. Moreover, it is also assumed that the infinitesimal generator $v$ and its prolongations act on the space of smooth functions, a fact that we will use later. Finally, we refer to p.20-22 of Ref.~\cite{Olver1993} for the definition of a connected local group of transformations.
\end{remark}
In the next subsection, we use this general result for first-order ordinary differential equations and show that the infinitesimal criterion~\eqref{eq:inf_crit_n} is elegantly written in terms of the Koopman generator.

\subsection{Proof of the Lemma: Infinitesimal criterion of symmetry under Koopman's perspective}
\label{SIsubsec:inf_crit}

To use Thm.~\ref{thm:2.71} in our context, we first adapt Defs. 2.30 and 2.70 of Ref.~\cite{Olver1993} for systems of first-order ODEs.
\begin{definition}
    Let $\Delta_i(t, \bm{u}, \dot{\bm{u}}) = 0$ for $i \in \{1,...,N\}$ be a system of first-order ordinary differential equations. The system is of maximal rank if the $N \times (2N+1)$ Jacobian matrix of $\bm\Delta = (\Delta_1,...,\Delta_N)$,
    \begin{align*}
        J_{\bm \Delta}(t, \bm{u}, \dot{\bm{u}}) = 
        \begin{pmatrix}
            \pd[]{\Delta_1(t, \bm{u}, \dot{\bm{u}})}{t} & \pd[]{\Delta_1(t, \bm{u}, \dot{\bm{u}})}{u_1} & \cdots & \pd[]{\Delta_1(t, \bm{u}, \dot{\bm{u}})}{ u_N} & \pd[]{\Delta_1(t, \bm{u}, \dot{\bm{u}})}{ \dot{u}_1} & \cdots & \pd[]{\Delta_1(t, \bm{u}, \dot{\bm{u}})}{\dot{u}_N}\\
            \vdots & \vdots & \vdots & \vdots & \vdots & \vdots & \vdots\\
            \pd[]{\Delta_N(t, \bm{u}, \dot{\bm{u}})}{t} & \pd[]{\Delta_N(t, \bm{u}, \dot{\bm{u}})}{u_1} & \cdots & \pd[]{\Delta_N(t, \bm{u}, \dot{\bm{u}})}{ u_N} & \pd[]{\Delta_N(t, \bm{u}, \dot{\bm{u}})}{ \dot{u}_1} & \cdots & \pd[]{\Delta_N(t, \bm{u}, \dot{\bm{u}})}{\dot{u}_N}
        \end{pmatrix},
    \end{align*}
    is of rank $N$ for all $(t, \bm{u}, \dot{\bm{u}})$ such that $\bm\Delta(t, \bm{u}, \dot{\bm{u}}) = \bm 0$.
\end{definition}
\begin{definition}
    A system of $N$ first-order differential equations, $\bm \Delta(t, \bm{u}, \dot{\bm{u}}) = 0$ is \textit{locally solvable at the point} $(t_0, \bm{u}_0, \dot{\bm{u}}_0) \in \mathscr{G}_{\bm\Delta} = \{(t, \bm{u}, \dot{\bm{u}})\,|\,\bm\Delta(t, \bm{u}, \dot{\bm{u}}) = 0\}$ if there exists a smooth solution $\bm{u} = \bm{y}(t)$ of the system, defined for $t$ in a neighborhood of $t_0$, which has the prescribed ``initial condition" $\mathrm{pr}^{(1)}\bm{y}(t_0) = (\bm u_0, \dot{\bm{u}}_0)$. The system is \textit{locally solvable} if it is locally solvable at every point of $\mathscr{G}_{\bm\Delta}$. A system is \textit{nondegenerate} if at every point $(t_0, \bm{u}_0, \dot{\bm{u}}_0) \in \mathscr{G}_{\bm\Delta}$ it is both locally solvable and of maximal rank.
\end{definition}
Consider the system of first-order ordinary differential equations (henceforth, called the ``dynamics")
\begin{equation}\label{eq:sysODESI}
    \od[]{y_i}{t} = F_i(t, y_1,...,y_N)\,,\quad i \in\{1,...,N\}\,,
\end{equation}
with initial condition $y_i(t_0) = (\bm{u}_0)_i$ for all $i$, $t_0<t$, and $F_1,...,F_N$ are smooth ($\mathscr{C}^\infty$) in their arguments. Note that we could relax the differentiability requirements in principle, but we use smooth functions for simplicity and to be coherent with the approach and the results in Ref.~\cite[see p.4 and p.96]{Olver1993}. Let us define the smooth functions $\Delta_i$ for all $i\in\{1,...,N\}$ on the 1-jet space related to the $i$-th equation in the dynamics such that
\begin{equation}\label{eq:funODE}
    \Delta_i(t, \bm{u}, \dot{\bm{u}}) = \dot{u}_i - F_i(t, \bm{u})\,,
\end{equation}
where $\bm{u} = (u_1,...,u_N)\in\mathbb{R}^N$ and $\dot{\bm{u}} =  (\dot{u}_1,...,\dot{u}_N)\in\mathbb{R}^N$ are coordinates with $t\in\mathscr{T}$ for the jet space. In such case, the next lemma shows that there is no problem with the system regarding the condition of maximal rank and local solvability.
\begin{lemma}\label{lem:nondegenerate}
    The dynamics in Eq.~\eqref{eq:sysODESI} is nondegenerate.
\end{lemma}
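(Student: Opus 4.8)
The plan is to verify directly the two defining requirements of nondegeneracy at an arbitrary point $(t_0,\bm u_0,\dot{\bm u}_0)\in\mathscr G_{\bm\Delta}$: maximal rank of the Jacobian $J_{\bm\Delta}$ and local solvability of the system. Both follow from the explicit form $\Delta_i(t,\bm u,\dot{\bm u})=\dot u_i-F_i(t,\bm u)$ in Eq.~\eqref{eq:funODE} together with the smoothness of the $F_i$.

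For maximal rank, I would compute the partial derivatives of each $\Delta_i$. Since $F_i$ does not depend on $\dot{\bm u}$, one finds $\partial\Delta_i/\partial\dot u_j=\delta_{ij}$, while $\partial\Delta_i/\partial t=-\partial F_i/\partial t$ and $\partial\Delta_i/\partial u_j=-\partial F_i/\partial u_j$. Hence the last $N$ columns of $J_{\bm\Delta}$ --- those associated with $\dot u_1,\dots,\dot u_N$ --- form the identity matrix $I_N$. This $N\times N$ submatrix is invertible, so $\rank J_{\bm\Delta}=N$ at every point, in particular on $\mathscr G_{\bm\Delta}$; the maximal-rank condition therefore holds unconditionally, independently of the specific structure of the vector field.

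For local solvability, fix $(t_0,\bm u_0,\dot{\bm u}_0)\in\mathscr G_{\bm\Delta}$, so that $\bm\Delta(t_0,\bm u_0,\dot{\bm u}_0)=\bm 0$, i.e.\ $\dot{\bm u}_0=\bm F(t_0,\bm u_0)$. Because the $F_i$ are smooth, they are locally Lipschitz in $\bm u$, so the fundamental existence-uniqueness theorem (as invoked in Lem.~\ref{lem:unicite_kuramoto}) guarantees a unique smooth solution $\bm y(t)$ of $\dot{\bm y}=\bm F(t,\bm y)$ with $\bm y(t_0)=\bm u_0$ on a neighborhood of $t_0$. The crucial observation is that the prescribed first-order datum $\dot{\bm u}_0$ is then automatically matched: evaluating Eq.~\eqref{eq:sysODESI} at $t_0$ gives $\dot{\bm y}(t_0)=\bm F(t_0,\bm u_0)=\dot{\bm u}_0=\pr^{(1)}\bm y(t_0)$. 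Thus the system is locally solvable at every point of $\mathscr G_{\bm\Delta}$, and combining this with the maximal-rank property yields nondegeneracy.

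There is no real obstacle here; the statement is essentially a consistency check. The only point meriting care is recognizing that membership in $\mathscr G_{\bm\Delta}$ forces $\dot{\bm u}_0$ to coincide with $\bm F(t_0,\bm u_0)$, which is exactly what makes the prescribed jet datum compatible with the flow supplied by the existence theorem --- were $\dot{\bm u}_0$ left arbitrary, the matching of the initial derivative could fail, but on the solution variety it never does.
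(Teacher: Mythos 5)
Your proposal is correct and follows essentially the same route as the paper's proof: the maximal-rank condition via the $N\times N$ identity submatrix in the $\dot{u}_j$-columns of $J_{\bm\Delta}$, and local solvability via the fundamental existence-uniqueness theorem for the smooth vector field, noting that membership in $\mathscr{G}_{\bm\Delta}$ forces $\dot{\bm u}_0 = \bm F(t_0,\bm u_0)$ so the prescribed first-order datum is automatically matched by the solution's prolongation. No gaps.
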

\begin{proof}
    The $N \times (2N+1)$ Jacobian matrix of $\bm \Delta = (\Delta_1,...,\Delta_N)$ where $\Delta_i$ is given in Eq.~\eqref{eq:funODE} is 
    \begin{align*}
        J_{\bm\Delta}(t, \bm{u}, \dot{\bm{u}}) = 
        \begin{pmatrix}
            -\pd[]{F_1(t, \bm{u})}{t} & -\pd[]{F_1(t, \bm{u})}{u_1} & ... & -\pd[]{F_1(t, \bm{u})}{u_N} & 1 & ... & 0\\
            \vdots & \vdots & ... & \vdots & \vdots & \ddots & \vdots\\
            -\pd[]{F_N(t, \bm{u})}{t} & -\pd[]{F_N(t, \bm{u})}{u_1} & ... & -\pd[]{F_N(t, \bm{u})}{u_N} & 0 & ... & 1
        \end{pmatrix}
    \end{align*}
     and is of rank $N$ with respect to all $(t, \bm{u}, \dot{\bm{u}})$, because of the $N\times N$ identity submatrix in the last columns of $J_{\bm\Delta}(t, \bm{u}, \dot{\bm{u}})$. The dynamics is thus of maximal rank.

     Moreover, by assumption, the vector field in Eq.~\eqref{eq:sysODESI} is smooth. Thus, there exists a unique smooth solution $\bm{y}(t) = \bm{u}$ starting at $\bm{y}(t_0) = \bm{u}_0$ by Lem.~2.3 of Ref.~\cite{Teschl2012}. The existence of the solution and the form of the differential equations imply that $\dot{\bm{y}}(t_0) = \bm{F}(t_0, \bm{y}(t_0)) = \bm{F}(t_0, \bm{u}_0) =: \dot{\bm{u}}_0$. Such solution exists for any initial condition $(\bm{u}_0, \dot{\bm{u}}_0) = \mathrm{pr}^{(1)}\bm{y}(t_0)\in\mathscr{G}_\Delta$ and the system is therefore locally solvable and altogether, nondegenerate.
\end{proof}
Consider the set $\mathscr{O}$ of time-dependent smooth observables $f:S\subset \mathscr{T}\times U\to \mathbb{R}$. The Koopman operator for a non-autonomous dynamical system described by Eq.~\eqref{eq:sysODESI} is $U_{\varphi}: \mathscr{T}\times \mathscr{T} \times \mathscr{O} \to \mathscr{O}$ with $U_{\varphi_{t_0, t}}[f] := U_{\varphi}(t_0, t, f)$ and it acts on an observable $f$ such that
\begin{align*}
    U_{\varphi_{t_0, t}}[f] = f\circ \varphi_{t_0, t}
\end{align*}
with the properties $U_{\varphi_{t_0, t_0}} = \mathrm{id}$ and $U_{\varphi_{a,t + a}}\circ U_{\varphi_{t_0, a}} = U_{\varphi_{t_0, t+a}}$~\cite{Macesic2018}. There is a family $(\mathcal{U}_{t_0})_{t_0 \in \mathscr{T}}$ of Koopman generators
\begin{align*}
    \mathcal{U}_{t_0}[f](\bm{u}_0) = \left.\od[]{f(t, \bm{y}(t))}{t}\right|_{t = t_0}\,, 
\end{align*}
recalling that $\bm{y}(t_0) = \bm{u}_0$ and that $\mathcal{U}_t$ is also locally defined at some point $\bm{u}\in U$. Performing the total derivative gives the explicit form
\begin{align*}
    \mathcal{U}_{t_0}[f](\bm{u}_0) = \left.\left[\pd[]{f}{t}(t, \bm y(t)) + \bm{F}(t, \bm{y}(t))\cdot\nabla f(t, \bm{y}(t))\right]\right|_{t = t_0} = \pd[]{f}{t}(t_0, \bm u_0) + \bm{F}(t_0, \bm{u}_0)\cdot\nabla f(t_0, \bm{u}_0)\,.
\end{align*}
Therefore, the Koopman generator is
\begin{equation}\label{eq:calK}
    \mathcal{U} = \partial_t + \sum_{j = 1}^N F_j(t, \bm{u})\partial_j\,,
\end{equation}
where $(t, \bm{u}) \in S$, $\partial_t := \partial/\partial t$, $\partial_j := \partial/\partial u_j$ and where we have removed the time index of the generator for simplicity. Under these considerations, the Koopman generator $\mathcal{U}$ and the infinitesimal generator $v$ both act on smooth functions [Remark~\ref{rem:smoothness}] and can be manipulated together. The next lemma is the equivalent of Thm.~\ref{thm:2.71} for systems of first-order ODEs and provides the infinitesimal criterion of symmetry in terms of the Koopman generator.
\begin{lemma}[Lemma of the paper]\label{lem:symkoo}
    A connected local group of transformations $G$ acting on an open subset $S\subset \mathscr{T} \times U$ is a symmetry group of the dynamics in Eq.~\eqref{eq:sysODESI} if and only if 
    \begin{align}\label{eq:symkooSI}
    [\mathcal{U}, v] - \mathcal{U}[\xi(t, \bm{u})]\mathcal{U} = 0
    \end{align}
    for every infinitesimal generator $v = \xi(t, \bm{u})\partial_t + \sum_{j=1}^N\phi_j(t, \bm{u})\partial_{j}$ of $G$.
\end{lemma}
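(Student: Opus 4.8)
The plan is to obtain the operator identity~\eqref{eq:symkooSI} directly from Olver's general prolongation criterion [Theorem~\ref{thm:2.71}], which is applicable here because Lemma~\ref{lem:nondegenerate} has already established that the dynamics~\eqref{eq:sysODESI} is nondegenerate. Writing $\Delta_i(t,\bm u,\dot{\bm u}) = \dot u_i - F_i(t,\bm u)$, that theorem asserts that $G$ is a symmetry group if and only if $\pr^{(1)}v[\Delta_i] = 0$ on the solution manifold $\{\dot u_k = F_k(t,\bm u)\}$, for every generator $v$ and every $i\in\{1,\dots,N\}$. The task thus reduces to rewriting these $N$ scalar prolongation conditions as the single vector-field equation~\eqref{eq:symkooSI}.

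First I would compute the first prolongation. For $v = \xi\,\partial_t + \sum_j \phi_j\,\partial_j$ (with $\partial_j := \partial/\partial u_j$), the standard formula gives $\pr^{(1)}v = v + \sum_{j} \phi_j^{t}\,\partial_{\dot u_j}$, where the prolongation coefficient is $\phi_j^{t} = D_t\phi_j - \dot u_j\,D_t\xi$ and $D_t = \partial_t + \sum_k \dot u_k\,\partial_k + \cdots$ is the total derivative. Applying $\pr^{(1)}v$ to $\Delta_i$ and using $\partial_{\dot u_j}\Delta_i = \delta_{ij}$ yields
\[
\pr^{(1)}v[\Delta_i] = \phi_i^{t} - \xi\,\partial_t F_i - \textstyle\sum_{j}\phi_j\,\partial_j F_i = D_t\phi_i - \dot u_i\,D_t\xi - v[F_i],
\]
with $v[F_i] = \xi\,\partial_t F_i + \sum_j \phi_j\,\partial_j F_i$. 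Restricting to the solution manifold $\dot u_k = F_k(t,\bm u)$, the total derivative acting on functions of $(t,\bm u)$ becomes exactly the Koopman generator, $D_t\phi_i|_{\dot u=F}=\mathcal{U}[\phi_i]$ and $D_t\xi|_{\dot u=F}=\mathcal{U}[\xi]$, while $\dot u_i|_{\dot u=F}=F_i$, so that $\left.\pr^{(1)}v[\Delta_i]\right|_{\dot u=F} = \mathcal{U}[\phi_i] - F_i\,\mathcal{U}[\xi] - v[F_i]$.

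Finally I would expand the operator $[\mathcal{U},v] - \mathcal{U}[\xi]\,\mathcal{U}$ in the coordinates $(t,u_1,\dots,u_N)$ via the coordinate formula for the Lie bracket. Since the $\partial_t$-coefficient of $\mathcal{U}$ is the constant $1$, its $\partial_t$-component is $\mathcal{U}[\xi]-\mathcal{U}[\xi]=0$, and its $\partial_i$-component is $\mathcal{U}[\phi_i]-v[F_i]-F_i\,\mathcal{U}[\xi]$, which coincides with the restricted prolongation expression above. Because the $\partial_t$-component vanishes automatically, the operator equation~\eqref{eq:symkooSI} holds if and only if all $N$ of its $\partial_i$-components vanish, i.e.\ if and only if $\pr^{(1)}v[\Delta_i]=0$ on the solution manifold for every $i$; together with Theorem~\ref{thm:2.71} this gives the claimed equivalence in both directions at once.

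I expect the main obstacle to be careful bookkeeping rather than any conceptual difficulty: fixing the sign and structure of $\phi_i^{t}$, and checking that $D_t$ reduces precisely to $\mathcal{U}$ on the constraint surface. The one subtle point worth isolating is that the $\partial_t$-component of $[\mathcal{U},v]-\mathcal{U}[\xi]\mathcal{U}$ is identically zero; this is exactly what makes a single operator identity equivalent to the $N$ scalar prolongation conditions, with no leftover constraint, and it is the reason the seemingly extra term $-\,\mathcal{U}[\xi]\,\mathcal{U}$ must appear.
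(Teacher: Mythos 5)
Your proposal is correct and follows essentially the same route as the paper: both invoke Olver's Theorem~2.71 after establishing nondegeneracy, compute the first prolongation $\pr^{(1)}v$ with $\phi^t_j = D_t\phi_j - \dot{u}_j D_t\xi$, restrict to the solution manifold where $D_t$ reduces to $\mathcal{U}$, and identify the $N$ scalar conditions with the $\partial_i$-components of $[\mathcal{U}, v] - \mathcal{U}[\xi]\,\mathcal{U}$. Your bookkeeping via the coordinate formula for the Lie bracket, together with the explicit observation that the $\partial_t$-component vanishes identically (which is exactly why the correction term $-\mathcal{U}[\xi]\,\mathcal{U}$ is needed and why no extra constraint survives), is a slightly tidier organization of the same computation the paper carries out by bilinear expansion of the commutator and the substitution $\partial_t f = \mathcal{U}[f] - \sum_i F_i \partial_i f$.
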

\begin{proof}
To begin with, Lem.~\ref{lem:nondegenerate} ensures that the dynamics~\eqref{eq:sysODESI} is nondegenerate, which guarantees that Thm.~\ref{thm:2.71} can be applied. Now, since the dynamics is a first-order system of ODEs, only the first prolongation of the infinitesimal generator $v$ is needed. By the general prolongation formula in Ref.~\cite[Theorem 2.36]{Olver1993}, the first prolongation of $v$ is
\begin{equation*}
    \mathrm{pr}^{(1)}v =  v + \sum_{j=1}^N\phi_j^t(t, \bm{u}, \dot{\bm{u}}) \pd{}{\dot{u}_j} \,,
\end{equation*}
with 
\begin{equation}\label{eq:phitj}
    \phi^t_j = D_t(\phi_j - \xi \dot{u}_j)+ \xi \ddot{u}_j = \dot{\phi}_j + \sum_{k=1}^N(\partial_k \phi_j - \dot{\xi}\delta_{jk})F_k - \sum_{k=1}^N\partial_k\xi \,F_kF_j\,,
\end{equation}
where the superscript $t$ in $\phi^t_j$ is a label used to denote the component of the prolongation associated with $\partial/\partial\dot{u}_j$ and $D_t=d/dt$ is the total derivative. The dependencies on $t, \bm{u}, \dot{\bm{u}}$ are omitted in Eq.~\eqref{eq:phitj} to simplify the notation, as will be done from now on. The infinitesimal condition for $G$ to be a symmetry group is then
\begin{align*}
    \mathrm{pr}^{(1)}v[\Delta_i] &= \xi \dot{\Delta}_i + \sum_{j=1}^N\phi_j \partial_j\Delta_i + \sum_{j=1}^N\phi^t_j\pd{\Delta_i}{\dot{u}_j} = 0\,,
\end{align*}
for all $i \in \{1,...,N\}$. Inserting~Eq.~\eqref{eq:phitj}, performing the derivatives and rearranging leads to the infinitesimal criterion
\begin{equation}\label{eq:cond_inf_general}
    \dot{\phi}_i + \sum_{j=1}^N(\partial_j \phi_i - \dot{\xi}\delta_{ij})F_j - \sum_{j=1}^N\partial_j\xi F_jF_i = \xi \dot{F}_i + \sum_{j=1}^N \phi_j \partial_jF_i\,,\quad \forall i\in\{1,...,N\}\,.
\end{equation}
Applying Thm.~\ref{thm:2.71} to our particular case, a connected local group of transformations $G$ is a symmetry group if and only if Eqs.~\eqref{eq:cond_inf_general} are satisfied. It now remains to show that Eqs.~\eqref{eq:cond_inf_general} are equivalent to condition~\eqref{eq:symkooSI}.  The commutator of $\mathcal{U}$ and $v$ is
\begin{align*}
    [\mathcal{U}, v] &= [\partial_t + \sum_j F_j\partial_j, \xi\partial_t + \sum_i \phi_i \partial_i]
    = [\partial_t, \xi \partial_t] + \sum_i[\partial_t, \phi_i\partial_i] + \sum_j[F_j\partial_j,\xi\partial_t] + \sum_{i,j}[F_j\partial_j, \phi_i\partial_i]\,,
\end{align*}
where we have used the bilinearity of the commutator. More explicitly, the last equation is
\begin{align*}
    [\mathcal{U}, v] &= (\partial_t\xi) \partial_t + \sum_i (\partial_t\phi_i) \partial_i + \sum_j(F_j(\partial_j\xi)\partial_t - \xi(\partial_tF_j)\partial_j) + \sum_{i,j}(F_j(\partial_j\phi_i)\partial_i - \phi_i(\partial_iF_j)\partial_j)\,,
\end{align*}
henceforth considering that all sums run from 1 to $N$. Given some observable $f\in\mathscr{O}$, the expression can be rearranged to yield
\begin{align*}
    [\mathcal{U}, v]f &= \mathcal{U}[\xi]\partial_tf + \sum_i (\dot{\phi}_i + \sum_j\partial_j\phi_i \,F_j - \xi\dot{F}_i - \sum_j\phi_j \partial_jF_i)\partial_if\,.
\end{align*}
From the definition of the Koopman generator~\eqref{eq:calK}, the relation $\partial_t f = \mathcal{U}[f] - \sum_i F_i\partial_if$ holds and implies
\begin{align*}
    [\mathcal{U}, v]f = \mathcal{U}[\xi]\mathcal{U}[f] + \sum_{i}(\dot{\phi}_i - \partial_t\xi\,F_i + \sum_j\partial_j\phi_i F_j - \sum_j F_iF_j\partial_j\xi - \xi\dot{F}_i -\sum_j\phi_j\partial_jF_i)\partial_if\,.
\end{align*}
Writing $\partial_t\xi\,F_i$ as $\sum_j (\dot{\xi}\delta_{ij})F_j$ explicitly gives a sum over the $N$ equations of the infinitesimal symmetry condition:
\begin{align*}
    [\mathcal{U}, v] = \mathcal{U}[\xi]\mathcal{U} + \sum_{i}(\dot{\phi}_i + \sum_{j}(\partial_j \phi_i - \dot{\xi}\delta_{ij})F_j - \sum_{j}\partial_j\xi F_jF_i  - \xi\dot{F}_i -\sum_j\phi_j\partial_jF_i)\partial_i\,.
\end{align*}
On the one hand, if the $N$ infinitesimal conditions of symmetry in Eqs.~\eqref{eq:cond_inf_general} are satisfied, then
\begin{align*}
    [\mathcal{U}, v] - \mathcal{U}[\xi]\mathcal{U} = 0\,.
\end{align*}
On the other hand, if $[\mathcal{U}, v] - \mathcal{U}[\xi]\mathcal{U} = 0$, then
\begin{align*}
    \sum_{i}(\dot{\phi}_i + \sum_{j}(\partial_j \phi_i - \dot{\xi}\delta_{ij})F_j - \sum_{j}\partial_j\xi F_jF_i  - \xi\dot{F}_i -\sum_j\phi_j\partial_jF_i)\partial_i = 0\,.
\end{align*}
But each term of the sum over $i$ is independent, meaning that Eqs.~\eqref{eq:cond_inf_general} are satisfied and thus completing the proof.
\end{proof}

\subsection{Basic symmetries of the Kuramoto model}

The Koopman generator of the Kuramoto dynamics on functions of time and $\bm z\in\mathbb{T}^N$ is
\begin{align*}
    \mathcal{U} &= \partial_t + \sum_{j,k=1}^N(A_{jk}z_k - \bar{A}_{jk}\bar{z}_kz_j^{2})\partial_j = \partial_t + \mathcal{K}\,.
\end{align*}

The most general infinitesimal generator of (potential) symmetries is
\begin{align}\label{eq:generateur_general2}
    \mathcal{S} = \xi(t,z)\partial_t + \sum_{\ell=1}^N\phi_\ell(t, z)\partial_\ell\,.
\end{align}
Adapting Lem.~\ref{lem:symkoo} for coordinates of time and the $N$-torus, it is necessary and sufficient that the generator $\mathcal{S}$ satisfies
\begin{align}
    [\mathcal{U}, \mathcal{S}] - \mathcal{U}[\xi(t,z)]\,\mathcal{U} = 0\,.
\end{align}
It is easy to verify that
\begin{align*}
    \mathcal{S}_1 = iL_0\,,\qquad
    \mathcal{S}_2 = \mathcal{K}\,,\qquad
    \mathcal{S}_3 = f(t)\mathcal{U}
\end{align*}
are Lie symmetries of the Kuramoto dynamics. Note that the time translation generator $\partial_t$ is obtained with $\mathcal{U} - \mathcal{K}$ and is a symmetry generator, as expected of any autonomous dynamical system. In fact, denoting $\mathcal{S} = \mathcal{S}_\xi + \mathcal{S}_\phi$ with $\mathcal{S}_\xi = \xi(t, \bm{z})\partial_t$ and $\mathcal{S}_\phi = \sum_{\ell=1}^N\phi_\ell(t, \bm{z})\partial_{\ell}$, the infinitesimal criterion for the Kuramoto model becomes $[\mathcal{U}, \mathcal{S}_\phi] - \mathcal{U}[\xi(t, \bm{z})]\,\mathcal{K} = 0$, that is, the form of the condition for autonomous dynamical systems. 

On the one hand, if $\phi_1(t, \bm{z}),...,\phi_N(t, \bm{z})$ are zero, the condition becomes
$\mathcal{U}[\xi(t,\bm{z})] = 0$, meaning that $\xi(t,\bm{z})$ must be a constant of motion if $\xi(t,z)\partial_t$ is to be a generator of symmetry (also highlighted in Ref.~\cite{Schwarz1984}). Yet, such symmetries simply act as time translations. Indeed, considering that the conditions of Thm.~\ref{thm:cte_mvt_kuramoto_graphe} are satisfied for some quadruples of vertices, one can set $\xi(t,z)$ to be any of the conserved functionally independent cross-ratios, leading to the infinitesimal symmetry generators $\mathcal{S}_{abcd} = c_{abcd}(\bm z)\partial_t$ for all $a,b,c,d$ such that $\mathcal{U}[c_{abcd}(\bm z)] = \mathcal{K}[c_{abcd}(\bm z)] = 0$. The action of the symmetry group on the coordinates is thus $e^{\epsilon \mathcal{S}_{abcd}}t = t + \epsilon\,c_{abcd}(\bm z)$ and $e^{\epsilon \mathcal{S}_{abcd}}z_j = z_j$. Considering that $a,b,c,d$ belong to some partially integrable part $P$, using $z_j(t) = M_t(w_j)$ (notation of Ref.~\cite{Marvel2009}) and the fact that cross-ratios are invariant under M\"{o}bius transformations $M_t$ leads to
\begin{align*}
    c_{abcd}(\bm{z}(t)) = (M_{t}(w_a), M_{t}(w_b);M_{t}(w_c),M_{t}(w_d)) = (w_a, w_b\,;\,w_c, w_d) = c_{abcd}(\bm{w})
\end{align*}
and the action of the related symmetry group on a solution $\bm{z}(t)$ of the Kuramoto model is such that
\begin{align*}
    \tilde{\bm{z}}(t) := e^{\epsilon \mathcal{S}_{abcd}}\bm{z}(t) = \bm{z}(t + \epsilon\,c_{abcd}(\bm{w}))\,,
\end{align*}
where $\tilde{\bm{z}}(t)$ is an analogous time-translated solution of the Kuramoto model.

On the other hand, if $\xi(t, \bm{z}) = 0$, then the infinitesimal criterion~\eqref{eq:symkooSI} is $[\mathcal{U}, \mathcal{S}] = 0$. For $\mathcal{S} = \psi(t, \bm{z})\tilde{\mathcal{S}}$ with some smooth function $\psi$ and $\tilde{\mathcal{S}} = \sum_{j=1}^N \tilde{\phi}_j(t, \bm{z})\partial_j$, $[\mathcal{U}, \mathcal{S}] = \mathcal{U}[\psi(t,\bm{z})]\tilde{\mathcal{S}} + \psi(t, \bm{z})[\mathcal{U}, \tilde{\mathcal{S}}]$. Therefore, if $\psi(\bm{z})$ is a constant of motion and $\tilde{\mathcal{S}}$ is a symmetry generator, then $\psi(\bm{z})\tilde{\mathcal{S}}$ is also a symmetry generator, but its action remains the one of $\tilde{\mathcal{S}}$ or is not an automorphism of the $N$-torus, making it of little use.

We will need more than naive inspection to uncover additional symmetries. Therefore, the next two subsections are dedicated to deriving the determining equations and developing a method for obtaining particular solutions.

\subsection{General determining equations for the Kuramoto model}
\label{SIsubsec:general_det_eq}
To obtain the determining equations, it is useful to introduce basic commutation relations. This is the purpose of the next lemma.
\begin{lemma}\label{lem:commutations_kuramoto}
Consider the elements of the vectorial Euler differential operators defined in Eq.~\eqref{eq:elements_euler} and let $\mathcal{K}$ be defined by Eq.~\eqref{eq:euler_calK}. Then, the following commutation relations hold for all $j,k\in \{1,...,N\}$ and $m, n \in \mathbb{Z}$:
\begin{equation}\label{eq:commutation_ell_z}
    [\,\ell_{j}^m, \,\ell_{k}^n\,]=\delta_{jk}(n-m)\,\ell_j^{m+n}\,,\qquad     [\,z_{j}^m, \,z_{k}^n\,]=0\,,\qquad [\,\ell_{j}^m, \,z_{k}^n\,]=\delta_{jk} \,n \,z_{k}^{m+n}
\end{equation}
and
\begin{align}
    [\,\mathcal{K},\,z_j^n\,] &=  nz_j^n\sum_{k=1}^N \left(A_{jk} z_kz_j^{-1} - \bar{A}_{jk} z_k^{-1}z_j^{1}\right) = 2inz_j^n\Imag\left(\sum_{k=1}^NA_{jk}z_kz_j^{-1}\right)\,,\label{eq:commutation_koopman_kuramoto_1}
    \\ [\,\mathcal{K},\,\ell_j^n\,] &= (n+1)\Big(\sum_{k=1}^N A_{jk}z_k\Big)\ell_j^{n-1} - (n-1)\Big(\sum_{k=1}^N \bar{A}_{jk}z_k^{-1}\Big)\ell_j^{n+1} - z_j^{n+1}\Big(\sum_{k=1}^N A_{kj}\ell_k^{-1}\Big) - z_j^{n-1}\Big(\sum_{k=1}^N \bar{A}_{kj}\ell_k^{1} \Big)\,.
    \label{eq:commutation_koopman_kuramoto_2}
\end{align}
\end{lemma}
\begin{proof}
    The commutation relations in Eq.~\eqref{eq:commutation_ell_z} are obtained easily from the definition in Eq.~\eqref{eq:elements_euler}. Then,
    \begin{align*}
    [\,\mathcal{K},\,z_j^n\,] &= \sum_{q,k}A_{qk}[z_k\ell_q^{-1}, z_j^n] - \sum_{q,k}\bar{A}_{qk}[z_k^{-1}\ell^1_q, z_j^n]\\
    [\,\mathcal{K},\,\ell_j^n\,] &= \sum_{q,k}A_{qk}[z_k\ell_q^{-1}, \ell_j^n] - \sum_{q,k}\bar{A}_{qk}[z_k^{-1}\ell^1_q, \ell_j^n]\,.
    \end{align*}
    Using the linearity of the commutator, the general formula $[AB,C]=A[B,C]+[A,C]B$ and Eq.~\eqref{eq:commutation_ell_z} readily provides 
    \begin{align*}
    \begin{aligned}
        [z_k\ell_q^{-1}, z_j^n] &= \delta_{qj}nz_kz_j^{n-1}\\
        [z_k^{-1}\ell_q^{1}, z_j^n] &= \delta_{qj}nz_k^{-1}z_j^{n+1}
    \end{aligned}
        \qquad
    \begin{aligned}
        [z_k \ell_q^{-1}, \ell_j^n] &= \delta_{qj}(n+1)z_k\ell_q^{n-1} - \delta_{jk} z_k^{n+1}\ell_q^{-1}\\
        [z_k^{-1}\ell_q^{1}, \ell_j^n] &= \delta_{qj}(n-1)z_k^{-1}\ell_q^{n+1} + \delta_{jk} z_k^{n-1}\ell_q^{1}
    \end{aligned}
    \end{align*}
    and their substitution yields the desired results
    {\footnotesize\begin{align*}
    [\,\mathcal{K},\,z_j^n\,] &= \sum_{q,k}A_{qk}\delta_{qj}nz_kz_j^{n-1} - \sum_{q,k}\bar{A}_{qk}\delta_{qj}nz_k^{-1}z_j^{n+1} = n\sum_{k}A_{jk}z_kz_j^{n-1} - n\sum_{k}\bar{A}_{jk}z_k^{-1}z_j^{n+1}\\
    [\,\mathcal{K},\,\ell_j^n\,] &= \sum_{q,k}A_{qk}(\delta_{qj}(n+1)z_k\ell_q^{n-1} - \delta_{jk} z_k^{n+1}\ell_q^{-1}) - \sum_{q,k}\bar{A}_{qk}(\delta_{qj}(n-1)z_k^{-1}\ell_q^{n+1} + \delta_{jk} z_k^{n-1}\ell_q^{1})\\
    &= (n+1)\Big(\sum_{k}A_{jk}z_k\Big)\ell_j^{n-1} - z_j^{n+1}\Big(\sum_{q}A_{qj}\ell_q^{-1}\Big) - (n-1)\Big(\sum_{k}\bar{A}_{jk}z_k^{-1}\Big)\ell_j^{n+1} - z_j^{n-1}\Big(\sum_{q}\bar{A}_{qj}\ell_q^{1} \Big)\\
    &= (n+1)\Big(\sum_{k}A_{jk}z_k\Big)\ell_j^{n-1} - (n-1)\Big(\sum_{k}\bar{A}_{jk}z_k^{-1}\Big)\ell_j^{n+1} - z_j^{n+1}\Big(\sum_{k}A_{kj}\ell_k^{-1}\Big) - z_j^{n-1}\Big(\sum_{k}\bar{A}_{kj}\ell_k^{1} \Big)\,.
    \end{align*} }
\end{proof}
As mentioned in the main text, it is also useful to simplify calculations to restrict the general symmetry generator $\mathcal{S}$ to one where $\xi$ and $\phi_1,...,\phi_N$ are periodic functions, allowing us to expand them in Fourier series:
\begin{align*}
    \xi(t, z) = \sum_{\bm{p}\in\mathbb{Z}^N} \varepsilon_{\bm{p}}(t)z^{\bm{p}}\,,\qquad \phi_\ell(t, z) = \sum_{\bm{p}\in\mathbb{Z}^N} \varphi_{\ell \bm{p}}(t)z^{\bm{p}}\,, \qquad \text{with} \quad z^{\bm{p}} = \prod_{j=1}^N z_j^{p_j}\,.
\end{align*}
This assumption and some notation simplifications lead to
\begin{align*}
    [\mathcal{U}, \mathcal{S}] - \mathcal{U}[\xi(t,z)]\,\mathcal{U} = \sum_{\bm{p}} \Big[\mathcal{U},\varepsilon_{\bm{p}}(t)z^{\bm{p}}\partial_t\Big] + \sum_{\ell,\bm{p}}  \Big[\mathcal{U}, \varphi_{\ell \bm{p}}(t)z^{\bm{p}}\partial_\ell\Big] - \Big(\sum_{\bm{p}}  \mathcal{U}\Big[\varepsilon_{\bm{p}}(t)z^{\bm{p}}\Big]\Big)\,\mathcal{U}\,.
\end{align*}
After some manipulations using Lem.~\ref{lem:commutations_kuramoto} and simplifications, one finds
\begin{align*}
[\mathcal{U}, \mathcal{S}] - \mathcal{U}[\xi(t,z)]\,\mathcal{U} = &\sum_{\ell, \bm{p}}\dot\varphi_{\ell \bm{p}}(t)z^{\bm{p}}\partial_\ell + \sum_{\ell,\bm{p},j,k}\varphi_{\ell \bm{p}}(t)(A_{jk}[z_k\partial_j, z^{\bm{p}}\partial_\ell] - \bar{A}_{jk}[z_k^{-1}z_j^2\partial_j, z^{\bm{p}}\partial_\ell]) \\&- \sum_{\bm{p}}\left(\dot{\varepsilon}_{\bm{p}}(t) + \varepsilon_{\bm{p}}(t)\sum_{r,s}p_r(A_{rs}z_r^{-1}z_s - \bar{A}_{rs}z_rz_s^{-1})\right)z^{\bm{p}}\sum_{j,k}(A_{jk}z_k - \bar{A}_{jk}\bar{z}_kz_j^2)\partial_j\,.
\end{align*}
The commutation relations are explicitly given by
\begin{align*}
    [z_k\partial_j, z^{\bm{p}}\partial_\ell] &= p_j z^{\bm{p} - \bm e_j + \bm e_k}\partial_\ell - \delta_{k\ell}z^{\bm{p}}\partial_j\\
    [z_k^{-1}z_j^2\partial_j, z^{\bm{p}}\partial_\ell] &= p_j z^{\bm{p} +\bm e_j - \bm e_k}\partial_\ell + \delta_{k\ell}z^{\bm{p} +2\bm{e}_j - 2\bm{e}_k}\partial_j - 2z^{\bm{p} +\bm{e}_j - \bm{e}_k}\delta_{j\ell}\partial_j\,.
\end{align*}
Substituting these commutation relations into the infinitesimal condition yields, after simplifications,
{\footnotesize\begin{align*}
    [\mathcal{U}, \mathcal{S}] - \mathcal{U}[\xi(t,z)]\,\mathcal{U} = &\sum_{\ell, \bm{p}}\dot\varphi_{\ell \bm{p}}(t)z^{\bm{p}}\partial_\ell + \sum_{\ell,\bm{p},j,k}\varphi_{\ell \bm{p}}(t)
    A_{jk}p_jz^{\bm{p} - \bm{e}_j + \bm{e}_k}\partial_\ell - \sum_{\ell,\bm{p},j}\varphi_{\ell \bm{p}}(t)A_{j\ell}z^{\bm{p}}\partial_j\\ &- \sum_{\ell,\bm{p}, j,k}\varphi_{\ell \bm{p}}(t)\bar{A}_{jk}p_j z^{\bm{p} + \bm{e}_j - \bm{e}_k}\partial_\ell - \sum_{\ell,\bm{p}, j}\varphi_{\ell \bm{p}}(t)\bar{A}_{j\ell}z^{\bm{p} + 2\bm{e}_j - 2\bm{e}_\ell}\partial_j  + 2\sum_{\ell,\bm{p}, k}\varphi_{\ell \bm{p}}(t)\bar{A}_{\ell k}z^{\bm{p} + \bm{e}_\ell - \bm{e}_k}\partial_\ell 
    \\&- \sum_{\bm{p}}\left(\dot{\varepsilon}_{\bm{p}}(t) + \varepsilon_{\bm{p}}(t)\sum_{r,s}p_r (A_{rs}z_r^{-1}z_s - \bar{A}_{rs}z_rz_s^{-1})\right)z^{\bm{p}}\sum_{j,k}(A_{jk}z_k - \bar{A}_{jk}\bar{z}_kz_j^2)\partial_j\,.
\end{align*}}
From there, let's simplify again the equations to extract the determining equations. First,
{\footnotesize \begin{align*}
    [\mathcal{U}, \mathcal{S}] - \mathcal{U}[\xi(t,z)]\,\mathcal{U} = &\sum_{\ell, \bm{p}}\dot\varphi_{\ell \bm{p}}(t)z^{\bm{p}}\partial_\ell + \sum_{\ell,\bm{p},j,k}\varphi_{\ell \bm{p}}(t)
    A_{jk}p_jz^{\bm{p} - \bm{e}_j + \bm{e}_k}\partial_\ell - \sum_{\ell,\bm{p},j}\varphi_{\ell \bm{p}}(t)A_{j\ell}z^{\bm{p}}\partial_j\\ &- \sum_{\ell,\bm{p}, j,k}\varphi_{\ell \bm{p}}(t)\bar{A}_{jk}p_j z^{\bm{p} + \bm{e}_j - \bm{e}_k}\partial_\ell - \sum_{\ell,\bm{p}, j}\varphi_{\ell \bm{p}}(t)\bar{A}_{j\ell}z^{\bm{p} + 2\bm{e}_j - 2\bm{e}_\ell}\partial_j  + 2\sum_{\ell,\bm{p}, k}\varphi_{\ell \bm{p}}(t)\bar{A}_{\ell k}z^{\bm{p} + \bm{e}_\ell - \bm{e}_k}\partial_\ell 
    \\&-\sum_{\bm{p},j,k} \dot{\varepsilon}_{\bm{p}}(t)A_{jk}z^{\bm{p} +\bm{e}_k}\partial_j + \sum_{\bm{p},j,k} \dot{\varepsilon}_{\bm{p}}(t)\bar{A}_{jk}z^{\bm{p} + 2\bm{e}_j - \bm{e}_k}\partial_j
    \\& -\sum_{\bm{p},j,k,r,s} p_r \varepsilon_{\bm{p}}(t)A_{rs}A_{jk}z^{\bm{p} - \bm{e}_r + \bm{e}_s + \bm{e}_k}\partial_j + \sum_{\bm{p},j,k,r,s} p_r \varepsilon_{\bm{p}}(t)A_{rs}\bar{A}_{jk}z^{\bm{p} - \bm{e}_r + \bm{e}_s - \bm{e}_k + 2\bm{e}_j}\partial_j
    \\& +\sum_{\bm{p},j,k,r,s} p_r \varepsilon_{\bm{p}}(t)\bar{A}_{rs}A_{jk}z^{\bm{p} + \bm{e}_r - \bm{e}_s + \bm{e}_k}\partial_j + \sum_{\bm{p},j,k,r,s} p_r \varepsilon_{\bm{p}}(t)\bar{A}_{rs}\bar{A}_{jk}z^{\bm{p} + \bm{e}_r - \bm{e}_s - \bm{e}_k + 2\bm{e}_j}\partial_j\,.
\end{align*}}
Making the change of indices to yield $z^{\bm{p}}$ in every term leads to
\begin{align*}
    &[\mathcal{U}, \mathcal{S}] - \mathcal{U}[\xi(t,z)]\,\mathcal{U} = \sum_{\ell, \bm{p}}\dot\varphi_{\ell \bm{p}}(t)z^{\bm{p}}\partial_\ell + \sum_{\ell,\bm{p},j,k}(p_j + 1 - \delta_{jk})\varphi_{\ell \bm{p} + \bm{e}_j - \bm{e}_k}(t)
    A_{jk}z^{\bm{p}}\partial_\ell - \sum_{\ell,\bm{p},j}\varphi_{\ell \bm{p}}(t)A_{j\ell}z^{\bm{p}}\partial_j\\ &- \sum_{\ell,\bm{p}, j, k}(p_j - 1 + \delta_{jk})\varphi_{\ell \bm{p}-\bm{e}_j+\bm{e}_k}(t)\bar{A}_{jk}z^{\bm{p}}\partial_\ell - \sum_{\ell,\bm{p}, j}\varphi_{\ell \bm{p} - 2\bm{e}_j + 2\bm{e}_\ell}(t)\bar{A}_{j\ell}z^{\bm{p}}\partial_j  + 2\sum_{\ell,\bm{p}, k}\varphi_{\ell \bm{p} - \bm{e}_\ell + \bm{e}_k}(t)\bar{A}_{\ell k}z^{\bm{p}}\partial_\ell 
    \\&-\sum_{\bm{p},j,k} \dot{\varepsilon}_{\bm{p} -\bm{e}_k}(t)A_{jk}z^{\bm{p}}\partial_j + \sum_{\bm{p},j,k} \dot{\varepsilon}_{\bm{p} - 2\bm{e}_j + \bm{e}_k}(t)\bar{A}_{jk}z^{\bm{p}}\partial_j
     -\sum_{\bm{p},j,k,r,s} (p_r + 1 - \delta_{rs} - \delta_{rk}) \varepsilon_{\bm{p} + \bm{e}_r - \bm{e}_s - \bm{e}_k}(t)A_{rs}A_{jk}z^{\bm{p}}\partial_j 
    \\&+ \sum_{\bm{p},j,k,r,s} (p_r  + 1 - \delta_{rs} + \delta_{rk} - 2\delta_{rj}) \varepsilon_{\bm{p}  + \bm{e}_r - \bm{e}_s + \bm{e}_k - 2\bm{e}_j}(t)A_{rs}\bar{A}_{jk}z^{\bm{p}}\partial_j
    \\& +\sum_{\bm{p},j,k,r,s} (p_r  - 1 + \delta_{rs} - \delta_{rk})\varepsilon_{\bm{p} - \bm{e}_r + \bm{e}_s - \bm{e}_k}(t)\bar{A}_{rs}A_{jk}z^{\bm{p}}\partial_j 
    \\&+ \sum_{\bm{p},j,k,r,s} (p_r - 1 + \delta_{rs} + \delta_{rk} - 2\delta_{rj})\varepsilon_{\bm{p} - \bm{e}_r + \bm{e}_s + \bm{e}_k - 2\bm{e}_j}(t)\bar{A}_{rs}\bar{A}_{jk}z^{\bm{p}}\partial_j\,.
\end{align*}
The change of indices from $j$ to $\ell$ to get a factor $\partial_\ell$ in every sum implies 
\begin{align*}
    &[\mathcal{U}, \mathcal{S}] - \mathcal{U}[\xi(t,z)]\,\mathcal{U} = \sum_{\ell, \bm{p}}\dot\varphi_{\ell \bm{p}}(t)z^{\bm{p}}\partial_\ell + \sum_{\ell,\bm{p},j,k}(p_j + 1 - \delta_{jk})\varphi_{\ell \bm{p} + \bm{e}_j - \bm{e}_k}(t)
    A_{jk}z^{\bm{p}}\partial_\ell - \sum_{j,\bm{p},\ell}\varphi_{j \bm{p}}(t)A_{\ell j}z^{\bm{p}}\partial_\ell\\ &- \sum_{\ell,\bm{p}, j, k}(p_j - 1 + \delta_{jk})\varphi_{\ell \bm{p}-\bm{e}_j+\bm{e}_k}(t)\bar{A}_{jk}z^{\bm{p}}\partial_\ell - \sum_{j,\bm{p}, \ell}\varphi_{j \bm{p} - 2\bm{e}_\ell + 2\bm{e}_j}(t)\bar{A}_{\ell j}z^{\bm{p}}\partial_\ell  + 2\sum_{\ell,\bm{p}, k}\varphi_{\ell \bm{p} - \bm{e}_\ell + \bm{e}_k}(t)\bar{A}_{\ell k}z^{\bm{p}}\partial_\ell 
    \\&-\sum_{\bm{p},\ell,k} \dot{\varepsilon}_{\bm{p} -\bm{e}_k}(t)A_{\ell k}z^{\bm{p}}\partial_\ell + \sum_{\bm{p},\ell,k} \dot{\varepsilon}_{\bm{p} - 2\bm{e}_\ell + \bm{e}_k}(t)\bar{A}_{\ell k}z^{\bm{p}}\partial_\ell
     -\sum_{\bm{p},\ell,k,r,s} (p_r + 1 - \delta_{rs} - \delta_{rk}) \varepsilon_{\bm{p} + \bm{e}_r - \bm{e}_s - \bm{e}_k}(t)A_{rs}A_{\ell k}z^{\bm{p}}\partial_\ell 
    \\&+ \sum_{\bm{p},\ell,k,r,s} (p_r + 1 - \delta_{rs} + \delta_{rk} - 2\delta_{r\ell}) \varepsilon_{\bm{p}  + \bm{e}_r - \bm{e}_s + \bm{e}_k - 2\bm{e}_\ell}(t)A_{rs}\bar{A}_{\ell k}z^{\bm{p}}\partial_\ell
    \\& +\sum_{\bm{p},\ell,k,r,s} (p_r - 1 + \delta_{rs} - \delta_{rk})\varepsilon_{\bm{p} - \bm{e}_r + \bm{e}_s - \bm{e}_k}(t)\bar{A}_{rs}A_{\ell k}z^{\bm{p}}\partial_\ell 
    \\&+ \sum_{\bm{p},\ell,k,r,s} (p_r - 1 + \delta_{rs} + \delta_{rk} - 2\delta_{r\ell})\varepsilon_{\bm{p} - \bm{e}_r + \bm{e}_s + \bm{e}_k - 2\bm{e}_\ell}(t)\bar{A}_{rs}\bar{A}_{\ell k}z^{\bm{p}}\partial_\ell\,.
\end{align*}
The infinitesimal condition of symmetry then yields
\begin{align*}
    0 &= \dot\varphi_{\ell \bm{p}}(t) + \sum_{j,k}(p_j + 1 - \delta_{jk})A_{jk}\varphi_{\ell \bm{p} + \bm{e}_j - \bm{e}_k}(t)
     - \sum_{j}A_{\ell j}\varphi_{j \bm{p}}(t)- \sum_{j, k}(p_j - 1 + \delta_{jk})\bar{A}_{jk}\varphi_{\ell \bm{p}-\bm{e}_j+\bm{e}_k}(t) \\&- \sum_{j}\bar{A}_{\ell j}\varphi_{j \bm{p} - 2\bm{e}_\ell + 2\bm{e}_j}(t)  + 2\sum_{k}\bar{A}_{\ell k} \varphi_{\ell \bm{p} - \bm{e}_\ell + \bm{e}_k}(t)
    -\sum_{k} A_{\ell k}\dot{\varepsilon}_{\bm{p} -\bm{e}_k}(t) + \sum_{k} \bar{A}_{\ell k}\dot{\varepsilon}_{\bm{p} - 2\bm{e}_\ell + \bm{e}_k}(t)
    \\& -\sum_{k,r,s} (p_r + 1 - \delta_{rs} - \delta_{rk})A_{rs}A_{\ell k} \varepsilon_{\bm{p} + \bm{e}_r - \bm{e}_s - \bm{e}_k}(t) 
    + \sum_{k,r,s} (p_r  + 1 - \delta_{rs} + \delta_{rk} - 2\delta_{r\ell})A_{rs}\bar{A}_{\ell k} \varepsilon_{\bm{p}  + \bm{e}_r - \bm{e}_s + \bm{e}_k - 2\bm{e}_\ell}(t)
    \\& +\sum_{k,r,s} (p_r  - 1 + \delta_{rs} - \delta_{rk})\bar{A}_{rs}A_{\ell k}\varepsilon_{\bm{p} - \bm{e}_r + \bm{e}_s - \bm{e}_k}(t) 
    + \sum_{k,r,s} (p_r - 1 + \delta_{rs} + \delta_{rk} - 2\delta_{r\ell})\bar{A}_{rs}\bar{A}_{\ell k}\varepsilon_{\bm{p} - \bm{e}_r + \bm{e}_s + \bm{e}_k - 2\bm{e}_\ell}(t)
\end{align*}
for all $\ell\in\{1,...,N\}$, $\bm{p}\in\mathbb{Z}^N$. By rearranging, one finds the general determining equations 
\begin{align*}
    &\dot\varphi_{\ell \bm{p}}(t) -\sum_{k} A_{\ell k}\dot{\varepsilon}_{\bm{p} -\bm{e}_k}(t) + \sum_{k} \bar{A}_{\ell k}\dot{\varepsilon}_{\bm{p} - 2\bm{e}_\ell + \bm{e}_k}(t) = \sum_{k}[A_{\ell k}\varphi_{k \bm{p}}(t) + \bar{A}_{\ell k}\varphi_{k \bm{p} - 2\bm{e}_\ell + 2\bm{e}_k}(t) - 2\bar{A}_{\ell k} \varphi_{\ell \bm{p} - \bm{e}_\ell + \bm{e}_k}(t)]
\\&+ \sum_{j, k}(p_j - 1 + \delta_{jk})\bar{A}_{jk}\varphi_{\ell \bm{p}-\bm{e}_j+\bm{e}_k}(t) - \sum_{j,k}(p_j + 1 - \delta_{jk})A_{jk}\varphi_{\ell \bm{p} + \bm{e}_j - \bm{e}_k}(t)  \\&+ \sum_{k,r,s} (p_r + 1 - \delta_{rs} - \delta_{rk})A_{rs}A_{\ell k} \varepsilon_{\bm{p} + \bm{e}_r - \bm{e}_s - \bm{e}_k}(t) - \sum_{k,r,s} (p_r  + 1 - \delta_{rs} + \delta_{rk} - 2\delta_{r\ell})A_{rs}\bar{A}_{\ell k} \varepsilon_{\bm{p}  + \bm{e}_r - \bm{e}_s + \bm{e}_k - 2\bm{e}_\ell}(t)
\\&- \sum_{k,r,s} (p_r  - 1 + \delta_{rs} - \delta_{rk})\bar{A}_{rs}A_{\ell k}\varepsilon_{\bm{p} - \bm{e}_r + \bm{e}_s - \bm{e}_k}(t) - \sum_{k,r,s} (p_r - 1 + \delta_{rs} + \delta_{rk} - 2\delta_{r\ell})\bar{A}_{rs}\bar{A}_{\ell k}\varepsilon_{\bm{p} - \bm{e}_r + \bm{e}_s + \bm{e}_k - 2\bm{e}_\ell}(t)\,,
\end{align*}
for all $\ell\in\{1,...,N\}$, $\bm{p}\in\mathbb{Z}^N$, which represent an infinite-dimensional differential-algebraic system of equations. 

\subsection{Determining matrix and its singular vectors as symmetry generator coefficients}
\label{SIsubsec:determining_matrix}

Let's search for state-space symmetries, i.e., those where $\xi(t, z) = 0$. In this context, the determining equations become an infinite-dimensional system of ordinary differential equations
\begin{align*}
    \dot\varphi_{\ell \bm{p}}(t) = &\sum_{k}[A_{\ell k}\varphi_{k \bm{p}}(t) + \bar{A}_{\ell k}\varphi_{k \bm{p} - 2\bm{e}_\ell + 2\bm{e}_k}(t) - 2\bar{A}_{\ell k} \varphi_{\ell \bm{p} - \bm{e}_\ell + \bm{e}_k}(t)]
\\&+ \sum_{j, k}(p_j - 1 + \delta_{jk})\bar{A}_{jk}\varphi_{\ell \bm{p}-\bm{e}_j+\bm{e}_k}(t) - \sum_{j,k}(p_j + 1 - \delta_{jk})A_{jk}\varphi_{\ell \bm{p} + \bm{e}_j - \bm{e}_k}(t) \,,
\end{align*} 
where $\ell\in\{1,...,N\}$ and $\bm{p} \in\mathbb{Z}^N$. One notices that it
forms an infinite-dimensional \textit{linear} system of equations and that only the coefficients related to the monomials of the same total degree are dependent over one another. We can thus treat the different total degrees separately. However, we can seek a specific symmetry generator that has a finite number of nonzero coefficients ($\varphi_{\ell \bm{p}}(t))_{\ell,\bm{p}}$.
Consider that the nonzero coefficients are such that $\bm{p}$ is in a finite subset $\mathbb{P} \subset \mathbb{Z}^N$ including $d := N\cdot \# \mathbb{P}$ coefficients.
For some ordering of $(\ell,\bm{p})$, the determining equations become an overdetermined linear system of differential-algebraic equations described by 
\begin{align}
    \dot{\bm{\varphi}} &= \mathcal{M}\,\bm{\varphi}\,, \label{eq:det_diff}\\
    \bm{0} &= \mathcal{N}\,\bm{\varphi}\,,\label{eq:det_alg}
\end{align}
where we name $\mathcal{M}$ the differential determining matrix and $\mathcal{N}$ the algebraic determining matrix, as they entirely determine the possibility of having a generator of symmetry or not. As one should expect, $\mathcal{M}$ and $\mathcal{N}$ solely depend upon the elements of the complex weight matrix $A$. The differential determining matrix $\mathcal{M}$ is a $d \times d$ matrix, while $\mathcal{N}$ is a $r\times d$ matrix where $r$ is the number of equations such that $\dot\varphi_{\ell \bm{p}}(t) = 0$ for $\bm{p}\notin\mathbb{P}$. The algebraic equations appear since there are shifts of coefficients in these determining equations that yield nonzero coefficients. 

The solvability of Eqs~(\ref{eq:det_diff}-\ref{eq:det_alg}) is questionable since it is generally strongly overdetermined. Yet, we know that for a set $\mathbb{P}$ containing  sufficient $\bm{p}$'s of total degree one, there are at least two solutions ($L_0$ and $\mathcal{K}$). In fact, if $\dot{\bm{\varphi}} = \bm{0}$, then the following recurrence relations hold:
\begin{align}
    0 = &\sum_{k}\Big[A_{\ell k}\varphi_{k \bm{p}} + \bar{A}_{\ell k}\varphi_{k \bm{p} - 2\bm{e}_\ell + 2\bm{e}_k} - 2\bar{A}_{\ell k} \varphi_{\ell \bm{p} - \bm{e}_\ell + \bm{e}_k}\nonumber
    \\&+ \sum_{j}(p_j - 1 + \delta_{jk})\bar{A}_{jk}\varphi_{\ell \bm{p}-\bm{e}_j+\bm{e}_k} - \sum_{j}(p_j + 1 - \delta_{jk})A_{jk}\varphi_{\ell \bm{p} + \bm{e}_j - \bm{e}_k}\Big] \,.
\end{align}
In matrix form, 
\begin{align}
    D(A)\,\bm{\varphi} = \bm{0}
\end{align}
where $D(A)$ is a $m \times d$ ($m>d$) complex rectangular matrix depending on the elements of a complex square matrix $A$ with imaginary diagonal and $\bm{\varphi}$ is a complex vector of dimension $d\times 1$. Under what conditions on $A$ does the overdetermined system $D(A)\,\bm{\varphi} = \bm{0}$ admit nontrivial solutions?
The last equation means that the nullspace of $D(A)$ must have a dimension greater than or equal to one (the nullspace always contains the null vector, but if it only contains $\bm{0}$, its dimension is 0). By the rank-nullity theorem \cite{Horn2013},
\begin{align*}
    \dim(\mathrm{nullspace}(D(A))) = N\cdot\# \mathbb{P} - \mathrm{rank}(D(A))\,,
\end{align*}
and thus, in order to have a symmetry, the following inequality must be satisfied:
\begin{align*}
    \mathrm{rank}(D(A)) < N\cdot\# \mathbb{P}\,.
\end{align*}
In other words, the multiplicity of the zero singular value for $D(A)$ must be greater or equal to 1. The right singular vectors with singular value 0 are the coefficients of the time-independent symmetry generator.

\subsection{Proof of Theorem 4: Peripheral constants of motion}
\label{SIsubsec:proof_thm4}
Using diverse matrices $A$ for $N = 4$ and $N = 5$, we performed symbolic and numerical calculations (see Ref.~\cite{Thibeault2026_koopman_kuramoto}, koopman-kuramoto/symbolic/symmetries) to obtain the associated determining matrices $D(A)$, their singular value decomposition and from the singular vectors with zero singular value, symmetry generators. In this way, we inferred a class of symmetry generators that enable the creation of new constants of motion in the Kuramoto model on graph. This subsection is devoted to the proof of Thm.~\ref{thm:thm4} (Thm.~\ref{thm:thm4SI}) on these symmetry-generated constants of motion. To that end, we first introduce a lemma that specifies the conditions (see Fig.~\ref{fig:symmetry_graph}) under which the Kuramoto model admits such symmetry generators, along with their explicit form.
\begin{lemma}[{\small Time evolution of peripheral oscillators in the frame of their source is a symmetry}]\label{lem:sym_KL}\phantom{.}\\
   If there is a source oscillator with natural frequency $\omega_s$ and it has outgoing edges toward $r > 1$ disjoint subgraphs whose vertex sets are denoted by $\mathcal{W}_1, ..., \mathcal{W}_r$, then the Koopman generators of the subgraphs in the rotating frame of the source,
\begin{align}\label{eq:liesymSI}
    \mathcal{S}_{\eta} = \mathcal{K}_{\eta} - i\omega_{s} L_0^{\eta}\,,\quad\eta\in\{1,...,r\}\,,
\end{align}
are generators of Lie symmetries, where $\mathcal{K}_{\eta} = \sum_{j\in\mathcal{W}_{\eta}}\sum_{k\in\mathcal{W}_{\eta}\cup\{s\}}(A_{jk}z_k - \bar{A}_{jk}\bar{z}_kz_j^2)\partial_j$ 
 and $L_0^{\eta} = \sum_{j\in\mathcal{W}_{\eta}}z_j\partial_j$.
\end{lemma}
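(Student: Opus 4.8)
The plan is to verify the infinitesimal criterion established in the Lemma on the infinitesimal condition for Lie symmetries. Since $\mathcal{S}_\eta = \mathcal{K}_\eta - i\omega_s L_0^\eta$ carries no $\partial_t$ component (so $\xi = 0$) and its coefficients are time-independent, the criterion $[\mathcal{U},\mathcal{S}] - \mathcal{U}[\xi]\,\mathcal{U}=0$ collapses, via $\mathcal{U} = \partial_t + \mathcal{K}$ and $[\partial_t,\mathcal{S}_\eta]=0$, to the single commutation condition $[\mathcal{K},\mathcal{S}_\eta]=0$. The whole task therefore reduces to proving this bracket vanishes, and I would work throughout with $\bar z_k = z_k^{-1}$ on $\mathbb{T}^N$, so that $\mathcal{K}$ is a Laurent (Euler) differential operator and all brackets follow from the commutation relations of Lemma~\ref{lem:commutations_kuramoto}, namely $[\ell_j^m,\ell_k^n]=\delta_{jk}(n-m)\ell_j^{m+n}$ and $[\ell_j^m,z_k^n]=\delta_{jk}n z_k^{m+n}$.

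The next step is to exploit the graph structure to split the generator. Writing $\mathcal{V} = \{s\}\cup\mathcal{W}_1\cup\cdots\cup\mathcal{W}_r$, I would decompose $\mathcal{K} = \mathcal{K}_s + \sum_{\zeta=1}^r\mathcal{K}_\zeta$ with $\mathcal{K}_s = \sum_{k}(A_{sk}z_k-\bar A_{sk}z_k^{-1} z_s^2)\partial_s$. The source hypothesis (no incoming edges, so $A_{sk}=0$ for $k\neq s$ and $A_{ss}=i\omega_s/2$) reduces this to $\mathcal{K}_s = i\omega_s z_s\partial_s = i\omega_s\ell_s^0$. The disjointness hypothesis means there are no edges between distinct subgraphs, i.e.\ $A_{jk}=0$ whenever $j\in\mathcal{W}_\eta$ and $k\in\mathcal{W}_\zeta$ with $\zeta\neq\eta$; consequently the coefficients of $\mathcal{K}_\eta$ involve only the variables $z_k$ with $k\in\{s\}\cup\mathcal{W}_\eta$, and $\mathcal{K}_\eta$ differentiates only in the directions $\partial_j$, $j\in\mathcal{W}_\eta$.

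Then I would knock out the cross terms. Expanding $[\mathcal{K},\mathcal{S}_\eta]$ bilinearly produces the families $[\mathcal{K}_s,\mathcal{K}_\eta]$, $[\mathcal{K}_\zeta,\mathcal{K}_\eta]$, $[\mathcal{K}_s,L_0^\eta]$ and $[\mathcal{K}_\zeta,L_0^\eta]$. Reading off the $\mathbb{Z}$-grading by $z_s$- and $z_k$-degree (equivalently, applying the commutation relations above), disjointness forces $[\mathcal{K}_\zeta,\mathcal{K}_\eta]=0$ and $[\mathcal{K}_\zeta,L_0^\eta]=0$ for $\zeta\neq\eta$, since the two operators share neither variables nor derivative directions, while $[\mathcal{K}_s,L_0^\eta]=0$ because $s\notin\mathcal{W}_\eta$. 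This leaves only $[\mathcal{K}_s,\mathcal{K}_\eta]$ and $[\mathcal{K}_\eta,L_0^\eta]$, so that $[\mathcal{K},\mathcal{S}_\eta] = [\mathcal{K}_s,\mathcal{K}_\eta] - i\omega_s[\mathcal{K}_\eta,L_0^\eta]$.

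Finally, the crux is to compute these two survivors and show they cancel. Since $\mathcal{K}_s = i\omega_s\ell_s^0$ acts as $i\omega_s$ times the $z_s$-dilatation, it multiplies each term of $\mathcal{K}_\eta$ by $i\omega_s$ times its $z_s$-degree; only the source-coupling terms $k=s$, of $z_s$-degree $\pm1$, survive, giving $[\mathcal{K}_s,\mathcal{K}_\eta] = i\omega_s\sum_{i\in\mathcal{W}_\eta}(A_{is}z_s + \bar A_{is}z_s^{-1}z_i^2)\partial_i$. Evaluating $[\mathcal{K}_\eta,L_0^\eta]$ the same way, the $\mathcal{W}_\eta$-dilatation $L_0^\eta$ annihilates every intra-$\mathcal{W}_\eta$ coupling (those terms carry net $\mathcal{W}_\eta$-degree equal to the one consumed by $\partial_i$), leaving exactly the same $k=s$ terms, $[\mathcal{K}_\eta,L_0^\eta] = \sum_{i\in\mathcal{W}_\eta}(A_{is}z_s + \bar A_{is}z_s^{-1}z_i^2)\partial_i$. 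Hence $[\mathcal{K}_s,\mathcal{K}_\eta] = i\omega_s[\mathcal{K}_\eta,L_0^\eta]$ and
\[
[\mathcal{K},\mathcal{S}_\eta] = [\mathcal{K}_s,\mathcal{K}_\eta] - i\omega_s[\mathcal{K}_\eta,L_0^\eta] = 0,
\]
which is the required criterion. The main obstacle is precisely this final bookkeeping: one must track that the intra-subgraph couplings cancel in both brackets and that the two surviving expressions coincide term-by-term. This exact matching is what makes the ``rotating frame of the source'' correction $-i\omega_s L_0^\eta$ the right one, and it is where the source and disjointness hypotheses are used in full.
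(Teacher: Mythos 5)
Your proposal is correct and follows essentially the same route as the paper's proof: reduce the infinitesimal criterion to $[\mathcal{K},\mathcal{S}_\eta]=0$, split $\mathcal{K}=\mathcal{K}_s+\sum_\zeta\mathcal{K}_\zeta$ with $\mathcal{K}_s=i\omega_s z_s\partial_s$, kill the cross brackets by disjointness, and check that $[\mathcal{K}_s,\mathcal{K}_\eta]$ and $i\omega_s[\mathcal{K}_\eta,L_0^\eta]$ both equal $i\omega_s\sum_{j\in\mathcal{W}_\eta}(A_{js}z_s+\bar A_{js}\bar z_s z_j^2)\partial_j$ and cancel. Your degree-grading bookkeeping is just a tidier packaging of the paper's explicit splitting $\mathcal{K}_\eta=\mathcal{K}_{\mathcal{R}_\eta}+\mathcal{K}_{\mathcal{W}_\eta}$ and term-by-term commutator computation, so the two arguments are the same in substance.
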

\begin{proof}
    The existence of a source connected to disjoint subsets implies that the Koopman generator splits as
    \begin{align}
        \mathcal{K} = \mathcal{K}_s + \sum_{\tau=1}^r\mathcal{K}_\tau = i\omega_sz_s\partial_s + \sum_{\tau=1}^r\sum_{j\in\mathcal{W}_\tau}\sum_{k\in\mathcal{W}_\tau\cup\{s\}}(A_{jk}z_k - \bar{A}_{jk}\bar{z}_kz_j^2)\partial_j\,.
    \end{align}
    Lemma~\ref{lem:symkoo}, stated in $\bm z$-coordinates on the $N$-torus, implies that satisfying the commutation relations $[\mathcal{K}, \mathcal{S}_\eta] = 0$ for all $\eta$ is a sufficient condition for the present lemma to hold. Now, as illustrated in Fig.~\ref{fig:symmetry_graph}, each subgraph $\mathcal{W}_1, \ldots, \mathcal{W}_r$ has a certain fraction of vertices contained in $\mathcal{R}_1, \ldots, \mathcal{R}_r$ that receive from the source. The generator of the source $\mathcal{K}_s$ only acts on the phase of these oscillators and it is convenient to split the generators related to the subgraphs as
    \begin{align*}
        \mathcal{K}_\tau = \mathcal{K}_{\mathcal{R}_\tau} + \mathcal{K}_{\mathcal{W}_{\tau}},\quad \tau\in\{1,...,r\}\,,
    \end{align*}
    where $\mathcal{K}_{\mathcal{R}_\tau} = \sum_{j\in\mathcal{R}_\tau}(A_{js}z_s - \bar{A}_{js}\bar{z}_sz_j^2)\partial_j$ and $\mathcal{K}_{\mathcal{W}_\tau} = \sum_{j,k\in\mathcal{W}_\tau}(A_{jk}z_k - \bar{A}_{jk}\bar{z}_kz_j^2)\partial_j$. Hence,
    \begin{align*}
        [\mathcal{K}, \mathcal{S}_\eta] = [\,i\omega_sz_s\partial_s + \textstyle{\sum_{\tau=1}^r}(\mathcal{K}_{\mathcal{R}_\tau} + \mathcal{K}_{\mathcal{W}_{\tau}})\,\,, \,\,\mathcal{K}_{\mathcal{R}_\eta} + \mathcal{K}_{\mathcal{W}_\eta} - i\omega_sL_0^{\eta}\,]\,.
    \end{align*}
    Using bilinearity and keeping only the nontrivial commutators yields
    \begin{align*}
        [\mathcal{K}, \mathcal{S}_\eta] = i\omega_s[z_s\partial_s, \mathcal{K}_{\mathcal{R}_\eta}] - i\omega_s[\mathcal{K}_{\mathcal{R}_\eta} + \mathcal{K}_{\mathcal{W}_\eta}, L_0^\eta]\,.
    \end{align*}
    But clearly, one also finds $[\mathcal{K}_{\mathcal{W}_\eta}, L_0^\eta] = 0$ with the commutation relations of Lem.~\ref{lem:commutations_kuramoto} (or the intuition that $L_0^{\eta}$ is the dilatation symmetry generator for $\mathcal{W}_\eta$ and thus commutes with $\mathcal{K}_{\mathcal{W}_\eta}$) and $[\mathcal{K}_{\mathcal{R}_\eta}, L_0^\eta] =  \sum_{k\in\mathcal{R}_\eta}[\mathcal{K}_{\mathcal{R}_\eta},z_k\partial_k]$. Hence,
    \begin{align*}
        [\mathcal{K}, \mathcal{S}_\eta] &= i\omega_s(\,\textstyle{\sum_{j\in\mathcal{R}_\eta}}[z_s\partial_s\,, \,(A_{js}z_s - \bar{A}_{js}\bar{z}_sz_j^2)\partial_j] - \textstyle{\sum_{j,k\in\mathcal{R}_\eta}}[(A_{js}z_s - \bar{A}_{js}\bar{z}_sz_j^2)\partial_j , z_k\partial_k]\,)\,.
    \end{align*}
    On the one hand, the first term is 
    \begin{align*}
        \sum_{j\in\mathcal{R}_\eta}[z_s\partial_s\,, \,(A_{js}z_s - \bar{A}_{js}\bar{z}_sz_j^2)\partial_j] = \sum_{j\in\mathcal{R}_\eta}(A_{js}z_s + \bar{A}_{js}\bar{z}_sz_j^2)\partial_j\,.
    \end{align*}
    On the other hand, the commutation relation $[\,\ell_{j}^m, \,\ell_{k}^n\,]=\delta_{jk}(n-m)\,\ell_j^{m+n}$ of Lem.~\ref{lem:commutations_kuramoto} implies that the second term is
    \begin{align*}
        \sum_{j,k\in\mathcal{R}_\eta}[(A_{js}z_s - \bar{A}_{js}\bar{z}_sz_j^2)\partial_j , z_k\partial_k] = \sum_{j,k\in\mathcal{R}_\eta}(A_{js}z_s[\ell_j^{-1} , \ell_k^0] - \bar{A}_{js}\bar{z}_s[\ell_j^{1} , \ell_k^0]) = \sum_{j\in\mathcal{R}_\eta}(A_{js}z_s + \bar{A}_{js}\bar{z}_sz_j^2)\partial_j \,.
    \end{align*}
    Consequently, $[\mathcal{K}, \mathcal{S}_\eta] = 0$ for all $\eta \in\{1,...,r\}$, so each $\mathcal{S}_\eta$ is indeed a Lie symmetry generator of the Kuramoto dynamics.  
\end{proof}
\begin{remark}
    For $r = 1$, the symmetry generator is $\mathcal{S} = \mathcal{K} - i\omega_sL_0$ and hence, $\mathcal{S}$ is linearly dependent on $\mathcal{K}$ and $L_0$. This means that the symmetry generator does not enable the creation of a new constant of motion in the way we do in Thm.~\ref{thm:thm4} of the main text.
\end{remark}
\begin{figure}
    \centering
    \includegraphics[width=\linewidth]{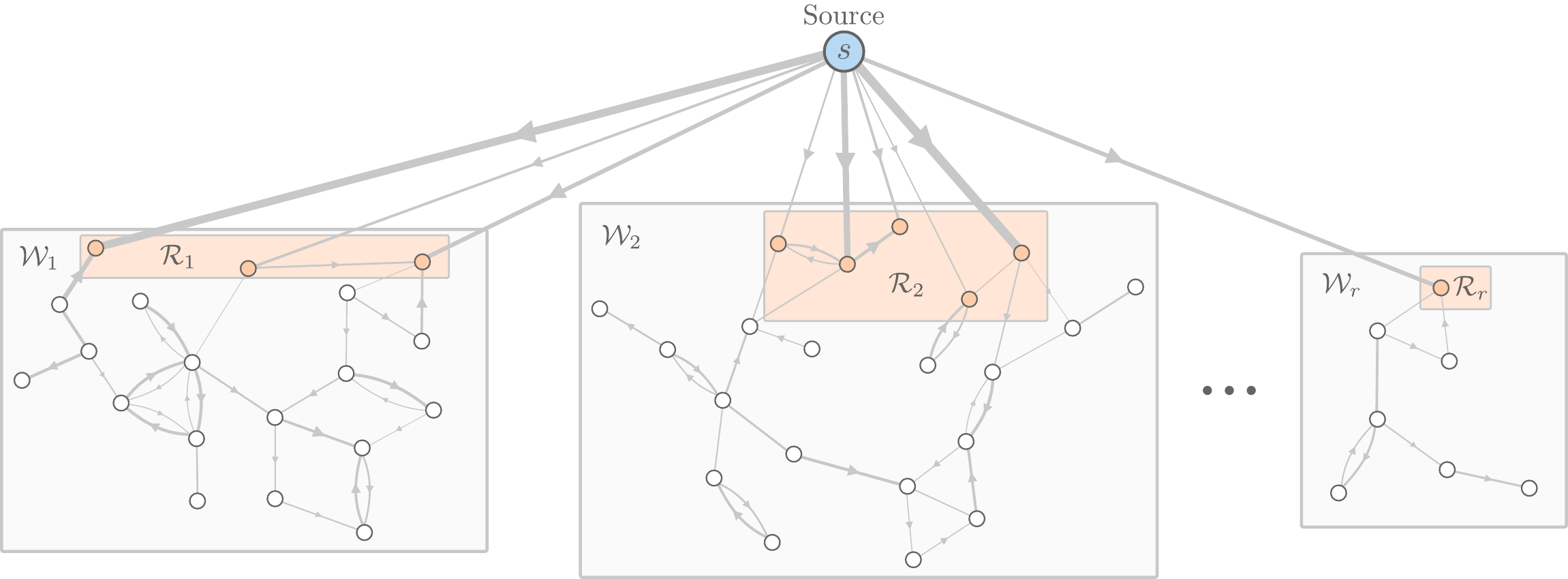}
    \caption[Illustration of a graph with symmetry generators $\mathcal{S}_1,...,\mathcal{S}_r$]{\textbf{Illustration of a graph with symmetry generators $\mathcal{S}_1,...,\mathcal{S}_r$} [Lem.~\ref{lem:sym_KL}]. A source $s$ is connected to $r$ subsets $\mathcal{R}_1, ...,\mathcal{R}_r$ of peripheral parts $\mathcal{W}_1, ..., \mathcal{W}_r$. Each of these parts is related to a symmetry generator $\mathcal{S}_1,...,\mathcal{S}_r$ that makes the oscillators of the peripheral parts evolve in time in the frame of the source.}
    \label{fig:symmetry_graph}
\end{figure}
In the next theorem, we use the possible coexistence of the symmetry generators in the last lemma and the conserved cross-ratios (Thm.~\ref{thm:cte_mvt_kuramoto_graphe}) to generate new functionally independent constants of motion.
\begin{theorem}[Thm.~\ref{thm:thm4} of the paper]\label{thm:thm4SI}
   Consider that the Kuramoto model in Def.~\eqref{def:kuramoto} has a symmetry generator $\mathcal{S}_\eta$ as defined in Eq.~\eqref{eq:liesymSI} related to the subgraph $\mathcal{W}_\eta$ and the source oscillator $s$.
   
\vspace{0.1cm}
   
   \noindent A. If four vertices $a,b,c,d \in\mathcal{V}\setminus\{s\}$ have \\
   \indent(A1) a unique incoming edge with weight $\mathcal{A}_s$ from $s$;\\
   \indent(A2) identical natural frequencies $\omega$;\\
   \indent(A3) and one, two or three of them belong to $\mathcal{W}_\eta$,\\
   then both the cross-ratio $c_{abcd}$ and $\mathcal{S}_\eta[c_{abcd}]$ are conserved and functionally independent.
   
\vspace{0.1cm}

   \noindent B. If three vertices $u, v, w\in\mathcal{V}\setminus\{s\}$ have \\
   \indent(B1) a unique incoming edge with weight $\mathcal{A}_s$ from $s$;\\
   \indent(B2) identical natural frequencies $\omega = \omega_s - 2\,\mathrm{Im}(\mathcal{A}_s)$;\\
   \indent(B3) and one or two of them belong to $\mathcal{W}_\eta$\\
then both the cross-ratio $c_{suvw}$ and $\mathcal{S}_\eta[c_{suvw}]$ are conserved and functionally independent.
\end{theorem}
\begin{proof}
    A. Condition~(A1) implies that the four vertices are mutually disconnected ($A_{jk} = 0$ for all $j,k\in\{a,b,c,d\}$ with $j\neq k$) as they can only have an incoming edge from the source. Therefore, condition~(1) from Thm.~\ref{thm:cte_mvt_kuramoto_graphe} is satisfied. Condition~(A1) also highlights that the weights of these incoming edges are all equal to $\mathcal{A}_s$, meaning that condition~2 of Thm.~\ref{thm:cte_mvt_kuramoto_graphe} holds. Then, condition~3 of Thm.~\ref{thm:cte_mvt_kuramoto_graphe} is also fulfilled from condition~(A2) and the fact that $A_{jk} = 0$ for all $j,k\in\{a,b,c,d\}$ with $j\neq k$. Altogether, Thm.~\ref{thm:cte_mvt_kuramoto_graphe} guarantees that $\mathcal{K}[c_{abcd}] = 0$, that is, $c_{abcd}$ is a constant of motion. Lemma~\ref{lem:sym_KL} shows that $\mathcal{S}_\eta = \mathcal{K}_\eta - i\omega_sL_0^\eta$ is a symmetry generator and condition~(A3) ensures that $\mathcal{S}_\eta[c_{abcd}]$ is not zero. Therefore, $\mathcal{K}\mathcal{S}_{\eta}[c_{abcd}] = \mathcal{S}_{\eta}\mathcal{K}[c_{abcd}] = 0$, i.e., $\mathcal{S}_\eta[c_{abcd}]$ is another constant of motion. Since $\mathcal{K}_\eta$ depends on the source's state $z_s$, $\mathcal{S}_{\eta}$ and $\mathcal{S}_\eta[c_{abcd}]$ also do. Therefore, $\mathcal{S}_\eta[c_{abcd}]$ is functionally independent of $c_{abcd}$, which only depends on $z_a, z_b, z_c, z_d$.

    B. Conditions~(B1) and (B2) imply that all the conditions of Thm.~\ref{thm:cte_mvt_kuramoto_graphe} are fulfilled and hence, $c_{suvw}$ is conserved. Then, condition~(B3) guarantees that $\mathcal{S}_\eta[c_{suvw}]$ is not zero and as in part A of the proof, $\mathcal{S}_\eta[c_{suvw}]$ is conserved. If there are vertices other than $u$, $v$ or $w$ belonging to $\mathcal{W}_\eta$, then $\mathcal{S}_\eta[c_{suvw}]$ is functionally independent of $c_{suvw}$. If only one or two vertices among $u$, $v$ or $w$ are in $\mathcal{W}_\eta$, the symmetry generator $\mathcal{S}_\eta$ can take the 6 different forms
    {\footnotesize\begin{align*}
        \begin{cases}
  \mathcal{S}_\eta^{x} = (i(\omega_x - \omega_s)z_x + A_{xs}z_s - \bar{A}_{xs}\bar{z}_sz_x^2)\partial_x & x\in\mathcal{W}_\eta \land x \in\{u,v,w\} \\
  \mathcal{S}_\eta^{xy} = (i(\omega_x - \omega_s)z_x + A_{xs}z_s - \bar{A}_{xs}\bar{z}_sz_x^2)\partial_x + (i(\omega_y - \omega_s)z_y + A_{ys}z_s - \bar{A}_{ys}\bar{z}_sz_y^2)\partial_y & x,y\in\mathcal{W}_\eta \land x,y \in\{u,v,w\}\,,
\end{cases}
    \end{align*}}
    where $\mathcal{S}_\eta^{xy} = \mathcal{S}_\eta^{yx}$ and $x\neq y$. The constants of motion $c_{suvw}$ and $\mathcal{S}_\eta[c_{suvw}]$ are functionally independent if the rank of the Jacobian matrix
    \begin{align*}
        J = \begin{pmatrix}
            \pd[]{c_{suvw}}{z_s}&\pd[]{c_{suvw}}{z_u}& \pd[]{c_{suvw}}{z_v} & \pd[]{c_{suvw}}{z_w}\\
            \pd[]{\mathcal{S}_\eta[c_{suvw}]}{z_s}&\pd[]{\mathcal{S}_\eta[c_{suvw}]}{z_u}& \pd[]{\mathcal{S}_\eta[c_{suvw}]}{z_v} & \pd[]{\mathcal{S}_\eta[c_{suvw}]}{z_w}
        \end{pmatrix}
    \end{align*}
     is 2, where $\mathcal{S}_{\eta}$ is either $\mathcal{S}_{\eta}^{u}$, $\mathcal{S}_{\eta}^{v}$, $\mathcal{S}_{\eta}^{w}$, $\mathcal{S}_{\eta}^{uv}$, $\mathcal{S}_{\eta}^{uw}$, or $\mathcal{S}_{\eta}^{vw}$. For the six Jacobian matrices, lengthy but straightforward calculations enable showing that their rank is 2 (see \textit{proof\_thm4\_partB.wls} in Ref.~\cite{Thibeault2026_koopman_kuramoto} for symbolic calculations).   
\end{proof}

\begin{remark}
    \textit{A priori}, one could hope to generate new constants of motion from the class of symmetry generators in Lem.~\ref{lem:sym_KL} and monomial eigenfunctions. However, if there is a subgraph with vertex set $\mathcal{M}$ supporting a monomial eigenfunction, it must be a source and there is no way to make $\mathcal{S}_\eta$ act on only a subset of $\mathcal{M}$. More precisely, $\mathcal{M}$ can only be a source (first condition of Thm.~\ref{thm:existence_fpmonom}) to another vertex $v$ that also receives from the source~$s$. The vertex set for the subgraph admitting the symmetry generator $\mathcal{S}_{\eta}$ is thus $\mathcal{W}_\eta = \{v\}\cup\mathcal{M}$. Using conditions 2,3,4 of Thm.~\ref{thm:existence_fpmonom}, $\mathcal{S}_{\eta}[z^{\bm{\nu}}] = (i\sum_{j\in\mathcal{M}}\nu_j(\omega_j - \omega_1))\,z^{\bm{\nu}}$, i.e., the monomial is an eigenfunction of the symmetry, $\mathcal{S}_{\eta}[z^{\bm{\nu}}]$ is functionally dependent on $z^{\bm{\nu}}$ and $\mathcal{S}_{\eta}[z^{\bm{\nu}}]$ is not a new constant of motion.
\end{remark}

\subsection{Basic examples for Theorem~4}
\label{SIsubsec:basic_example_thm4}
The example that helped us obtain Thm.~\ref{thm:thm4} through the singular vectors of the determining matrix is the following one.
\begin{example}
\label{ex:star}
    Consider a directed star of 5 nodes with weight matrix
    \begin{align*}
        A = \begin{pmatrix}
            i\omega_1/2 & 0 & 0 & 0 & 0\\
            \mathcal{A}_1 & i\omega/2& 0 & 0 & 0 \\
            \mathcal{A}_1 & 0 & i\omega/2&  0 & 0 \\
            \mathcal{A}_1 & 0 & 0 &i\omega/2&  0 \\
            \mathcal{A}_1 & 0 & 0 & 0 &i\omega/2
        \end{pmatrix}\,,
    \end{align*}
    where $\mathcal{A}_1$ is any complex number, $\omega \in\mathbb{R}$ and we assume for now that $\omega_1 \neq \omega + 2\,\Imag(\mathcal{A}_1)$. The Koopman generator of the dynamics is $\mathcal{K} = i\omega_1z_1\partial_1 + \mathcal{K}_2 + \mathcal{K}_3 + \mathcal{K}_4 + \mathcal{K}_5$, where $\mathcal{K}_{\eta} = (i\omega z_\eta + \mathcal{A}_1z_1 - \bar{\mathcal{A}}_1\bar{z}_1z_\eta^2)\,\partial_\eta$ for $\eta \in\{2,3,4,5\}$. The most straightforward approach is to write the solution $z_1(t) = z_1(0)e^{i\omega_1t}$ and then substitute it in the four independent equations for $z_2$ to $z_5$. From there, the solutions for $z_2(t)$ to $z_5(t)$ are found by quadrature. Yet, using the results of the paper leads to conserved observables and we can avoid computing the latter quadratures. Indeed, Thm.~\ref{thm:cte_mvt_kuramoto_graphe} readily guarantees that there is one conserved cross-ratio
    \begin{align*}
        C_1(\bm{z}) := c_{2345}(\bm{z}) = \frac{(z_4 -z_2)(z_5 - z_3)}{(z_4 - z_3)(z_5 - z_2)}
    \end{align*}
    and four symmetries $\mathcal{S}_{\eta} = \mathcal{K}_{\eta} - i\omega_1 z_\eta \partial_\eta$. Using the derivatives of cross-ratios computed in Eq.~\eqref{eq:derivees_cr}, we thus find the four constants of motion
    \begin{align*}
        C_2(\bm{z}) &:= \mathcal{S}_2[C_1(\bm{z})] = [i(\omega - \omega_1)z_2 + \mathcal{A}_1z_1 - \bar{\mathcal{A}}_1\bar{z}_1z_2^2]\frac{(z_5 -z_3)(z_4 - z_5)}{(z_4 - z_3)(z_5 - z_2)^2}\\
        C_3(\bm{z}) &:= \mathcal{S}_3[C_1(\bm{z})] = [i(\omega - \omega_1)z_3 + \mathcal{A}_1z_1 - \bar{\mathcal{A}}_1\bar{z}_1z_3^2]\frac{(z_4 -z_2)(z_5 - z_4)}{(z_4 - z_3)^2(z_5 - z_2)}\\
        C_4(\bm{z}) &:= \mathcal{S}_4[C_1(\bm{z})] = [i(\omega - \omega_1)z_4 + \mathcal{A}_1z_1 - \bar{\mathcal{A}}_1\bar{z}_1z_4^2]\frac{(z_2 -z_3)(z_5 - z_3)}{(z_4 - z_3)^2(z_5 - z_2)}\\
        C_5(\bm{z}) &:= \mathcal{S}_5[C_1(\bm{z})] = [i(\omega - \omega_1)z_5 + \mathcal{A}_1z_1 - \bar{\mathcal{A}}_1\bar{z}_1z_5^2]\frac{(z_4 -z_2)(z_3 - z_2)}{(z_4 - z_3)(z_5 - z_2)^2}\,,
    \end{align*}
    which can be verified analytically with $\mathcal{K}[C_\eta] = 0$ or with symbolic calculations. Note also that $C_2 + C_3 + C_4 + C_5 = 0$, meaning that there is at least one functional dependency. In fact, it is easily verified symbolically that the rank of the Jacobian matrix of $(C_1, C_2,...,C_5)$ is 3, so we have three functionally independent constants of motion. 
    Note that there is also one more. Indeed, since there is a source, we have a monomial eigenfunction $z_1$ with eigenvalue $i\omega_1$ and $C_0(t, \bm{z}) = z_1e^{-i\omega_1 t}$ is a constant of motion. The dynamics can thus be reduced to two autonomous equations and three constants of motion (e.g., $C_1, C_2, C_3$) or one non-autonomous equation and four constants of motion  (e.g., $C_0, C_1, C_2, C_3$).

    If, moreover, $\omega_1 = \omega + 2\,\Imag(\mathcal{A}_1)$, then the cross-ratio
    \begin{align*}
        C_6(\bm{z}) := c_{1234}(\bm{z}) = \frac{(z_3 -z_1)(z_4 - z_2)}{(z_3 - z_2)(z_4 - z_1)}
    \end{align*}
    is also a constant of motion ($c_{1345}$ and others are functionally dependent with $c_{1234}, c_{2345}$ as shown in subsection~\ref{subsec:functional_independence}), along with
    \begin{align*}
        C_7(\bm{z}) &:= \mathcal{S}_2[C_6(\bm{z})] = [i(\omega - \omega_1)z_2 + \mathcal{A}_1z_1 - \bar{\mathcal{A}}_1\bar{z}_1z_2^2]\frac{(z_3 -z_1)(z_3 - z_4)}{(z_2 - z_3)^2(z_1 - z_4)}\\
        C_8(\bm{z}) &:= \mathcal{S}_3[C_6(\bm{z})] = [i(\omega - \omega_1)z_3 + \mathcal{A}_1z_1 - \bar{\mathcal{A}}_1\bar{z}_1z_3^2]\frac{(z_1 -z_2)(z_2 - z_4)}{(z_2 - z_3)^2(z_1 - z_4)}\\
        C_9(\bm{z}) &:= \mathcal{S}_4[C_6(\bm{z})] = [i(\omega - \omega_1)z_4 + \mathcal{A}_1z_1 - \bar{\mathcal{A}}_1\bar{z}_1z_4^2]\frac{(z_1 -z_2)(z_3 - z_1)}{(z_2 - z_3)(z_4 - z_1)^2}
        \,.
    \end{align*}
    In this case, there are 5 functionally independent constants of motion, say $C_0, C_1, C_2, C_3, C_6$. This completely integrates the system without having to perform quadratures.  See also the symbolic calculations for this example in the Mathematica script \textit{example\_cte\_mvt\_from\_symmetry\_1.wls} for symbolic validation and Fig.~\ref{fig:num_validation} for numerical validation.
\end{example}
\begin{figure}
    \centering
    \includegraphics[width=\linewidth]{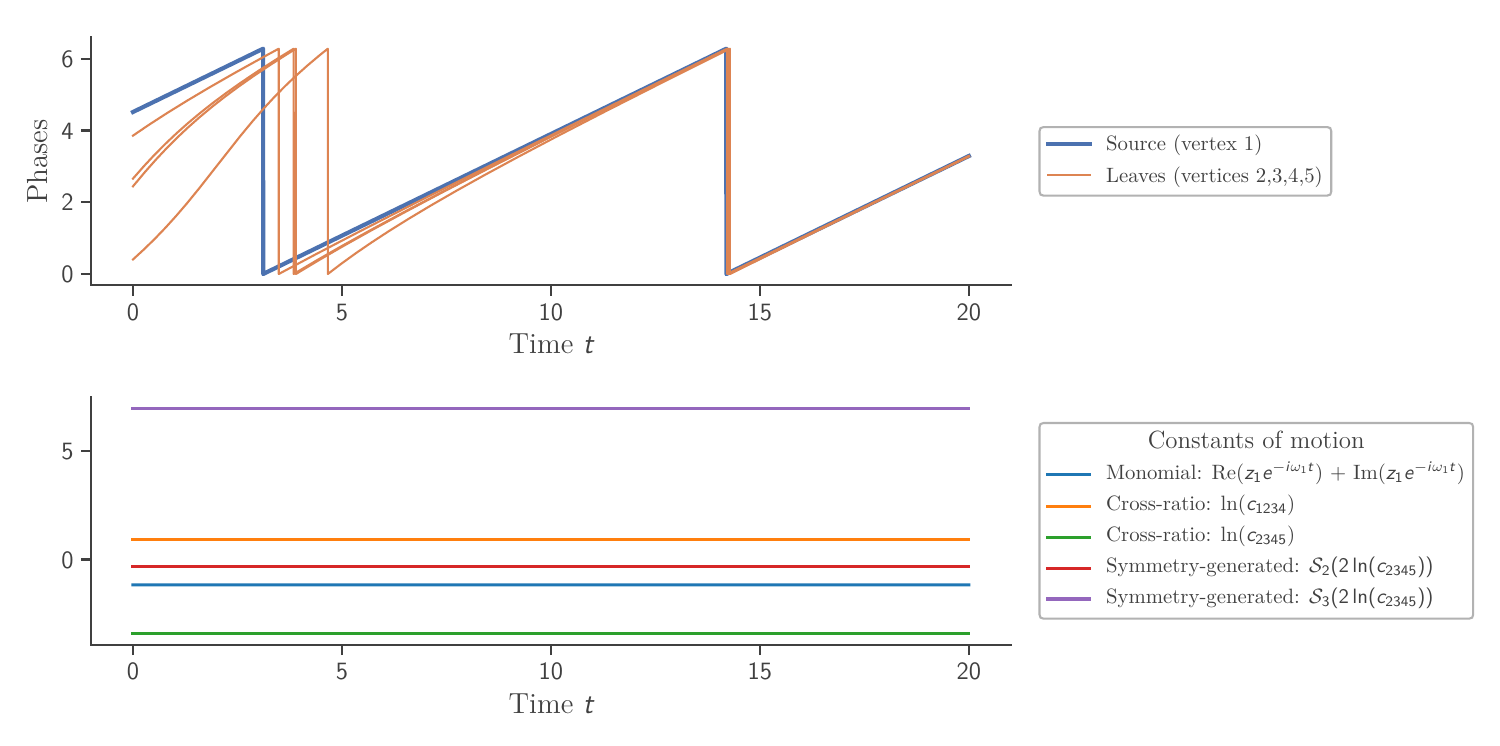}
    \vspace{-0.8cm}
    \caption[Basic example with coexisting conserved quantities of different types]{\textbf{Basic example with coexisting conserved quantities of different types.} The figure is related to Example~\ref{ex:star}. We evaluate the constants of motion at the phases for each time point to verify their conservation. The initial conditions $\bm{\theta}(0) \approx (4.51756368 \,\,3.85865453 \,\,2.66025984 \,\,0.4049007\,\,  2.44481427)^\top$ are drawn from a uniform distribution. Parameters: $\alpha = \pi/3$, $\sigma = 1$, $\mathcal{A}_1 = (\sigma/4)\exp(-i\alpha)$, $\omega = 1$, $\omega_1 = \omega + 2 \Imag(\mathcal{A}_1)$.}
    \label{fig:num_validation}
\end{figure}
\begin{remark}
    By applying the symmetry generators to the logarithm of the cross-ratios, the form of the new constants of motion is simplified in the above example. For instance, assuming that $c_{2345}$ is positive,
    \begin{align*}
        \mathcal{S}_2[\ln c_{2345}] = [i(\omega - \omega_1)z_2 + \mathcal{A}_1z_1 - \bar{\mathcal{A}}_1\bar{z}_1z_2^2]\frac{(z_4 - z_5)}{(z_4 - z_2)(z_5 - z_2)}\,.
    \end{align*}
    If $\mathcal{A}_1 = (\sigma/4)e^{-i\alpha}\in\mathbb{C}$ with $\sigma\in\mathbb{R}$ and $|\alpha| \leq \pi/2$, the real form for the constant of motion is
    \begin{align*}
        \mathcal{S}_2[2\ln c_{2345}] = (\omega - \omega_1 + (\sigma/2)\sin(\theta_1 - \theta_2 - \alpha))\frac{\sin\left(\frac{\theta_4 - \theta_5}{2}\right)}{\sin\left(\frac{\theta_4 - \theta_2}{2}\right)\sin\left(\frac{\theta_5 - \theta_2}{2}\right)}\,.
    \end{align*}
    Additionally, if $\alpha = 0$, $\omega_1 = \omega$ and the constant of motion is simplified to
    \begin{align*}
        \mathcal{S}_2[(2/\sigma)\ln c_{2345}] = \frac{C_{12}S_{12}S_{45}}{S_{42}S_{52}}\,,
    \end{align*}
    where $S_{jk} := \sin\left(\frac{\theta_j - \theta_k}{2}\right)$ and $C_{jk} := \cos\left(\frac{\theta_j - \theta_k}{2}\right)$.
\end{remark}
The leaves of the star can be sources within arbitrary subgraphs. One of the simplest cases is presented in the next example.
\begin{example}
 Consider the star from the previous example, but connect vertices 2 and 3 to a sixth vertex and vertices 4 and 5 to a seventh vertex. The weight matrix is thus
    \begin{align*}
        A = \begin{pmatrix}
            i\omega_1/2 & 0 & 0 & 0 & 0 & 0 & 0\\
            \mathcal{A}_1 & i\omega/2& 0 & 0 & 0 & 0 & 0\\
            \mathcal{A}_1 & 0 & i\omega/2&  0 & 0 & 0 & 0\\
            \mathcal{A}_1 & 0 & 0 &i\omega/2&  0 & 0 & 0\\
            \mathcal{A}_1 & 0 & 0 & 0 &i\omega/2 & 0 & 0\\
            A_{61} & A_{62} & A_{63} & 0 & 0 & i\omega_6/2 & 0\\
            A_{71} & 0 & 0 & A_{74} & A_{75} & 0 & i\omega_7/2
        \end{pmatrix}
    \end{align*}
    with $\omega_1 \neq \omega + 2\,\Imag(\mathcal{A}_1)$ and the Koopman generator is $\mathcal{K} = i\omega_1z_1\partial_1 + \sum_{\eta = 2}^5\mathcal{K}_{\eta} + \mathcal{K}_6 + \mathcal{K}_7$, where $\mathcal{K}_2$ to $\mathcal{K}_5$ are defined as in the previous example and $\mathcal{K}_6 = \sum_{k=1}^3(A_{6k}z_k - \bar{A}_{6k}\bar{z}_k z_6^2)\partial_6$, $\mathcal{K}_7 = \sum_{k\in\{1,4,5\}}(A_{7k}z_k - \bar{A}_{7k}\bar{z}_k z_7^2)\partial_7$. The monomial $z_1e^{-i\omega_1 t}$ and $c_{2345}$ are still conserved, but there remain only two symmetries:
    \begin{align*}
        \mathcal{S}_{1} = \mathcal{K}_{2} + \mathcal{K}_3 + \mathcal{K}_6 - i\omega_1\sum_{j\in\{2,3,6\}}z_j\partial_j \qquad \text{and} \qquad \mathcal{S}_{2} = \mathcal{K}_{4} + \mathcal{K}_5 + \mathcal{K}_7 - i\omega_1\sum_{j\in\{4,5,7\}}z_j\partial_j\,.
    \end{align*}
    We find that 
    {\footnotesize\begin{align*}
        &\mathcal{S}_{1}[c_{2345}(\bm{z})] = -\mathcal{S}_{2}[c_{2345}(\bm{z})]\\ &= [2i\Imag(\mathcal{A}_1) (z_2 z_3 - z_4 z_5) + \mathcal{A}_1 z_1 (z_4 + z_5 - z_2 - z_3) -  
     \bar{\mathcal{A}}_1\bar{z}_1 (z_4 z_5 (z_2 + z_3) - z_2z_3 (z_4 + z_5))]\frac{(z_2 - z_3) (z_4 - 
   z_5)}{(z_3 - z_4)^2 (z_2 - z_5)^2}
    \end{align*}}
    is another functionally independent constant of motion (see the symbolic calculations in the Mathematica scripts \textit{example\_cte\_mvt\_from\_symmetry\_2.wls} and \textit{example\_cte\_mvt\_from\_symmetry\_3.wls} when $\omega_1 = \omega + 2\Imag(\mathcal{A}_1)$).
\end{example}

\subsection{Explicit form for the symmetry transformation related to \texorpdfstring{$\mathcal{S}_\eta$}{Lg}}
\label{subsec:explicit_sym}

We have shown in Lemma~\ref{lem:sym_KL} that
\begin{align*}
    \mathcal{S}_{\eta} = \mathcal{K}_{\eta} - i\omega_{s} L_0^{\eta}\,,\quad\eta\in\{1,...,r\}\,,
\end{align*}
are symmetry generators representing the time evolution of peripheral oscillators in the frame of their source. Finding the explicit action of the symmetry $\exp(\varepsilon \mathcal{S}_\eta)$ thus implies having an explicit form for the flow of the peripheral oscillators, which is generally not available when there are more than three oscillators.

There are however special cases where we are able to find the symmetry transformation more explicitly. For instance, consider the symmetry generator
\begin{align*}
\mathcal{S}_\eta = \rho_\eta(\bm z) L_{-1}^\eta + i(\Omega_\eta - \omega_s) L_{0}^\eta - \overline{\rho_\eta(\bm z)} L_{1}^\eta
\end{align*}
for a group of peripheral oscillators~$\mathcal{W}_\eta$ with $\# \mathcal{W}_\eta \geq 3$ and
\begin{align*}
    L_n^\eta = \sum_{j\in\mathcal{W}_\eta} z_j^{n+1}\partial_j\,,\qquad \rho_\eta(\bm z) = \sum_{k\in\mathcal{W}_\eta\cup \{s\}} \mathcal{A}_{\eta k}z_k\,, \qquad \Omega_\eta = \omega_{\ell_\eta} - 2\Imag(\mathcal{A}_{\eta \ell_\eta})\,,
\end{align*}
where $\ell_\eta$ is the index of any oscillator within $\mathcal{W}_\eta$ and $\mathcal{A}_{\eta k} \in \mathbb{C}$. Note that the generator takes this form when the vertices in $\mathcal{W}_\eta$ with $\# \mathcal{W}_\eta \geq 4$ satisfy the conditions from Thm.~\ref{thm:cte_mvt_kuramoto_graphe}.  Without loss of generality, we can set $\omega_s = 0$ (frame of the source) and $\theta_s(0) = 0$ ($z_s(0) = 1$) to obtain
\begin{align}
    \mathcal{S}_\eta = \rho_\eta(\bm z) L_{-1}^\eta + i\Omega_\eta L_{0}^\eta - \overline{\rho_\eta(\bm z)} L_{1}^\eta\,,\quad\text{with}\quad
    \rho_\eta(\bm z) = \sum_{k\in\mathcal{W}_\eta} \mathcal{A}_{\eta k}z_k + \mathcal{A}_s
\end{align}
From there, Watanabe-Strogatz theory (following Ref.~\cite{Marvel2009}) almost tells us directly, up to minor modifications, the form of $\exp(\varepsilon \mathcal{S}_\eta)$. Indeed, given a solution $z_j(t)\in\mathbb{T}$ for all time $t$ and $j\in\mathcal{V}$, the symmetry acts as a Möbius transformation to generate the new solution
\begin{align}
    \tilde{z}_j(t) = \exp(\varepsilon \mathcal{S}_\eta)z_j(t) = \left\{
\begin{aligned}
    &\frac{e^{i\phi_\eta(\varepsilon)}z_j(t) + Z_\eta(\varepsilon)}{1 + e^{i\phi_\eta(\varepsilon)}\overline{Z_\eta(\varepsilon)}z_j(t)} \,, \quad &&j\in \mathcal{W}_\eta\,,\\
   &\qquad\quad z_j(t)\,, \quad &&j\in \mathcal{V}\setminus\mathcal{W}_\eta\,,
\end{aligned}
\right.
\label{eq:transformed_solution}
\end{align}
where $Z_\eta(\varepsilon)$ and $\phi_\eta(\varepsilon)$ satisfy
\begin{align}
    \od[]{Z_\eta}{\varepsilon} &= F_\eta(Z_\eta, \phi_\eta, t) + i\Omega_\eta Z_\eta - \overline{F_\eta(Z_\eta, \phi_\eta, t)}Z_\eta^2\,\,,\label{eq:ws1}\\
    \od[]{\phi_\eta}{\varepsilon} &= \Omega_\eta + i(\overline{F_\eta(Z_\eta, \phi_\eta, t)}Z_\eta - F_\eta(Z_\eta, \phi_\eta, t)\bar{Z}_\eta)\,,\label{eq:ws2}\\
    (Z_\eta(0), \phi_\eta(0)) &= (0, 0)\,,\phantom{\od[]{Z_\eta}{\varepsilon}}\label{eq:ws0}\\ 
    F_\eta(Z_\eta, \phi_\eta, t) &= \mathcal{A}_s + \sum_{k\in\mathcal{W}_\eta} \mathcal{A}_{\eta k}\frac{e^{i\phi_\eta(\varepsilon)}z_k(t) + Z_\eta(\varepsilon)}{1 + e^{i\phi_\eta(\varepsilon)}\overline{Z_\eta(\varepsilon)}z_k(t)}\,,
\end{align}
which is an autonomous dynamics (note the derivative over $\varepsilon$ and not $t$). To apply the symmetry transformation and get $\tilde{z}_j(t)$ numerically, one proceeds as follows:
\begin{enumerate}
    \item Obtain a solution $z_1(t),...,z_N(t)$ for $t$ in a discrete time-interval $t_0,...,t_{T-1}$ by integrating the equations of the Kuramoto model satisfying the conditions to have a symmetry from Lemma~\ref{lem:sym_KL};
    \item For $t = t_0$, integrate Eqs.~(\ref{eq:ws1}-\ref{eq:ws2}) with initial conditions~\eqref{eq:ws0} from 0 to $\varepsilon \in \mathbb{R}_{> 0}$;
    \item For $t = t_0$, apply the Möbius transformation such as in Eq.~\eqref{eq:transformed_solution} to obtain $\tilde{z}_j(t_0)$ for $j\in\mathcal{W}_\eta$ while $\tilde{z}_j(t_0)=z_j(t_0)$ for $j\in\mathcal{V}\setminus\mathcal{W}_\eta$;
    \item Repeat step 2 and step 3 for all $t \in \{t_1,...,t_{T-1}\}$ to obtain the new solution $\tilde{z}_1(t),...,\tilde{z}_N(t)$;
\end{enumerate}
One can always verify the validity of the transformed solution by integrating the related Kuramoto model at the new initial conditions generated by the symmetry transformation, provided in step 3.

If there are more than one part of peripheral oscillators satisfying the conditions of Thm.~\ref{thm:cte_mvt_kuramoto_graphe} (say, $r' \leq r$ parts), then the procedure is essentially the same. Indeed, in such case, one has a $r'$-parameter abelian symmetry group containing transformations of the form $\exp(\varepsilon_1 \mathcal{S}_1 +...+\varepsilon_{r'}\mathcal{S}_{r'}) = \exp(\varepsilon_1 \mathcal{S}_1)...\exp(\varepsilon_{r'}\mathcal{S}_{r'})$. Again, each transformation $\exp(\varepsilon_\eta \mathcal{S}_\eta)$ is a different Möbius transformation and one has to integrate the equations such as Eqs.~(\ref{eq:ws1}-\ref{eq:ws2}) from $\varepsilon_\eta = 0$ to $\varepsilon_\eta = \epsilon_\eta\in \mathbb{R}_{> 0}$ for all $\eta \in \{1,...,r'\}$.

Examples where a solution is transformed to another using the above formalism are provided in Ref.~\cite{Thibeault2026_koopman_kuramoto} under the name ``\textit{symmetry\_action\_kooku1\_one\_param\_step0.py}",``\textit{symmetry\_action\allowbreak\_kooku1\_two\_params.py}", ``\textit{symmetry\_action\_kooku1\_two\_params\_step1.py}" (step 2 and 3). Moreover, the scripts ``\textit{symmetry\_changes\_invariant\_sets.py}" and ``\textit{fig3\_plot\_invariant\_sets.py}" allow reproducing Fig.~3 (\textbf{B}, \textbf{C}, \textbf{D}).

As a final remark, taken altogether, the symmetry generators $i\,L_0$ (global dilatation or rotation of all the oscillators), $\mathcal{K} = \sum_{j,k \in\mathcal{V}}\left(A_{jk}z_k - \bar{A}_{jk}\bar{z}_kz_j^2\right)\partial_j$ (evolution of time-independent observables), the trivial generator $f(t)\,\mathcal{U}$ along with the time evolution of peripheral oscillators in the frame of the source $\mathcal{S}_1,...,\mathcal{S}_r$ form an abelian Lie algebra.

\subsection{Parametrized families of conserved quantities}
\label{SIsubsec:parametrized_families}
By considering the parameters $\Theta_1,...,\Theta_n$ of a dynamical system as state variables that do not evolve over time, there are other possibilities of symmetry generators. Indeed, one can always write a symmetry generator as
\begin{align*}
    v = \xi(t, \bm u, \bm\Theta)\partial_t + \sum_{j=1}^N\phi_j(t,\bm u, \bm\Theta)\partial_{u_j} + \sum_{k=1}^n\eta_k(t,\bm u, \bm\Theta)\partial_{\Theta_k}\,.
\end{align*}
In such context, one finds the following symmetry generator of the parameter-extended Kuramoto model.
\begin{lemma}[Heterogeneous translation of the phases and the phase lags is a symmetry]\phantom{.}\\
Let $\alpha_1,...,\alpha_N$ be real numbers. The infinitesimal generator 
\begin{align*}
\mathcal{S}_{\bm{\alpha}} &= i\sum_{j=1}^N \alpha_j z_j \partial_j-\sum_{p,q=1}^N (\alpha_p-\alpha_q)\,\partial_{\alpha_{pq}}
\end{align*}
generates a symmetry of the parameter-extended Kuramoto model acting such as
\begin{align*}
    e^{\varepsilon \mathcal{S}_{\bm{\alpha}}}f(\bm{z}, A, \bar{A}) = f(D_{\varepsilon}\bm z, D_{\varepsilon}AD_{-\varepsilon}, D_{-\varepsilon}\bar{A}D_{\varepsilon}) \,,
\end{align*}
where $D_{\varepsilon} = \mathrm{diag}(e^{i\varepsilon\alpha_1}, ...,e^{i\varepsilon\alpha_N})$. In real form, $\mathcal{S}_{\bm{\alpha}} = \sum_{j=1}^N \alpha_j \partial_{\theta_j}-\sum_{p,q=1}^N (\alpha_p-\alpha_q)\,\partial_{\alpha_{pq}}$ and
\begin{align*}
    e^{\varepsilon \mathcal{S}_{\bm{\alpha}}}g(\bm\theta, \bm\alpha) = g(\bm \theta + \varepsilon\bm\alpha\,, \,\,\alpha + \varepsilon(\bm{1}\bm{\alpha}^\top-\bm{\alpha}\bm{1}^\top))\,,
\end{align*}
where $\alpha = (\alpha_{jk})_{j,k=1}^N$ and $\bm{\alpha} = (\alpha_1,...,\alpha_N)$.
\end{lemma}
\begin{proof}
Using the bilinearity of the commutator yields
\begin{align}\label{eq:commutation_symalpha}
[\mathcal{K},\mathcal{S}_{\bm{\alpha}}]
=
i\sum_{j=1}^N \alpha_j[\mathcal{K},\ell_j^0]
-
\sum_{p,q=1}^N (\alpha_p-\alpha_q)\,[\mathcal{K},\partial_{\alpha_{pq}}].
\end{align}
Lemma~\ref{lem:commutations_kuramoto} directly gives
\begin{align*}
[\mathcal{K},\ell_j^0]
&=
\sum_{k=1}^N A_{jk}z_k\ell_j^{-1}
+
\sum_{k=1}^N \bar A_{jk}\,\bar z_k\ell_j^1
-
z_j\sum_{k=1}^N A_{kj}\ell_k^{-1}
-
\bar z_j\sum_{k=1}^N \bar A_{kj}\,\ell_k^1\,,
\end{align*}
where we recall that $A_{jk} = (1/2)(W_{jk}e^{-i\alpha_{jk}} + i\omega_j\delta_{jk})$ for all $j,k$. Hence, the first term in Eq.~\eqref{eq:commutation_symalpha} becomes
\begin{align*}
i\sum_{j=1}^N \alpha_j[\mathcal{K},\ell_j^0]
&=
i\sum_{j,k=1}^N(\alpha_j-\alpha_k)
\left(
A_{jk}z_k+\bar A_{jk}\,\bar z_k\,z_j^2
\right)\partial_j =
\frac{i}{2}\sum_{\substack{j,k=1\\j\neq k}}^N(\alpha_j-\alpha_k)
W_{jk}\left(
e^{-i\alpha_{jk}}z_k+e^{i\alpha_{jk}}\,\bar z_k\,z_j^2
\right)\partial_j
\end{align*}
after reindexing two of the four double sums. Then, for the second term in Eq.~\eqref{eq:commutation_symalpha} and some function $\psi(\bm{z},A)$,
\begin{align*}
[\mathcal{K},\partial_{\alpha_{pq}}]\psi(\bm{z},A)
&=
\mathcal{K}\!\left[\partial_{\alpha_{pq}}\psi(\bm{z},A)\right]
-
\partial_{\alpha_{pq}}
\Big[
\sum_{j,k=1}^N
\left(
A_{jk}z_k-\bar A_{jk}\,\bar z_k\,z_j^2
\right)\partial_j\psi(\bm{z},A)
\Big]
\\
&=-\frac{1}{2}\sum_{j,k=1}^N\partial_{\alpha_{pq}}\left((W_{jk}e^{-i\alpha_{jk}} + i\omega_j\delta_{jk})z_k-(W_{jk}e^{i\alpha_{jk}} -i\omega_j\delta_{jk})\bar z_k\,z_j^2\right)\partial_j\psi(\bm{z},A)
\\
&=
\frac{i}{2}W_{pq}\left(
e^{-i\alpha_{pq}}z_q+ e^{i\alpha_{pq}}\,\bar z_q\,z_p^2
\right)\partial_p\psi(\bm{z},A)\phantom{\sum_{j,k=1}^N}
\end{align*}
and
\begin{align*}
    \sum_{p,q=1}^N (\alpha_p-\alpha_q)\,[\mathcal{K},\partial_{\alpha_{pq}}] = \frac{i}{2}\sum_{\substack{p,q=1\\p\neq q}}^N (\alpha_p-\alpha_q)W_{pq}\left(
e^{-i\alpha_{pq}}z_q+e^{i\alpha_{pq}}\,\bar z_q\,z_p^2
\right)\partial_p\,.
\end{align*}
 Altogether, one gets $[\mathcal{K},\mathcal{S}_{\bm{\alpha}}] = 0$ which means that, by Lemma~\ref{lem:symkoo}, $\mathcal{S}_{\bm{\alpha}}$ generates a Lie symmetry. Let $\mathcal{S}_{\bm{\alpha}} = iL_0^{\bm\alpha} + T^{\bm\alpha}$ such that
\begin{align*} 
L_0^{\bm\alpha}:=\sum_{j=1}^N \alpha_j z_j\partial_j,
\qquad
T^{\bm\alpha}:=-\sum_{p,q=1}^N (\alpha_p-\alpha_q)\,\partial_{\alpha_{pq}}\,.
\end{align*}
The flow generated by the operators $iL_0^{\bm\alpha}$ and $T^{\bm\alpha}$ is respectively determined by the differential equations
\begin{align*}
    \od[]{\tilde{z}_j(\varepsilon)}{\varepsilon} = i\alpha_j \tilde z_j(\varepsilon)\,,\qquad \od[]{\tilde{\alpha}_{jk}(\varepsilon)}{\varepsilon} = -(\alpha_j - \alpha_k)
\end{align*}
with $\tilde{z}_j(0) = z_j$ and $\tilde\alpha_{jk}(0) = \alpha_{jk}$ for all $j,k\in\{1,...,N\}$. The solutions are respectively given by
\begin{align*}
    \tilde{z}_j(\varepsilon) = e^{i\varepsilon\alpha_{j}}z_j,\qquad \tilde\alpha_{jk}(\varepsilon) = -(\alpha_j - \alpha_k)\varepsilon + \alpha_{jk}\,.
\end{align*}
The real form is straightforward from there, noting that $\theta_j + \varepsilon \alpha_j$ for all $j$ and that the transformation of the phase lags in matrix form is $\alpha + \varepsilon(\bm{1}\bm{\alpha}^\top-\bm{\alpha}\bm{1}^\top)$. Moreover, one finds
\begin{align*}
    \frac{1}{2}(W_{jk}e^{-i\tilde\alpha_{jk}(\varepsilon)} + i\omega_j\delta_{jk}) = \frac{1}{2}(W_{jk}e^{i\varepsilon(\alpha_j - \alpha_k)}e^{-i\alpha_{jk}} + i\omega_j\delta_{jk})\,,
\end{align*}
or in matrix form, $\tilde{A}_\varepsilon = D_\varepsilon A D_{-\varepsilon}$ with $D_\varepsilon = \mathrm{diag}(e^{i\varepsilon\alpha_1},...,e^{i\varepsilon\alpha_N})$. Similar calculations give $\tilde{\bar{A}}_\varepsilon = D_{-\varepsilon} \bar A D_{\varepsilon}$. Since $[L_0^{\bm\alpha}, T^{\bm\alpha}]=0$, $e^{\varepsilon \mathcal{S}_{\bm{\alpha}}} = e^{i\varepsilon L_0^{\bm\alpha}}e^{\varepsilon T^{\bm\alpha}}$ and 
\begin{equation*}
e^{\varepsilon \mathcal{S}_{\bm{\alpha}}}f(z_1,...,z_N, A)
= f(e^{i\varepsilon\alpha_{1}}z_1,...,e^{i\varepsilon\alpha_{N}}z_N, D_\varepsilon A D_{-\varepsilon}, D_{-\varepsilon} \bar A D_{\varepsilon}) = f(D_\varepsilon \bm z, D_\varepsilon A D_{-\varepsilon}, D_{-\varepsilon} \bar A D_{\varepsilon})\,.
\end{equation*}
\end{proof}
\noindent The last lemma implies that we can generate families of Koopman eigenfunctions $\psi$. 
\begin{enumerate}
    \item \textit{Monomials}. If $z^{\bm{\mu}}$ is an eigenfunction of $\mathcal{K}$ (Thm~\ref{thm:monomials}), then $\mathcal{S}_{\bm{\alpha}}[z^{\bm{\mu}}] = i\left(\sum_{j=1}^N \mu_j\alpha_j\right)z^{\bm\mu}$. In this case, there is no new functionally independent eigenfunction generated by the transformation.
    \item \textit{Vandermonde-ratio}. If $V_{IJ}^{\bm\sigma\bm\tau}(\bm z) := (\prod_{p<q\in I}(\bar{z}_p-\bar{z}_q)^{\sigma_{pq}})/(\prod_{r<s\in J}(z_r-z_s)^{\tau_{rs}})$ is an eigenfunction of $\mathcal{K}$, then 
    \begin{align*}  e^{-\mathcal{S}_{\bm{\alpha}}}[V_{IJ}^{\bm\sigma\bm\tau}(\bm z)] 
        &= \frac{\prod_{p<q\in I}(e^{i\alpha_p}\bar{z}_p-e^{i\alpha_q}\bar{z}_q)^{\sigma_{pq}}}{\prod_{r<s\in J}(e^{-i\alpha_r}z_r-e^{-i\alpha_s}z_s)^{\tau_{rs}}}
        \\&= \frac{\prod_{p<q\in I}e^{i\alpha_p\sigma_{pq}}}{\prod_{r<s\in J}e^{-i\alpha_r\tau_{rs}}}\left(\frac{\prod_{p<q\in I}(\bar{z}_p-e^{-i(\alpha_p -\alpha_q)}\bar{z}_q)^{\frac{\sigma_{pq}}{\sigma_{p'q'}}}}{\prod_{r<s\in J}(z_r-e^{i(\alpha_r-\alpha_s)}z_s)^{\frac{\tau_{rs}}{\sigma_{p'q'}}}}\right)^{\sigma_{p'q'}}
    \end{align*}
    for some $p'<q' \in I$. Let $m := \# I$ and $n := \# J$. Extracting the factor in the parenthesis and absorbing $\sigma_{p'q'}$ into $\sigma_{pq}$ and $\tau_{rs}$ yields
    \begin{align*}
        V_{IJ}^{\bm\sigma\bm\tau\bm\alpha}(\bm z) := \frac{\prod_{p<q\in I}(\bar{z}_p-e^{-i(\alpha_p -\alpha_q)}\bar{z}_q)^{\sigma_{pq}}}{\prod_{r<s\in J}(z_r-e^{i(\alpha_r-\alpha_s)}z_s)^{\tau_{rs}}}
    \end{align*}
    for all $\bm\sigma$, $\bm\tau$, $\bm\alpha$, a family of Koopman eigenfunctions with ${{m+1}\choose{2}} + {{n+1}\choose{2}} - 3$ parameters (${{m}\choose{2}} + {{n}\choose{2}} - 1$ for the $\sigma$'s and $\tau$'s and $m+n-2$ for the $\alpha$'s) for the Kuramoto model satisfying the same first, second and fourth conditions of Thm.~\ref{thm:vandermonde_SI}, but the third condition becomes
    \begin{align*}
        A_{pr} = e^{i\alpha_p}\mathcal A\,T_re^{-i\alpha_r}\,,\qquad A_{rp} = e^{i\alpha_r}\bar{\mathcal{A}}\,S_pe^{-i\alpha_p} 
    \end{align*}
    for all $p\in I$, $r\in J$ and $\mathcal{A}\in\mathbb{C}$. Validations were done for $N = 5$ and $N = 9$ vertices in Mathematica notebooks called ``example\_N5\_extendedform.nb'' and ``example\_N9.nb'' in the folder ``symbolic/eigenfunctions/vandermonde\_eigenfunction'' in Ref.~\cite{Thibeault2026_koopman_kuramoto}.
    
    \item \textit{Cross-ratios}. If $c_{abcd}(\bm{z})$ is a constant of motion (Thm~\ref{thm:cte_mvt_kuramoto_graphe}), then 
    \begin{align*}
        e^{-\mathcal{S}_{\bm{\alpha}}}[c_{abcd}(\bm{z})] = \frac{(e^{-i\alpha_c}z_c - e^{-i\alpha_a}z_a)(e^{-i\alpha_d}z_d - e^{-i\alpha_b}z_b)}{(e^{-i\alpha_c}z_c - e^{-i\alpha_b}z_b)(e^{-i\alpha_d}z_d - e^{-i\alpha_a}z_a)} = \frac{(e^{i\gamma}z_c - z_a)(e^{i\delta}z_d - e^{i\beta}z_b)}{(e^{i\gamma}z_c - e^{i\beta}z_b)(e^{i\delta}z_d - z_a)} =: c_{abcd}^{\beta\gamma\delta}(\bm{z})\,,
    \end{align*}
    for all $\beta = \alpha_a - \alpha_b, \gamma = \alpha_a - \alpha_c, \delta = \alpha_a - \alpha_d$ form a three-parameter family of constants of motion for the Kuramoto model satisfying
    \begin{enumerate}
    \item 
\begin{align}\label{eq:identical_insiders_outcoming_edge_A_extended}
\begin{aligned}
    e^{i\beta}A_{ba} &= e^{i\gamma}A_{ca} = e^{i\delta}A_{da} =: \mathcal{A}_a\,,\\ 
    A_{ab} &= e^{i\gamma}A_{cb} = e^{i\delta}A_{db} =: \mathcal{A}_b\,, 
\end{aligned}
\qquad
\begin{aligned}
    A_{ac} &= e^{i\beta}A_{bc} = e^{i\delta}A_{dc} =: \mathcal{A}_c\,, \\
    A_{ad} &= e^{i\beta}A_{bd} = e^{i\gamma}A_{cd} =: \mathcal{A}_d\,;
\end{aligned}
\end{align}
\item 
\begin{equation}\label{eq:identical_outsiders_incoming_edge_A_extended}
    A_{ak} = e^{i\beta}A_{bk} = e^{i\gamma}A_{ck} = e^{i\delta}A_{dk},\quad \forall\, k\in\{1,...,N\}\setminus\{a, b, c, d\}\,;
\end{equation}
\item 
\begin{align}\label{eq:imag_cond_extended}
\omega_a - 2\Imag(\mathcal{A}_a) = \omega_b - 2\Imag(e^{-i\beta}\mathcal{A}_b) = \omega_c - 2\Imag(e^{-i\gamma}\mathcal{A}_c) = \omega_d - 2\Imag(e^{-i\delta}\mathcal{A}_d)\,.
\end{align}
\end{enumerate}
These conditions were verified for $N = 5$ vertices in a Mathematica notebook called ``symbolic/eigenfunctions/cross\_ratio\_cte\_motion/example\_N5\_extended\_crossratio.nb'' in Ref.~\cite{Thibeault2026_koopman_kuramoto}.

The extension for the cross-ratios can be generalized for the complex dynamics
\begin{align}\label{eq:complex_dynamics}
    \dot{Z}_j = f(t, \bm{Z})e^{-i\alpha_j} + g(t, \bm{Z})Z_j + h(t, \bm{Z})e^{i\alpha_j}Z_j^2
\end{align}
where $Z_j(t) \in \mathbb{C}$, $\alpha_j \in \mathbb{R}$ for all $j \in \{1,...,N\}$, and $f,g,h$ are some complex-valued functions. This is a phase-shifted version of the dynamics considered recently by Cestnik and Martens~\cite{Cestnik2024}. Its Koopman generator is
\begin{align*}
    \mathcal{K}_{fgh} = f(t, \bm{Z})\mathcal{L}_{-1}^{\bm\alpha} + g(t, \bm{Z})\mathcal{L}_{0}+ h(t, \bm{Z})\mathcal{L}_{1}^{\bm\alpha}\,
\end{align*}
where $\mathcal{L}_{-1}^{\bm\alpha} = \sum_{j=1}^N e^{-i\alpha_j}\partial_j$, $\mathcal{L}_{0} = \sum_{j=1}^NZ_j\partial_j$, and $\mathcal{L}_{1}^{\bm\alpha} = \sum_{j=1}^N e^{i\alpha_j}Z_j^2\partial_j$ with $\partial_j = \partial/\partial Z_j$. Making the change of variables $x_j = e^{i\alpha_j}Z_j$ implies that $\mathcal{L}_{-1}^{\bm\alpha} = \sum_{j=1}^N \partial/\partial x_j$, $\mathcal{L}_{0} = \sum_{j=1}^Nx_j\partial/\partial x_j$, $\mathcal{L}_{1}^{\bm\alpha} = \sum_{j=1}^N x_j^2\partial/\partial x_j$, and it is clear that the complex-valued cross-ratios $(x_c - x_a)(x_d - x_b)/((x_c - x_b)(x_d - x_a))$ are the joint invariants for these operators [recall Sec.~\ref{subsec:joint_invariants}], and thus, conserved quantities of Eq.~\eqref{eq:complex_dynamics}. In the original variables, one gets 
\begin{align*}
    C_{abcd}(\bm Z) = \frac{(e^{i\beta_c}Z_c - Z_a)(e^{i\beta_d}Z_d - e^{i\beta_b}Z_b)}{(e^{i\beta_c}Z_c - e^{i\beta_b}Z_b)(e^{i\beta_d}Z_d - Z_a)}\in\mathbb{C}\,.
\end{align*}
    \item \textit{WS integrals}. Consider the WS integral $C^{\text{ws}}_{p_1...p_\ell}(\bm\theta) = \prod_{u=1}^\ell \sin\left(\frac{\theta_{p_u} - \theta_{p_{u+1}}}{2}\right)$ with $p_0 = p_\ell$ and $p_{\ell + 1} = p_1$. In complex form, this integral is $I(\bm{z}) = \prod_{u=1}^\ell (1 - z_{p_u}\bar{z}_{p_{u+1}})$. Therefore, 
    \begin{align*}
        I_{\bm{\alpha}}(\bm z) = e^{-\mathcal{S}_{\bm{\alpha}}}[I(\bm z)] = \prod_{u=1}^\ell (1 - e^{i(\alpha_{p_{u+1}} - \alpha_{p_u})}z_{p_u}\bar{z}_{p_{u+1}})
    \end{align*}
    for all $\bm\alpha\in\mathbb{R}^\ell$ form a family of constants of motion with $\ell - 1$ degrees of freedom (all the differences between $\alpha$'s). Sufficient conditions to conserve the quantities of this family are stated in Lemma~\ref{lem:ws} for the sake of completeness.

\item \textit{Peripheral constants of motion}. Consider the symmetry generator $\mathcal{S}_\eta = \mathcal{K}_\eta - i\omega_sL_0^\eta$. The commutator $[\mathcal{K}_\eta, \mathcal{S}_{\mathcal{\alpha}}]$ is zero since $\mathcal{K}_\eta$ for all $\eta$ are independent Koopman generators. It is also easily shown that $[L_0^\eta, \mathcal{S}_{\bm{\alpha}}] = 0$. Hence, $[\mathcal{S}_\eta, \mathcal{S}_{\bm{\alpha}}] = 0$,
\begin{align*}
    \mathcal{S}_{\bm{\alpha}}[\mathcal{S}_\eta[c_{abcd}]] = \mathcal{S}_\eta[\mathcal{S}_{\bm{\alpha}}[c_{abcd}]] = \mathcal{S}_\eta[c_{abcd}^{\beta\gamma\delta}]
    \,,
\end{align*}
and the symmetry generator $\mathcal{S}_{\bm{\alpha}}$ also extends $\mathcal{S}_\eta[c_{abcd}]$ to a family of constants of motion. Let us consider one concrete example.
\end{enumerate}
\begin{example}
    Recall Example~\ref{ex:star} where a star of 5 vertices admitting conserved cross-ratios and peripheral constants of motion is considered. The system admitting these conserved quantities is extended to 
    \begin{align*}
        A = \begin{pmatrix}
            i\omega_1/2 & 0 & 0 & 0 & 0\\
            \mathcal{A}_1 & i\omega/2& 0 & 0 & 0 \\
            e^{-i\beta}\mathcal{A}_1 & 0 & i\omega/2&  0 & 0 \\
            e^{-i\gamma}\mathcal{A}_1 & 0 & 0 &i\omega/2&  0 \\
            e^{-i\delta}\mathcal{A}_1 & 0 & 0 & 0 &i\omega/2
        \end{pmatrix}\,,
    \end{align*}
    where $\beta = \alpha_2 - \alpha_3$, $\gamma = \alpha_2 - \alpha_4$, $\delta = \alpha_2 - \alpha_5$. Then, the extended conserved cross-ratio is 
    \begin{align*}
        e^{-\mathcal{S}_{\bm{\alpha}}}c_{2345}(\bm{z}) = \frac{(e^{i\gamma}z_4 - z_2)(e^{i\delta}z_5 - e^{i\beta}z_3)}{(e^{i\gamma}z_4 - e^{i\beta}z_3)(e^{i\delta}z_5 - z_2)}\,,
    \end{align*}
    and an example of extended peripheral constants of motion is 
    \begin{align*}
        e^{-\mathcal{S}_{\bm{\alpha}}}\mathcal{S}_2[c_{2345}(\bm{z})] &= [i(\omega - \omega_1)z_2 + \mathcal{A}_1z_1 - \bar{\mathcal{A}}_1\bar{z}_1z_2^2]\frac{(e^{i\delta}z_5 -e^{i\beta}z_3)(e^{i\gamma}z_4 - e^{i\delta}z_5)}{(e^{i\gamma}z_4 - e^{i\beta}z_3)(e^{i\delta}z_5 - z_2)^2}\,.
    \end{align*}
    A symbolic verification is provided in Ref.~\cite{Thibeault2026_koopman_kuramoto} (example\_N5\_star\_extended.nb in the folder symbolic/eigenfunctions/peripheral).
\end{example}


The following lemma states sufficient conditions to conserve a WS integral in the family generated by the symmetry $\mathcal{S}_{\bm{\alpha}}$. We present it for the sake of completeness~: it only corresponds to a slight adaptation of the conditions provided in Ref.~\cite{Watanabe1994}.
\begin{lemma}\label{lem:ws}
Let $\mathcal{W}\subset\mathcal{V}$ be a subset of vertices $\{p_1,\ldots,p_\ell\}$ with $2\leq\ell\leq N$ and $p_0:=p_\ell$, $p_{\ell+1}:=p_1$. If the Kuramoto model~\eqref{eq:kuramoto} satisfies
\begin{enumerate}
\item $\omega_{p_u}=\omega\in\mathbb{R}$ for all $u\in\{1,\ldots,\ell\}$;
\item $W_{p_uk}=0$ for all $u\in\{1,\ldots,\ell\}$ and $k\in\mathcal{V}\setminus\mathcal{W}$;
\item $W_{p_up_v}=w\in\mathbb{R}$ for all $u,v\in\{1,\ldots,\ell\}$;
\item $\alpha_{p_up_v}=\phi_{p_v}-\phi_{p_u}-\frac{\pi}{2}$ with $\phi_{p_u}\in\mathbb{R}$ for all $u,v\in\{1,\ldots,\ell\}$,
\end{enumerate}
then
\begin{align*}
I=\prod_{u=1}^{\ell}\sin\left(\frac{\vartheta_{p_u}-\vartheta_{p_{u+1}}}{2}\right)\quad\text{with}\quad\vartheta_{p_u}=\theta_{p_u}-\phi_{p_u}
\end{align*}
is a constant of motion.
\end{lemma}
\begin{proof}
The equations are arbitrary for $j\in\mathcal{V}\setminus\mathcal{W}$ in Eq.~\eqref{eq:kuramoto}. For $j\in\mathcal{W}$, applying the conditions to Eq.~\eqref{eq:kuramoto} yields 
\begin{align*}
    \dot{\theta}_{p_u} =\dot{\vartheta}_{p_u} = \omega + w\sum_{v=1}^\ell \cos(\vartheta_{p_v} - \vartheta_{p_u})\,,\qquad \forall\,u\in\{1,...,\ell\}\,,
\end{align*}
that is, the dynamics considered in Ref.~\cite{Watanabe1994}. The Koopman generator is $\mathcal{K}=\sum_{u=1}^\ell\dot{\theta}_{p_u}\partial_{p_u}+\mathcal{K}'$, where $\mathcal{K}'$ is the part acting on the oscillators labelled by indices in $\mathcal{V}\setminus\mathcal{W}$. Thus, $\mathcal{K}'[I]=0$ and $\mathcal{K}[I]=\sum_{u=1}^\ell\dot{\theta}_{p_u}\partial_{p_u}I$. From there, the argument of Watanabe and Strogatz~\cite[Appendix B.2]{Watanabe1994} applies directly and yields $\mathcal{K}[I]=0$.
\end{proof}

\begin{remark}
    Following the subsection~\ref{SIsubsec:WS_darboux}, the dynamics and the first condition of the last lemma is directly generalized by replacing $\omega$ by an arbitrary function $\beta$ of time and $\bm z$. This is because the WS function is an invariant of $L_0$. Note also that the diagonal elements of $W$ in the third condition could be chosen to be zero without loss of generality.
\end{remark}

\begin{remark}
In complex form the constant of motion takes the form
\begin{align*}
    \psi
    =
    \prod_{r=1}^{\ell}
    \left(1 - x_{r}\bar x_{r+1}\right) =(-2i)^\ell I,
\end{align*}
where $x_r = e^{-i\phi_{p_r}}z_{p_r} = e^{i(\theta_{p_r} - \phi_{p_r})} =: e^{i\vartheta_{p_r}}$. Indeed, the latter relation is obtained by noting that
\begin{align*}
    \psi
    &= \prod_{r=1}^{\ell}
    \left(1 - e^{i(\vartheta_{p_r}-\vartheta_{p_{r+1}})}\right) = (-1)^\ell\prod_{s=1}^{\ell}
    e^{i(\vartheta_{p_s}-\vartheta_{p_{s+1}})/2}\,\prod_{r=1}^{\ell}
    \left(e^{i(\vartheta_{p_r}-\vartheta_{p_{r+1}})/2} - e^{-i(\vartheta_{p_r}-\vartheta_{p_{r+1}})/2}\right),
\end{align*}
where $\prod_{s=1}^{\ell}e^{i(\vartheta_{p_s}-\vartheta_{p_{s+1}})/2} = 1$ because of the periodic condition. 
\end{remark}

\end{document}